\theoremstyle{plain}
\newtheorem{theorem}{Theorem}
\theoremstyle{plain}
\newtheorem{lemma}{Lemma}
\theoremstyle{plain}
\newtheorem{sublemma}{Sublemma} 
\theoremstyle{plain}
\theoremstyle{remark}
\theoremstyle{conjecture}
\theoremstyle{observation}
\newtheorem{observation}{Observation}
\theoremstyle{definition}
\theoremstyle{corollary}
\theoremstyle{definition}
\theoremstyle{definition}
\newtheorem{definition}{Definition}
\theoremstyle{assumption}
\theoremstyle{definition}
\theoremstyle{problem}
\theoremstyle{fact}
\begin{document}

\begin{frontmatter}

\title{Classification of quantum phases and topology of logical operators in an exactly solved model of quantum codes}

\author{Beni Yoshida}
\address{Center for Theoretical Physics, Massachusetts Institute of Technology, Cambridge, Massachusetts 02139, USA}
\ead{rouge@mit.edu}

\begin{abstract}
Searches for possible new quantum phases and classifications of quantum phases have been central problems in physics. Yet, they are indeed challenging problems due to the computational difficulties in analyzing quantum many-body systems and the lack of a general framework for classifications. While frustration-free Hamiltonians, which appear as fixed point Hamiltonians of renormalization group transformations, may serve as representatives of quantum phases, it is still difficult to analyze and classify quantum phases of arbitrary frustration-free Hamiltonians exhaustively. Here, we address these problems by sharpening our considerations to a certain subclass of frustration-free Hamiltonians, called stabilizer Hamiltonians, which have been actively studied in quantum information science. We propose a model of frustration-free Hamiltonians which covers a large class of physically realistic stabilizer Hamiltonians, constrained to only three physical conditions; the locality of interaction terms, translation symmetries and scale symmetries, meaning that the number of ground states does not grow with the system size. We show that quantum phases arising in two-dimensional models can be classified exactly through certain quantum coding theoretical operators, called logical operators, by proving that two models with topologically distinct shapes of logical operators are always separated by quantum phase transitions.
\end{abstract}

\begin{keyword}
quantum phase transition \sep quantum coding theory \sep topological order \sep stabilizer formalism 
\end{keyword}

\end{frontmatter}

\tableofcontents

\section{Introduction}\label{sec:introduction}

In condensed matter physics, studies on quantum phases in two-dimensional spin systems have been central in addressing the underlying mechanisms behind correlated spin systems~\cite{Sachdev_Text}. There have also been considerable interests in quantum phases in two-dimensional systems from quantum information science community~\cite{Nielsen_Chuang}, as topological phases of spin systems are useful in realizing quantum information theoretical ideas~\cite{Kitaev97, Dennis02, Kitaev03, Nayak08}. In this paper, we ask two fundamental questions concerning quantum phases arising in two-dimensional spin systems. 
\begin{itemize}
\item[(a)] What kinds of quantum phases are allowed to exist in two-dimensional spin systems? 
\item[(b)] How do we classify these quantum phases? What kinds of properties distinguish different quantum phases? 
\end{itemize}

Finding exactly solvable models is a key to searching for possible quantum phases. The AKLT state, which is the ground state of an exactly solvable spin $1$ Hamiltonian, provides useful clues about the ground state properties of quantum anti-ferromagnets~\cite{AKLT87, AKLT88}. The Toric code, the simplest exactly solvable model with topological order, gives insights on how topological phases emerge in correlated spin systems~\cite{Kitaev97, Kitaev03}. These Hamiltonians mentioned here are called \emph{frustration-free} Hamiltonians since ground states can be obtained by minimizing each term locally. While these exactly solvable frustration-free models may not be physical, involving more than two spins at one time or higher order interactions, it is relatively easy to find their ground states compared with frustrated Hamiltonians which we encounter in physically realistic systems. A key idea is to realize that these frustration-free Hamiltonians may appear as low energy effective theories for real spin systems~\cite{Kitaev06b}, and thus, may serve as a ``bureau of quantum phases'' with which one may approximately study the ground state properties of actual Hamiltonians. 

So far, a lot of frustration-free spin Hamiltonians with various physical properties have been discovered in a search for possible quantum phases arising in two-dimensional spin systems~\cite{Kitaev03, Levin05}. However, whether these models could exhaust all the possible quantum phases or not is far from obvious. A brute force approach might be to analyze all the possible frustration-free Hamiltonians and study their ground state properties. However, being frustration-free does not mean that Hamiltonians are exactly solvable since studying their ground state properties may be challenging since computations of order parameters to distinguish quantum phases may involve a large number of spins. Moreover, finding frustration-free Hamiltonians is much more difficult than finding a ground state of frustration-free Hamiltonians since it is computationally difficult to check whether a given Hamiltonian is frustration-free or not. For example, even when a Hamiltonian consists only of projectors as in the AKLT Hamiltonian and the Toric code, to determine whether the Hamiltonian is frustration-free or not is known to be the hardest problem in the computational complexity class QMA (Quantum Marlin-Arthur), a quantum analog of NP~\cite{Kitaev_Text, Bravyi08}. (So, it could be a difficult problem even for a quantum computer to solve !). 

As for a classification of quantum phases, two frustration-free Hamiltonians may be classified through the presence or the absence of a quantum phase transition (QPT). In quantum many-body systems, two quantum phases with different physical properties are separated by a QPT, which is a sudden non-analytic change of the ground state properties as a result of parameter changes in the Hamiltonian bridging two quantum phases~\cite{Sachdev_Text}. Two frustration-free Hamiltonians may be considered to belong to different quantum phases when a parameterized Hamiltonian connecting them undergoes a QPT as a parameter varies in the Hamiltonian~\cite{Hastings05, Bravyi10b, Chen10}. 

While studying the ground state properties of parameterized Hamiltonians is generally difficult, QPTs occurring in parameterized Hamiltonians have been studied through Renormalization Group (RG) transformations which are certain scale transformations acting on parameterized Hamiltonians~\cite{Fisher98} or parameterized ansatz of ground states~\cite{Verstraete04, Verstraete06, Vidal07, Gu08, Chen10}. The central idea in classifying quantum phases through RG transformations is to capture only the scale invariant physical properties of corresponding quantum phases by using fixed point Hamiltonians which are invariant under RG transformations. Since fixed point Hamiltonians are obtained by washing out short-range correlations, it is known that fixed point Hamiltonians in gapped quantum phases can be represented as frustration-free Hamiltonians where each term can be minimized locally without considering neighboring terms. 

However, the challenge in classifying quantum phases in correlated spin systems underlies behind the fact that the existence of a QPT depends on how Hamiltonians are varied from the one to the other. In particular, there are cases where two frustration-free Hamiltonians are connected through some parameterized Hamiltonian without a QPT, but are separated by a QPT for other parameterized Hamiltonian~\cite{Lieb61, Vidal03, Wolf06, Orus09, Skrovseth09}. To show that two frustration-free Hamiltonians belong to different quantum phases, one needs to analyze all the possible paths of parameterized Hamiltonians connecting them and see if they are always separated by QPTs or not. Thus, the need is to find some order parameter to classify quantum phases in a way independent of choices of parameterized Hamiltonians. 

Searches for possible quantum phases and their classifications lie at the heart of studies on quantum many-body systems. Currently, both problems however face challenges as illustrated in the above. In this paper, we make an attempt to overcome these difficulties concerning studies on quantum phases by combining theoretical tools developed in quantum information science and the notion of topology.

\textbf{Possible quantum phases and stabilizer codes:}
While analyses on arbitrary frustration-free Hamiltonians are indeed challenging, there exists a certain subclass of frustration-free Hamiltonians, called stabilizer Hamiltonians~\cite{Gottesman96, Nielsen_Chuang}, which play a crucial role in quantum information science. Stabilizer Hamiltonians serve as a natural architecture to realize quantum error-correcting codes in correlated many-body spin systems. The basic idea of stabilizer codes is to encode a qubit in strongly entangled and correlated states so that no small error can break the encoded qubit. A remarkable feature of stabilizer codes is the existence of system Hamiltonians which supports encoded qubits in degenerate ground states with a finite energy gap. Such system Hamiltonians, called stabilizer Hamiltonians, are frustration-free Hamiltonians which consist only of Pauli operators commuting with each other. Since protecting a qubit from decoherence is essential in realizing quantum information theoretical ideas such as fault-tolerant quantum computation~\cite{Nielsen_Chuang}, studies on stabilizer Hamiltonians have been central in quantum information science.

Stabilizer Hamiltonians are certainly limited compared with arbitrary frustration-free Hamiltonians, yet they may characterize a large number of quantum phases. In fact, it has been realized that many interesting spin systems may be described in the language of stabilizer codes~\cite{Hein04, Fattal04, Hamma05, Bacon06, Kitaev06b, Bravyi09, Beni10}, as the Toric code is one of the earliest examples of stabilizer codes discovered in a search for useful quantum codes~\cite{Kitaev97}. Also, a good news in analyzing quantum phases arising in stabilizer Hamiltonians is that there are many useful theoretical tools to analyze the ground state properties in stabilizer codes, which are originally developed in quantum coding theoretical contexts~\cite{Nielsen_Chuang, Hein04, Fattal04}, but are potentially useful for studying non-local correlations arising in many-body systems~\cite{Bravyi09, Beni10}. Thus, the analysis on quantum phases arising in stabilizer Hamiltonians and their classification will be the necessary first step in order to address quantum phases arising in arbitrary frustration-free Hamiltonians and search for all the possible two-dimensional quantum phases.

There are several useful theoretical tools in analyzing the ground state properties of stabilizer Hamiltonians. Basic analysis tools for the ground state properties of stabilizer Hamiltonians were developed for the cases where a ground state is unique~\cite{Hein04, Fattal04}. These tools were further generalized to stabilizer Hamiltonians with the ground state degeneracy with an aim to study global entanglement arising in many-body spin systems such as topological order~\cite{Beni10}. However, there is still an important gap between physically realistic correlated spin systems and stabilizer codes discussed in quantum information science community, as most stabilizer codes encode qubits in highly non-local Hamiltonians which are not physically realistic. 

Several authors have initiated studies on physically realistic stabilizer Hamiltonians by analyzing coding properties of stabilizer codes supported by local terms~\cite{Bravyi09, Kay08, Bravyi10}. However, physically realizable Hamiltonians must have some physical symmetries such as translation symmetries too if they are to be realized in some lattice of spins. Also, there is another important physical constraint which must be considered in analyzing quantum phases arising in stabilizer Hamiltonians. According to the underlying philosophy behind RG transformations, quantum phases must be characterized by non-local correlations which survive even at the thermodynamic limit (a limit where the system size becomes infinite) and are invariant under scale transformations~\cite{Fisher98}. Since quantum phases must be classified in a scale invariant way, stabilizer Hamiltonians need to have scale invariance too if they are to be used as candidates for possible quantum phases. 

In this paper, to search for possible quantum phases in two-dimensional spin systems, we begin by proposing a model of stabilizer Hamiltonians which have both physical realizability and scale invariance. In particular, the model is built on the following physical constraints.
\begin{itemize}
\item \textbf{Locality of interactions:}\ A system of qubits, defined on a two-dimensional lattice, is governed by a stabilizer Hamiltonian which consists only of local interactions.
\item \textbf{Translation symmetries:}\ The Hamiltonian is invariant under some finite translations of qubits.
\item \textbf{Scale symmetries:}\ The number of degenerate ground states of the Hamiltonian does not depend on the system size.
\end{itemize} 
We call stabilizer Hamiltonians which satisfy the above three physical constraints \emph{Stabilizer codes with Translation and Scale symmetries} (\textbf{STS models}).

\textbf{Classification of quantum phases and topology:}
The goal of this paper is to analyze and classify quantum phases arising in STS models completely. In the analysis and classification of quantum phases arising in STS models, quantum phases must be distinguished by some quantities or objects which are scale invariant. Here, it is useful to realize the similarity between the notion of topology and the classification of quantum phases by observing that geometric shapes of objects are scale invariant. In topology, geometric shapes of objects are classified in terms of smoothness and non-analyticity. Two objects are considered to be the same when they can be transformed each other continuously, while they are considered to be different when they can be transformed each other through only non-analytic changes of geometric shapes. In a similar fashion, quantum phases are classified in terms of continuous changes of ground states induced by changes of parameterized Hamiltonians. Two frustration-free Hamiltonians are considered to belong to the same quantum phase when their ground states can be connected continuously. Two frustration-free Hamiltonians are considered to belong to different quantum phases when they can be connected through only parameterized Hamiltonians which undergo QPTs with non-analytic changes of ground states. Thus, the notion of topology and a classification of quantum phases are fundamentally akin to each other. 

The notion of topology has been playing crucial roles in classifying physical properties of quantum many-body systems in various fields of physics~\cite{Nayak08, Volovik_Text}. Quantum field theory, equipped with the notion of topology, has been proposed in high energy physics where physical properties of systems depend on topological characteristics of the geometric manifolds where the systems are defined~\cite{Witten89}. As a physical realization of such topological theories in low-dimensional many-body systems, quantum hall systems defined on a geometric manifold were studied and the notion of topological order has been introduced in order to discuss and characterize topology-dependent non-local correlations~\cite{Wen90}. The notion of topology appears not only in real physical space, but also in the momentum space too where QPTs are triggered by topological changes in the spectrum~\cite{Lifshitz60}. Finally, when Hamiltonians have several parameters, the parameter space may have non-trivial topological structures which may result in interesting behaviors of geometric (Berry) phases~\cite{Wilczek_Text, Hamma06}. 

With this intimate connection between topology and quantum phases, we wish to classify quantum phases arising in STS models by introducing the notion of topology into quantum coding theoretical tools developed in studies of stabilizer Hamiltonians. What plays a central role in analyses on the ground state properties in stabilizer Hamiltonians are certain operators, called \emph{logical operators}~\cite{Nielsen_Chuang}, which are Pauli operators commuting with the system Hamiltonian, but act non-trivially inside the ground state space. A useful empirical knowledge commonly shared in quantum information science community is that ground states are highly entangled when logical operators are supported by a large number of spins for a fixed system size. In particular, it has been pointed out that geometric non-localities of logical operators may characterize global entanglement arising in ground states of stabilizer Hamiltonians, such as topological order~\cite{Bravyi09, Beni10}. 

In this paper, we classify quantum phases arising in STS models by introducing the notion of topology into geometric shapes of logical operators. In particular, we achieve the following two programs.
\begin{itemize}
\item \textbf{Exact solution of the model:} We solve the STS model exactly by identifying all the possible geometric shapes of logical operators. We establish the connection between the scale invariant ground state properties, such as topological order, and geometric shapes of logical operators completely.\footnote{Here, we claim that the model is exactly solved by finding logical operators since logical operators allow us to determine essential physical properties of the model as we shall see in the main discussion of this paper.}
\item \textbf{QPTs and topology in logical operators:} We show that STS models with different geometric shapes of logical operators are always separated by a QPT. We find that the existence of a QPT originates from changes of geometric shapes of topologically distinct logical operators. We show that topological structures of geometric shapes of logical operators can be used as order parameters to distinguish quantum phases arising in one and two-dimensional STS models.
\end{itemize} 

\textbf{Organization and contents:}
The paper is organized as follows. In Section~\ref{sec:review}, we begin by introducing basic notions used throughout this paper by giving introductions to QPTs and stabilizer codes. A definition of quantum phases and a criteria for their classification are given. Also, logical operators, which are central in the analyses on stabilizer Hamiltonians, are reviewed. In Section~\ref{sec:model}, we introduce the STS model by imposing physical constraints and scale invariance. Then, we develop a basic analysis tool, concerning a certain symmetry on logical operators, which arises as a result of physical symmetries and scale invariance we impose on stabilizer Hamiltonians. In Section~\ref{sec:1D} and Section~\ref{sec:2D}, we analyze quantum phases arising in one and two-dimensional STS model and show that different quantum phases are characterized by geometric shapes of logical operators. We find that the existence of a QPT originates from changes of geometric shapes of topologically distinct logical operators. We show that topological characteristics of geometric shapes of logical operators can be used as order parameters to distinguish quantum phases arising in one and two-dimensional STS models.

Here we make some comments on the contents of the paper. First, the paper is written in a way accessible to both condensed matter physicists and quantum information scientists, providing introductions to quantum coding theory and the notion of QPTs and topological order. In particular, the paper is presented in a self-consistent way. Second, while a mathematical rigor is a virtue, but so is the physical intuition. Thus, we have presented all the derivations and calculations of logical operators in appendices (from~\ref{sec:TE} to~\ref{sec:loop}) so that we can concentrate on physical discussions concerning quantum phases arising in STS models. Finally, while the locality of interaction terms and translation symmetries are commonly used as physical constrains in studies of many-body spin systems, the importance of scale symmetries may not be obvious. In~\ref{sec:translation}, we expand our analyses to stabilizer Hamiltonians without scale symmetries in order to study the role of scale symmetries. We point out the connection between translation symmetries of ground states and scale symmetries by analyzing the phenomena called weak breaking of translation symmetries~\cite{Kitaev06b}.

\section{Review: classification of quantum phases and stabilizer Hamiltonians}\label{sec:review} 

We begin with quick reviews of theoretical notions which are central to discussions in this paper. In Section~\ref{sec:review1}, we review how different quantum phases are classified through frustration-free Hamiltonians with some examples of quantum phase transitions (QPTs). Then, we state the criteria for classifying quantum phases clearly for the sake of later discussions. The notion of scale invariance and its role in a classification of quantum phases are reviewed and discussed. In Section~\ref{sec:review2}, we give a review of stabilizer codes which are quantum error-correcting codes realized by a certain subclass of frustration-free Hamiltonians. Some theoretical tools which become central in analyzing STS models are introduced here. 

Here, we make a comment on the notations used throughout this paper. A state of a qubit is described as a superposition of $|0\rangle$ and $|1\rangle$, and Pauli operators are represented by alphabets $X$, $Y$, and $Z$ in this paper. 

\subsection{Classification of quantum phases through frustration-free Hamiltonians}\label{sec:review1}

In a quantum many-body system at zero temperature, physical properties of ground states may change abruptly as a result of parameter changes in the Hamiltonian~\cite{Sachdev_Text}. In such a quantum phase transition (QPT), quantum phases may be characterized by frustration-free Hamiltonians. Here, we first review two parameterized Hamiltonians which undergo QPTs and discuss how quantum phases can be classified. Then, we state a criteria to classify quantum phases in this paper clearly.

The models of QPTs we review in this subsection are known to be exactly solvable, meaning that one can characterize the ground state properties and obtain the energy spectrum through analytical solutions of the Hamiltonians. The solutions to models may be found in the original literature~\cite{Lieb61} and some standard textbooks on condensed matter physics (for example, see~\cite{Sachdev_Text}). Solutions and discussions on these models with quantum information theoretical viewpoints may be found in~\cite{Vidal03, Latorre04}.

\textbf{An example of QPTs and frustration-free Hamiltonians:}
We consider the Ising model in a transverse field in one dimension with periodic boundary conditions:
\begin{align}
H(\epsilon) \ = - (1 - \epsilon)\sum_{j}Z^{(j)}Z^{(j+1)} - \epsilon \sum_{j}X^{(j)}
\end{align}
where $Z^{(j)}$ acts on $j$-th qubit (spin $1/2$ particle). The system consists of $N$ qubits (spin $1/2$ particles). At $\epsilon = 0$, the system Hamiltonian is
\begin{align}
H(0) \ = - \sum_{j}Z^{(j)}Z^{(j+1)}
\end{align}
and the ground states are $|0 \cdots 0 \rangle$ and $|1 \cdots 1 \rangle$. At $\epsilon = 1$, the system Hamiltonian is
\begin{align}
H(1) \ = - \sum_{j}X^{(j)}
\end{align}
and the ground state is $|+ \cdots + \rangle$ where $|+\rangle = \frac{1}{\sqrt{2}}(|0\rangle + |1 \rangle)$. It is known that physical properties of ground states drastically change around $\epsilon = 1/2$. When $0 \leq \epsilon < 1/2$, physical properties of a ground state are close to ones of $H(0)$. When $1/2 < \epsilon \leq 1$, physical properties of a ground state are close to ones of $H(1)$. This drastic change of physical properties at $\epsilon = 1/2$ is called a QPT. Two Hamiltonians $H(0)$ and $H(1)$ are separated by a QPT at $\epsilon = 1/2$, and may represent different quantum phases (Fig.~\ref{fig_RG}(a)). The existence of a QPT can be detected by considering the energy gap between a ground state and excited states. Let us consider how the energy gap changes around the transition point in this model. Then, at the transition point $\epsilon = 1/2$, the energy gap $\Delta(\epsilon)$ between a ground state and excited states becomes zero at the thermodynamic limit: $\Delta(1/2) \rightarrow 0$ for $N \rightarrow \infty$. Thus, the vanishing energy gap serves as an indicator of a QPT.

%A sudden non-analytic change of the ground state properties is a signature of a QPT. Let us consider how a ground state changes around the transition point in this model. When the system size $N$ is finite, a ground state $|\psi(\epsilon)\rangle$ of the parameterized Hamiltonian $H(\epsilon)$ changes continuously as $\epsilon$ varies. However, at the thermodynamic limit: $N \rightarrow \infty$, a ground state $|\psi(\epsilon)\rangle$ changes non-analytically at the transition point $\epsilon = 1/2$. Thus, the non-analyticity induced in a ground state $|\psi(\epsilon)\rangle$ indicates a QPT at $\epsilon = 1/2$. The existence of a QPT can be also detected by considering the energy gap between a ground state and excited states. Let us consider how the energy gap changes around the transition point in this model. Then, at the transition point $\epsilon = 1/2$, the energy gap $\Delta(\epsilon)$ between a ground state and excited states becomes zero at the thermodynamic limit: $\Delta(1/2) \rightarrow 0$ for $N \rightarrow \infty$. Thus, the vanishing energy gap also serves as an indicator of a QPT.

\begin{figure}[htb!]
\centering
\includegraphics[width=0.65\linewidth]{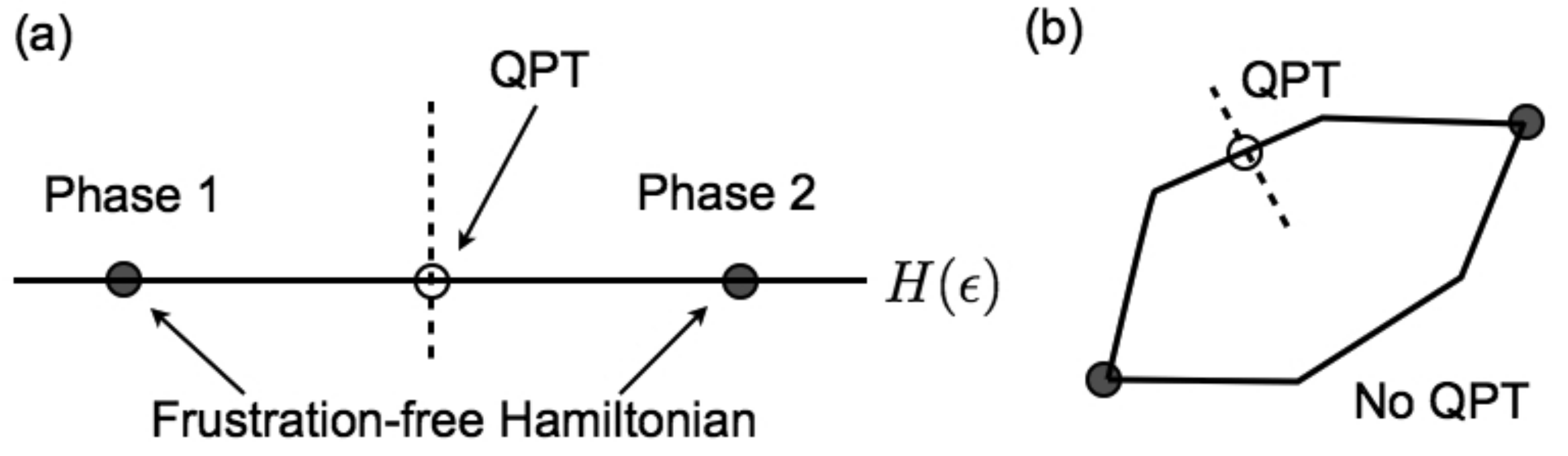}
\caption{(a) Two quantum phases separated by a QPT. Black circles represent frustration-free Hamiltonians and a blank circle represents a QPT. Different quantum phases are represented by frustration-free Hamiltonians. (b) A path dependence of the presence of a QPT. Two frustration-free Hamiltonians are considered to belong to the same quantum phase when there exists a parametrized Hamiltonian which connects them without closing the energy gap.
} 
\label{fig_RG}
\end{figure}

Since a ground state of the original parameterized Hamiltonian $H(\epsilon)$ cannot be obtained by minimizing each term in the Hamiltonians except at $\epsilon = 0, 1$, $H(\epsilon)$ is said to be \emph{frustrated}. In general, finding a ground state of a frustrated Hamiltonian is computationally difficult. However, Hamiltonians $H(0)$ and $H(1)$ are not frustrated, meaning that their ground states can be obtained by minimizing each term in the Hamiltonian independently. In this example, physical properties of frustration-free Hamiltonians $H(0)$ and $H(1)$ are easy to analyze. Then, physical properties of the original Hamiltonian $H(\epsilon)$ may be studied by adding perturbations to frustration-free Hamiltonians $H(0)$ and $H(1)$. Thus, $H(0)$ and $H(1)$ may serve as easily solvable reference models to characterize quantum phases (Fig.~\ref{fig_RG}(a)). 

\textbf{Path dependence of the presence of a QPT:}
Then, we are tempted to call two quantum phases different if they are separated by a QPT. However, there is a subtlety in this classification of quantum phases since the presence of a QPT may depend on how we change the Hamiltonian. Let us see an example:
\begin{align}
H(\epsilon) \ = - (1 - \epsilon)\sum_{j}Z^{(j)}Z^{(j+1)} - \epsilon \sum_{j} X^{(j)}X^{(j+1)}.
\end{align}
This model exhibit a QPT at $\epsilon = 1/2$ where the energy gap $\Delta(\epsilon)$ becomes zero. Then, one might hope to consider two quantum phases at $0 \leq \epsilon < 1/2$ and at $1/2 < \epsilon \leq 1$ different. However, there exists another parameterized Hamiltonian $H'(\epsilon)$ which connects $H(0)$ and $H(1)$ without closing an energy gap between ground states and excited states (Fig.~\ref{fig_RG}(b)):
\begin{align}
H'(\epsilon) \ = \ - \sum_{j}M(\epsilon)^{(j)}M(\epsilon)^{(j+1)}, \quad M(\epsilon)^{(j)} \ = \ (1-\epsilon) Z^{(j)} + \epsilon X^{(j)}
\end{align}
with $H'(0) = H(0)$ and $H'(1) = H(1)$. It is easy to see that the energy gap of the above Hamiltonian $H'(\epsilon)$ is independent of $\epsilon$, and remain finite. Also, since $H(0)$ can be obtained by rotating the axis of each qubit in $H(1)$, we may consider two Hamiltonians to be in the same quantum phase.

\textbf{Classification of quantum phases through frustration-free Hamiltonians:}
As we have seen in the above examples, the existence of a QPT depends on paths of parameterized Hamiltonians which connect two frustration-free Hamiltonians. A standard way to classify quantum phases is to consider two Hamiltonians belonging to the same quantum phase when they can be transformed each other without closing the energy gap through some path of a parameterized Hamiltonian~\cite{Hastings05, Bravyi10b, Chen10}. Here, following this spirit, we shall state the criteria to classify quantum phases arising in frustration-free Hamiltonians as follows (Fig.~\ref{fig_phase}).
\begin{itemize}
\item Two frustration-free Hamiltonians $H_{A}$ and $H_{B}$ belong to different quantum phases if and only if there is no parameterized Hamiltonian $H(\epsilon)$ which connects $H_{A}$ and $H_{B}$ without closing an energy gap or changing the number of ground states.
\end{itemize}
Here, by ``parameterized Hamiltonians'', we mean the following:
\begin{itemize}
\item Parameterized Hamiltonians $H(\epsilon)$ consist only of local terms which change continuously with some external parameter $\epsilon$, and amplitudes (norms) of local terms do not depend on the system size.
\end{itemize}
By ``without closing an energy gap'', we mean that
\begin{itemize}
\item The energy gap between degenerate ground states and a first excited state is finite even at the thermodynamic limit.
\end{itemize}
Thus, two different quantum phases are always separated by a QPT regardless of choices of parameterized Hamiltonians $H(\epsilon)$.

Here, we make some comments on the validity of the above classification of quantum phases. While the vanishing energy gap may be a signature of QPTs, the original definition of a QPT is a non-analytic change of the ground state properties. In fact, there are examples of parameterized Hamiltonians whose ground state energy changes continuously while the energy gap vanishes~\cite{Wolf06}. Thus, the connection between the energy gap and the non-analyticity in ground states is not completely established. However, we use the vanishing energy gap as a criteria for classifying quantum phases for simplicity of discussion in this paper. In this sense, the classification presented above relies on the assumption that QPTs are triggered by the vanishing energy gap or the change of the number of degenerate ground states~\cite{Sachdev_Text}. 

%We also make some comments on the definition of the energy gap. The energy gap $\Delta(\epsilon)$ is defined as the energy separation between degenerate ground states and an excited state at $\epsilon$. However, this quantity is not well defined at points where the number of degenerate ground states changes. 

\begin{figure}[htb!]
\centering
\includegraphics[width=0.30\linewidth]{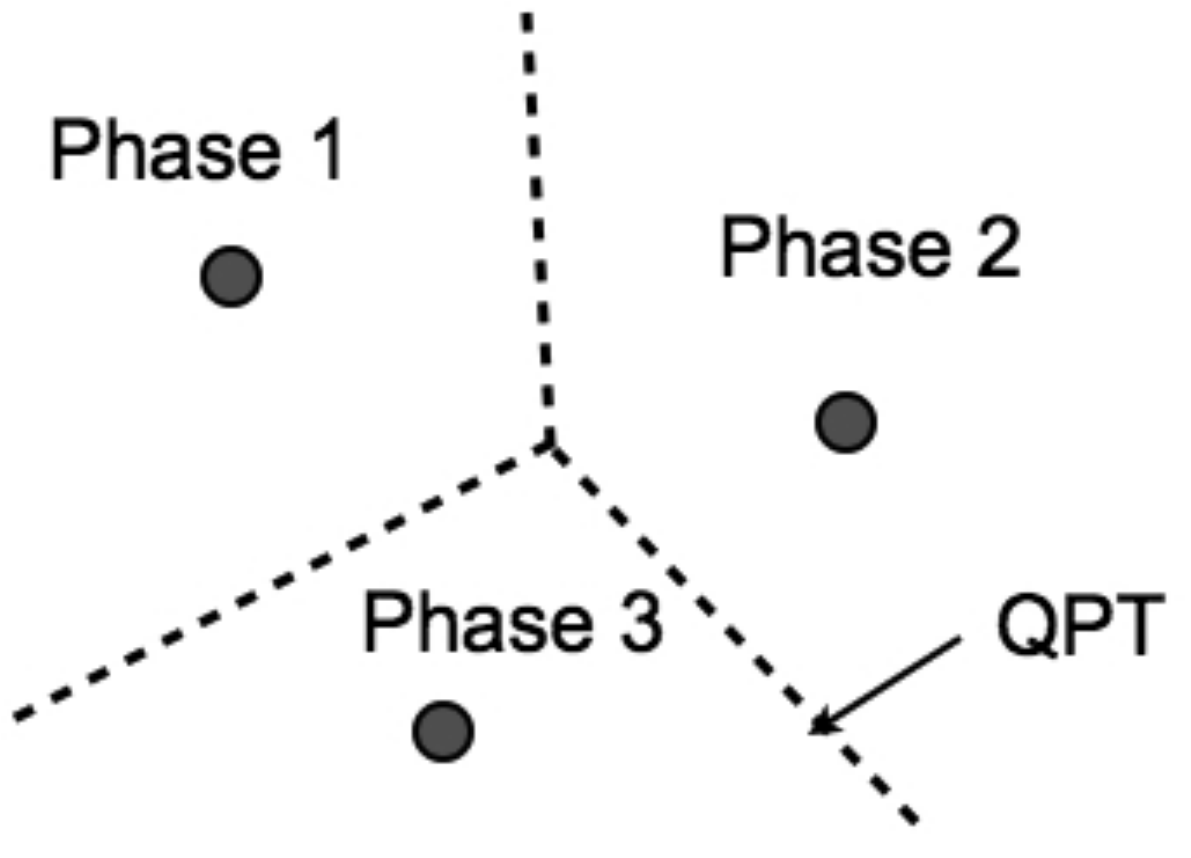}
\caption{A classification of quantum phases. Black dots represent frustration free Hamiltonians in corresponding quantum phases. Frustration-free Hamiltonians belonging to different quantum phases are always separated by a QPT.
} 
\label{fig_phase}
\end{figure}

\textbf{RG transformations and scale invariance:}
QPTs in parameterized Hamiltonians have been analyzed commonly through RG transformations~\cite{Fisher98}, which are certain scale transformations. Here, we review the basic idea of RG transformations and discuss the relation between RG transformations and frustration-free Hamiltonians in terms of scale invariance. Note that while we shall describe a general idea of RG transformations, we do not give any specific procedure of RG transformations since detailed reviews on various procedures of RG transformations are beyond the scope of this paper.

RG transformation is a way to capture only the non-local correlations from many-body systems by analyzing how the system properties changes under scale transformations. RG transformations usually consist of two elements, called coarse-graining and rescaling. Coarse-graining is a process to group several spins into a single particle with a larger Hilbert space. In this coarse-graining, a microscopic structure of original spins inside a larger spin (coarse-grained spin) is completely lost and one ends up with a system consisting only of larger spins. Rescaling is a process to replace a coarse-grained larger spin with a original small spin by eliminating some spin degrees of freedom inside each coarse-grained spin, which is interacting weakly with other coarse-grained spins. By repeating scale transformations consisting of coarse-graining and rescaling, only the non-local correlations extending all over the lattice may survive.

When quantum phases are classified through RG transformations, the key idea is the use of Hamiltonians which are invariant under scale transformations, called \emph{fixed point Hamiltonians}. By applying RG transformations to some parameterized Hamiltonian $H(\epsilon)$, one can systematically obtain Hamiltonians which are invariant under RG transformations. Since fixed point Hamiltonians are invariant under RG transformations, they do not have any length scale and can capture long-range physical properties which may survive even at the thermodynamic limit where the system size goes to the infinity. Then, fixed-point Hamiltonians help us to capture only the scale invariant properties of corresponding quantum phases.

Now, since fixed point Hamiltonians are obtained by washing out short-range correlations, it is known that fixed point Hamiltonians in gapped quantum phases can be represented as frustration-free Hamiltonians where each term can be minimized locally without considering neighboring terms. Then, one may take a slight liberty and say that a study of all the possible frustration-free Hamiltonians is almost equivalent to a study of all the possible fixed point Hamiltonians which may appear as a result of RG transformations. In fact, one of the physical constraints (scale symmetries) which we shall impose on frustration-free Hamiltonians comes from the observation that fixed point Hamiltonians are free from any length scale (see discussions in Section~\ref{sec:model2}).  

\textbf{Frustration-free Hamiltonians:}
Since our goal is to study quantum phases arising in frustration-free Hamiltonians, we now give their formal definition. 

When a Hamiltonian is said to be frustration-free, a ground state can be obtained by minimizing each local term in the Hamiltonian simultaneously. A useful feature of frustration-free Hamiltonians is that they can be represented as summations of projection operators. Consider a class of Hamiltonians which can be represented as summations of projectors:
\begin{align}
H = - \sum_{j} \hat{P}_{j} \qquad \mbox{for all} \ j,
\end{align}
where $\hat{P}_{j}$ are projectors. Such Hamiltonians are called frustration-free when projectors commute with each other ($[\hat{P}_{j}, \hat{P}_{j'}]=0$) and whose ground states satisfy 
\begin{align}
\hat{P}_{j}|\psi\rangle = |\psi\rangle.
\end{align}

A frustration-free Hamiltonian needs not to be a summation of projections. However, for any Hamiltonian whose ground states minimize each local term independently, one can represent the Hamiltonian as a summation of commuting projectors. For example, one may obtain $H(0)$ and $H(1)$ appeared in the discussions on the Ising model in a transverse field by setting $\hat{P}_{j} = \frac{1}{2}(I+Z^{(j)}Z^{(j+1)})$ and $\hat{P}_{j} = \frac{1}{2}(I+X^{(j)})$ up to some constant correction.

\subsection{Stabilizer codes and logical operators}\label{sec:review2}

While frustration-free Hamiltonians may be used as representatives of quantum phases, their analyses are generally difficult both analytically and computationally~\cite{Bravyi06}. Fortunately, for stabilizer Hamiltonians which are quantum error-correcting codes realized by a certain subclass of frustration-free Hamiltonians, there exist various theoretical tools to analyze their ground state properties. Of particular importance are certain operators, called \emph{logical operators}, which are central in studying coding properties of stabilizer codes. While properties of logical operators are discussed mainly in coding theoretical contexts, they become particularly useful in characterizing quantum phases arising in stabilizer Hamiltonians as we shall see in the main discussion of this paper. Here, we give a quick review of stabilizer codes and logical operators~\cite{Gottesman96}. We also review basic theoretical tools to analyze properties of entanglement through logical operators~\cite{Beni10}. Some notations concerning stabilizer Hamiltonians are also introduced here. At the end of this subsection, an example of stabilizer Hamiltonians is presented. 

\textbf{Stabilizer group:} 
A stabilizer code can be characterized by an Abelian subgroup of Pauli operators. Consider an $N$ qubit system and the Pauli operator group
\begin{align}
\mathcal{P}\ =\ \langle iI, X^{(1)},Z^{(1)},\dots , X^{(N)}, Z^{(N)} \rangle
\end{align}
which is generated by Pauli operators $X^{(j)}$ and $Z^{(j)}$ acting on single qubits labeled by $j$. A stabilizer code is defined with the \emph{stabilizer group}
\begin{align}
\mathcal{S}\ =\ \langle  S^{(1)}, S^{(2)} , \cdots , S^{(N-k)}  \rangle \ \subset \ \mathcal{P},
\end{align}
which is a self-adjoint Abelian subgroup of the Pauli operator group $\mathcal{P}$ without containing $-I$ or $iI$ where $k \geq 0$. $S^{(j)}$ represent independent generators for $\mathcal{S}$. Elements in the stabilizer group $\mathcal{S}$ are called \emph{stabilizers}. Qubits are encoded in a subspace $V_{\mathcal{S}}$ spanned by states $|\psi\rangle$ which satisfy
\begin{align}
S^{(j)}|\psi\rangle\ =\ |\psi\rangle \qquad \mbox{for all}\ j
\end{align}
where $V_{\mathcal{S}}$ is called the \emph{codeword space} of the stabilizer code. In total, $k$ qubits can be encoded in $V_{\mathcal{S}}$, and encoded qubits are called \emph{logical qubits}.

\textbf{System Hamiltonian:}
In the stabilizer formalism, there exists a system Hamiltonian which can support the encoded states as degenerate ground states with a finite energy gap (Fig.~\ref{fig_stabilizer}(a)).
If a Hamiltonian
\begin{align}
H\ =\ - \sum_{j} S^{(j)} \label{eq:stabilizer_Hamiltonian}
\end{align}
satisfies $\langle \{S^{(j)}, \forall j \} \rangle = \mathcal{S}$, the ground state space of the Hamiltonian is the same as the codeword space $V_{\mathcal{S}}$ since the energy of the system can be minimized for states satisfying $S^{(j)}|\psi\rangle = |\psi\rangle$. There are $2^{k}$ degenerate ground states where $k$ logical qubits can be encoded. We call Hamiltonians in Eq.~(\ref{eq:stabilizer_Hamiltonian}) \emph{stabilizer Hamiltonians}.

One may notice that stabilizer Hamiltonians are frustration-free up to some constant correction by setting $\hat{P}_{j} = \frac{1}{2}(I + S^{(j)})$. Then, a ground state of a stabilizer Hamiltonian satisfy $\hat{P}_{j}|\psi\rangle = |\psi\rangle$ since $S^{(j)}|\psi\rangle = |\psi\rangle$, and thus, stabilizer Hamiltonians are frustration-free. Note that frustration-free Hamiltonians appeared in examples of QPTs in Section~\ref{sec:review1} are stabilizer Hamiltonians. 

\begin{figure}[htb!]
\centering
\includegraphics[width=0.80\linewidth]{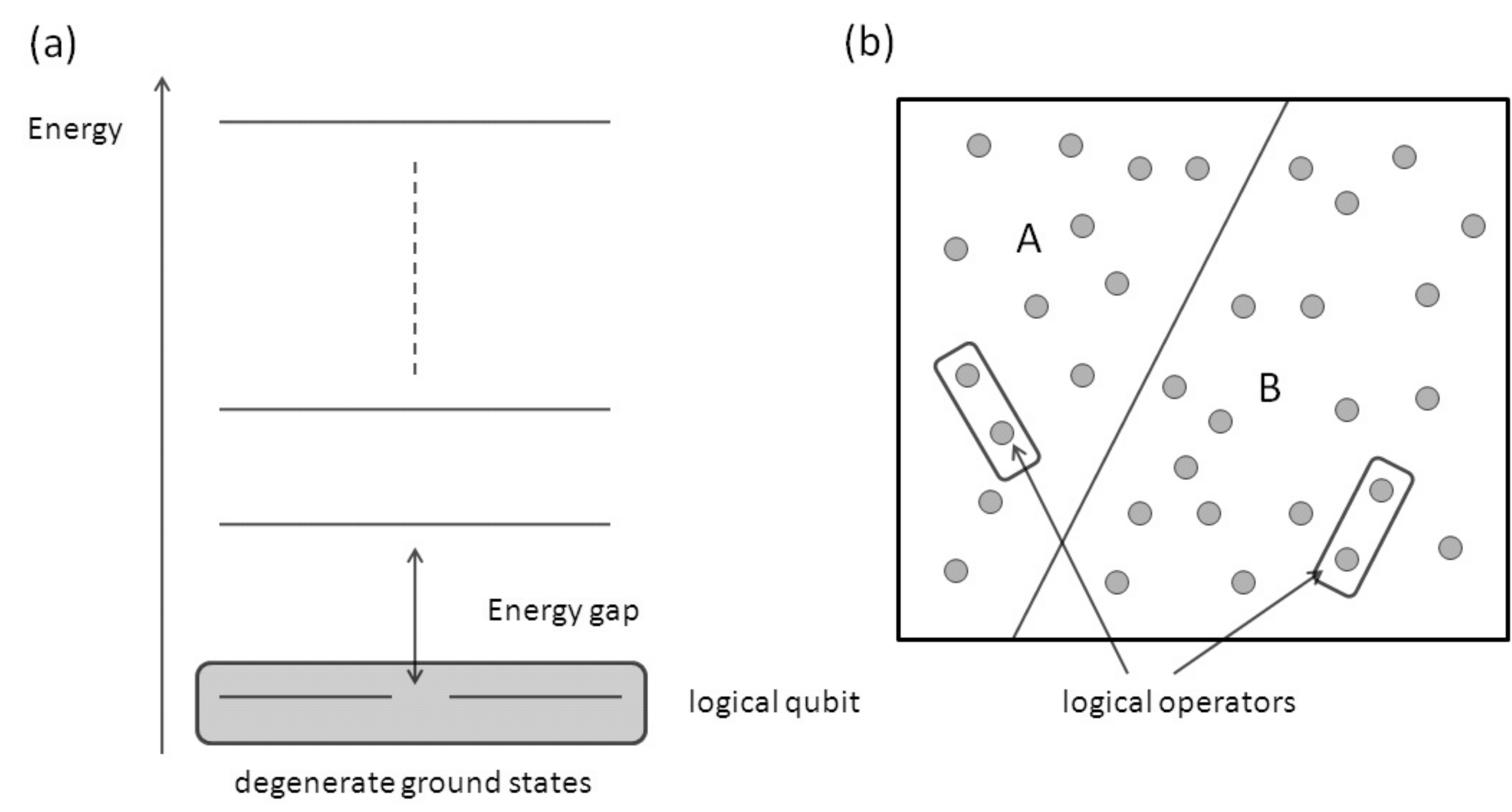}
\caption{(a) The gapped energy spectrum of a stabilizer Hamiltonian with degenerate ground states. (b) A bi-partition of a system into two complementary subsets of qubits $A$ and $B = \bar{A}$. Small dots represent qubits, and rectangles represent logical operators.} 
\label{fig_stabilizer}
\end{figure}

\textbf{Logical operators:}
\emph{Logical operators} are Pauli operators $\ell \in \mathcal{P}$ which satisfy
\begin{align}
[\ell , S^{(j)}]\ =\ 0\ \hspace{1ex} \mbox{for all}\quad j \qquad \mbox{and} \qquad \ell \ \not\in\ \langle iI, \mathcal{S} \rangle.
\end{align}
Two logical operators $\ell$ and $\ell'$ are said to be \emph{equivalent} if and only if $\ell$ and $\ell'$ act in a similar way inside the ground state space:
\begin{align}
\ell\ \sim\ \ell' \qquad &\Leftrightarrow \qquad \ell|\psi\rangle \ = \ \ell' |\psi\rangle \quad \mbox{for all} \quad |\psi\rangle \ \in \ V_{\mathcal{S}}\\
&\Leftrightarrow \qquad  \ell \ell'\ \in\ \langle \mathcal{S} \rangle
\end{align}
where ``$\sim$'' represents the equivalence between logical operators $\ell$ and $\ell'$. Given a set of logical operators $\textbf{L}$, all the logical operators inside $\textbf{L}$ are said to be \emph{independent} when
\begin{align}
\langle \{  \forall l \ \in \ \textbf{L} \} \rangle\ \cap\ \mathcal{S} \ =\ \emptyset.
\end{align}
Here, ``$\emptyset$'' represents a null set.

\textbf{Centralizer group:}
Logical operators can be found in the centralizer group which consists of all the Pauli operators commuting with stabilizers:
\begin{align}
\mathcal{C}\ =\ \Big\langle\ \Big\{\ U \in \mathcal{P}\ :\ [U,S^{(j)}] =0,\hspace{1ex} \mbox{for all}\ j\ \Big\}\ \Big\rangle.
\end{align}
The centralizer group can be represented in a \emph{canonical form}~\cite{Beni10}:
\begin{align}
\mathcal{C}\ =\ \left\langle
\begin{array}{cccccc}
 \ell_{1}, & \cdots , &  \ell_{k} , & S^{(1)} ,    &     \cdots , & S^{(N-k)}         \\
 r_{1}, & \cdots , &  r_{k} , &   &   &   
\end{array}
 \right\rangle 
\end{align}
where each operator represents independent generators for $\mathcal{C}$. Pairs of generators $\ell_{p}$ and $r_{p}$ anti-commute with each other while any other pair of generators commute with each other. For example, $\{\ell_{p},r_{p}\}=0$ while $[\ell_{p},r_{q}]=0$ for $p \not =q$. Pairs of anti-commuting generators $\ell_{p}$ and $r_{p}$ ($p=1, \cdots ,k$) are independent logical operators in a stabilizer code.

\textbf{Canonical set of logical operators:}
A set of logical operators with the following commutation relations is called a \emph{canonical set of logical operators}:
\begin{align} \label{eq:canonical_set}
\Pi(\mathcal{S})\ =\ \left\{
\begin{array}{ccc}
\ell_{1} ,& \cdots ,& \ell_{k} \\
r_{1} ,& \cdots ,& r_{k}
\end{array} \right\}.
\end{align}
Here, the commutation relations are represented in a way similar to a canonical representation where only the operators in the same column anti-commute with each other. Note that the choice of a canonical set of logical operators is not unique. With a canonical set of logical operators, the subspace $V_{\mathcal{S}}$ can be decomposed as a direct product of $k$ subsystems:
\begin{equation}
|\psi\rangle\ =\ \bigotimes_{p=1}^{k} \Big( \alpha_{p} |\tilde{0} \rangle_{p} + \beta_{p} |\tilde{1} \rangle_{p} \Big) 
\end{equation}
where $\ell_{p}$ and $r_{p}$ act non-trivially only on $|\tilde{0} \rangle_{p}$ and $|\tilde{1} \rangle_{p}$. One can choose the basis $|\tilde{0}\rangle_{p}$ and $|\tilde{a}\rangle_{p}$ such that $\ell_{p}$ and $r_{p}$ act like Pauli operators applied to a logical qubit represented by $|\tilde{0} \rangle_{p}$ and $|\tilde{1} \rangle_{p}$:
\begin{align}
\ell_{p}|\tilde{0}\rangle_{p} \ = \ |\tilde{1}\rangle_{p}, \quad \ell_{p}|\tilde{1}\rangle_{p} \ = \ |\tilde{0}\rangle_{p}, \quad  r_{p}|\tilde{0}\rangle_{p} \ = \ |\tilde{0}\rangle_{p}, \quad r_{p}|\tilde{1}\rangle_{p} \ = \ - |\tilde{1}\rangle_{p}. 
\end{align}

\textbf{Number of generators:}
The number of independent generators for a group of Pauli operators $\mathcal{O} \in \mathcal{P}$ is denoted as $G(\mathcal{O})$. Note that $G(\mathcal{O})$ does not count trivial generators such as $I$ and $iI$. Here, we give the numbers of generators for groups of Pauli operators which are central in discussions on stabilizer Hamiltonians:
\begin{align}
G(\mathcal{P}) \ = \ 2N, \quad G(\mathcal{S}) \ = \ N-k, \quad  G(\mathcal{C}) \ = \ N + k. 
\end{align}

\textbf{Code distance:}
The code distance measures the robustness of the stabilizer code:
\begin{align}
d\ =\ \min(w(U)) \qquad \mbox{where} \quad U\ \in\ \mathcal{C}\quad \mbox{and} \quad U\ \not\in\ \langle iI, \mathcal{S} \rangle.
\end{align}
Here, $w(U)$ is the number of non-trivial Pauli operators in $U$. The code distance corresponds to a minimal number of single Pauli errors necessary to destroy a encoded qubit. Roughly speaking, a large code distance $d$ for fixed $N$ means that the code can encode qubits securely, and ground states of a stabilizer Hamiltonian are highly entangled. 

\textbf{Projections:}
Properties of entanglement can be studied by considering a bi-partition of the entire system into two complementary subset of qubits $A$ and $B =\bar{A}$. In discussing the bi-partite entanglement, projections of Pauli operators are often used. The projection of a Pauli operator $U \in \mathcal{P}$ onto a subset of qubits $A$ is denoted as $U|_{A}$. This keeps only the non-trivial Pauli operators which are inside $A$ and truncates Pauli operators acting outside the subset $A$. The restriction of a group of Pauli operators $\mathcal{O}$ into some subset of qubits $A$ is defined as
\begin{align}\label{eq:restriction}
\mathcal{O}_{A}\ =\ \Big\langle\ \Big\{ U \in \mathcal{O}\ :\ U|_{\bar{A}} = I\ \Big\}\ \Big\rangle.
\end{align}
Here, $\bar{A}$ represents a complement of $A$. $\mathcal{O}_{A}$ contains all the operators in $\mathcal{O}$ which are defined inside $A$. In this paper, we use the restrictions of the stabilizer group $\mathcal{S}$, the centralizer group $\mathcal{C}$ and the Pauli operator group $\mathcal{P}$, which are denoted as $\mathcal{S}_{A}$, $\mathcal{C}_{A}$ and $\mathcal{P}_{A}$ respectively. 

\textbf{Bi-partition of the system:}
For a stabilizer code in a bi-partition into two complementary subsets of qubits (Fig.~\ref{fig_stabilizer}(b)), the following theorem holds~\cite{Beni10}. 

\begin{theorem}[Bi-partitioning theorem of logical operators]\label{theorem_partition}
For a stabilizer code with $k$ logical qubits, let the number of independent logical operators defined inside a subset of qubits $A$ be $g_{A}$. Then, for two complementary subsets of qubits $A$ and $B=\bar{A}$, the following formula holds:
\begin{align}
g_{A}+g_{B}\ =\ 2k.
\end{align}
\end{theorem}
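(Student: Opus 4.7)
The plan is to work in the symplectic $\mathbb{F}_2$-representation of the Pauli group. Modulo overall phases, $\mathcal{P}$ is identified with $V=\mathbb{F}_2^{2N}$ equipped with a non-degenerate symplectic form $\sigma$ defined by commutation, $\sigma(U,W)=0$ iff $[U,W]=0$. Under this identification the stabilizer group is an isotropic subspace with $G(\mathcal{S})=N-k$, the centralizer is its symplectic complement $\mathcal{C}=\mathcal{S}^{\perp}$ with $G(\mathcal{C})=N+k$, and the bipartition $A\sqcup B$ induces an orthogonal decomposition $V=V_A\oplus V_B$ in which each $V_A\cong\mathbb{F}_2^{2|A|}$ is itself a non-degenerate symplectic space.

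The first step is to rewrite $g_A$ as a difference of generator counts. Two elements of $\mathcal{C}_A$ represent the same equivalence class of logical operators iff their product lies in $\mathcal{S}$; since both are supported on $A$, the product must actually lie in $\mathcal{S}_A$. Hence $g_A=G(\mathcal{C}_A)-G(\mathcal{S}_A)$, and analogously $g_B=G(\mathcal{C}_B)-G(\mathcal{S}_B)$, so the theorem is equivalent to the identity $G(\mathcal{C}_A)+G(\mathcal{C}_B)-G(\mathcal{S}_A)-G(\mathcal{S}_B)=2k$.

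The only nontrivial ingredient is the identity $\mathcal{C}_A=(\mathcal{S}|_A)^{\perp}$, where the perp is taken inside the symplectic space $V_A$. This holds because for any $v\in V_A$ and any $w\in V$ one has $\sigma(v,w)=\sigma(v,\pi_A(w))$: the $V_B$-component of $w$ is symplectically orthogonal to $v$. Therefore $v\in V_A$ commutes with every element of $\mathcal{S}$ iff its projection commutes with every element of $\mathcal{S}|_A$. Non-degeneracy of $\sigma$ on $V_A$ then yields $G(\mathcal{C}_A)+G(\mathcal{S}|_A)=2|A|$, and the symmetric statement on $B$ gives $G(\mathcal{C}_B)+G(\mathcal{S}|_B)=2|B|$.

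To finish, the projection map $\pi_A\colon\mathcal{S}\to\mathcal{S}|_A$ has kernel $\mathcal{S}\cap V_B=\mathcal{S}_B$, so $G(\mathcal{S}|_A)=(N-k)-G(\mathcal{S}_B)$, and likewise $G(\mathcal{S}|_B)=(N-k)-G(\mathcal{S}_A)$. Substituting these into the two symplectic identities above and adding gives
\begin{equation*}
G(\mathcal{C}_A)+G(\mathcal{C}_B)=2(|A|+|B|)-2(N-k)+G(\mathcal{S}_A)+G(\mathcal{S}_B)=2k+G(\mathcal{S}_A)+G(\mathcal{S}_B),
\end{equation*}
from which $g_A+g_B=2k$ follows immediately. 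The main obstacle is the symplectic-complement-respects-projection lemma in the third paragraph; once that is established, the rest is pure dimension counting in the symplectic spaces $V$ and $V_A$.
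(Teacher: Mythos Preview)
Your proof is correct. The paper does not actually prove Theorem~\ref{theorem_partition} itself; it merely states the result and cites reference~\cite{Beni10} (Yoshida--Chuang, \emph{Phys.\ Rev.\ A} \textbf{81}, 052302), so there is no ``paper's own proof'' to compare against in this document.

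That said, your argument is clean and self-contained. The identification $g_A=G(\mathcal{C}_A)-G(\mathcal{S}_A)$ is exactly right (two centralizer elements supported on $A$ are equivalent iff their product is a stabilizer, which must then lie in $\mathcal{S}_A$), and the key lemma $\mathcal{C}_A=(\mathcal{S}|_A)^{\perp}$ in $V_A$ follows immediately from the block-orthogonality of $V_A$ and $V_B$ under $\sigma$, as you observe. The rest is rank--nullity on $\pi_A$ together with the non-degeneracy of the symplectic form on $V_A$. The original proof in~\cite{Beni10} proceeds via a canonical decomposition of the centralizer group tailored to the bipartition (essentially building a canonical set of logical operators adapted to $A$ and $B$ simultaneously), which is constructively more informative but requires more setup; your symplectic dimension-count is the shortest route to the bare identity $g_A+g_B=2k$.
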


Note that $A$ and $B$ may support the same logical operators if they have two equivalent representations supported inside $A$ and $B$ respectively. Then, $g_{A}$ and $g_{B}$ may count the same logical operators twice. This formula becomes particularly useful in determining sizes and geometric shapes of logical operators as we shall see in the main discussions of this paper. 

\textbf{An example of a stabilizer code:}
Here, we present an example of stabilizer codes to give some intuitions on definitions of logical operators and related groups of Pauli operators. The example we consider is called a five qubit code~\cite{Bennett96, Laflamme96}. Consider a system of five qubits which is governed by the following Hamiltonian:
\begin{align}
H \ = \ - \sum_{j=1}^{5} S^{(j)}, \qquad S^{(j)} \ = \ X^{(j)}Y^{(j+1)}Y^{(j+2)}X^{(j+3)}
\end{align}
where $j \in \mathbb{Z}_{5}$. Note that $S^{(j)}$ commute with each other, and this Hamiltonian is a stabilizer Hamiltonian. The stabilizer group is
\begin{align}
\mathcal{S} \ = \ \langle S^{(1)}, S^{(2)},S^{(3)},S^{(4)} \rangle
\end{align}
since $S^{(5)}$ is not independent from other stabilizers. This may be seen from the following equation: 
\begin{align}
S^{(1)}\ \times \ S^{(2)}\ \times \ S^{(3)}\ \times \ S^{(4)}\ \times \ S^{(5)}\ = \ I.
\end{align}
This stabilizer Hamiltonian has two degenerate ground states with $k = 1$ since $G(\mathcal{S})=4$, and logical operators are
\begin{align}
\ell \ = \ X^{(1)}Z^{(2)}X^{(3)}, \qquad r \ = \ Z^{(1)}Y^{(2)}Z^{(3)} 
\end{align}
where $\{ \ell , r \}=0$. Note that these are examples of logical operators, and there are many equivalent representations for logical operators. One may see that logical operators $\ell$ and $r$ commute with all the stabilizers $S^{(j)}$ through direct computations. The centralize group can be represented in the following way:
\begin{align}
\mathcal{C}\ =\ \left\langle
\begin{array}{ccccc}
 \ell, &  S^{(1)} , & S^{(2)} ,& S^{(3)}, & S^{(4)}         \\
 r, &  &   &   &      
\end{array}
 \right\rangle 
\end{align}
Here, we denote the subset of qubits which consists of the first, second and third qubits as $A$. Then, we have $g_{A} = 2$, $g_{B} = 0$ and $g_{A} + g_{B} =  2$
since logical operators $\ell$ and $r$ can be defined inside $A$ while there is no logical operator defined inside $B$. Since both $\ell$ and $r$ have non-trivial supports on three qubits, the code distance is $d = 3$. 

\textbf{Quantum phases and logical operators:}
Having introduced stabilizer Hamiltonians and logical operators, let us discuss the connection between quantum phases and logical operators in stabilizer Hamiltonian.

Quantum phases arising in correlated spin systems have been traditionally characterized by physical quantities, called order parameters, where different quantum phases are distinguished by different values of order parameters~\cite{Sachdev_Text}. Recently, it has been realized that entanglement, a non-local manifestation of the indistinguishability of quantum states, plays an essential role in characterizing quantum phases. In particular, various entanglement measures, such as entanglement entropies, have been used as order parameters and shown to be remarkably powerful in distinguishing quantum phases~\cite{Osborne02, Vidal03, Kitaev06, Levin06, Li08, Eisert10}. With this connection in hand, we hope to find some quantities or objects which may characterize entanglement in stabilizer Hamiltonians and serve as order parameters to distinguish quantum phases.

Now, a basic idea of stabilizer codes is to encode qubits in highly entangled ground states so that small errors cannot break encoded qubits. Since the robustness of the code can be measured by the number of supports in logical operators, the existence of large logical operators for a fixed system size implies the presence of strong entanglement in ground states of stabilizer Hamiltonians~\cite{Bravyi09, Beni10}. Then, geometric sizes of logical operators may be used as indicators of entanglement arising in ground states of stabilizer Hamiltonians. In the main discussion of this paper, we shall use geometric non-localities of logical operators as \emph{order parameters} of quantum phases arising in stabilizer Hamiltonians. 

\section{The model: Stabilizer codes with Translation and Scale symmetries (STS model)}\label{sec:model}

Now let us move on to the main problem of this paper, concerning possible quantum phases and their classifications in two-dimensional spin systems on a lattice. We wish to approach this problem by analyzing and classifying quantum phases arising in stabilizer Hamiltonians.

In quantum information science, stabilizer codes have been particularly important since protecting or encoding qubits is essential in realizing various quantum information theoretical ideas such as fault-tolerant quantum computation and quantum communication~\cite{Nielsen_Chuang}. Recently, stabilizer codes have been increasing their importance in condensed matter physics too, for it has been realized that many interesting correlated spin systems may be described in the language of stabilizer codes~\cite{Kitaev97, Kitaev03, Hein04, Fattal04, Hamma05, Bacon06, Bravyi09, Beni10}. The connection between correlated spin systems and stabilizer codes is potentially useful and powerful since stabilizer codes are described through Pauli operators which obey an operator algebra based on a finite group, and there is a possibility to analyze their properties analytically, or at least in a computationally efficient way~\cite{Fattal04, Beni10}. 

However, there still remains a huge gap between realistic correlated spin systems commonly discussed in condensed matter physics community and stabilizer codes commonly discussed in quantum information science community. For example, most stabilizer codes, discovered from quantum coding theoretical motivations, encode qubits in ground states of highly non-local Hamiltonians. However, in real physical systems, Hamiltonians must have only local interaction terms. While there are a few pioneering works which have analyzed stabilizer Hamiltonians with local interaction terms~\cite{Kay08, Bravyi09, Bravyi10}, physically realizable Hamiltonians must have some physical symmetries such as translation symmetries too. 

Stabilizer codes supported by local and translation symmetric Hamiltonians may seem to have sufficient physical realizability. However, there also exists another important physical constraint which must be satisfied by stabilizer Hamiltonians in order for them to be viewed as representatives of quantum phases. Let us recall that fixed point Hamiltonians, which are frustration-free Hamiltonians obtained after RG transformations, must have some \emph{scale invariant properties}~\cite{Fisher98}. Then, we wish to analyze stabilizer Hamiltonians whose physical properties do not depend on the system size. 

Here, we propose a model of frustration-free Hamiltonians which are supported by not only local but also physically symmetric Hamiltonians with scale invariance. In particular, our model is constrained to only the following three physical conditions:
\begin{itemize}
\item \textbf{Locality of interactions:}\ A system of qubits, defined on a two-dimensional lattice, is governed by a stabilizer Hamiltonian which consists only of local interactions.
\item \textbf{Translation symmetries:}\ The Hamiltonian is invariant under some finite translations of qubits.
\item \textbf{Scale symmetries:}\ The number of degenerate ground states of the Hamiltonian does not depend on the system size.
\end{itemize}
We call stabilizer Hamiltonians which satisfy the above three physical constraints \emph{Stabilizer codes with Translation and Scale symmetries} (\textbf{STS models}). 

The main goal of this paper is to analyze quantum phases arising in STS models. In this section, we give the definition of STS models precisely, and develop a basic analysis tool of STS models. In Section~\ref{sec:model1}, we construct stabilizer codes realized by local and translation symmetric Hamiltonians. In Section~\ref{sec:model2}, we discuss how the number of ground states changes as the system size varies, and impose scale symmetries on stabilizer Hamiltonians. In Section~\ref{sec:model3}, we give a formal definition of STS models. In Section~\ref{sec:model4}, we introduce a basic analysis tool for STS models, which we shall call the translation equivalence of logical operators. Readers who want to start with specific examples of STS models may jump to Section~\ref{sec:1D} and Section~\ref{sec:2D}, and return to this section later. 

\subsection{Translation symmetries and locality of interactions}\label{sec:model1}
 
Here, we construct stabilizer codes realized by local and translation symmetric Hamiltonians. Consider a system of qubits defined on a $D$-dimensional square lattice (hypercubic lattice) which consists of  $N = L_{1} \times \cdots \times L_{D}$ qubits where $L_{m}$ is the total number of qubits in the $\hat{m}$ direction for $m = 1 ,\cdots ,D$. We particularly consider stabilizer codes realized by system Hamiltonians with \emph{locally defined interaction terms} $S_{j}$:
\begin{align}
H = - \sum_{j} S_{j}
\end{align}
where each of $S_{j}$ is defined inside some geometrically localized regions. The meaning of ``locally defined'' will be clarified soon. The stabilizer group $\mathcal{S}$ is generated from each of the interaction terms $S_{j}$: $\mathcal{S}= \langle\{S_{j}, \forall j \}\rangle$. 

In addition, we assume that the system Hamiltonian possesses \emph{translation symmetries}:
\begin{align}
T_{m}(H)\ =\ H \qquad (m = 1, \cdots, D)
\end{align}
where $T_{m}$ represent translation operators which shift the positions of qubits by some positive integer $v_{m}$ in the $\hat{m}$ direction. For simplicity of discussion, we define:
\begin{align}
n_{m}\ \equiv\ L_{m}/v_{m} \qquad (m = 1, \cdots, D)
\end{align}
which are integer values so that translations by $T_{m}$ can cover the entire lattice in a commensurate way. One may consider a hypercube of $v_{1}\times \cdots \times v_{D}$ qubits as a unit cell of the system where $v_{m}$ are periodicities of the lattice in the $\hat{m}$ directions. We also assume that the system has \emph{periodic boundary conditions} for simplicity of discussion. Then, the entire space may be viewed as a square lattice (hypercubic lattice) defined on a $D$-dimensional torus:
\begin{align}
\mathbb{T}^{D}\ =\ \textbf{S}^{1} \times \cdots \times \textbf{S}^{1}
\end{align}
where $\textbf{S}^{1}$ is a circle. We may represent translation symmetries of the Hamiltonian in terms of the stabilizer group as follows:
\begin{align}
T_{m}(\mathcal{S})\ =\ \mathcal{S} \qquad (m = 1, \cdots, D).
\end{align}

When a system is invariant under some finite translations, it is more convenient to treat a hypercube of
\begin{align}
v \equiv v_{1} \times \cdots \times v_{D}
\end{align}
qubits (a unit cell) as a single entity. One may view these $v$ qubits as a single \textit{composite particle} (Fig.~\ref{fig_STS}). Then, $T_{m}$ is a translation operator with respect to composite particles, and \emph{the system Hamiltonian is invariant under unit translations of composite particles}. The integer $n_{m} = L_{m}/v_{m}$ is viewed as the number of composite particles in the $\hat{m}$ direction. 

\begin{figure}[htb!]
\centering
\includegraphics[width=0.80\linewidth]{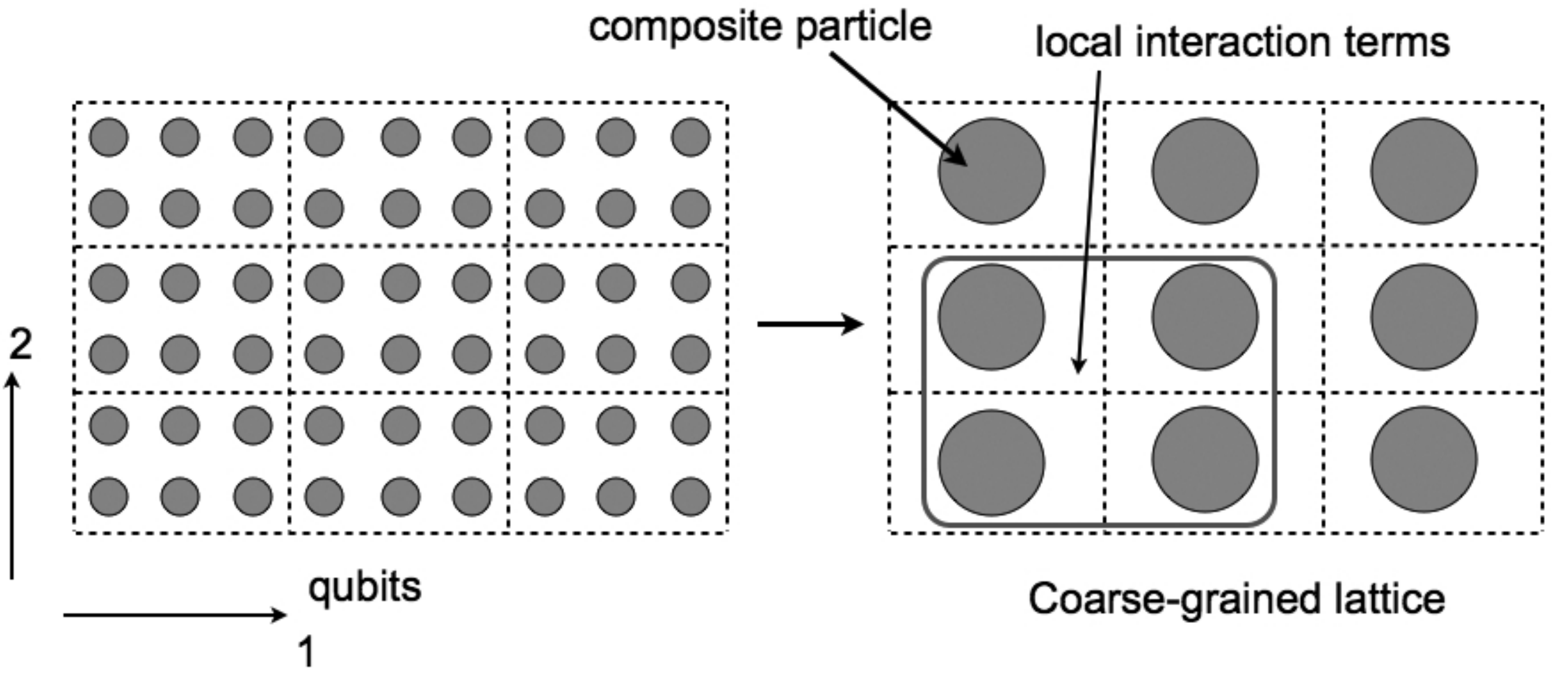}
\caption{Stabilizer codes supported by local and translation symmetric Hamiltonians. A two-dimensional example is shown. Dotted lines represent the periodicities of the Hamiltonian. In the above example, the Hamiltonian is invariant under translations of three qubits in the $\hat{1}$ direction and translations of two qubits in the $\hat{2}$ direction. 
A system of qubits is coarse-grained by considering a unit cell of $3 \times 2$ qubits as a composite particle so that the Hamiltonian is invariant under unit translations of composite particles. Interaction terms $S_{j}$ are defined locally inside a region with $2 \times 2$ composite particles.} 
\label{fig_STS}
\end{figure}

We may view this picture through composite particles as a coarse-graining of a system of qubits. When a system of qubits is coarse-grained through composite particles, a microscopic structure of qubits comprising each composite particle is lost, and one can neglect the effect of unitary transformations acting on qubits inside each composite particle. To make the difference between a picture through qubits and a picture through composite particles clear, we call a square lattice with composite particles a \emph{coarse-grained lattice}.  From now on, \emph{the entire system is considered as a square lattice of $n_{1}\times \cdots \times n_{D}$ composite particles}. Thus, we treat each of composite particles as a single particle whose inner state corresponds to a subspace $(\mathbb{C}^{2})^{\otimes v}$. 

Having introduced a coarse-grained picture through composite particles, let us formally define the meaning of ``locally defined'' interaction terms. Each of interaction terms $S_{j}$ is defined inside some geometrically localized region of composite particles. Then, we can find a finite integer $c \ll n_{m}$ such that all the interaction terms $S_{j}$ are defined inside translations of a hypercubic region with $c \times \cdots \times c$ composite particles. For simplicity of discussion, we assume that all the interaction terms $S_{j}$ can be defined inside translations of a hypercubic region with $2 \times \cdots \times 2$ composite particles by assuming $c = 2$ (Fig.~\ref{fig_STS}).\footnote{One might wonder if our constraint on the locality of interaction terms $S_{j}$ is too strict since it rules out stabilizer codes where $S_{j}$ are defined inside geometrically localized regions, but not inside regions with $2 \times \cdots \times 2$ composite particles. However, by increasing the size of composite particles, one can make all the interaction terms $S_{j}$ defined inside some regions with $2 \times \cdots \times 2$ composite particles. Of course, such newly defined composite particles may leave $n_{m} = L_{m}/v_{m}$ a non-integer value. However, it may be possible to analyze the original stabilizer code with a non-integer $n_{m}$ through the analysis of the newly defined stabilizer code with an integer $n_{m}$.}

\textbf{Translation symmetries and locality of interactions:} Here, we define stabilizer codes supported by local and translation symmetric Hamiltonian in the following way:
\begin{itemize}
\item The entire system is a square lattice (hypercubic lattice) of composite particles which consist of $v$ qubits. 
\item The system Hamiltonian $H = - \sum S_{j}$ is invariant under unit translations of composite particles: $T_{m}(H) = H$ for $m = 1,\cdots , D$.
\item Interaction terms $S_{j}$ are defined locally inside a hypercube of $2 \times \cdots \times 2$ composite particles.
\end{itemize}

Though we have constructed stabilizer Hamiltonians on square lattices, our discussion in this paper can be readily applied to stabilizer Hamiltonians defined on any general lattices. 

\subsection{Scale symmetries}\label{sec:model2}

Now, we impose another physical symmetry, which we shall call \emph{scale symmetries}. An important feature of fixed point Hamiltonians is that they do not have any length scale since they are invariant under RG transformations. In order for stabilizer Hamiltonians to be used as reference models for quantum phases, they must not have scale dependence too. So far, we have discussed the cases where the system size $\vec{n} \equiv (n_{1}, \cdots, n_{D})$ is fixed. Here, we consider changes of the number of composite particles $n_{m}$ while keeping interaction terms $S_{j}$ the same. In particular, we impose scale symmetries on translation symmetric stabilizer Hamiltonians by requiring that the number of degenerate ground states, or the number of logical qubits, does not depend on the system size $\vec{n}$. 

Let us begin by representing the stabilizer group $\mathcal{S}$ for different system sizes $\vec{n}$. For this purpose, we label each composite particle according to its position on a coarse-grained lattice. Each composite particle is denoted as $P_{\vec{r}}$ where $\vec{r}=(r_{1},\cdots, r_{D})$ represents the relative position of a composite particle with $1 \leq r_{m} \leq n_{m}$ (see Fig.~\ref{fig_label}). Now, let us denote a hypercubic region with $2 \times \cdots \times 2$ composite particles as $P_{local}$ which includes composite particles $P_{r_{1},\cdots , r_{D}}$ with $r_{m} = 1$ or $2$ (Fig.~\ref{fig_label}). We define the group of stabilizers which is generated from all the interaction terms $S_{j}$ defined inside $P_{local}$ as $\pi$:
\begin{align}
\pi\ =\ \Big\langle\ \Big\{\ S_{j} : S_{j}|_{\bar{P}_{local}}\ =\ I\  \Big\}\ \Big\rangle.
\end{align}
Here, $\bar{P}_{local}$ denotes a complement of $P_{local}$. We call this group of stabilizers the \emph{local stabilizer group}. Then, the stabilizer group $\mathcal{S}$ for the system size $\vec{n}$ is generated from translations of the local stabilizer group $\pi$ as follows:
\begin{align}\label{eq:stabilizer}
\mathcal{S}\ =\ \mathcal{S}_{\vec{n}}\ \equiv\ \Big\langle\ \Big\{\ T_{1}^{x_{1}} \cdots T_{D}^{x_{D}} (\pi)\ : 
 \ x_{m} = 1 , \cdots, n_{m} \quad \mbox{for all} \ m \ \in \ \mathbb{Z}_{D} \ \Big\}\ \Big\rangle.
\end{align}
Here, $T_{m}^{x_{m}}$ represents a translation of $x_{m}$ composite particles in the $\hat{m}$ direction, and $T_{m}(\pi)$ represents translations of elements in the local stabilizer group $\pi$.

\begin{figure}[htb!]
\centering
\includegraphics[width=0.35\linewidth]{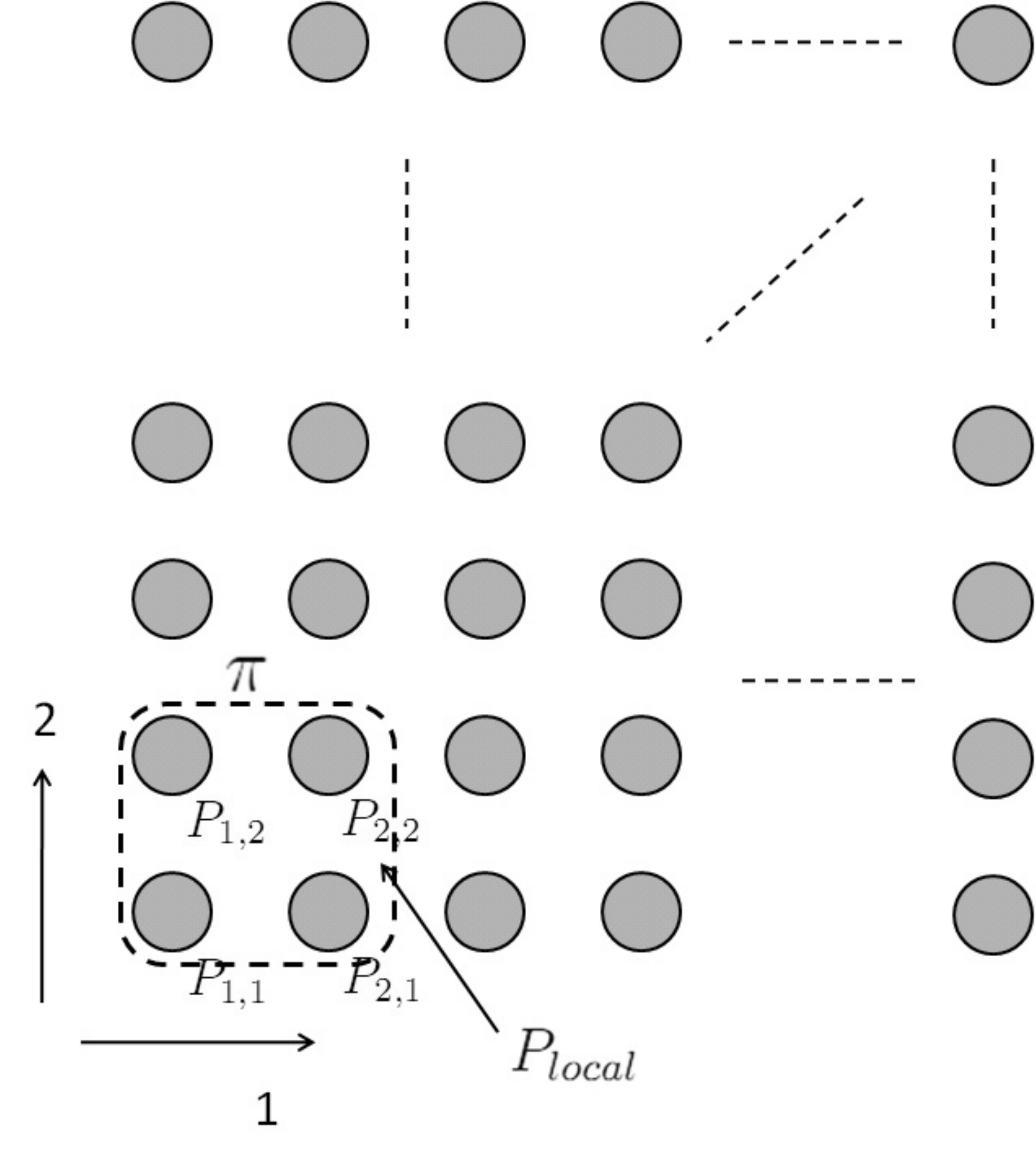}
\caption{Labeling of composite particles and the local stabilizer group $\pi$ defined inside $P_{local}$.
} 
\label{fig_label}
\end{figure}

We may choose interaction terms $S_{j}$ defined inside $P_{local}$ so that they generate the local stabilizer group $\pi$. For example, let us pick up generators for $\pi = \langle S_{1}^{local}, S_{2}^{local}, \cdots \rangle$. Then, the stabilizer code can be supported by a Hamiltonian which consists of translations of $S_{j}^{local}$:
\begin{align}
H\ =\ -  \sum_{x_{1},\cdots,x_{D}}T_{1}^{x_{1}}\cdots T_{1}^{x_{D}}(\sum_{j} S_{j}^{local}).
\end{align}
We note that some of interaction terms $S_{j}^{local}$ may be defined inside regions smaller than a region $P_{local}$.

Now, we denote the number of logical qubits $k$ for a stabilizer code for the system size $\vec{n}$ as $k_{\vec{n}}$. More precisely, if we denote the number of independent generators for $\mathcal{S}_{\vec{n}}$ as $G(\mathcal{S}_{\vec{n}})$, we have $k_{\vec{n}} \equiv N - G(\mathcal{S}_{\vec{n}})$ where $N = v \prod_{m}n_{m}$. Here, $v$ is the number of qubits inside each composite particle. Then, a stabilizer code has \emph{scale symmetries} when the number of logical qubits $k_{\vec{n}}$ does not depend on the system size $\vec{n}$:
\begin{align}\label{eq:scale_symmetries}
k_{\vec{n}}\ =\ k \qquad \mbox{for all}\ \vec{n}.
\end{align}
Thus, the number of logical qubits is always $k$ regardless of the system size $\vec{n}$. Here, we emphasize that in a system with scale symmetries, the number of logical qubits $k$ remains constant under not only global scale transformations: $\vec{n} \rightarrow c \vec{n}$ where $c$ is some positive integer, but also any changes of $n_{m}$.

\begin{figure}[htb!]
\centering
\includegraphics[width=0.8\linewidth]{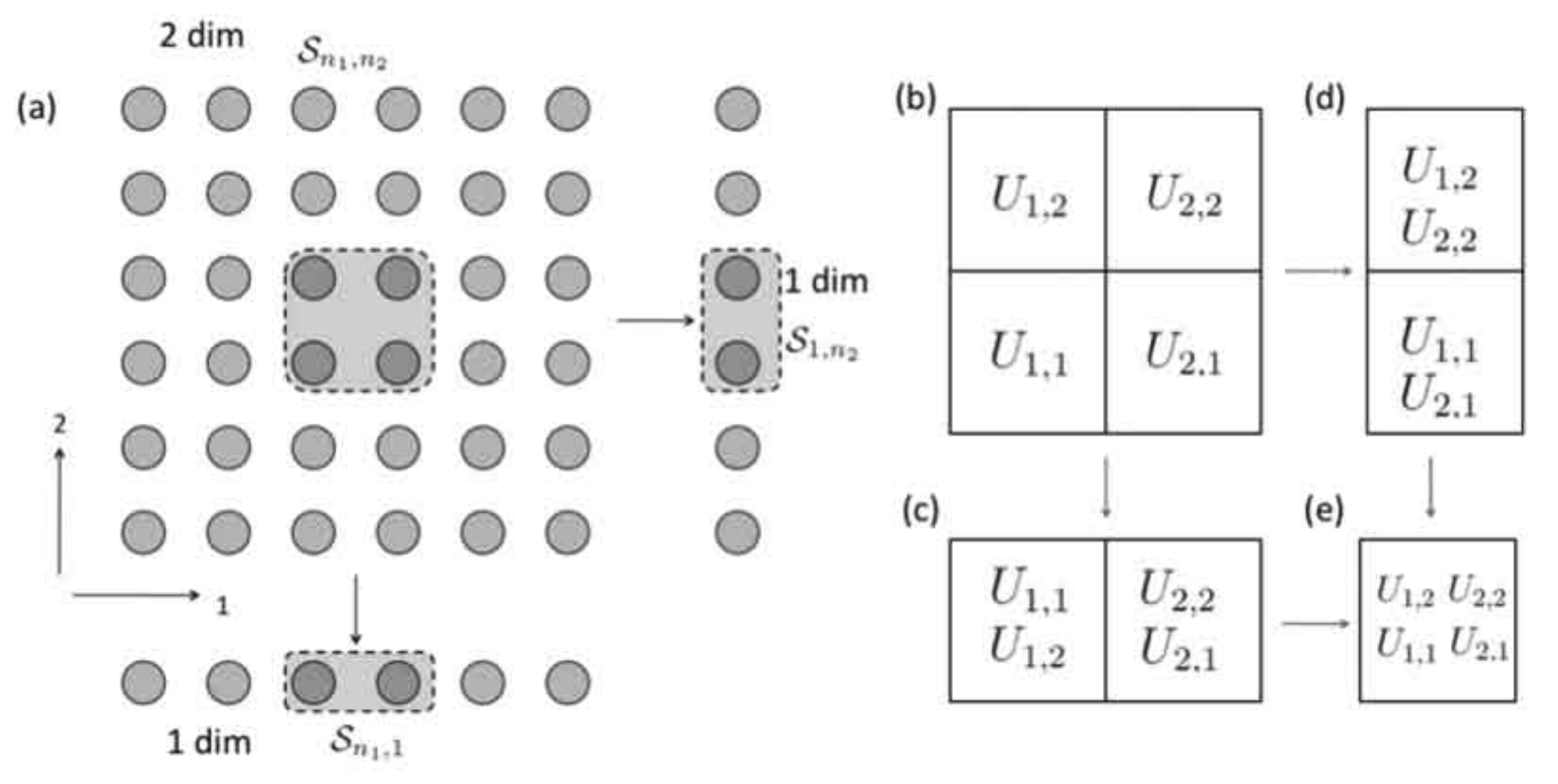}
\caption{(a) Effectively lowered dimensions. An example of a two-dimensional system with $n_{1} \times n_{2}$ composite particles is shown. For the cases where $n_{1}=1$ or $n_{2}=1$, the spatial dimension is effectively lowered to one. Interaction terms are defined inside some regions with $1 \times 2$ and $2\times 1$ composite particles respectively. (b)-(e) Examples of folded stabilizers in two dimensions. 
(b) $n_{1},n_{2}\not=1$ (a original stabilizer). (c)  $n_{1}\not=1$ and $n_{2}=1$. (d) $n_{1}=1$ and $n_{2}\not=1$. (e) $n_{1}=n_{2}=1$.
} 
\label{fig_dimension_reduction}
\end{figure}

However, one may notice that the above constraint on scale symmetries is not well-defined for the case with $n_{m}=1$ for some $m$ since the local stabilizer group $\pi$ is defined inside $P_{local}$ which is a hypercubic region with $2^{D} = 2 \times \cdots \times 2$ composite particles (Fig.~\ref{fig_dimension_reduction}(a)). This difficulty could prevent us from studying stabilizer codes at small $\vec{n}$ to help us to study the properties of stabilizer codes at large $\vec{n}$. However, due to the commuting properties of interaction terms $S_{j}^{local}$ in stabilizer Hamiltonians, one can define the local stabilizer group $\pi$ properly even for the cases where $n_{m}=1$ for some $m$. In order to show how this problem can be fixed, we consider a one-dimensional case ($D=1$) first. Let us recall that $P_{local}$ consists of two composite particles $P_{1}$ and $P_{2}$ in one dimension. We can represent interaction terms $S_{j}^{local} \in \pi$ as
\begin{align}
S_{j}^{local}\ =\ U_{j} T_{1} (V_{j})   \qquad (n_{1}\ \not=\ 1)
\end{align}
where $U_{j}$ and $V_{j}$ are Pauli operators acting on a composite particle $P_{1}$. Then, $U_{j}$ acts on a composite particle $P_{1}$ while $T_{1} (V_{j})$ acts on a composite particle $P_{2}$. Here, since the translation of $S_{j}^{local}$ must commute with $S_{j'}^{local}$, we have $[U_{j},V_{j'}] = 0$ for all $j$ and $j'$. Also, we have either $[U_{j},U_{j'}]=[V_{j},V_{j'}]=0$ or $\{U_{j},U_{j'}\}=\{V_{j},V_{j'}\}=0$ since $[S_{j}^{local},S_{j'}^{local}]=0$. Now, for the case with $n_{1}=1$, we define $S_{j}^{local}$ in the following way:
\begin{align}
S_{j}^{local}\ \equiv\ U_{j}V_{j}    \qquad ( n_{1}\ =\ 1)
\end{align}
by \emph{folding} $S_{j}^{local}$ at the boundary. Then, one can see that these folded $S_{j}^{local}$ commute with each other through some calculations. Thus, by this definition of $S_{j}^{local}$, the local stabilizer group $\pi$ is well-defined for the case with $n_{1}=1$. 

In a way similar to this, we can define interaction terms $S_{j}^{local} \in \pi$ in higher-dimensions so that $\pi$ is well-defined for any $\vec{n}$ by folding $S_{j}^{local}$ at the boundary in the $\hat{m}$ direction where $n_{m}=1$. For example, in two dimensions, let us consider the following interaction term (see Fig.~\ref{fig_dimension_reduction}(b))
\begin{align}
S^{local}\ &=\ U_{1,1}T_{1}(U_{2,1})T_{2}(U_{1,2})T_{1}T_{2}(U_{2,2}) \qquad n_{1}\ >\ 1\ \mbox{and}\ n_{2}\ >\ 1 
\end{align}
where $U_{1,1}$, $U_{1,2}$, $U_{2,1}$ and $U_{2,2}$ are Pauli operators acting on a composite particle $P_{1,1}$. Then, for the cases with $n_{m}=1$ for some $m$, the interaction term is defined as follows (see Fig.~\ref{fig_dimension_reduction}(c)(d)(e)):
\begin{align}
&S^{local}\ =\ U_{1,1}U_{2,1}T_{2}(U_{1,2}U_{2,2})\qquad &n_{1}\ =\ 1\ \mbox{and}\ n_{2}\ >\ 1& \\ 
&S^{local}\ =\ U_{1,1}U_{1,2}T_{1}(U_{2,1}U_{2,2}) \qquad &n_{1}\ >\ 1\ \mbox{and}\ n_{2}\ =\ 1& \\
&S^{local}\ =\ U_{1,1}U_{2,1}U_{1,2}U_{2,2}        \qquad \qquad &n_{1}\ =\ 1\ \mbox{and}\ n_{2}\ =\ 1& .
\end{align}

With this definition of interaction terms, one can discuss stabilizer codes for any $\vec{n}$ including the cases where $n_{m}=1$ for some $m$. As shown in Fig.~\ref{fig_dimension_reduction}(a), by setting $n_{m}=1$ for some $m$, the spatial dimension of the system can be effectively lowered. For example, a two-dimensional stabilizer code with $n_{1}=1$ can be treated as a one-dimensional system.\footnote{When the system possesses symmetries, it is sometimes possible to lower spatial dimensions of the system effectively as a result of symmetries. See~\cite{Nussinov09} for general treatments on dimensional reductions in topologically ordered systems.} In such a case, the local stabilizer group, which is originally defined inside a region with $2 \times 2$ composite particles, is now defined inside a region with $1 \times 2$ composite particles since interaction terms are folded in the $\hat{1}$ direction.\\

\textbf{Scale symmetries:} Here, we summarize constraints resulting from scale symmetries:
\begin{itemize}
\item The number of logical qubits $k$ does not depend on the system size $\vec{n}$ and remains constant:
\begin{align}
k_{\vec{n}} \ = \ k \qquad \mbox{for all} \ \vec{n}.
\end{align}
\end{itemize}

Compared with the importance of translation symmetries, the importance of scale symmetries may be less obvious. In fact, there exist many examples of stabilizer Hamiltonians without scale symmetries. However, most examples without scale symmetries are trivial, as seen in the following example:
\begin{align}
H \ = \ - \sum_{i,j} Z^{(i,j)}Z^{(i,j+1)}
\end{align}
where $Z^{(i,j)}$ acts on a qubit labeled by $(i,j)$ in a two-dimensional square lattice with periodic boundary conditions. The model is an array of one-dimensional classical ferromagnets which have interactions only in the vertical direction (the $\hat{2}$ direction). The model has $2^{L_{1}}$ ground states where $L_{1}$ is the number of qubits in the $\hat{1}$ direction. This model should be discussed as a one-dimensional system rather than a two-dimensional system since each chain of ferromagnets is completely decoupled. Thus, while we have imposed scale symmetries to limit our considerations to stabilizer Hamiltonians without length scales, scale symmetries help us to focus on physically interesting examples by excluding trivial stabilizer Hamiltonians such as the above example. 

There are also many stabilizer codes whose $k_{\vec{n}}$ remain finite ($k_{\vec{n}}\leq k$) for all $\vec{n}$, but do not remain constant. However, one can reduce such stabilizer Hamiltonians to STS models by increasing the size of composite particles so that stabilizer Hamiltonians possess scale symmetries with respect to newly defined composite particles. This restoration of scale symmetries is further discussed in~\ref{sec:translation}.

\subsection{Formal definition of STS model}\label{sec:model3}

Based on these discussions, let us formally define STS models.

\begin{definition}
A stabilizer code defined with the stabilizer group $\mathcal{S}_{\vec{n}}$ where $\vec{n}=(n_{1},\cdots , n_{D})$ is called a \textit{Stabilizer code with Translation and Scale symmetries} (STS model) if and only if there exists a local stabilizer group $\pi$ which satisfies the following conditions.
\begin{itemize}
\item All the elements in $\pi$ are defined inside a hypercubic region $P_{local}$ with $2^{D}= 2 \times \cdots \times 2$ composite particles.
\item The stabilizer group $\mathcal{S}_{\vec{n}}$ is generated from translations of the local stabilizer group $\pi$:
\begin{align}
\mathcal{S}_{\vec{n}}\ =\ \Big\langle\ \Big\{\ T_{1}^{x_{1}} \cdots T_{D}^{x_{D}} (\pi)\ : 
 \ x_{m} = 1 , \cdots, n_{m} \quad \mbox{for all} \ m\ \in\ \mathbb{Z}_{D},\Big\}\ \Big\rangle
\end{align}
\item The number of logical qubits $k_{\vec{n}}$ is independent of the system size $\vec{n}$:
\begin{align}
k_{\vec{n}}\ =\ k \qquad \mbox{for all}\ \vec{n}.
\end{align}
\end{itemize} 
\end{definition} 

\subsection{Translation equivalence of logical operators}\label{sec:model4}

We have introduced a model of frustration-free Hamiltonians, an STS model, which may cover a large class of physically realizable stabilizer Hamiltonians. Now, let us develop a basic analysis tool for analyzing quantum phases in STS models, which we shall call the translation equivalence of logical operators.

The analysis on logical operators is central in studying properties of entanglement arising in ground states of stabilizer Hamiltonians~\cite{Gottesman96, Bravyi09, Beni10}. Here,  we wish to know the effect of translation and scale symmetries on logical operators. Translation symmetries of Hamiltonians are particularly useful in analyzing properties of logical operators. For example, a translation of a logical operator $\ell$ is also a logical operator due to translation symmetries:
\begin{align}
\ell \ \in \ \textbf{L}_{\vec{n}} \ \Rightarrow \ T_{m}(\ell)\ \in \ \textbf{L}_{\vec{n}}
\end{align}
where $\textbf{L}_{\vec{n}}$ is a set of all the logical operators. This observation may help us to determine possible forms of logical operators in STS models. 

However, the relation between the original logical operator $\ell$ and the translated logical operator $T_{m}(\ell)$ is not immediately clear. In particular, while $\ell$ and $T_{m}(\ell)$ have similar forms which can be transformed each other just by translations, $\ell$ and $T_{m}(\ell)$ might be not equivalent in general. For example, consider an array of one-dimensional classical ferromagnets discussed in Section~\ref{sec:model2}. Then, we notice that unit translations of logical operators might not be equivalent to the original logical operators when logical operators are translated in the $\hat{1}$ direction. Thus, the need is to establish the relation between $\ell$ and its translation $T_{m}(\ell)$.

In this subsection, we show that $\ell$ is always equivalent to its own translation $T_{m}(\ell)$ as a result of scale symmetries in STS models. In particular, the following theorem holds.

\begin{theorem}[Translation equivalence of logical operators]\label{theorem_main}
For each and every logical operator $\ell$ in an STS model, a unit translation of $\ell$ with respect to composite particles in any direction is always equivalent to the original logical operator $\ell$:
\begin{align}
T_{m}(\ell)\ \sim\ \ell, \qquad \forall \ell\ \in\ \textbf{L}_{\vec{n}}  \qquad (m\ =\ 1, \cdots, D) 
\end{align}
where $\textbf{L}_{\vec{n}}$ is a set of all the logical operators for an STS model defined with the stabilizer group $\mathcal{S}_{\vec{n}}$.
\end{theorem}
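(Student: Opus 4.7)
The plan is to reduce the statement to a claim about a finite-order automorphism of the logical algebra, and then leverage scale symmetry to force that automorphism to be the identity. Since $T_m$ preserves $\mathcal{S}_{\vec{n}}$ setwise by translation invariance, it also preserves the centralizer $\mathcal{C}_{\vec{n}}$ and therefore descends to a well-defined automorphism $\sigma_m$ of the quotient group $\mathcal{C}_{\vec{n}}/\mathcal{S}_{\vec{n}}$. By scale symmetry this quotient is an $\mathbb{F}_{2}$-vector space of dimension $2k$ independent of $\vec{n}$, equipped with the symplectic form coming from Pauli commutators, so $\sigma_m \in \mathrm{Sp}(2k,\mathbb{F}_{2})$, a fixed finite group. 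The target conclusion $T_m(\ell)\sim\ell$ for every $\ell\in\mathbf{L}_{\vec{n}}$ is exactly the statement $\sigma_m=\mathrm{id}$.

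First I would note the trivial but crucial constraint $\sigma_m^{n_m}=\mathrm{id}$, which follows from $T_m^{n_m}=\mathrm{id}$ on the lattice. The order of $\sigma_m$ therefore divides both $n_m$ and $|\mathrm{Sp}(2k,\mathbb{F}_{2})|$. The core idea is then to vary $n_m$ while keeping the local stabilizer group $\pi$ fixed (this is exactly what scale symmetry allows), so that one can compare $\sigma_m^{(\vec{n})}$ for different system sizes via the common data $\pi$. If one can show that, for a fixed choice of canonical representatives, $\sigma_m^{(\vec{n})}$ is ``structurally the same'' as $n_m$ varies, then taking $n_m$ to be any prime coprime to $|\mathrm{Sp}(2k,\mathbb{F}_{2})|$ forces $\sigma_m=\mathrm{id}$, and the result lifts back to arbitrary $\vec{n}$.

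To make this compatibility precise I would construct, for each equivalence class in $\mathcal{C}_{\vec{n}}/\mathcal{S}_{\vec{n}}$, a canonical representative whose form is described entirely in terms of $\pi$ and at most one ``winding'' around each torus direction. The image $T_m(\ell)$ of such a representative differs from $\ell$ only by local Pauli operators supported in a region of bounded size (independent of $\vec{n}$) along the codimension-one slab swept out by the translation. The quotient $\ell\cdot T_m(\ell)$ then lies in $\mathcal{C}_{\vec{n}}$ and has bounded support; showing that such a bounded-support centralizer element must already belong to $\mathcal{S}_{\vec{n}}$ would be the content of a short locality argument (closing up the support using translates of $\pi$-generators), yielding $T_m(\ell)\sim\ell$ directly.

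The main obstacle is exactly this identification of canonical representatives with consistent behavior across system sizes, since at first sight the Hilbert spaces for different $\vec{n}$ are unrelated and a logical operator at one $\vec{n}$ does not automatically make sense at another. Scale symmetry is what rescues the argument: the equality $k_{\vec{n}}=k$ rules out the pathological example of decoupled lower-dimensional chains (where translation equivalence fails), and forces the relations among translates of $\pi$ to be exactly rich enough that $\ell\cdot T_m(\ell)$ always reduces to a product of $\pi$-generators. A fallback route, if the explicit stitching is awkward, is a contradiction argument: assuming $\sigma_m\neq\mathrm{id}$ for some $\vec{n}$, the $\sigma_m$-orbit of a witnessing $\ell$ produces a proliferation of logical-operator equivalence classes whose count must be consistent across all $\vec{n}$, and one chooses $n_m$ carefully to violate $k_{\vec{n}}=k$.
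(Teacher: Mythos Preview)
Your symplectic-automorphism framing is clean, but the proposal has a genuine gap at precisely the step you flag as ``the main obstacle,'' and one of your proposed workarounds is actually false.

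First, the claim that a centralizer element supported in a thin slab ``must already belong to $\mathcal{S}_{\vec{n}}$'' is wrong. The one-dimensional ferromagnet has the logical operator $Z^{(1)}$ sitting on a single site; more generally the paper shows that every two-dimensional STS model admits $k$ independent logical operators supported inside the single slab $R(1)$. So $\ell\cdot T_m(\ell)$ lies in a region that genuinely contains logical operators, and no purely support-based locality argument can force it into $\mathcal{S}_{\vec{n}}$. What is true (and what the paper proves) is the far more specific fact that this particular product commutes with every member of a carefully chosen canonical set, and hence is a stabilizer; but that requires having already built that canonical set.

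Second, the coprime-$n_m$ idea needs an identification of the logical quotients at different system sizes that intertwines the $T_m$-actions. You acknowledge this but do not supply it, and there is no free lunch here: the automorphisms $\sigma_m^{(\vec{n})}$ live on different vector spaces, and comparing them is exactly the content of the theorem. Your fallback (``proliferation of equivalence classes'') also does not work as stated, because $\sigma_m$ is an automorphism of a $2k$-dimensional space and its orbits cannot by themselves produce more than $2k$ independent classes.

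The paper's route is to manufacture the cross-size comparison concretely. It first considers the system with $n_1=1$, reads off a canonical set there, and periodically extends it to a canonical set at $n_1=3$ in which every logical operator is $T_1$-periodic. It then proves a bipartition statement: all logical operators supported in the single slab $R(1)$ must mutually commute (otherwise an anti-commuting pair $\ell',r'$ in $R(1)$ would have disjoint translates $T_1^x(\ell'),T_1^x(r')$ at large $n_1$, yielding $k_{\vec{n}}\ge n_1$ and violating scale symmetry). Together with Theorem~\ref{theorem_partition} this forces $g_{R(1)}=k$, and one obtains a canonical set with $\ell_p$ supported in $R(1)$ and $r_p$ satisfying $T_1(r_p)=r_p$. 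Translation equivalence for the $r_p$ is then automatic, and for the $\ell_p$ one checks directly that $\ell_p T_1(\ell_p)$ commutes with every element of this canonical set, hence lies in $\mathcal{S}_{\vec{n}}$. The contradiction you gesture at in your fallback is the right instinct, but it is applied not to an abstract $\sigma_m$-orbit but to a hypothetical anti-commuting pair \emph{localized in a slab}; localization is what makes translates independent and drives $k$ to infinity.
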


We call this property of logical operators under translations the \textbf{translation equivalence of logical operators}. 

\begin{figure}[htb!]
\centering
\includegraphics[width=0.35\linewidth]{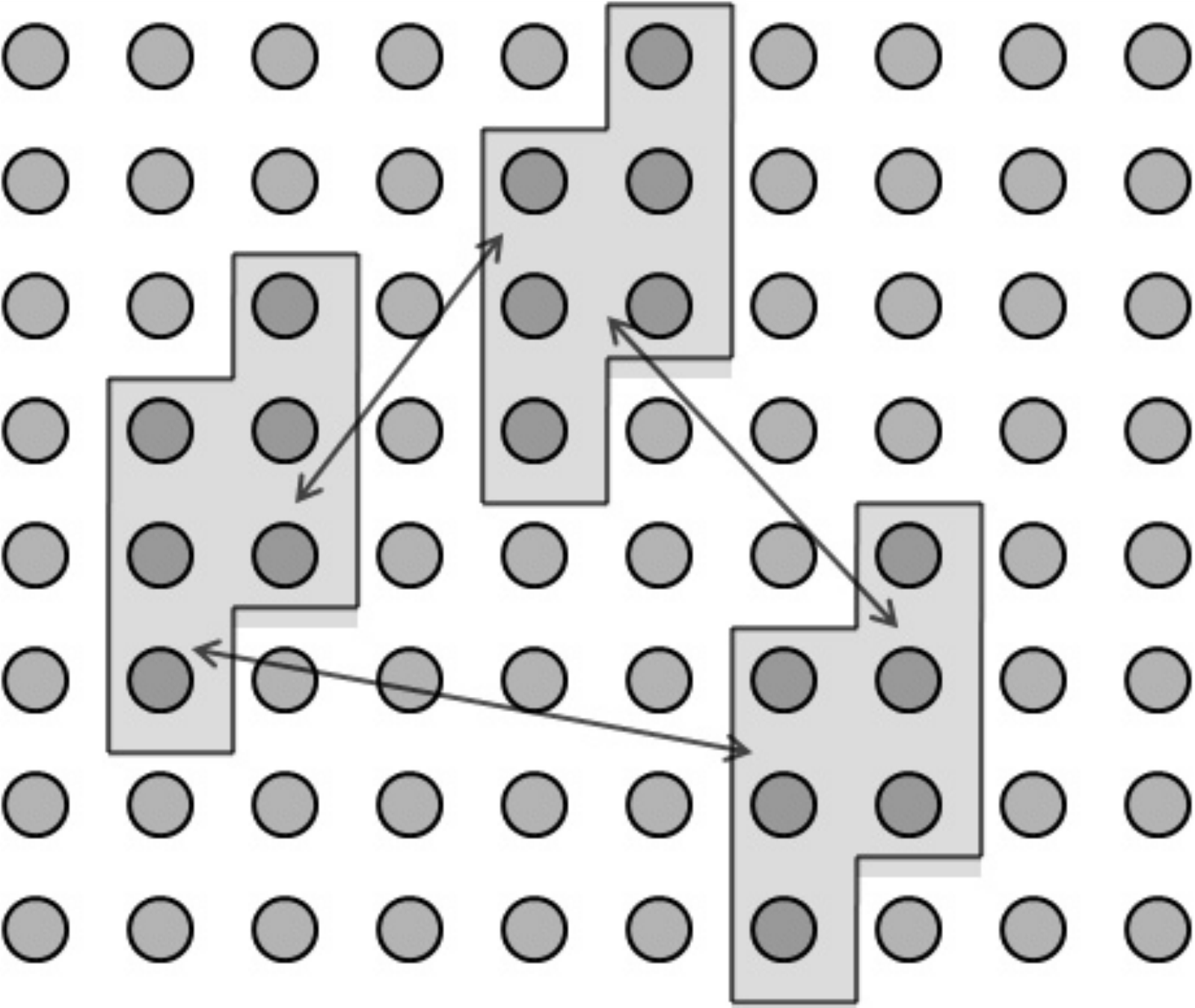}
\caption{The translation equivalence of logical operators. Logical operators remain equivalent under any translations. Rectangles surrounding composite particles represent logical operators defined inside corresponding regions. Two-sided arrows mean that two connected logical operators are equivalent.
} 
\label{fig_translation_equivalence}
\end{figure}

The proof of Theorem~\ref{theorem_main} is presented in~\ref{sec:TE}. Here, we only show the existence of finite translations which keep logical operators equivalent for some sufficiently large $\vec{n}$. In particular, we show the existence of a finite integer $a_{m}$ which satisfies $T_{m}^{a_{m}}(\ell)\sim \ell$ for all the logical operators $\ell$ for some fixed $\vec{n}$. 

Let us consider translations of logical operators in the $\hat{m}$ direction. Due to the translation symmetries of the system Hamiltonian, translations of a given logical operator $\ell$ are also logical operators. Let us label each of translated logical operators as $\ell(j) \equiv T_{m}^{j}(\ell)$:
\begin{align}
\ell(0)\ \rightarrow\ \ell(1)\ \rightarrow\  \cdots\ \rightarrow\ \ell(n_{m}-1)\ \rightarrow\ \ell(0)
\end{align}
where ``$\rightarrow$'' represents translations in the $\hat{m}$ direction. Here, due to the periodic boundary conditions, we have the same logical operator $\ell(n_{m})\equiv\ell(0)$ after translating $n_{m}$ times. In principle, $\ell(j)$ can be independent each other in the absence of scale symmetries. However, since there are at most $2^{2k}$ different logical operators in this stabilizer Hamiltonian, there always exists a finite integers $b_{m} \leq 2^{2k}$ such that $\ell(b_{m}) \sim \ell(0)$ where $n_{m}$ is a multiple of $b_{m}$. By repeating this argument for all the $2k$ independent logical operators, one can find a finite integer $a_{m}$ such that $\ell \sim T_{m}^{a_{m}}(\ell)$ for all the logical operators $\ell$. 

The above argument shows that there exists some finite integer $a_{m}$ where $T_{m}^{a_{m}}(\ell)\sim \ell$ for STS models with sufficiently large $n_{m}$. However, one can obtain a much more general and stronger result than this, as summarized in Theorem~\ref{theorem_main}. In other words, $a_{m}=1$ for any $n_{m}$ in the above argument. The proof of Theorem~\ref{theorem_main} relies on the fact that the number of logical qubits $k$, or the number of logical operators, is not only small, but also independent of the system size $\vec{n}$.

The translation equivalence of logical operators is a key in analyzing quantum phases arising in STS models. In particular, since any translations of logical operators are equivalent to the original logical operators, in characterizing the ground state properties, the positions of logical operators are not important and only the geometric shapes of logical operators become essential as we shall see in the next section. The notion of topology arises also due to the translation equivalence of logical operators. 

\section{One-dimensional STS model: quantum phases and logical operators}\label{sec:1D}

Now, we start the analysis on quantum phases in STS models. In this section, we discuss quantum phases arising in one-dimensional STS models.

Quantum phases arising in one-dimensional spin systems are relatively easy to analyze compared with higher-dimensional spin systems. In particular, there are several powerful numerical algorithms to analyze the ground state properties of parameterized Hamiltonians which connect frustration-free Hamiltonians. For example, the density matrix renormalization group (DMRG) approach, whose basic idea comes from RG transformations, provides efficient numerical algorithms to compute various physical quantities in one-dimensional systems~\cite{White92, Schollwock05}. Also, recent developments on the matrix product state formalism solidified the usefulness of DMRG approaches and expanded its applicabilities~\cite{Verstraete04, Verstraete06}. In fact, it has been proven that any gapped spin systems on one-dimensional lattices can be efficiently simulated~\cite{Hastings06}. Then, one might hope that a problem of finding and classifying quantum phases arising in one-dimensional STS models is not a difficult one.

However, a classification of quantum phases is much more challenging than an analysis on a single parameterized Hamiltonian through a numerical simulation. In particular, since the existence of a QPT depends on paths of parameterized Hamiltonians, one needs to analyze all the possible parameterized Hamiltonians connecting two STS models to see if they belong to different quantum phases or not. Thus, the need is to find order parameters to classify quantum phases in a path-independent way. 

A key idea behind RG transformations in characterizing quantum phases is the use of fixed point Hamiltonians which are invariant under RG transformations. These fixed point Hamiltonians capture only the scale invariant ground state properties of the corresponding quantum phases. Then, following the spirit of the analyses on quantum phases through RG transformations, we wish to classify quantum phases arising in STS models through some quantity or object which is scale invariant. 

A useful observation in addressing quantum phases arising in STS models is to realize that geometric shapes of logical operators do not change under scale transformations. Then, we hope that geometric shapes of logical operators may be used as ``order parameters'' to distinguish different quantum phases. In this section, we show that different quantum phases in one-dimensional STS models are completely characterized by geometric shapes of logical operators. In particular, we show that two STS models belong to the same quantum phase if and only if geometric shapes of logical operators are the same. Thus, it is shown that two STS models with the same geometric shapes of logical operators can be connected without closing the energy gap, while two STS models with different geometric shapes of logical operators are always separated by a QPT.

In Section~\ref{sec:1D1}, we start our discussion by analyzing three specific examples of one-dimensional STSs. We discuss the role of geometric shapes of logical operators and their relation to the scale invariant ground state properties such as global entanglement. In Section~\ref{sec:1D2}, we extend our analysis to arbitrary one-dimensional STS models and give possible forms of logical operators. The ground state properties of STS models are also analyzed through geometric shapes of logical operators. Finally, in Section~\ref{sec:1D3}, we discuss the relation between logical operators and quantum phases, and show that quantum phases arising in one-dimensional STS models can be completely classified through geometric shapes of logical operators.

\subsection{Role of logical operators: concrete examples}\label{sec:1D1}

In this subsection, we analyze three specific examples of one-dimensional STS models in order to discuss the role of logical operators in classifying quantum phases. In particular, we discuss how the scale invariant ground state properties can be studied through geometric shapes of logical operators in these examples. 

In Section~\ref{sec:1D11}, we begin by analyzing a classical ferromagnet, which is the simplest example of a one-dimensional STS. In Section~\ref{sec:1D12}, we discuss an example without degenerate ground states or logical operators ($k=0$) to compare physical properties in the presence and absence of logical operators. In Section~\ref{sec:1D13}, we analyze another example of a non-trivial one-dimensional STS which can be obtained by extending the five qubit code to a one-dimensional spin chain. 

\subsubsection{Classical ferromagnet as a quantum code}\label{sec:1D11}

A classical ferromagnet, the simplest model of interacting spins, can be seen as a stabilizer code. Though the model has been completely analyzed a century ago, we utilize the simplicity of the model to give a concise, but insightful demonstration of the analysis on entanglement in ground states through geometric shapes of logical operators. 

Let us consider the following Hamiltonian (Fig.~\ref{fig_1Dex1}):
\begin{align}
H \ = \ - \sum_{j} Z^{(j)} Z^{(j+1)}
\end{align}
where $Z^{(j)}$ represents the Pauli operator $Z$ acts on a $j$-th qubit. The total number of qubits is $N$ and the system has periodic boundary conditions. The ground states of a classical ferromagnet are $|0\cdots 0\rangle$ and $|1 \cdots 1\rangle$.

\begin{figure}[htb!]
\centering
\includegraphics[width=0.70\linewidth]{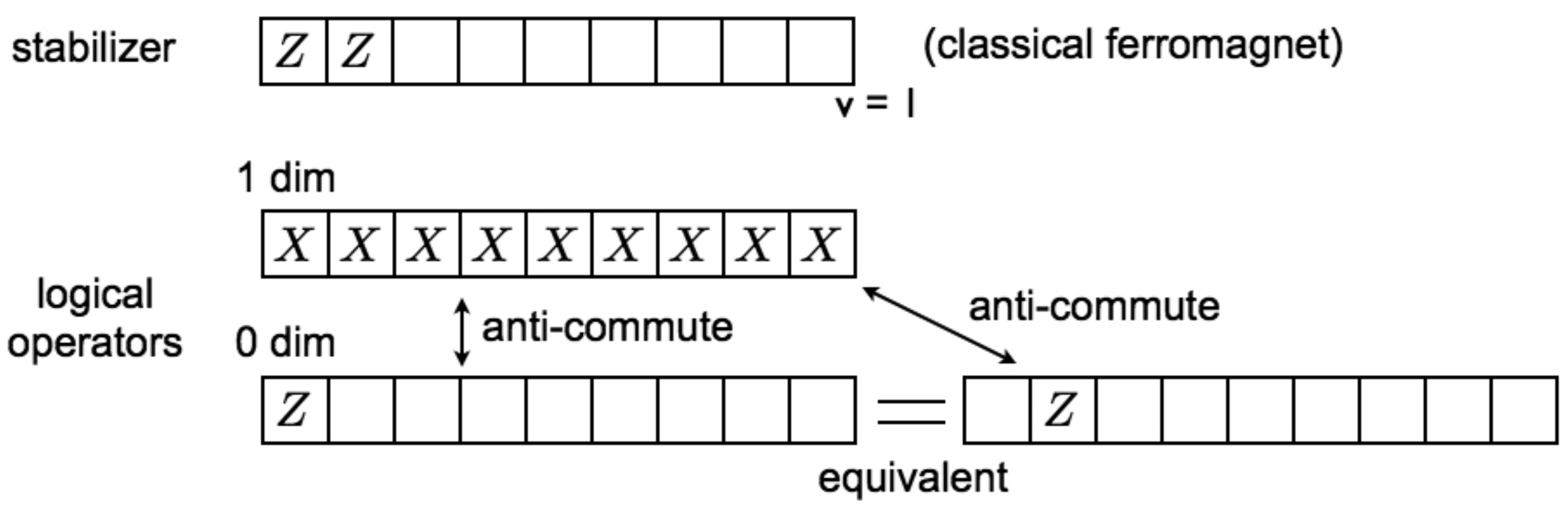}
\caption{A classical ferromagnet.
} 
\label{fig_1Dex1}
\end{figure}

The classical ferromagnet can be viewed as a stabilizer code with a translation symmetry since interaction terms $Z^{(j)}Z^{(j+1)}$ commute with each other. The stabilizer group is 
\begin{align}
\mathcal{S}_{N} \ = \ \langle Z^{(1)}Z^{(2)} , Z^{(2)}Z^{(3)}, \cdots , Z^{(N)}Z^{(1)} \rangle. 
\end{align}
This stabilizer code is an STS model since it satisfies a scale symmetry. While there are $N$ qubits and $N$ stabilizers in this system, one stabilizer is redundant since 
\begin{align}
Z^{(1)}Z^{(2)} \times Z^{(2)}Z^{(3)} \times \cdots \times Z^{(N)}Z^{(1)} \ = \ I.
\end{align}
Thus, the stabilizer group has only $N-1$ independent generators, and $k =1$ for any $N$ with $G(\mathcal{S_{N}})= N-1$. Here, $G(\mathcal{S}_{N})$ is the number of independent generators for the stabilizer group $\mathcal{S}_{N}$. 

Logical operators of a classical ferromagnet are
\begin{align}
\ell \ = \ Z^{(1)}, \qquad r \ = \ X^{(1)}X^{(2)}\cdots X^{(N)}, \qquad \{ \ell, r \} \ = \ 0.
\end{align}
One can see that both of logical operators satisfy the translation equivalence of logical operators since $T_{1}(r)=r$ and $\ell T_{1}(\ell) = Z^{(1)}Z^{(2)} \in \mathcal{S}_{N}$. According to geometric shapes of logical operators, we may call $\ell$ a \emph{zero-dimensional logical operator} and $r$ a \emph{one-dimensional logical operator} (Fig.~\ref{fig_1Dex1}). Logical operators characterize transformations between degenerate ground states:
\begin{align}
&\ell |0\cdots 0\rangle \ = \ |0\cdots 0\rangle, \quad \ell |1 \cdots 1\rangle \ = \ - |1 \cdots 1\rangle \\
&r |0\cdots 0\rangle \ = \ |1\cdots 1\rangle, \quad r |1 \cdots 1\rangle \ = \ |0 \cdots 0\rangle.
\end{align}

With the above classification of logical operators, we naturally think that a one-dimensional logical operator, non-locally defined all over the lattice, may characterize the existence of some global entanglement in ground states. Here, the ground state space is spanned by two orthogonal basis $|0\cdots 0\rangle$ and $|1 \cdots 1\rangle$. Then, in principle, at zero temperature, a classical ferromagnet can support a GHZ state: $|\mbox{GHZ}\rangle \ = \ \frac{1}{\sqrt{2}}(|0\cdots 0\rangle + |1\cdots 1\rangle)$.
This GHZ state is ``stabilized'' by the one-dimensional logical operator $r$ since $r |\mbox{GHZ}\rangle = |\mbox{GHZ}\rangle$. Now, we discuss entanglement in a GHZ state in a relation with the one-dimensional logical operator $r$. A GHZ state is a globally entangled state since it is a superposition of two globally separated ground states $|0 \cdots 0 \rangle$ and $|1 \cdots 1 \rangle  = X^{(1)}\cdots X^{(N)} |0 \cdots 0 \rangle$. One can also quantify the global entanglement of a GHZ state by computing a mutual information $E(A:B)$~\cite{Nielsen_Chuang}, which is a measure of entanglement between two subsets of qubits $A$ and $B$:
\begin{align}
E(A:B) \ = \ E(A) + E(B) - E(A\cup B)
\end{align}
where $E(A)$, $E(B)$ and $E(A\cup B)$ are entanglement entropies for regions $A$, $B$ and $A \cup B$. Then, for a GHZ state, which is stabilized by the one-dimensional logical operator $r$, we always have $E(A:B)=1$ for any pairs of disjoint regions $A$ and $B$ (Fig.~\ref{fig_1Dex1}). Thus, global entanglement arising in a GHZ state is scale invariant since $E(A:B)$ does not depend on the distance between $A$ and $B$. 

\subsubsection{Cluster state: a model without logical operators}\label{sec:1D12}

Next, let us discuss an example which does not have degenerate ground states or logical operators ($k=0$). The example we discuss is called a cluster state, possessing short-range entanglement between neighboring qubits. Its two-dimensional generalization is particularly useful as a resource to realize quantum information theoretical ideas such as measurement based quantum computation~\cite{Raussendorf03}. Though a cluster state has strong entanglement between neighboring qubits or composite particles, this short-range entanglement does not survive over the entire lattice, and is not a scale invariant property. In fact, such short-range entanglement can be removed by applying local unitary transformations on neighboring composite particles as we shall see below. 

The Hamiltonian for a one-dimensional cluster state is the following (Fig.~\ref{fig_1Dex2}):
\begin{align}
H \ = \ - \sum_{j} Z^{(j-1)} X^{(j)} Z^{(j+1)}.
\end{align}
This model is a stabilizer Hamiltonian since interaction terms $Z^{(j-1)} X^{(j)} Z^{(j+1)}$ commute with each other. The model does not have degenerate ground states since all the interaction terms $Z^{(j-1)} X^{(j)} Z^{(j+1)}$ are independent. The model has a unique ground state, called a cluster state, which satisfy the following conditions: $Z^{(j-1)} X^{(j)} Z^{(j+1)} |\psi\rangle \ = \ |\psi\rangle$.

A cluster state has short-range entanglement between neighboring qubits. While it is possible to see the existence of such entanglement by writing down a cluster state as a superposition of product states explicitly, a representation would look ugly and give us few physical insights. 

A useful property of stabilizer Hamiltonians is that entanglement entropies of ground states can be computed easily through the stabilizer group $\mathcal{S}$. Here, we begin by reviewing computations of entanglement entropies for ground states of stabilizer Hamiltonians. When a stabilizer Hamiltonian has only a single ground state, the ground state can be represented in terms of the stabilizer group as follows:
\begin{align}
|\psi\rangle \langle \psi | \ = \ \frac{1}{2^{N}} \prod_{S_{j} \in \mathcal{S}}(I + S_{j})
\end{align} 
where $N$ is the total number of qubits, and $S_{j}$ are independent generators for the stabilizer group $\mathcal{S}$. Then, the density matrix of a ground state of a stabilizer Hamiltonian can be represented in the following way:
\begin{align}
\hat{\rho}_{R} \ = \ \frac{1}{2^{G(\mathcal{S}_{R})}} \prod_{S_{j} \in \mathcal{S}_{R}}(I + S_{j})
\end{align}
where $S_{j}$ are independent generators for a restriction of the stabilizer group into $R$ which is denoted as $\mathcal{S}_{R}$. $G(\mathcal{S}_{R})$ is the number of independent generators for $\mathcal{S}_{R}$. Since the entanglement entropy is defined as
\begin{align}
E_{R} \ \equiv \ \mbox{Tr}\left[ \hat{\rho}_{R} \log \hat{\rho}_{R} \right] ,
\end{align}
it can be represented in terms of the restriction of stabilizer group in the following way~\cite{Beni10}:
\begin{align}
E_{R} \ = \ V_{R} - G(\mathcal{S}_{R}), \label{eq:entropy_stabilizer}
\end{align}
where $V_{R}$ is the number of qubits inside $R$. 

Now that we have a formula to compute entanglement entropies, let us compute the entanglement entropy for a region $A_{1}$ which consists of only one qubit. Then, we have $E_{A_{1}} =  1$ since $V_{A_{1}} = 1$ and $G(\mathcal{S}_{A_{1}})=0$. Next, let us compute the entanglement entropy for a region $A_{j}$ which consists of $j$ consecutive qubits. Then, we have 
$E_{A_{1}} = 2$ for $1  \leq  j \leq  N-1$
since $V_{A_{j}} = j$ and $G(\mathcal{S}_{A_{j}})=j-2$. This indicates the existence of short-range entanglement between neighboring qubits. 

However, a cluster state does not have global entanglement. This can be seen from the fact that $E(A:B)=0$ for any disjoint regions $A$ and $B$. Thus, entanglement arising in a cluster state is short-range and is not scale invariant. As we shall soon see, this is a direct consequence of the absence of one-dimensional logical operators. 

\begin{figure}[htb!]
\centering
\includegraphics[width=0.45\linewidth]{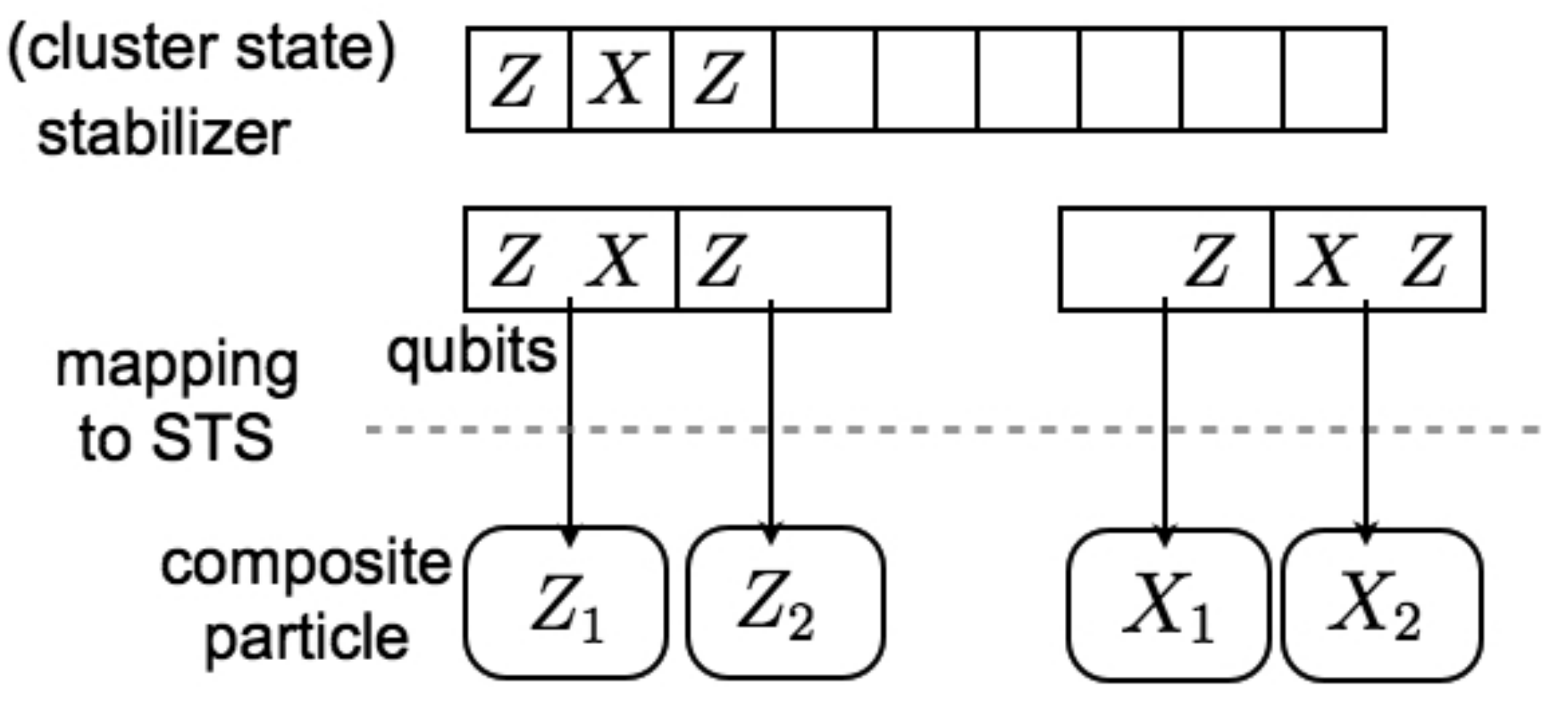}
\caption{A cluster state. A mapping to an STS model is shown.
} 
\label{fig_1Dex2}
\end{figure}

\textbf{Reduction to an STS model:}
This model can discussed in the framework of an STS model by considering two consecutive qubits as a single composite particle ($v=2$) when the total number of qubits $N$ is even. Let $Z_{p}^{(j)}$ and $X_{p}^{(j)}$ be Pauli operators acting on a $p$-th qubit ($p =1,2$) inside a $j$-th composite particle ($j=1,\cdots,N/2$) particle  with the following commutation relation:
\begin{align}
\left\{
\begin{array}{ccc}
Z_{1}^{(j)} ,& Z_{2}^{(j)} \\
X_{1}^{(j)} ,& X_{2}^{(j)}
\end{array}\right\}.
\end{align}
Then, after applying some appropriate unitary transformations on qubits inside each composite particle, one can represent the Hamiltonian for a cluster state in the following way:
\begin{align}
H \ = \ - \sum_{j} Z_{1}^{(j)}Z_{2}^{(j+1)} - \sum_{j}  X_{1}^{(j)}X_{2}^{(j+1)}. \label{eq:cluster_U}
\end{align}

\textbf{Disentangling short-range entanglement:}
Even after a coarse-graining, the model still possesses short-range entanglement between composite particles. However, there is no global entanglement since $E(A:B)=0$ for any disjoint regions $A$ and $B$ of composite particles. One may also see why there is no global entanglement by considering how stabilizers $Z_{1}^{(j)}Z_{2}^{(j+1)}$ and $X_{1}^{(j)}X_{2}^{(j+1)}$ act on qubits inside each composite particle. As shown in Fig.~\ref{fig_1Dex2_disentangle}, stabilizers connect only the first qubits in $j$-th composite particles and the second qubits in $j+1$-th composite particles. Thus, there is no entanglement which survives over the lattice, and entanglement is established only between neighboring composite particles.

As long as global entanglement is concerned, a cluster state is similar to a product state since both states have $E(A:B)=0$ for any disjoint regions $A$ and $B$. In fact, one can disentangle neighboring entanglement between qubits, or between composite particles by applying local unitary transformations and reduce a cluster state to a product state. Here, we first look at how short-range entanglement can be disentangled in a system of qubits. Then, let us discuss how neighboring entanglement can be disentangled in a coarse-grained lattice.

\begin{figure}[htb!]
\centering
\includegraphics[width=0.50\linewidth]{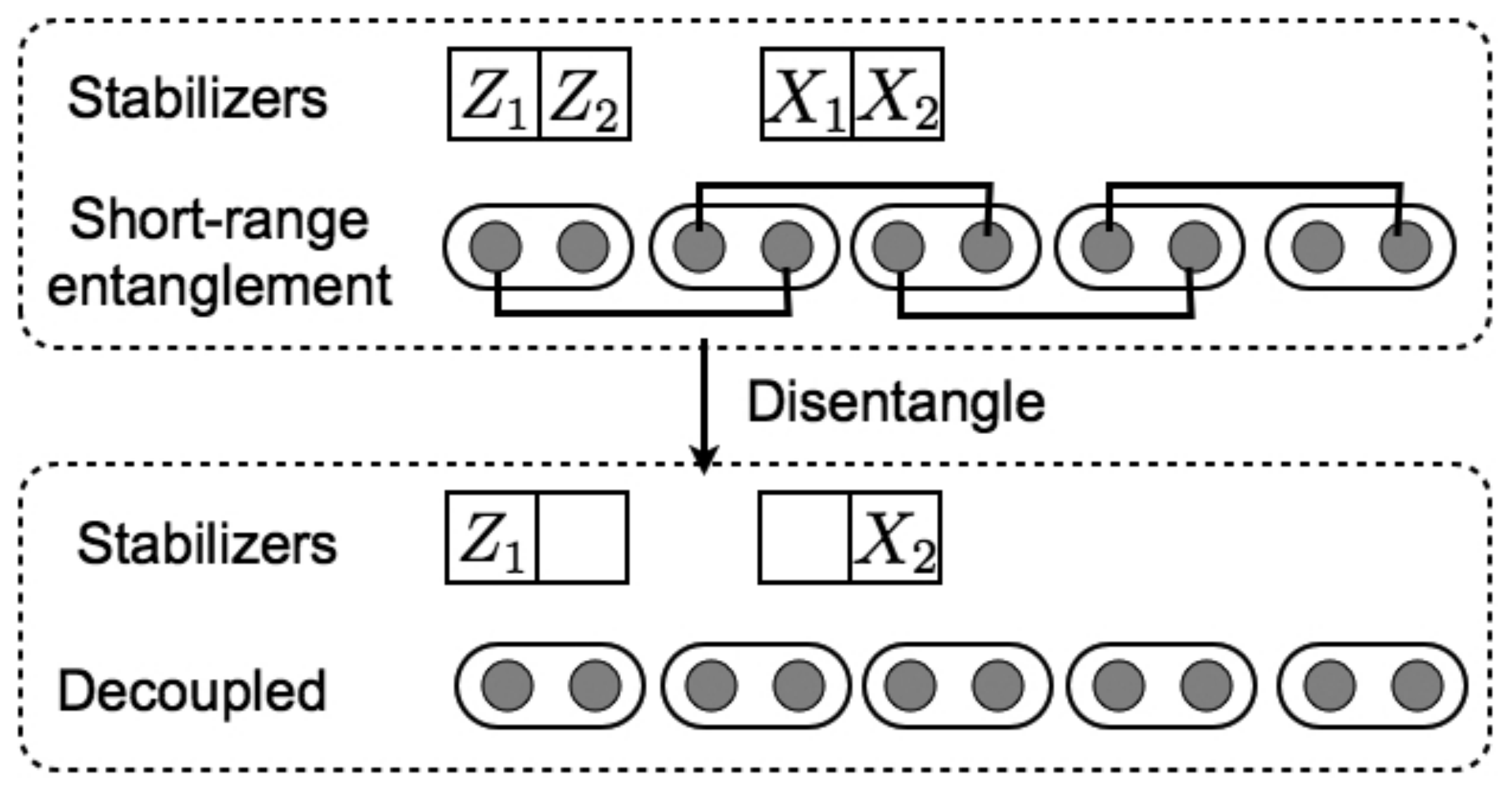}
\caption{A disentangling operation in a cluster state described through composite particles. Short-range entanglement between composite particles can be removed by some local unitary transformations acting on neighboring composite particles. 
} 
\label{fig_1Dex2_disentangle}
\end{figure}

The Hamiltonian for a cluster state can be obtained by applying control-$Z$ operations on the Hamiltonian for a product state $H  =  - \sum_{j} X^{(j)}$. Here, the control-$Z$ operation acting on two qubits is defined as follows:
\begin{align}
\mbox{CZ}\ : \ |00\rangle \ \rightarrow \ |00\rangle, \quad |01\rangle \ \rightarrow \ |01\rangle, \quad |10\rangle \ \rightarrow \ |10\rangle, \quad |11\rangle \ \rightarrow \ -|11\rangle
\end{align}
where CZ represents the control-$Z$ operation.
The effect of the control-$Z$ operation on two qubits can be represented in terms of Pauli operators in the following way:
\begin{align}
\mbox{CZ} \
\left\{
\begin{array}{ccc}
Z_{1} ,& Z_{2} \\
X_{1} ,& X_{2}
\end{array}\right\} \ \mbox{CZ}^{-1} \ = \ 
\left\{
\begin{array}{ccc}
Z_{1} ,& Z_{2} \\
X_{1}Z_{2} ,& Z_{1}X_{2}
\end{array}\right\}
\end{align}
where the operation transforms $X_{1}$ into $X_{1}Z_{2}$ and $X_{2}$ into $Z_{1}X_{2}$. Here, $Z_{1}$ and $X_{1}$ act on the first qubit, and $Z_{2}$ and $X_{2}$ act on the second qubit. $\mbox{CZ}^{-1}$ represents the inverse of CZ.

Now, we describe the reduction from a cluster state to a product state by representing control-Z operations in terms of composite particles. In particular, through some observations, we notice that the following local unitary transformations $U^{(j)}$ acting on $j$-th and $j+1$-th composite particles can disentangle neighboring entanglement between composite particles in Eq.~(\ref{eq:cluster_U}):
\begin{align}
U^{(j)} \  
\ \left\{
\begin{array}{cc}
Z_{1}^{(j)}Z_{2}^{(j+1)} , & X_{1}^{(j)}X_{2}^{(j+1)} \\
X_{1}^{(j)}              , & Z_{2}^{(j+1)}
\end{array}\right\} \ (U^{(j)})^{-1}
\ = \ \left\{
\begin{array}{cc}
Z_{1}^{(j)}    , & X_{2}^{(j+1)} \\
X_{1}^{(j)}    , & Z_{2}^{(j+1)}
\end{array} \right\}
\end{align}
where Pauli operators are transformed each other through $U^{(j)}$. Here, we note that $U^{(j)}$ commute with each other: $[U^{(j)},U^{(j')}]=0$. Then, by applying a unitary transformation $U \ = \ \prod_{j} U^{(j)}$, one can transform the original Hamiltonian in Eq.~(\ref{eq:cluster_U}) to
\begin{align}
UHU^{-1} \ = \ - \sum_{j} Z_{1}^{(j)} - \sum_{j} X_{2}^{(j)},
\end{align}
and short-range entanglement between neighboring composite particles can be washed out (Fig.~\ref{fig_1Dex2_disentangle}). Thus, the model can be reduced to a Hamiltonian which supports a ``product state of composite particles''. 

These observations imply that short-range entanglement arising in a cluster state is not scale invariant. According to the basic principle of a classification of quantum phases through RG transformations, such short-range entanglement is not relevant for characterizations of quantum phases. At the end of this section, we shall confirm this expectation rigorously for quantum phases arising in STS models. 

\subsubsection{Extended five qubit code: reduction to a classical ferromagnet}\label{sec:1D13}

We discuss our final example of one-dimensional STS models. We consider an example which can be obtained by generalizing the five qubit code to a one-dimensional spin chain. By introducing composite particles, one can discuss the model in the framework of STS models. We see that geometric shapes of logical operators are the same as those in a classical ferromagnet, and physical properties are similar. Then, we show that the model can be reduced to a ``classical ferromagnet of composite particles'' by disentangling short-range entanglement between neighboring composite particles. 

As we have seen in Section~\ref{sec:review2}, the five qubit code is constructed on five qubits, which is defined with the following stabilizer generator: $X^{(1)}Y^{(2)}Y^{(3)}X^{(4)}$
and its translations. Inspired by this five qubit code, we consider the following Hamiltonian defined with $N$ qubits:
\begin{align}
H \ = \ - \sum_{j=1}^{N} X^{(j)} Y^{(j+1)}Y^{(j+2)}X^{(j+3)}.
\end{align}
One can easily see that this extended five qubit code is also a stabilizer code since all the interaction terms $X^{(j)} Y^{(j+1)}Y^{(j+2)}X^{(j+3)}$ commute with each other. The code satisfies a scale symmetry since
\begin{align}
\prod_{j} X^{(j)} Y^{(j+1)}Y^{(j+2)}X^{(j+3)} \ = \ I
\end{align}
and $k =1$ for any $N$. 

\begin{figure}[htb!]
\centering
\includegraphics[width=0.70\linewidth]{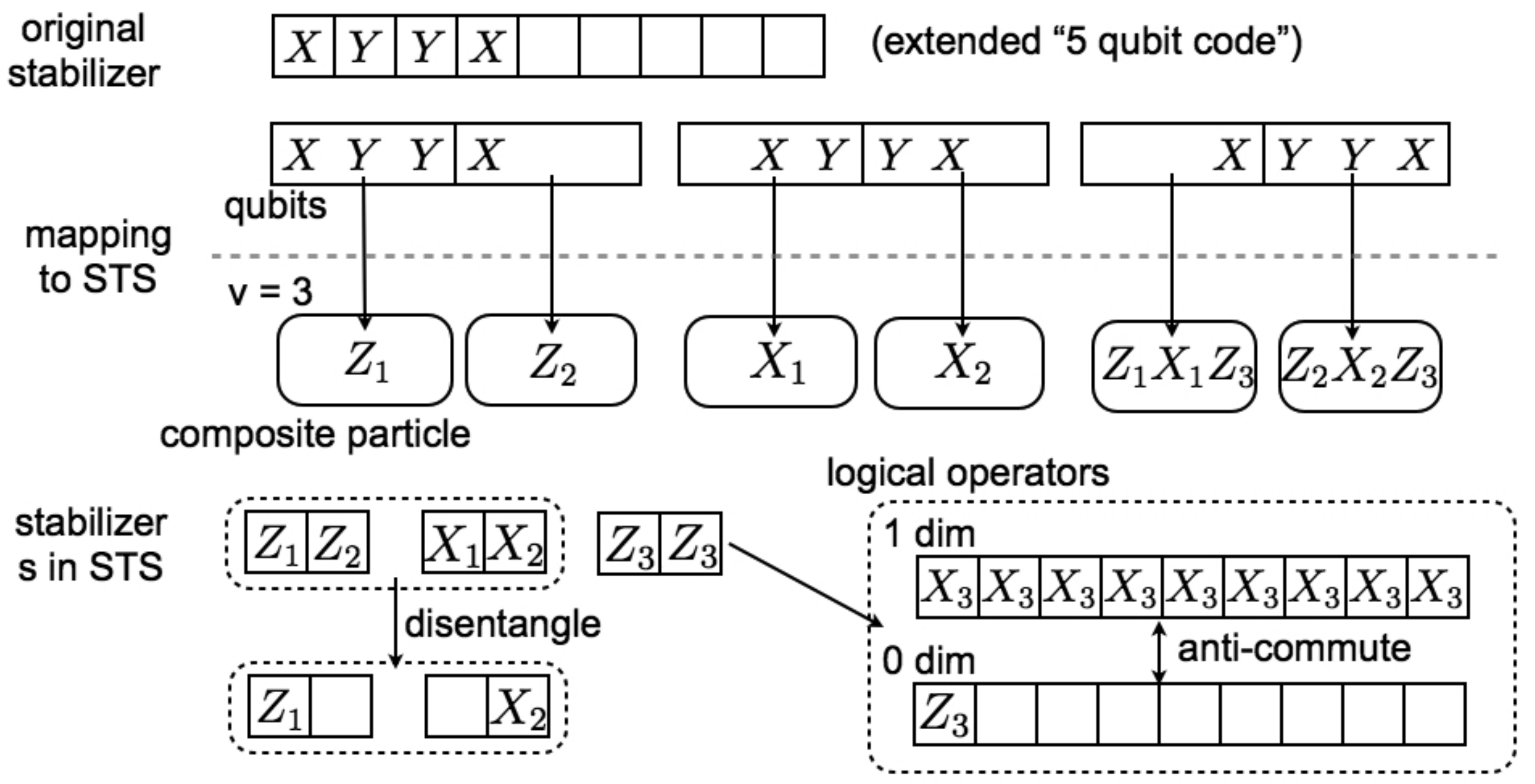}
\caption{The extended five qubit code.
} 
\label{fig_1Dex3}
\end{figure}

This model can be reduced to an STS model by considering three consecutive qubits as a single composite particle ($v=3$). Let $Z_{p}^{(j)}$ and $X_{p}^{(j)}$ be Pauli operators acting on a $p$-th qubits ($p = 1, \cdots ,3$) inside a $j$-th composite particle ($j = 1, \cdots ,n$) with the following commutation relations:
\begin{align}
\left\{
\begin{array}{ccc}
Z_{1}^{(j)} ,& Z_{2}^{(j)} ,& Z_{3}^{(j)} \\
X_{1}^{(j)} ,& X_{2}^{(j)} ,& X_{3}^{(j)}
\end{array}\right\}.
\end{align}
Then, after applying some appropriate local unitary transformations on qubits inside each composite particle, one can represent each of stabilizers in the following way:
\begin{align}
 Z_{1}^{(j)}Z_{2}^{(j+1)}, \qquad  X_{1}^{(j)}X_{2}^{(j+1)}, \qquad  Z_{3}^{(j)}Z_{3}^{(j+1)}.
\end{align}

With this notation through composite particles, one can easily write down logical operators:
\begin{align}
\ell \ = \ Z_{3}^{(1)}, \qquad r \ = \ X_{3}^{(1)}X_{3}^{(2)}\cdots X_{3}^{(N)}, \qquad \{ \ell, r \} \ = \ 0.
\end{align}
We notice that geometric shapes of logical operators are the same as those of a classical ferromagnet. Then, we expect that physical properties of the model are similar to those of a classical ferromagnet. In fact, one can discuss properties of entanglement in a way similar to a classical ferromagnet. The existence of a one-dimensional logical operator $r$ implies the presence of a global entanglement. Indeed, for a ground state satisfying $r|\psi\rangle = |\psi\rangle$, we have $E(A:B)=1$ for any disjoint regions $A$ and $B$, which can be computed easily by using a formula in Eq.~(\ref{eq:entropy_stabilizer}). Here, we choose two degenerate ground states of the model in the following way:
\begin{align}
\ell |\psi_{0} \rangle \ = \ |\psi_{0}\rangle, \qquad \ell|\psi_{1} \rangle \ = \ - |\psi_{1}\rangle, \qquad r |\psi_{0} \rangle \ = \ |\psi_{1}\rangle, \qquad r|\psi_{1} \rangle \ = \ |\psi_{0}\rangle.
\end{align}
Then, we notice that $|\psi\rangle$ has a \emph{GHZ-like entanglement} since $|\psi\rangle = \frac{1}{\sqrt{2}} (|\psi_{0}\rangle + |\psi_{1}\rangle)$ and $|\psi\rangle$ is a superposition of two globally separated ground states.

As long as the scale invariant ground state properties are concerned, an extended five qubit code is similar to a classical ferromagnet. In fact, this model can be reduced to a classical ferromagnet by applying unitary transformations acting on neighboring composite particles in a way similar to the reduction of a cluster state:
\begin{align}
U^{(j)} \ : \ Z_{1}^{(j)}Z_{2}^{(j+1)} \ \rightarrow Z_{1}^{(j)}, \qquad X_{1}^{(j)}X_{2}^{(j+1)} \ \rightarrow \ X_{2}^{(j+1)}.
\end{align}
Note that $[U^{(j)},U^{(j')}]=0$. Now, after this unitary transformations, only the third qubits in each composite particle are correlated through stabilizers $Z_{3}^{(j)}Z_{3}^{(j+1)}$ while the first and second qubits in each composite particle are decoupled due to the existence of $Z_{1}^{(j)}$ and $X_{2}^{(j+1)}$. Thus, the extended five-qubit code is equivalent to a classical ferromagnet, when decoupled qubits are removed. Later, we shall show that this model and a classical ferromagnet belong to the same quantum phase.

\textbf{Connection with RG algorithms based on the tensor product state formalism:}
We have seen that an extended five qubit code can be reduced to a classical ferromagnet by applying disentangling operations on neighboring composite particles. Here, it may be worth noting that these disentangling operations are closely related to a ``disentangler'' which is a key element in a novel RG algorithm based on the tensor product state formalism~\cite{Vidal07, Chen10}.

In the previous section, we have coarse-grained the system of qubits by introducing composite particles where microscopic properties of each qubit inside a newly defined composite particle are completely lost. In a coarse-grained lattice, one may consider two STS models as the same when they can be transformed each other through unitary transformations acting on qubits inside each composite particle. For example, Hamiltonians $H = - \sum_{j} Z^{(j)}Z^{(j+1)}$ and $H = - \sum_{j} X^{(j)}X^{(j+1)}$ may be considered to be the same, as discussed in Section~\ref{sec:review1}.

However, coarse-graining is not sufficient to characterize the scale invariant ground state properties since there still remain some short-range correlations which cannot be washed out by coarse-graining, as we have seen in analyses on a cluster state and an extended five qubit code. Such short-range entanglement must be also removed in classifying quantum phases, according to the basic philosophy of the classification of quantum phases through RG transformations.

Recently, a remarkable idea of removing short-range correlations has been proposed in a search for efficient RG algorithms based on the tensor product state formalism. The key idea is the use of a disentangling operation, called a disentangler, through local unitary transformations to remove these short-range entanglement between neighboring particles~\cite{Vidal07}. It has been demonstrated that RG transformations which combine coarse-graining and disentangling operations work in a remarkably efficient way in analyzing quantum phases arising in two-dimensional strongly correlated spin systems.

A disentangling operation used in the analysis on a cluster state is essentially similar to a disentangler used in this RG algorithm. In particular, a central idea behind these disentangling operation is that short-range entanglement is irrelevant to characterizations of quantum phases. This observation will be rigorously confirmed for quantum phases arising in STS model at the end of this section. 

Based on observations obtained in the analyses on a cluster state and an extended five qubit code, we wish to consider two STS models as the same when they can be transformed each other by local unitary transformations. By local unitary transformations, we include the following two elements:
\begin{itemize}
\item \textbf{Unitary operations on composite particles:} Unitary transformations acting on qubits inside each composite particle.
\item \textbf{Disentangling operations:} Unitary transformations acting on neighboring composite particles.
\end{itemize}
Here, we note that this is consistent with our original approach to distinguish quantum phases through geometric shapes of logical operators since geometric shapes of logical operators are invariant under local unitary transformations. Later, we shall see that one-dimensional STS models connected through local unitary transformations belong to the same quantum phase.\footnote{In general, local unitary transformations are used in much broader sense. In particular, any unitary evolutions induced by time evolution of local Hamiltonians for a finite time may be called local unitary transformations~\cite{Chen10}.}

\subsection{Logical operators in one-dimensional STS models}\label{sec:1D2}

We have seen that geometric shapes of logical operators are central in analyzing the scale invariant ground state properties of STS models through several examples. In this subsection, the analysis is extended to arbitrary one-dimensional STS models. We obtain a canonical set of logical operators (all the independent logical operators) of arbitrary STS models, and see that logical operators are either zero-dimensional or one-dimensional, as in a classical ferromagnet. We discuss how the ground state properties, such as global entanglement, in STS models can be studied by geometric shapes of logical operators, and show that any one-dimensional STS models can be reduced to multiple copies of a classical ferromagnet, or a product state, by disentangling neighboring entanglement between composite particles through local unitary transformations. 

\begin{figure}[htb!]
\centering
\includegraphics[width=0.7\linewidth]{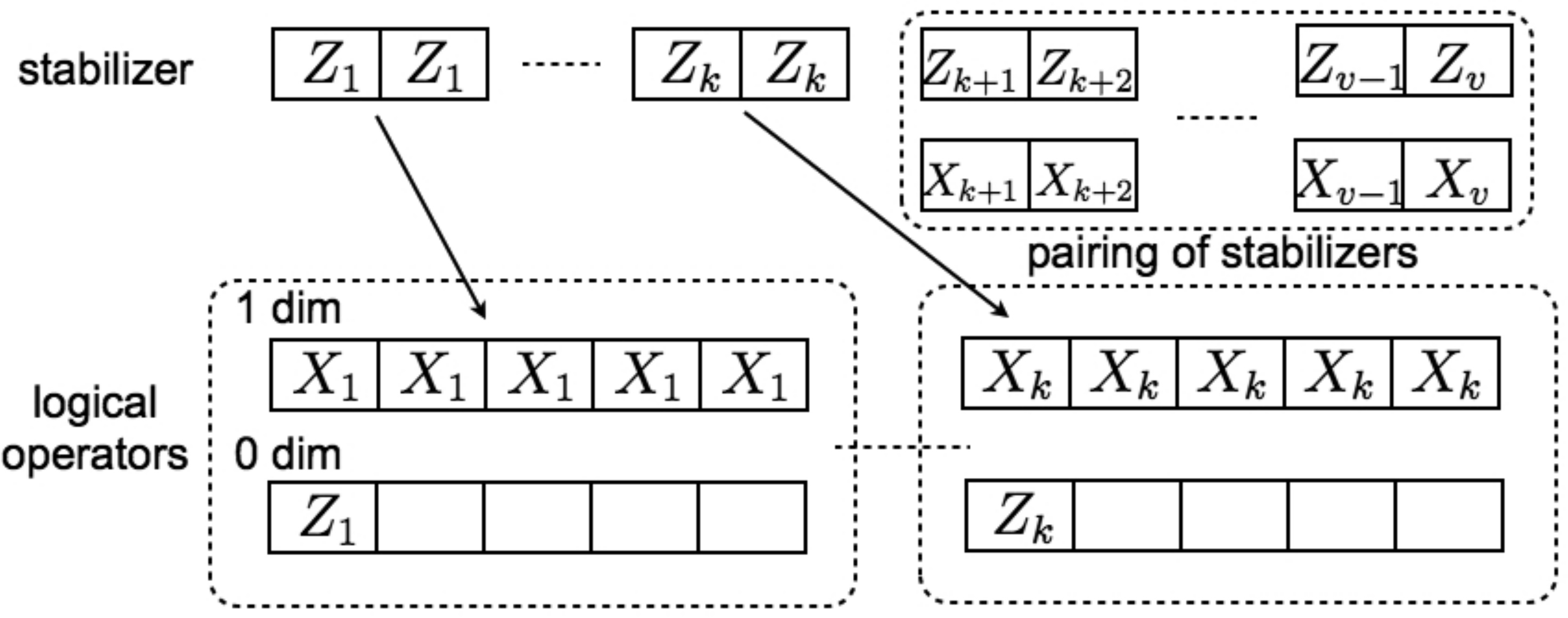}
\caption{Stabilizers and logical operators in arbitrary STS models.
} 
\label{fig_1Dex4}
\end{figure}

Let us consider an STS model defined with composite particles which consist of $v$ qubits. For simplicity of discussion, we neglect stabilizers acting on single composite particles since decoupled qubits are to be removed. Then, stabilizers in one-dimensional STS models can be represented in the following way (Fig.~\ref{fig_1Dex4}), as shown in~\ref{sec:1D_proof}:
\begin{align}
&\mbox{Ferromagnetic part:} \qquad \quad \ \ \ Z_{1}^{(j)} Z_{1}^{(j+1)}, \ \cdots, \ Z_{k}^{(j)} Z_{k}^{(j+1)}\\ 
&\mbox{Short-range entanglement:} \quad
Z_{k+1}^{(j)}Z_{k+2}^{(j+1)} , \ X_{k+1}^{(j)}X_{k+2}^{(j+1)}, \ \cdots, \ Z_{v-1}^{(j)}Z_{v}^{(j+1)} , \ X_{v-1}^{(j)}X_{v}^{(j+1)}
\end{align}
where $v-k$ is an even integer. Here, stabilizers $Z_{1}^{(j)} Z_{1}^{(j+1)}, \cdots,  Z_{k}^{(j)} Z_{k}^{(j+1)}$ create ferromagnet-like correlations, while $Z_{k+1}^{(j)}Z_{k+2}^{(j+1)}, X_{k+1}^{(j)}X_{k+2}^{(j+1)}, \cdots, Z_{v-1}^{(j)}Z_{v}^{(j+1)}, X_{v-1}^{(j)}X_{v}^{(j+1)}$ create short-range entanglement between neighboring composite particles as in a cluster state. Then, logical operators are
\begin{align}
\Pi(\mathcal{S}_{n}) \ = \ \left\{
\begin{array}{ccc}
\ell_{1} ,& \cdots ,& \ell_{k} \\ 
r_{1}    ,& \cdots ,& r_{k}
\end{array}\right\}
\end{align}
where
\begin{align}
\ell_{p} \ = \ Z_{p}^{(1)}, \qquad r_{j} \ = \ X_{p}^{(1)}X_{p}^{(2)}\cdots X_{p}^{(N)},  \qquad \mbox{for}\ \ p \ = \ 1 , \cdots , k.
\end{align}
Thus, geometric shapes of logical operators in one-dimensional STS models are either zero-dimensional or one-dimensional. Zero-dimensional logical operators and one-dimensional logical operators always form anti-commuting pairs (Fig.~\ref{fig_1Dex41}). 

\begin{figure}[htb!]
\centering
\includegraphics[width=0.80\linewidth]{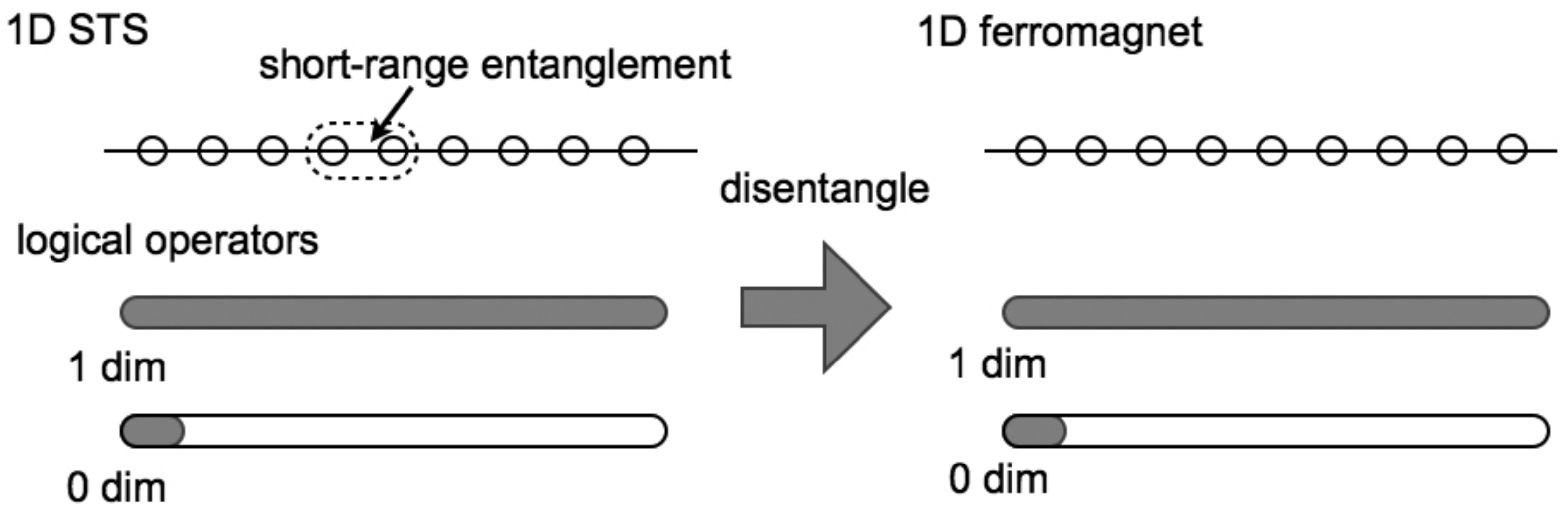}
\caption{Reduction of one-dimensional STS models to classical ferromagnets by disentangling short-range entanglement between neighboring composite particles. Geometric shapes of logical operators characterize scale invariant properties of STS models.
} 
\label{fig_1Dex41}
\end{figure} 

We notice that geometric shapes of logical operator are the same as those in a classical ferromagnet. Then, we expect that the scale invariant ground state properties in one-dimensional STS models can be discussed in a way similar to discussions on a classical ferromagnet. This expectation turns out to be true. In fact, due to the existence of one-dimensional logical operators, ground states of STS models can have GHZ-like entanglement, as in a classical ferromagnet. For example, when the number of logical qubits is $k$, there exists a ground state which has a mutual information $E(A:B)=k$ for any disjoint regions of composite particles $A$ and $B$. Note that this can be confirmed easily by using a formula in Eq.~(\ref{eq:entropy_stabilizer}).

Even more, one can show that all the one-dimensional STS models can be reduced to classical ferromagnets by applying disentangling operations between neighboring composite particles. Let us focus on a pair of stabilizers $Z_{k+1}^{(j)}Z_{k+2}^{(j+1)}$ and $X_{k+1}^{(j)}X_{k+2}^{(j+1)}$. Then, these two stabilizers can be transformed into $Z_{k+1}^{(j)}$ and $X_{k+2}^{(j+1)}$ by removing short-range entanglement between neighboring composite particles, as in the discussion of a cluster state. One can repeat the similar arguments for all the other pairs of stabilizers. Thus, one-dimensional STS models with $k$ logical qubits can be reduced to the following STS model with $v=k$ through local unitary transformations:
\begin{align}
H \ = \ - \sum_{j=1}^{n}\sum_{p=1}^{k} Z_{p}^{(j)}Z_{p}^{(j+1)}.
\end{align}

This model consists of ``multiple copies'' of a classical ferromagnet embedded in a non-interacting way. The $p$-th classical ferromagnet connects $p$-th qubits inside each composite particle with $Z_{p}^{(j)}Z_{p}^{(j+1)}$, and different classical ferromagnets embedded in the model are decoupled from each other. Therefore, when two STS models have logical operators with the same geometric shapes, they have similar global entanglement in ground states and can be reduced to an STS model in the above form. 

Here, we mention the importance of geometric shapes of logical operators as indicators of global entanglement in ground states. Since geometric shapes of logical operators are invariant under local unitary transformations, they can capture the scale invariant ground state properties such as global entanglement as seen in a GHZ state (Fig~\ref{fig_1Dex41}). We shall soon see that geometric shapes of logical operators distinguish quantum phases in one-dimensional STS models completely, serving as order parameters. 

\subsection{Quantum phases and geometric shapes of logical operators}\label{sec:1D3}

We have seen that the ground state properties, such as global entanglement, in STS models are similar when geometric shapes of their logical operators are the same. Here, our hope is to use geometric shapes of logical operators as order parameters to distinguish quantum phases. However, there is still an important gap between geometric shapes of logical operators and the notion of quantum phases. In particular, it is not clear if different geometric shapes of logical operators lead to different quantum phases separated by a QPT. Also, it is not clear if the same geometric shapes of logical operator imply that two STS models belong to the same quantum phase or not.

Here, we establish the relation between quantum phases and geometric shapes of logical operators. In this subsection, we show that two STS models belong to different quantum phases if and only if geometric shapes of logical operators are different. In particular, we show that two STS models with the same geometric shapes of logical operators can be connected without closing the energy gap, while two STS models with different geometric shapes of logical operators are always separated by a QPT.

\textbf{Cases with the same number of logical operators:} Let us begin by showing that two STS models belong to the same quantum phase when they have the same geometric shapes of logical operators. 

Consider two STS models $H_{A}$ and $H_{B}$ with the same number of logical qubits $k$, or the same number of anti-commuting pairs of logical operators. For simplicity of discussion, let us assume that the number of qubits inside each composite particle is $v$ for both $H_{A}$ and $H_{B}$ ($v \geq k$). We also assume that both $H_{A}$ and $H_{B}$ have the same system size. We denote the projection operators onto the ground state spaces of $H_{A}$ and $H_{B}$ as $\hat{P}_{A}$ and $\hat{P}_{B}$. From the discussion in the previous subsection, there always exist a local unitary transformation $U$ which transform $\hat{P}_{A}$ into $\hat{P}_{B}$:
\begin{align}
U\hat{P}_{A}U^{-1} \ = \ \hat{P}_{B}.
\end{align}
Then, one can show that there exists a parameterized Hamiltonian which connects $H_{A}$ and $H_{B}$ without closing the energy gap. 

Here, we give a sketch of a proof for the existence of such a parameterized Hamiltonian $H(\epsilon)$. From the discussion in the previous section, unitary transformations $U$ which connects $\hat{P}_{A}$ and $\hat{P}_{B}$ can be decomposed into two parts:
\begin{align}
U \ = \ U_{composite} \times U_{disentangle} \qquad \mbox{where} \quad [U_{composite}, U_{disentangle}] \ = \ 0.
\end{align}
The first part $U_{composite}$ represents unitary transformations acting on qubits inside each composite particle, while the second part $U_{disentangle}$ represents disentangling operations acting on neighboring composite particles. Let us begin with the case where $U_{disentangle}=I$. Then, by gradually inducing unitary transformations on each composite particle, one can transform $H_{A}$ into $UH_{A}U^{-1}$ without closing the energy gap or changing the number of ground states. Now, since the projection into the ground state space of $UH_{A}U^{-1}$ is the same as $\hat{P}_{B}$, one can change $UH_{A}U^{\dagger}$ to $H_{B}$ without closing the energy gap. Thus, $H_{A}$ and $H_{B}$ belong to the same quantum phase. Next, let us consider the case where $U_{composite}=I$. Let us recall a transformation from a product state Hamiltonian $H_{A} = - \sum_{j}X^{(j)}$ to a cluster state Hamiltonian $H_{B}= - \sum_{j}Z^{(j-1)}X^{(j)}Z^{(j+1)}$. Since the control-$Z$ operation acting on $j$-th and $j+1$-th qubits can be written as
\begin{align}
\mbox{CZ} \ = \ \exp \left[ i \frac{\pi}{4} (Z^{(j)} - I)(Z^{(j+1)} - I) \right],
\end{align}
by gradually inducing the control-$Z$ operation such that 
\begin{align}
\mbox{CZ}(\epsilon) = \exp \left[ i \frac{\epsilon\pi}{4} (Z^{(j)} - I)(Z^{(j+1)} - I) \right] 
\end{align}
from $\epsilon = 0$ to $\epsilon = 1$, one can transform a product state Hamiltonian to a cluster state Hamiltonian without changing the energy gap by setting $H(\epsilon) = \mbox{CZ}(\epsilon)H_{A}(\mbox{CZ}(\epsilon))^{-1}$. By generalizing this idea, one can show that $H_{A}$ and $H_{B}$ can be transformed each other without closing the energy gap when the number of logical operators is the same. It is straightforward to extend these discussions to the cases where $U_{composite}\not= I$ and $U_{disentangle}\not= I$. Also, when the numbers of qubits inside each composite particles are different for $H_{A}$ and $H_{B}$ (say, $v_{A}$ and $v_{B}$ with $v_{A}\not = v_{B}$), we coarse-grain by grouping $v$ qubits into a composite particle where $v$ is the least common multiple of $v_{A}$ and $v_{B}$. Then, one can repeat the similar argument and show that $H_{A}$ and $H_{B}$ belong to the same quantum phase. This completes the sketch of the proof.

\textbf{Cases with the different numbers of logical operators:}
Next, let us show that quantum phases represented by two STS models are different when geometric shapes of logical operators are different. Since zero-dimensional and one-dimensional logical operators always form anti-commuting pairs, different geometric shapes of logical operators imply different numbers of degenerate ground states. Then, when connecting two STS models through a parameterized Hamiltonian, the energy gap must close at some point as excited states need to be ground states in the course of a parameter change. Thus, two STS models with different $k$ belong to different quantum phases which are separated by a QPT. 

Here, we would like to make some comments on phase transitions between two STS models $H_{A}$ and $H_{B}$ which commute with each other, but have different numbers of logical operators. In such cases, phase transitions may occur exactly at $\epsilon = 0$ or $\epsilon = 1$, instead of some intermediate point between $\epsilon = 0$ and $\epsilon = 1$. Let us look at an example. Consider the Ising model in a parallel field:
\begin{align}
H(\epsilon) \ &= - (1 - \epsilon)\sum_{j}Z^{(j)}Z^{(j+1)} - \epsilon \sum_{j}Z^{(j)} \\
H_{A}       \ &=  H(0) \ = \ - \sum_{j}Z^{(j)}Z^{(j+1)}, \qquad H_{B} \ = \ H(1)  \ = - \sum_{j}Z^{(j)}. 
\end{align}
Since $H_{A}$ and $H_{B}$ belong to different quantum phases with different numbers of logical operators, we expect that the model undergoes some phase transition. This parameterized Hamiltonian is exactly solvable for any $\epsilon$ since all the terms in $H(\epsilon)$ commute with each other for any $\epsilon$. Though this model connects $H_{A}$ and $H_{B}$ from $\epsilon = 0$ to $\epsilon =1$, the phase transition occurs at $\epsilon = 0$. To see this more clearly, let us extend our analysis to the cases where $\epsilon < 0$ too. At $\epsilon = 0$, the model has degenerate ground states $|0 \cdots 0\rangle$ and $|1 \cdots 1\rangle$. At $\epsilon > 0$, the model has a single ground state $|0 \cdots 0\rangle$. At $\epsilon < 0$, the model has a single ground state $|1 \cdots 1\rangle$. Since the ground state properties change drastically at $\epsilon = 0$, the model undergoes a phase transition at $\epsilon = 0$. In this model $H_{A}$ and $H_{B}$ may not be ``separated'' by a phase transition since $H_{A}$ lies at the transition point. However, we consider that $H_{A}$ and $H_{B}$ belong to different quantum phases since the ground state degeneracy is lifted for $\epsilon \not= 0$.

\begin{figure}[htb!]
\centering
\includegraphics[width=0.70\linewidth]{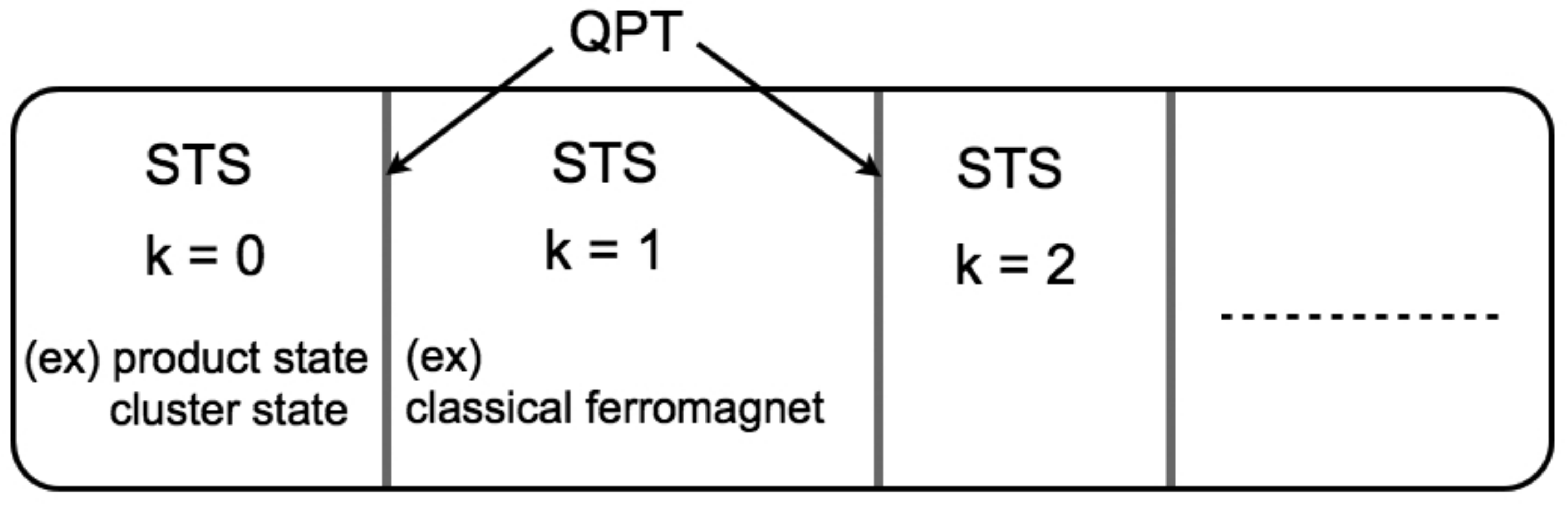}
\caption{A ``phase diagram'' of one-dimensional STS models. Different quantum phases can be characterized by geometric shapes of logical operators (the number of logical operators). Different quantum phases are separated by QPTs.  
} 
\label{fig_1D_phase}
\end{figure}

\textbf{Summary and applications:} We summarize the main result of this section (Fig.~\ref{fig_1D_phase}).

\begin{itemize}
\item Quantum phases in one-dimensional STS models can be characterized by geometric shapes of logical operators. Quantum phases represented by two STS models are different if and only if the numbers of logical operators, or the numbers of logical qubits $k$, are different.
\end{itemize}

Here, we mention the importance of the translation equivalence (Theorem~\ref{theorem_main}) in this classification of quantum phases in one-dimensional STS models. The underlying reason why we can specify quantum phases only through geometric shapes of logical operators is due to the translation equivalence of logical operators. In particular, as a result of the translation equivalence of logical operators, any translations of a logical operators are equivalent to the original logical operators. Then, one can distinguish quantum phases only through geometric shapes without considering the positions where logical operators are defined. 

Now, let us look at some examples. Within the framework of STS models, one can classify frustration-free Hamiltonians appeared in Section~\ref{sec:review} in the following way:
\begin{align}
H_{A} \ = \ - \sum_{j}Z^{(j)}Z^{(j+1)} \ \sim \  H_{A}' \ = \ - \sum_{j}X^{(j)}X^{(j+1)} \ \not \sim \   H_{B} \ = \ - \sum_{j}X^{(j)}
\end{align}
where $H_{A}$ and $H_{A'}$ belong to the same quantum phase while $H_{B}$ belong to a different quantum phase. This classification is consistent with the fact that the Ising model in a transverse field,
\begin{align}
H(\epsilon) \ = - (1 - \epsilon)\sum_{j}Z^{(j)}Z^{(j+1)} - \epsilon \sum_{j}X^{(j)}, \label{eq:Ising}
\end{align}
undergoes a QPT. Also, $H_{A} = - \sum_{j}Z^{(j)}Z^{(j+1)}$ and $H_{A}'  =  - \sum_{j}X^{(j)}X^{(j+1)}$ belong to the same quantum phase since they can be transformed each other through single qubit rotations without closing the energy gap. In combining the models discussed in Section~\ref{sec:1D1}, we have the following classification:
\begin{align}
&H_{A} \ = \ - \sum_{j}Z^{(j)}Z^{(j+1)} \ \sim \  H_{A}' \ = \ - \sum_{j}X^{(j)}X^{(j+1)} \ \sim \ H_{A}'' \ = \ - \sum_{j}X^{(j)}Y^{(j+1)}Y^{(j+2)}X^{(j+3)} \notag \\
&\not \sim \   H_{B} \ = \ - \sum_{j}X^{(j)} \ \sim \ H_{B}' \ \ = \ - \sum_{j}Z^{(j-1)}X^{(j)}Z^{(j+1)}.
\end{align}

While our discussions show only the existence of QPTs between STS models, analyses on QPTs in specific parameterized Hamiltonians are also important problems. However, detailed theoretical analyses on each parameterized Hamiltonian are beyond the scope of this paper. We note that QPTs occurring in some parameterized Hamiltonians connecting the above STS models are studied in previous works~\cite{Wolf06, Skrovseth09}.

\textbf{Connection with the symmetry-breaking theory:}
We have seen that changes of geometric shapes of logical operators may characterize different quantum phases. Here, we make comments on the similarity between our approach through geometric shapes of logical operators and the Landau's symmetry breaking theory on QPTs. 

The key idea of the Landau's symmetry breaking theory is to explain QPTs through changes of symmetries. Let us consider an example, the Ising model in a transverse field, described in Eq.~(\ref{eq:Ising}). The parameterized Hamiltonian has a symmetry with respect to the following global operator $r$:
\begin{align}
r \ = \ X^{(1)}\cdots X^{(N)}, \qquad rHr^{-1} \ = \ H.
\end{align}
Let us see how the symmetry with respect to $r$ changes as $\epsilon$ varies. For $\epsilon = 1$, the ground state is
$|\psi_{0}\rangle = | + \cdots + \rangle$.
Then, the symmetry with respect to $r$ is said to be spontaneously broken since the expectation value of $r$ is one and fixed to a single value. On the other hand, for $\epsilon = 0$, the ground states are
$|\psi_{0}\rangle = | 0 \cdots 0 \rangle$ and $|\psi_{1}\rangle = | 1 \cdots 1 \rangle$.
Then, the symmetry with respect to $r$ is not broken since the expectation value of $r$ inside the ground state space are not fixed to a single value. Thus, the symmetry with respect to a global operator $r$ changes during a QPT, which implies the existence of a QPT in this model.

Now, let us recall that logical operators are Pauli operators which commute with the system Hamiltonian, and 
$\ell H \ell^{-1} = H$
where $H$ is a stabilizer Hamiltonian. Then, logical operators $\ell$ characterize symmetries of stabilizer Hamiltonians. Here, one may notice that $r$ is a one-dimensional logical operator of a classical ferromagnet $H_{A} \equiv H(0)$. Since $r$ is a logical operator, $r$ has an anti-commuting pair $\ell = Z^{(1)}$. Then, the expectation value of $r$ is not fixed inside the ground state space of $H_{A}$, and the symmetry with respect to $r$ is not broken for $H_{A}$. However, this symmetry is broken for $H_{B}$ since $H_{B}$ does not have any logical operator. Thus, our approach in classifying quantum phases through logical operators is consistent with the Landau's symmetry breaking theory on QPTs. 

\section{Two-dimensional STS model: topological quantum phases and geometric shapes of logical operators}\label{sec:2D}

In one-dimensional STS models, different quantum phases are completely characterized by geometric shapes of logical operators. However, the only possible geometric shapes are anti-commuting pairs of zero-dimensional and one-dimensional logical operators. As a result, one-dimensional quantum phases are classified through only the number of logical operators without actually considering geometric shapes of logical operators. 

Unlike one-dimensional systems, two-dimensional systems have rich varieties of quantum phases, which may be seen from rich varieties of possible geometric shapes of logical operators in two dimensions. Here, we wish to search for possible quantum phases in two-dimensional STS models by finding logical operators and classify different quantum phases through geometric shapes of logical operators.

\textbf{Topological order and logical operators:}
To begin with, in order to discuss quantum phases through geometric shapes of logical operators, we need to study the ground state properties of STS models and their relations to logical operators. What makes two-dimensional systems strikingly distinct from one-dimensional systems is the possibility of the existence of topological order. Topological order is a highly non-local quantum correlation which is stable against any local and small perturbations. This peculiar correlation is known to be ``topological'', meaning that it is a scale invariant physical property which depends on topological structures of geometric manifolds where systems are defined~\cite{Wen90}. As we shall soon see, some of two-dimensional STS models, including the Toric code and the topological color code~\cite{Bombin06}, have topological order. Since topological order is a scale invariant characterization of non-local correlations arising in quantum many-body systems, in classifying quantum phases with topological order (topological phases), systems with ``different kinds of topological order'' must be appropriately distinguished. Here, we wish to distinguish topological phases in two-dimensional STS models through geometric shapes of logical operators. 

The main puzzle in analyzing topological order is the fact that topological order in correlated spin systems is currently characterized in various methods, such as the existence of anyonic excitations~\cite{Laughlin83, Moore91, Kitaev97, Kitaev03, Nayak08}, topological entanglement entropy~\cite{Kitaev06, Levin06} and the absence of local order parameters~\cite{Wen90, Bravyi06, Bravyi10b}. While these characterizations manifestly captures topological properties of non-local correlations, they must be some aspects of what is called ``topological order''. Though the connection between anyonic excitations and topological entanglement entropy has been established through the framework of topological quantum field theory~\cite{Kitaev06}, topological entanglement entropy cannot characterize anyonic excitations completely~\cite{Flammia09}. Therefore, if we hope to use geometric shapes of logical operators as order parameters for systems with topological order, all these existing characterizations of topological order must be explained through geometric shapes of logical operators in an unified way. 

We start our analyses on quantum phases in two-dimensional STS models by characterizing topological order arising in STS models completely through geometric shapes of logical operators. In Section~\ref{sec:2D1}, we study three examples of two-dimensional STS models, and analyze the connection between topological order and geometric shapes of logical operators. In particular, we approach three different characterizations of topological order, the existence of anyonic excitations, topological entanglement entropy and the absence of local order parameters, through geometric shapes of logical operators. In Section~\ref{sec:2D2}, we extend our analysis to arbitrary two-dimensional STS models and describe possible geometric shapes of logical operators. Based on geometric shapes of logical operators, topological order arising in two-dimensional STS models are discussed. Then, we show that geometric shapes of logical operators can distinguish topological properties of two-dimensional STS models appropriately.   

\textbf{Quantum phases and logical operators:}
Then, we shall move on to the analyses on quantum phases in two-dimensional STS models and their classifications. Since a topological STS model and a non-topological STS model have totally different physical properties, one may naturally think that they belong to different quantum phases. This expectation is also consistent with the basic spirit of RG transformations since topological order is a characterization of scale invariant non-local correlations. In fact, there are some analytical and numerical evidences for the existence of QPTs between topological phases and non-topological phases on specific models of parameterized Hamiltonians~\cite{Trebst07, Hamma08, Abasto08, J_Vidal09b, J_Vidal09}. However, the connection between a QPT and geometric shapes of logical operators has not been completely established yet. While a change in the number of ground states certainly leads to a QPT as we have seen in the discussion on one-dimensional STS models, it is not clear whether the change of geometric shapes of logical operators implies the existence of a QPT or not. 
%Also, it is not well understood how the emergence and loss of topological order leads to non-analytic changes of the ground state properties and a vanishing energy gap. 

A useful insight in analyzing quantum phases in two-dimensional STS models is to realize the resemblance between a mathematical notion of topology and a classification of quantum phases. In classifying geometric shapes of objects by using the notion of topology, two objects are considered to be the same when they can be transformed each other through continuous deformation, while they are considered to be different when they can be transformed each other through only non-analytic changes of geometric shapes. In a similar fashion, when quantum phases are classified, two frustration-free Hamiltonians are considered to belong to the same quantum phase when their ground states can be connected continuously, while two Hamiltonians are considered to be different when they can be connected only through parameterized Hamiltonians which undergo QPTs. 

The underlying ideas behind topology and a classification of quantum phases are fundamentally akin to each other, and the notion of topology has been adopted to classifications of quantum phases in various systems~\cite{Lifshitz60, Witten89, Wen90, Volovik_Text}. In STS models, geometric shapes of logical operators are central in characterizing the scale invariant physical properties such as topological order. In particular, it will be shown that the ground state properties drastically changes as a result of different geometric shapes of logical operators in Section~\ref{sec:2D1} and Section~\ref{sec:2D2}. Therefore, in classifying quantum phases arising in STS models, we introduce the notion of topology into geometric shapes of logical operators.  

In Section~\ref{sec:2D3}, we show that parameterized Hamiltonians connecting two STS models with different geometric shapes of logical operators are always separated by a QPT by proving that the energy gap must close at some point. Then, we show that different quantum phases in two-dimensional STS models are characterized by geometric shapes of logical operators, and thus, topological characteristics of logical operators can serve as order parameters in distinguishing quantum phases arising in two-dimensional STS models. 
%We discuss how the non-analytic changes of geometric shapes of topologically distinct logical operators give rise to QPTs and the emergence and loss of topological order. 

\subsection{Role of logical operators: concrete examples}\label{sec:2D1}

In this subsection, we analyze geometric shapes of logical operators in three specific examples of two-dimensional STS models. Then, we analyze topological order arising in these models through geometric shapes of logical operators.

In Section~\ref{sec:2D11}, we begin by analyzing a two-dimensional classical ferromagnet, which is a model without topological order, and show that it has an anti-commuting pair of zero-dimensional and two-dimensional logical operators. In Section~\ref{sec:2D12}, we discuss the Toric code, an STS model with topological order, and see that it has anti-commuting pairs of one-dimensional logical operators. We demonstrate that various characterizations of topological order can be explained through geometric shapes of logical operators by reviewing the notion of topological order through analyses on the Toric code. In Section~\ref{sec:2D13}, we give another example of an STS model with anti-commuting pairs of one-dimensional logical operators and compare topological order arising in the model with topological order of the Toric code. 

\subsubsection{Two-dimensional classical ferromagnet}\label{sec:2D11}

We begin by presenting geometric shapes of logical operators in a two-dimensional classical ferromagnet. A two-dimensional classical ferromagnet is described by the following Hamiltonian:
\begin{align}
H \ = \ - \sum_{i,j} Z^{(i,j)} Z^{(i,j+1)} - \sum_{i,j} Z^{(i,j)}Z^{(i+1,j)}
\end{align}
where $Z^{(i,j)}$ represents the Pauli operator $Z$ acts on a qubit labeled by a vector $(i,j)$. The total number of qubits is $N=L_{1}\times L_{2}$, and the system has periodic boundary conditions. The classical ferromagnet can be viewed as a stabilizer code with translation symmetries. The stabilizer group is 
$\mathcal{S}_{L_{1},L_{2}}  =  \langle \{ Z^{(i,j)} Z^{(i,j+1)}, Z^{(i,j)}Z^{(i+1,j)} \}_{\forall i,j} \rangle$.
This stabilizer code is an STS model with $v=1$, satisfying scale symmetries since $k =1$ for any $L_{1}$ and $L_{2}$ with $G(\mathcal{S}_{L_{1},L_{2}} )= L_{1}L_{2}-1$. Logical operators are
\begin{align}
\ell \ = \ Z^{(1,1)} \ = \ 
      \begin{bmatrix}
          I      , & I      , & \cdots ,& I     \\
       \vdots  & \vdots  & \vdots & \vdots \\
      I   , & I  , & \cdots ,& I       \\
       Z      , & I      , & \cdots ,& I     
      \end{bmatrix}, \qquad 
r \ = \ \prod_{i,j}X^{(i,j)}  \ = \ 
      \begin{bmatrix}
       X   , & X      , & \cdots ,& X     \\
        X   , & X      , & \cdots ,& X     \\
       \vdots  & \vdots  & \vdots & \vdots \\
       X   , & X  , & \cdots ,& X       
      \end{bmatrix}.
\end{align}
One can see that both of logical operators satisfy the translation equivalence of logical operators. According to geometric shapes of logical operators, we may call $\ell$ a \emph{zero-dimensional logical operator} and $r$ a \emph{two-dimensional logical operator}.

As in a one-dimensional classical ferromagnet, a two-dimensional classical ferromagnet can support a GHZ state at zero temperature: $|\mbox{GHZ}\rangle = \frac{1}{\sqrt{2}}(|0\cdots 0\rangle + |1 \cdots 1\rangle)$. 
A GHZ state is stabilized by a two-dimensional logical operator $r$ since $r|\mbox{GHZ}\rangle = |\mbox{GHZ}\rangle$. A GHZ state is a globally entangled state since it is a superposition of two globally separated ground states $|0 \cdots 0 \rangle$ and $|1 \cdots 1 \rangle$. Properties of entanglement may be characterized in a way similar to a one-dimensional classical ferromagnet by using a mutual information $E(A:B)$ as we have discussed in Section~\ref{sec:1D1}. Note that the model does not have topological order.

\subsubsection{The Toric code as an STS model and topological order}\label{sec:2D12}

Next, let us discuss the Toric code~\cite{Kitaev97, Kitaev03}, which is a stabilizer code with topological order. We first show that the Toric code can be reduced to an STS model by introducing composite particles. Then, we describe geometric shapes of logical operators and show that one-dimensional logical operators form anti-commuting pairs. Finally, we demonstrate that topological order arising in the Toric code can be characterized by geometric shapes of logical operators. In particular, we discuss topological entanglement entropy~\cite{Hamma05}, anyonic excitations~\cite{Kitaev97} and the absence of local order parameters~\cite{Bravyi06, Bravyi10b} through logical operators. 

In the Toric code, qubits are defined on edges of a square lattice. The Hamiltonian is (Fig.~\ref{fig_2Dex2})
\begin{align}
H \ = \ - \sum_{s} \mathcal{A}_{s} - \sum_{p} \mathcal{B}_{p}, 
\qquad
\mathcal{A}_{s} \ = \ \prod_{\textbf{j} \in s} X^{(\textbf{j})}, \qquad \mathcal{B}_{p} \ = \ \prod_{\textbf{j} \in p} Z^{(\textbf{j})}.
\end{align}
Here, $s$ represent ``stars'' and $p$ represent ``plaquettes'' (Fig.~\ref{fig_2Dex2}), and $\textbf{j}$ represents the position of qubits. One may see that these interaction terms commute with each other, and the Hamiltonian is a stabilizer Hamiltonian. The Toric code has $k=2$, as seen from the following equations:
\begin{align}
\prod_{s} \mathcal{A}_{s} \ = \ I, \qquad \prod_{p} \mathcal{B}_{p} \ = \ I.
\end{align}
With some observation, one may see that there does not exist any other set of stabilizers whose product becomes an identity $I$. 

\begin{figure}[htb!]
\centering
\includegraphics[width=0.70\linewidth]{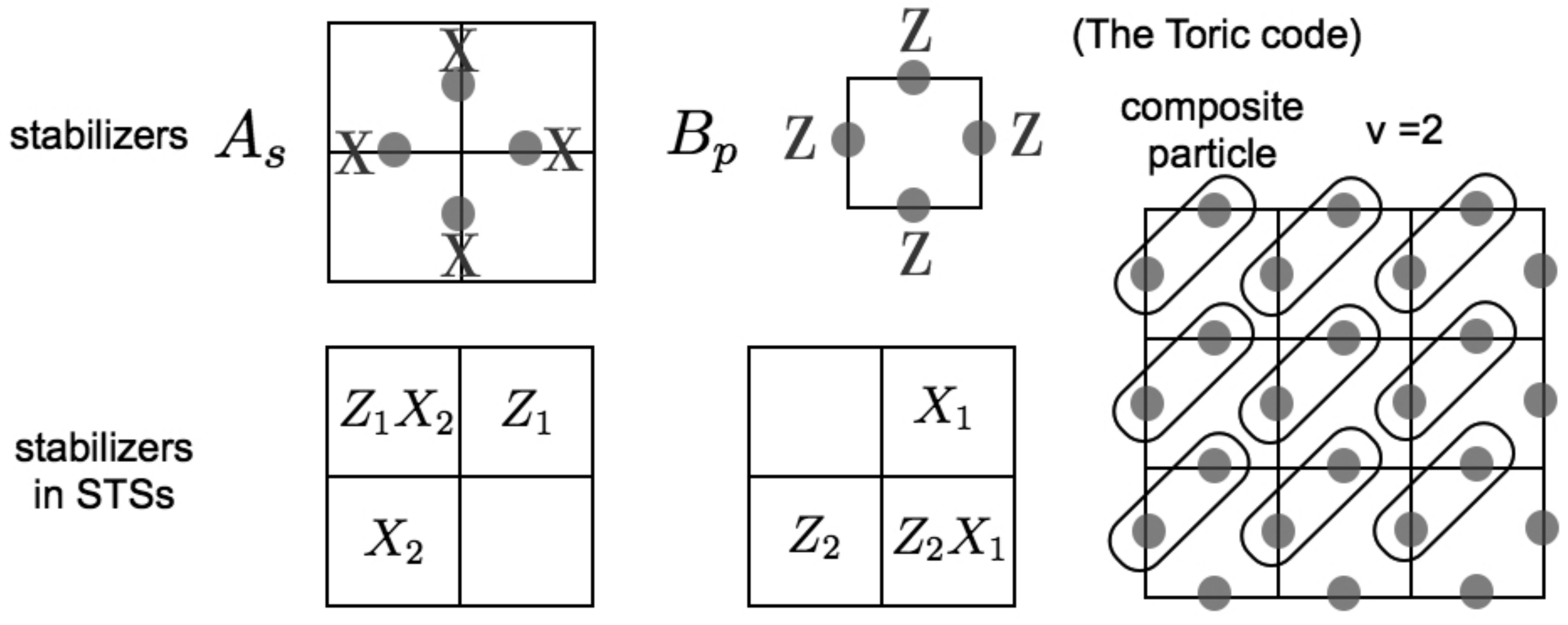}
\caption{Reduction of the Toric code to an STS model.
} 
\label{fig_2Dex2}
\end{figure}

\textbf{Reduction to an STS model:}
The Toric code can be reduced to an STS model by grouping two qubits into a composite particle ($v =2$) as shown in Fig.~\ref{fig_2Dex2}. By applying some appropriate unitary transformations on qubits inside each composite particle, interaction terms $\mathcal{A}_{s}$ and $\mathcal{B}_{p}$ can be represented in the following way (Fig.~\ref{fig_2Dex2}):
\begin{align}
&\mathcal{A}^{(i,j)} \ = \
Z_{1}^{(i,j)}X_{2}^{(i,j)} Z_{1}^{(i+1,j)}X_{2}^{(i,j+1)} \ = \
\left[
\begin{array}{cc}
X_{2}      ,&  I \\
Z_{1}X_{2} ,& Z_{1} 
\end{array}\right]^{(i,j)}\\
&\mathcal{B}^{(i,j)} \ = \ 
X_{1}^{(i+1,j)}Z_{2}^{(i,j+1)}Z_{2}^{(i+1,j+1)}X_{1}^{(i+1,j+1)} \ = \
\left[
\begin{array}{cc}
Z_{2} ,& X_{1}Z_{2} \\
I     ,& X_{1} 
\end{array}\right]^{(i,j)}.
\end{align}
Here, $Z_{1}^{(i,j)}$ represents a Pauli operator $Z_{1}$ acting on a composite particle labeled by $(i,j)$. $Z_{1}$, $Z_{2}$, $X_{1}$ and $X_{2}$ are single Pauli operators acting on a single composite particle. $2 \times 2$ matrices represent stabilizers graphically. One can see that this model satisfies scale symmetries by noticing 
$\prod_{i,j} \mathcal{A}^{(i,j)} = I$ and $\prod_{i,j} \mathcal{B}^{(i,j)} = I$.

Now, the two pairs of anti-commuting logical operators in the Toric code can be described in the following way:
\begin{align}
&\ell_{1}\ =\ \prod_{j} Z_{1}^{(1,j)} 
         \ = \ 
      \begin{bmatrix}
       Z_{1}      , & I      , & \cdots ,& I     \\
       \vdots  & \vdots  & \vdots & \vdots \\
       Z_{1}      , & I      , & \cdots ,& I     \\
      Z_{1}   , & I  , & \cdots ,& I       
      \end{bmatrix}, \qquad \quad \ \
\ell_{2}\ =\ \prod_{j} Z_{2}^{(1,j)} 
         \ = \ 
      \begin{bmatrix}
       Z_{2}      , & I      , & \cdots ,& I     \\
       \vdots  & \vdots  & \vdots & \vdots \\
       Z_{2}      , & I      , & \cdots ,& I     \\
      Z_{2}   , & I  , & \cdots ,& I       
      \end{bmatrix} \notag \\
&r_{1}\ =\ \prod_{i} X_{1}^{(i,1)} 
         \ = \ 
      \begin{bmatrix}
       I      , & I      , & \cdots ,& I     \\
       \vdots  & \vdots  & \vdots & \vdots \\
       I      , & I      , & \cdots ,& I     \\
      X_{1}   , & X_{1}  , & \cdots ,& X_{1}       
      \end{bmatrix}, \qquad
r_{2}\ =\ \prod_{i} X_{2}^{(i,1)} 
         \ = \ 
      \begin{bmatrix}
       I      , & I      , & \cdots ,& I     \\
       \vdots  & \vdots  & \vdots & \vdots \\
       I      , & I      , & \cdots ,& I     \\
       X_{2}   , & X_{2}  , & \cdots ,& X_{2}       
      \end{bmatrix}. \notag
\end{align}
Here, $n_{1} \times n_{2}$ matrices represent logical operators graphically where $n_{1}$ and $n_{2}$ are the numbers of composite particles in the $\hat{1}$ and $\hat{2}$ directions. According to geometric shapes of these logical operators, we may call them \emph{one-dimensional logical operators}.  

One can easily see that the translation equivalence of logical operators holds from the following equations:
\begin{align}
\prod_{j}\mathcal{A}^{(1,j)} \ = \ \ell_{1}T_{1}(\ell), \quad \prod_{i}\mathcal{A}^{(i,1)} \ = \ r_{2}T_{2}(r_{2}), \quad \prod_{j}\mathcal{B}^{(1,j)} \ = \ \ell_{2}T_{1}(\ell_{2}), \quad \prod_{i}\mathcal{B}^{(i,1)} \ = \ r_{1}T_{2}(r_{1}). \notag
\end{align}
One can also see that only the pairs of mutually orthogonal logical operators, whose intersections are zero-dimensional, may anti-commute with each other. Thus, pairs of anti-commuting logical operators always have zero-dimensional intersections. 

\textbf{Topological order and geometric shapes of logical operators:}
Having reduced the Toric code to an STS model, let us discuss topological order arising in the Toric code through one-dimensional logical operators. Currently, characterizations of topological order are ambiguous since there are several methods to identify the existence of topological order in ground states of correlated spin systems. In order to distinguish quantum phases with different kinds of topological order appropriately through geometric shapes of logical operators, it is a necessary first step to demonstrate that all the existing characterizations of topological order may be explained through logical operators. Here, we review these characterizations by analyzing topological entanglement entropy, anyonic excitations and the absence of local order parameters in the Toric code through geometric shapes of logical operators.

\textbf{(1) Topological entanglement entropy:}
Ground states of the Toric code have some global entanglement. This global entanglement can be quantified by an entanglement measure, called topological entanglement entropy~\cite{Kitaev06, Levin06}. Here, we follow the definition presented in~\cite{Levin06}. Consider four regions $A$, $B$, $C$ and $D$ described in Fig.~\ref{fig_topo_ent}. Then, topological entanglement entropy is defined as follows:
\begin{align}
S_{topo} \ = \ E_{B} + E_{C} - E_{A} - E_{D}
\end{align}
at the limit where regions $A$, $B$, $C$ and $D$ become infinitely large, and at the thermodynamic limit. Here, $E_{R}$ represents the entanglement entropy defined for a region of qubits $R$. It is known that the topological entanglement entropy $S_{topo}$ becomes non-zero when a ground state has topological order, while $S_{topo}$ becomes zero when a ground state does not have topological order. For simplicity of discussion, we set the width of $A$, $B$, $C$ and $D$ to be unity (see Fig~\ref{fig_topo_ent}). 

\begin{figure}[htb!]
\centering
\includegraphics[width=0.35\linewidth]{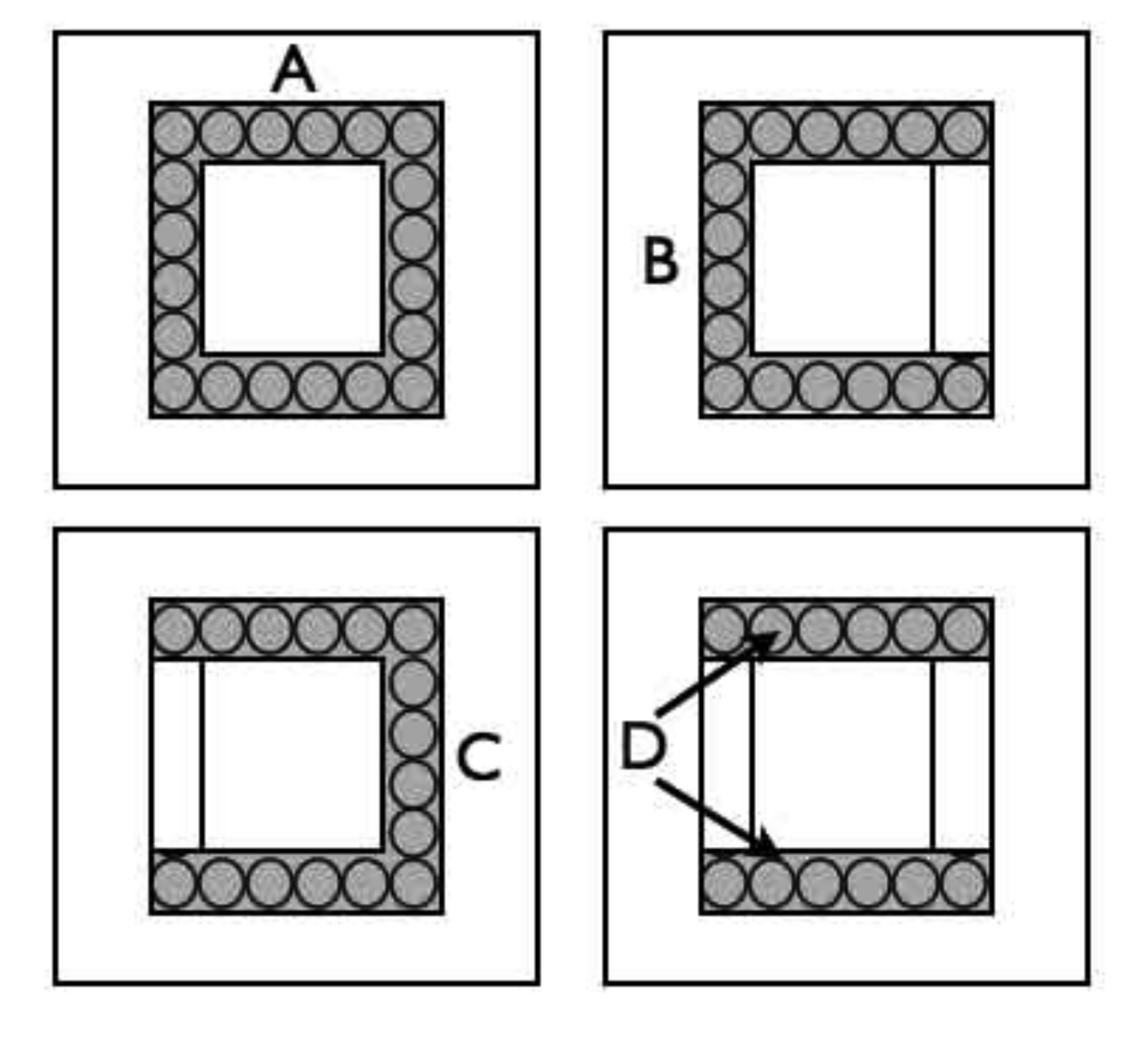}
\caption{Regions used in computing topological entanglement entropy. Each circle represents composite particles. The width of these regions is taken to be unity for simplicity of discussion.} 
\label{fig_topo_ent}
\end{figure}

Usually, the computation of topological entanglement entropy is challenging since we need to estimate the quantity at the thermodynamic limit. However, this quantity can be easily computed for ground states of stabilizer Hamiltonians by considering the numbers of generators for restrictions of the stabilizer groups into $A$, $B$, $C$ and $D$. Let us recall that the entanglement entropy can be written as follows: $E_{R} = V_{R} - G(\mathcal{S}_{R})$
when there is only a single ground state in the stabilizer Hamiltonian. Here, $V_{R}$ is the total number of qubits inside $R$ and $G(\mathcal{S}_{R})$ is the number of independent generators for the restriction of the stabilizer group, denoted as $\mathcal{S}_{R}$. This formula also holds for the cases where the Hamiltonian has degenerate ground states if there is no logical operator defined inside $R$. In the Toric code, there is no logical operator defined inside $A$, $B$, $C$, or $D$ as we shall show later in the discussion on the absence of local order parameters. Thus, we obtain a concise formula for the topological entanglement entropy:
\begin{align}
S_{topo} \ = \ G(\mathcal{S}_{A}) + G(\mathcal{S}_{D}) - G(\mathcal{S}_{B}) - G(\mathcal{S}_{C})
\end{align}
since $V_{A} + V_{D} = V_{B} + V_{C}$. 

Now, one can compute the topological entanglement entropy just by counting the number of stabilizers defined inside each region. However, most generators appear multiple times and do not contribute to the summation. For example, if there is a stabilizer $S_{j}$ defined inside $D$, $S_{j}$ can be also defined inside $A$, $B$ and $C$. Then, such $S_{j}$ does not contribute to the topological entanglement entropy. If there is a stabilizer $S_{j}$ is defined inside $B$ (but not inside $D$), $S_{j}$ can be also defined inside $A$. Such $S_{j}$ does not contribute to the topological entanglement entropy. Then, through some observations, one may notice that only the stabilizers defined inside $A$, but not defined either inside $B$, $C$ or $D$, contribute to the topological entanglement entropy. Thus, $S_{topo}$ is equal to the number of stabilizers which can be defined only inside $A$

Such stabilizers can be constructed from stabilizers $\mathcal{A}^{(i,j)}$ and $\mathcal{B}^{(i,j)}$ in the following way:
\begin{align}
\mathcal{A}(x,y) \ &= \ \prod_{i=1}^{x}\prod_{j=1}^{y} \mathcal{A}^{(i,j)}  \ = \ 
\begin{bmatrix}
 X_{2} ,& X_{2} ,& X_{2} ,& \cdots ,& X_{2} ,& X_{2} ,&    \\
   Z_{1}   &       &       &        &       &       & Z_{1}     \\
   Z_{1}   &       &       &        &       &       & Z_{1}     \\
  \vdots   &       &       &        &       &       & \vdots     \\       
   Z_{1}   &       &      &        &       &       & Z_{1}     \\       
   Z_{1}   &       &       &        &       &       & Z_{1}     \\       
     Z_{1}X_{2} ,& X_{2} ,& X_{2} ,& \cdots ,& X_{2} ,& X_{2} ,& Z_{1}    
\end{bmatrix} \label{eq:A_loop} \\
\mathcal{B}(x,y) \ &= \ \prod_{i=1}^{x}\prod_{j=1}^{y} \mathcal{B}^{(i,j)} \ = \
\begin{bmatrix}
   Z_{2}  ,& X_{1} ,& X_{1} ,& \cdots ,& X_{1} ,& X_{1} ,&     Z_{2}X_{1} \\
   Z_{2}   &       &       &        &       &       & Z_{2}     \\
   Z_{2}   &       &       &        &       &       & Z_{2}     \\
  \vdots   &       &       &        &       &       & \vdots     \\       
   Z_{2}   &       &      &        &       &       & Z_{2}     \\       
   Z_{2}   &       &       &        &       &       & Z_{2}     \\       
              & X_{1} ,& X_{1} ,& \cdots ,& X_{1} ,& X_{1} ,& X_{1}     
\end{bmatrix}.
\end{align}
Here, $x + 1 \times y + 1$ matrices represent stabilizers graphically. We have left the entries for identity operators blank for brevity of notation. Then, we notice that $S_{topo} = 2$ for the Toric code as proven in~\cite{Hamma05}. Though we have considered cases where $A$, $B$, $C$ and $D$ have the width of unity, it is immediate to generalize the above argument for the cases where the width is larger than one. Thus, the Toric code has non-zero topological entanglement entropy, which implies the existence of topological order. The connection between geometric shapes of logical operators and non-zero topological entanglement entropy will be clarified in the discussion of anyonic excitations, presented soon in the below. 

An important observation is that the topological entanglement entropy does not depend on the system size, and represents non-local correlations which are scale invariant. Thus, in classifying quantum phases, systems with different topological entanglement entropies must be distinguished appropriately.

\textbf{(2) Anyonic excitations:}
Topologically ordered systems are known to have a finite energy gap between degenerate ground states and excited states, which does not vanish even at the thermodynamic limit. Due to this finite energy gap, Hamiltonians may support quasiparticle excitations, which are called anyonic excitations. These anyonic excitations are localized (involving only a finite number of qubits inside some localized regions), and can be considered as particles, called anyons~\cite{Wilczek82, Wilczek82b}. When a system has a topological order, these anyons obey unusual quantum statistics. Here, we characterize properties of anyons under braiding in terms of geometric shapes of logical operators.

Anyonic excitations arising in the Toric code can be completely characterized by one-dimensional logical operators. 
A pair of anyons can be created by applying $Z_{1}^{(1,1)}$ to a ground state of the Toric code $|\psi\rangle$. Since $Z_{1}^{(1,1)}$ anti-commutes with $\mathcal{B}^{(n_{1},n_{2})}$ and $\mathcal{B}^{(n_{1},1)}$, $Z_{1}^{(1,1)}$ creates a pair of excitations by flipping eigenvalues of $\mathcal{B}^{(n_{1},n_{2})}$ and $\mathcal{B}^{(n_{1},1)}$ from $1$ to $-1$ (Fig.~\ref{fig_anyon}(a)). These excitations can be viewed as quasiparticles, in other words, anyons. Anyons can propagate around the torus in the $\hat{2}$ direction, and their propagations can be characterized by a logical operator $\ell_{1}$ and its segment (Fig.~\ref{fig_anyon}(a)):
\begin{align}
\ell_{1}\ = \  
      \begin{bmatrix}
       Z_{1}      , & I      , & \cdots ,& I     \\
       \vdots  & \vdots  & \vdots & \vdots \\
       Z_{1}      , & I      , & \cdots ,& I     \\
      Z_{1}   , & I  , & \cdots ,& I       
\end{bmatrix}, \qquad \ell_{1}(y_{1},y_{2}) \ \equiv \ \prod_{j=y_{1}}^{y_{2}}Z_{1}^{(1,j)}
\end{align}
where $1 \leq y_{1} \leq y_{2} \leq n_{2}$. Then, one notices that $\ell_{1}(y_{1},y_{2})$ creates excitations by flipping eigenvalues of $\mathcal{B}^{(n_{1},y_{1}-1)}$ and $\mathcal{B}^{(n_{1},y_{2})}$, as shown in Fig.~\ref{fig_anyon}(a). A pair of anyons can be also annihilated by applying $\ell_{1}(1,n_{2}) = \ell_{1}$. Thus, $\ell_{1}$ can characterize the creation of anyons and their propagations around the torus in the $\hat{2}$ direction, and their annihilation. 

Here, we call these anyons associated with stabilizers $\mathcal{B}^{(i,j)}$ ``b-type'' anyons. Though a logical operators $\ell_{1}$ can characterize propagations of b-type anyons only in the $\hat{2}$ direction, their propagations in the $\hat{1}$ direction can be characterized by a logical operator $r_{2}$. A pair of b-type anyons can be also created by applying $X_{2}^{(1,1)}$ to a ground state of the Toric code $|\psi\rangle$ since $X_{2}^{(1,1)}$ anti-commutes with $\mathcal{B}^{(n_{1},n_{2})}$ and $\mathcal{B}^{(1,n_{2})}$ (Fig.~\ref{fig_anyon}(a)). Their propagations in the $\hat{1}$ direction can be characterized by a logical operator $r_{2}$ and its segment (Fig.~\ref{fig_anyon}(a)):
\begin{align}
r_{2}\ = \  
      \begin{bmatrix}
      I      , & I      , & \cdots ,& I     \\
       \vdots  & \vdots  & \vdots & \vdots \\
      I      , & I      , & \cdots ,& I     \\
      X_{2}   , & X_{2}  , & \cdots ,& X_{2}       
\end{bmatrix}, \qquad r_{2}(x_{1},x_{2}) \ \equiv \ \prod_{i=x_{1}}^{x_{2}}X_{2}^{(i,1)}
\end{align}
 
A similar discussion holds for other one-dimensional logical operators $\ell_{2}$ and $r_{1}$ (Fig.~\ref{fig_anyon}(c)(d)). These logical operators characterize anyons associated with stabilizers $\mathcal{A}^{(i,j)}$. We call these anyons ``a-type'' anyons. Thus, we notice that there are two different kinds of anyons which can travel around the torus because of one-dimensional logical operators.

\begin{figure}[htb!]
\centering
\includegraphics[width=0.60\linewidth]{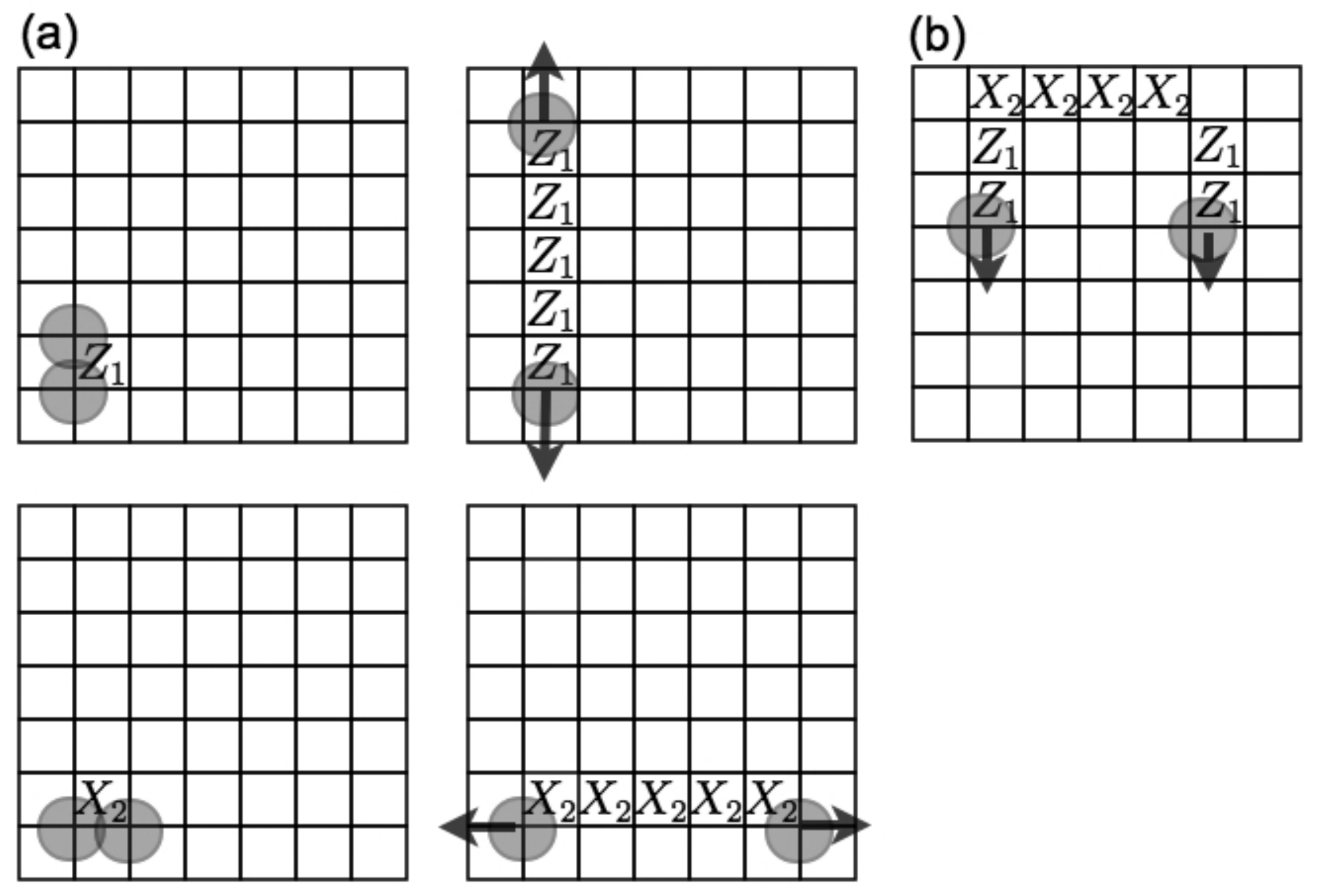}
\caption{(a) Propagations of anyons characterized by a segment of a one-dimensional logical operator $\ell_{1}$ and $r_{2}$. Shaded circles represent b-type anyons which are associated with stabilizers $\mathcal{B}^{(i,j)}$. (b) A segment of $\mathcal{A}(x,y)$ and propagations of b-type anyons.}
\label{fig_anyon}
\end{figure}

Though we have described propagations of anyons only in the vertical or horizontal direction in the torus, anyons can propagate around the torus freely and their propagations are not restricted to the vertical or horizontal direction. In order to discuss propagations which do not wind around the torus, stabilizers $\mathcal{A}(x,y)$ and $\mathcal{B}(x,y)$ become essential. Let us recall that $\mathcal{A}(x,y)$ and $\mathcal{B}(x,y)$ have geometric shapes like a loop. We consider the following segment of $\mathcal{A}(x,y)$:
\begin{align}
\begin{bmatrix}
 X_{2} ,& X_{2} ,& X_{2} ,& \cdots ,& X_{2} ,& X_{2} ,&    \\
   Z_{1}   &       &       &        &       &       & Z_{1}     \\
   Z_{1}   &       &       &        &       &       & Z_{1}      
\end{bmatrix}.
\end{align}
Then, we notice that this segment also creates a pair of b-type anyons at the endpoints of the segment (Fig.~\ref{fig_anyon}(b)). With some observations, we also notice that any segment of $\mathcal{A}(x,y)$ can create a pair of b-type anyons at the endpoints. Thus, $\mathcal{A}(x,y)$ can characterize the propagation of b-type anyons around the loop. A similar discussion also holds for $\mathcal{B}(x,y)$ where $\mathcal{B}(x,y)$ characterizes the propagation of a-type anyons around the loop.  

By combining $\mathcal{A}(x,y)$, $\mathcal{B}(x,y)$ and one-dimensional logical operators, one can characterize arbitrary propagations of anyons in the Torus. Here, let us analyze the connection between loop-like stabilizers $\mathcal{A}(x,y)$, $\mathcal{B}(x,y)$ and one-dimensional logical operators $\ell_{1}$, $\ell_{2}$, $r_{1}$ and $r_{2}$. Here, let us recall the form of $\mathcal{A}(x,y)$ described in Eq.~(\ref{eq:A_loop}).
Then, we notice that $\mathcal{A}(x,y)$ includes a segment of $\ell_{1}$ inside a vertex in the $\hat{2}$ direction and a segment of $r_{2}$ inside a vertex in the $\hat{1}$ direction. Thus, one can smoothly connect propagations of b-type anyons by attaching $\mathcal{A}(x,y)$ to $\ell_{1}$ and $r_{2}$. A similar observation holds for a-type anyons too.

Finally, let us show that braiding of a-type and b-type anyons create a non-trivial phase. One can braid an ``a-type'' anyon around a ``b-type'' anyons by using propagations characterized by $\mathcal{A}(x,y)$ and $\mathcal{B}(x,y)$ (Fig.~\ref{fig_anyon2}). Through the braiding, one obtains an extra phase factor ``$-1$'' as a result of anti-commutations at the intersection of propagation paths characterized by $\mathcal{A}(x,y)$ and $\mathcal{B}(x,y)$. This extra phase factor is a direct consequence of anti-commutations between one-dimensional logical operators. Thus, one-dimensional logical operators and loop-like stabilizers $\mathcal{A}(x,y)$ and $\mathcal{B}(x,y)$ characterize the existence of anyonic excitations with non-trivial braiding property. 

\begin{figure}[htb!]
\centering
\includegraphics[width=0.25\linewidth]{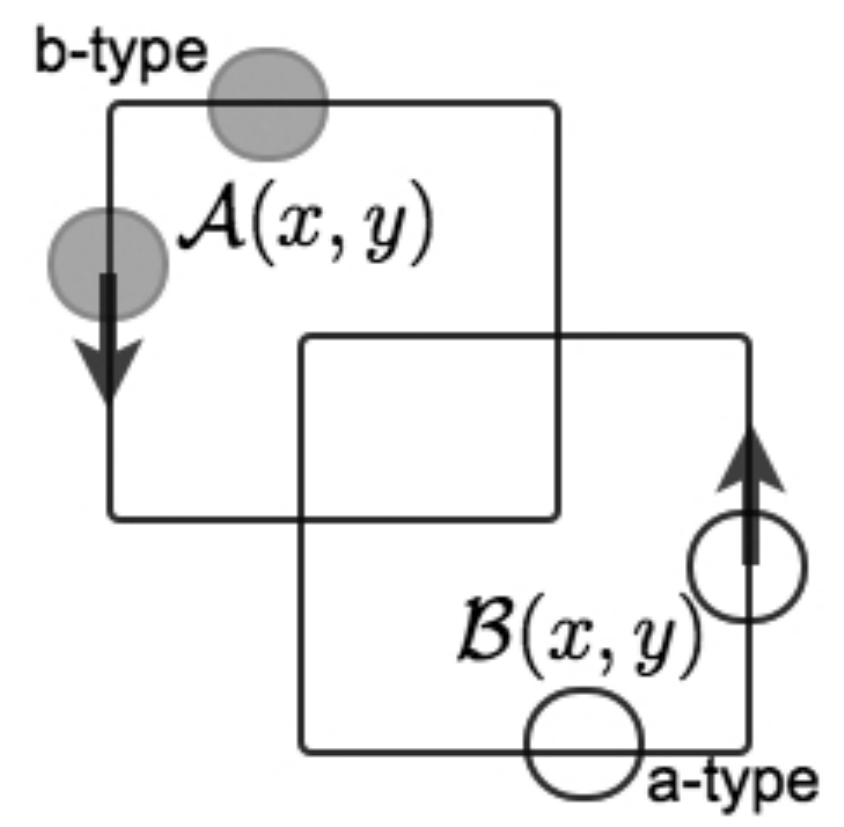}
\caption{Braiding between $a$-type and $b$-type anyons. Their propagations are characterized by $\mathcal{A}(x,y)$ and $\mathcal{B}(x,y)$.
} 
\label{fig_anyon2}
\end{figure}

Having analyzed anyons through one-dimensional logical operators, one can see the connection between anyonic excitations and topological entanglement entropy clearly. In the computation of topological entanglement entropy, we have seen that only the loop-like stabilizers $\mathcal{A}(x,y)$ and $\mathcal{B}(x,y)$ contribute to $S_{topo}$. This implies that the \emph{topological entanglement entropy counts the number of different types of anyons which may propagate around the loop $A$}. This observation is consistent with the fact that topological entanglement entropy is related to the total quantum dimension of anyons in the following way:
\begin{align}
S_{topo} \ = \ 2 \log D
\end{align}
where $D$ is the total quantum dimension of anyonic excitations. For the Toric code, we have $D=2$ since there are two different anyons.\footnote{One may say that anyonic excitations are local objects and topological order could be characterized locally. However, topological order refers to non-local correlations arising in ground states of topologically ordered systems at zero temperature, and anyonic excitations are signatures of the non-local ground state properties. While anyonic excitations are localized objects, their braiding properties still retain scale invariance since the non-trivial phase after the braiding does not depend on specific paths where anyons propagated.}

\textbf{(3) Local indistinguishability:}
An important feature of topologically ordered ground states is that there is no local order parameter which can distinguish topological phases. We have already seen that topological entanglement entropy is a global quantity which is computed at the limit where regions $A$, $B$, $C$ and $D$ become infinitely large and at the thermodynamic limit (Fig.~\ref{fig_topo_ent}). In general, globally defined quantities or objects are necessary in characterizing global entanglement arising in topologically ordered ground states. 

The characterization of topological order we discuss here is closely related to the fact that there is no local order parameter to characterize topological phases	. In a topologically ordered system, it is known that, for mutually orthogonal ground states $|\psi_{0}\rangle$ and $|\psi_{1}\rangle$ in a topologically ordered system, the following equations hold:
\begin{align}
\langle \psi_{0} | \hat{O} | \psi_{0} \rangle \ = \ \langle \psi_{1} | \hat{O} | \psi_{1} \rangle, \qquad \langle \psi_{0} | \hat{O} | \psi_{1} \rangle \ = \ 0 
\end{align}
for any locally defined physical observable $\hat{O}$ at the thermodynamic limit~\cite{Bravyi06, Bravyi10b}. Therefore, one cannot distinguish two different topologically ordered ground states through locally defined physical observables. We may call these conditions \emph{local indistinguishability conditions}. 

% Should I explain why this leads to the absence of local order parameter?

While this characterization of topological order is rather abstract, it is mathematically convenient in formulating topological order and is used commonly in rigorous treatments of topologically ordered systems. Here, we confirm that ground states of the Toric code satisfy the above indistinguishability conditions. A useful observation is that there is no local physical observable $\hat{O}$ which can change a ground state $|\psi_{0} \rangle$ to $|\psi_{1} \rangle$. In the language of quantum codes, this implies that there is no logical operator $\ell$ which can be defined locally. Then, we naturally expect that it is sufficient to show that there is no locally defined logical operators. 

The Toric code has two pairs of one-dimensional logical operators which are globally defined. Then, one might think that there is no local logical operator. However, these four logical operators are some particular representations, and  there exist many equivalent representations of the same logical operators since applications of stabilizers keep logical operators equivalent. Thus, one needs to show that one-dimensional logical operators in the Toric code do not have any equivalent representations which are defined locally. 

This can be easily proven by a theoretical tool developed for studies of a bi-partite entanglement in stabilizer codes (Theorem~\ref{theorem_partition} introduced in Section~\ref{sec:review2}). In particular, for a given bi-partition of a system of qubits into two complementary subsets $A$ and $B = \bar{A}$, we have
\begin{align}
g_{A} + g_{B} \ = \ 2k
\end{align}
where $g_{A}$ and $g_{B}$ are the numbers of independent logical operators which can be defined inside $A$ and $B$ respectively. Here, we take $A$ as the largest zero-dimensional region which consists of $n_{1}-1 \times n_{2}-1$ composite particles (Fig.~\ref{fig_distinguishability}). Since the entire system consists of $n_{1} \times n_{2}$ composite particles, $B$ is a union of one-dimensional regions which extend in the $\hat{1}$ and $\hat{2}$ directions. Now, since the Toric code has $k = 2$, we have $g_{A} + g_{B} = 4$. Then, since all the one-dimensional logical operators can be defined inside $B$, we have $g_{B}=4$, and $g_{A}= 0$. Thus, there is no logical operator defined inside a zero-dimensional region $A$. 

\begin{figure}[htb!]
\centering
\includegraphics[width=0.25\linewidth]{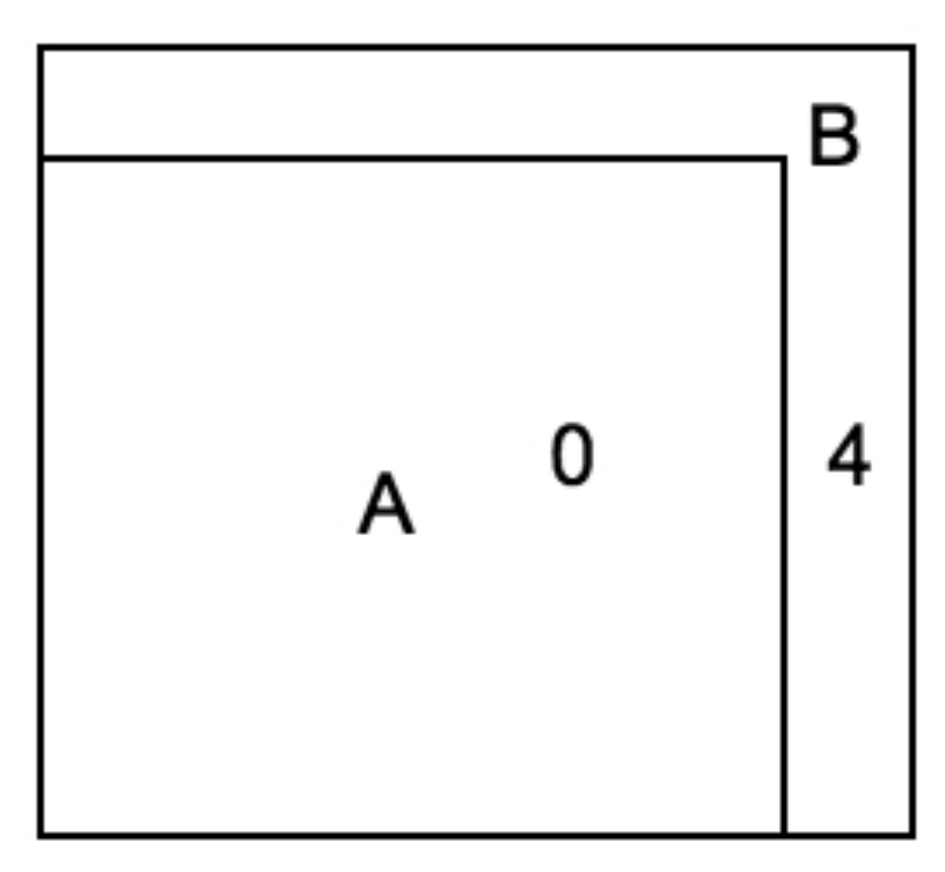}
\caption{A bi-partition into a zero-dimensional region $A$ and its complement $B$. Numbers shown indicate $g_{A}$ and $g_{B}$ which are the numbers of independent logical operators defined inside $A$ and $B$ respectively.
} 
\label{fig_distinguishability}
\end{figure}

Now, consider a physical observable $\hat{O}$ defined inside some localized region $R$. Since there is no logical operators inside $R$, the density matrices of all the ground states of the Toric code are the same: $\hat{\rho}_{R} = \frac{1}{2^{G(\mathcal{S}_{R})}} \prod_{S_{j}\in \mathcal{S}_{R}}(I+S_{j})$
where $S_{j}$ are independent generators for $\mathcal{S}_{R}$. Then, one can compute an expectation value of a local physical observable $\hat{O}$ defined inside $R$ in the following way:
\begin{align}
\langle \psi_{0} | \hat{O} | \psi_{0} \rangle \ = \ \langle \psi_{1} | \hat{O} | \psi_{1} \rangle \ = \ \mbox{Tr}\left[ \hat{O}\hat{\rho}_{R} \right].
\end{align}
One can also easily show that $\langle \psi_{0} | \hat{O} | \psi_{1} \rangle = 0$ by using the fact that there is no logical operator defined inside $R$. Thus, the Toric code satisfy the local indistinguishability conditions. 

Here, we emphasize that one-dimensional logical operators are indeed ``one-dimensional'' since they cannot be defined inside zero-dimensional regions. Therefore, the dimensions we have assigned to logical operators have topologically invariant meanings which are commonly shared by sets of all the equivalent logical operators~\cite{Beni10}. 

\textbf{A GHZ state, revisited:}
Here, we make some comments on the difference between a GHZ state and the Toric code. While both a GHZ state and a ground state of the Toric code have global entanglement, a GHZ state is not topologically ordered. This may be seen by the fact that a GHZ state does not satisfy the conditions \textbf{(1)} and \textbf{(3)}. While a classical ferromagnet can support localized excitations, these excitations cannot propagate around the lattice since the excitation energy increases as excitations move. As for the indistinguishability condition, a zero-dimensional logical operator $Z^{(1,1)}$ connects two ground states $\frac{1}{\sqrt{2}}(|0 \cdots 0 \rangle +|1\cdots1 \rangle )$ and $\frac{1}{\sqrt{2}}(|0\cdots 0 \rangle -|1\cdots1\rangle )$. Though a classical ferromagnet does not satisfy conditions \textbf{(1)} and \textbf{(3)}, a GHZ state has a non-zero topological entanglement entropy since $S_{topo}=1$. This is because topological entanglement entropy characterizes properties of a single ground state while characterizations of topological order need considerations on all the ground states.

\subsubsection{Another model with topological order}\label{sec:2D13}

In the Toric code, the existence of one-dimensional logical operators is responsible for topological order. Then, we expect that one-dimensional logical operators are central in characterizing topological phases. In order to confirm this expectation, let us give another STS model which has one-dimensional logical operators and discuss its physical properties.

We consider a system of $L_{1} \times L_{2}$ qubits governed by the following Hamiltonian:
\begin{align}
H \ &= \ - \sum_{i,j} S^{(i,j)} 
\end{align}
where
\begin{align}
S^{(i,j)} \  &= \   X^{(i-1,j)}X^{(i,j)}X^{(i+1,j)}Z^{(i,j-1)}Z^{(i,j+1)} \notag \\
     &= \   
\begin{bmatrix}
      & Z &  \\
X      & X & X \\
       & Z & 
\end{bmatrix}^{(i,j)}.
\end{align}
Here, blank entries represent identity operators $I$. One can see that interaction terms $S^{(i,j)}$ commute with each other. In a strict sense, this system is not an STS model since the number of logical qubits change according to $L_{1}$ and $L_{2}$. In particular, with some calculations, we have
\begin{align}
&k\ =\ 1 \qquad \mbox{for} \qquad L_{1} \ \not= \ 0 \ (\mbox{mod}\ 3), \quad L_{2} \ \not= \ 0 \ (\mbox{mod}\ 2)\\
&k\ =\ 2 \qquad \mbox{for} \qquad L_{1} \ = \ 0 \ (\mbox{mod}\ 3), \quad L_{2} \ \not= \ 0 \ (\mbox{mod}\ 2)\\
&k\ =\ 2 \qquad \mbox{for} \qquad L_{1} \ \not= \ 0 \ (\mbox{mod}\ 3), \quad L_{2} \ = \ 0 \ (\mbox{mod}\ 2)\\
&k\ =\ 4 \qquad \mbox{for} \qquad L_{1} \ = \ 0 \ (\mbox{mod}\ 3), \quad L_{2} \ = \ 0 \ (\mbox{mod}\ 2).
\end{align}
However, the model can be treated as an STS model when we view $3 \times 2$ qubits as a composite particle. Let us define $n_{1} = L_{1}/3$ and $n_{2} = L_{2}/2$ by choosing $L_{1}$ to be a multiple of $3$ and $L_{2}$ to be a multiple of $2$. Then, the system possesses scale symmetries since $k = 4$ for any $n_{1}$ and $n_{2}$. 

Now, let us represent stabilizers in this model in terms of composite particles. By applying appropriate unitary transformations on qubits inside each composite particle, we can represent stabilizers in the following way:
\begin{equation}
\begin{split}
&S_{1}^{(i,j)} \ = \
\begin{bmatrix}
Z_{1}        ,&  I      \\
Z_{1} X_{2}  ,& X_{2}  
\end{bmatrix}^{(i,j)}, \
S_{2}^{(i,j)} \ = \
\begin{bmatrix}
X_{1}        ,& X_{1}Z_{2}    \\
   I          ,& Z_{2}  
\end{bmatrix}^{(i,j)}, \
S_{3}^{(i,j)} \ = \
\begin{bmatrix}
Z_{3}        ,& I    \\
Z_{3}X_{4}   ,& X_{4}  
\end{bmatrix}^{(i,j)} \\
&S_{4}^{(i,j)} \ = \
\begin{bmatrix}
X_{3}        ,& X_{3}Z_{4}  \notag  \\ 
 I            ,& Z_{4}  
\end{bmatrix}^{(i,j)}, \ \
S_{5}^{(i,j)} \ = \
\begin{bmatrix}
X_{5}        ,& I    \\
X_{6}        ,& I   
\end{bmatrix}^{(i,j)}, \ 
S_{6}^{(i,j)} \ = \
\begin{bmatrix}
Z_{5}       ,& I    \\
Z_{6}        ,& I
\end{bmatrix}^{(i,j)}
\end{split}
\end{equation}
Then, logical operators can be easily found by looking at stabilizers $S_{1}^{(i,j)}$, $S_{2}^{(i,j)}$, $S_{3}^{(i,j)}$ and $S_{4}^{(i,j)}$ since they have forms similar to $\mathcal{A}^{(i,j)}$ and $\mathcal{B}^{(i,j)}$ in the Toric code. Logical operators are
\begin{align}\Pi(\mathcal{S}_{n_{1},n_{2}}) \ = \ \left\langle
\begin{array}{cccc}
\ell_{1}    ,& \ell_{2} ,& \ell_{3} ,& \ell_{4}   \\
r_{1}       ,& r_{2}    ,& r_{3}    ,& r_{4}
\end{array} \right\rangle
\end{align}
where
\begin{equation}
\begin{split}
\ell_{1} \ = \prod_{i} X_{1}^{(i,1)}, \quad \ell_{2} \ = \prod_{i} X_{2}^{(i,1)}, \quad \ell_{3} \ = \prod_{i} X_{3}^{(i,1)}, \quad \ell_{4} \ = \prod_{i} X_{4}^{(i,1)}\\
r_{1} \ = \prod_{j} Z_{1}^{(1,j)}, \quad r_{2} \ = \prod_{j} Z_{2}^{(1,j)}, \quad r_{3} \ = \prod_{j} Z_{3}^{(1,j)}, \quad r_{4} \ = \prod_{j} Z_{4}^{(1,j)}.
\end{split}
\end{equation}
Thus, the model has four pairs of anti-commuting one-dimensional logical operators.

Since the model has logical operators whose geometric shapes are similar to those in the Toric code, we naturally expect that the model has similar physical properties as those in the Toric code. Let us confirm this expectation by checking the conditions \textbf{(1)}, \textbf{(2)} and \textbf{(3)} one by one. First, topological entanglement entropy can be easily computed since stabilizers which can be defined only inside a loop-like region $A$ are:
\begin{equation}
\begin{split}
&S_{1}(x,y) \ \equiv \ \prod_{i=1}^{x}\prod_{j=1}^{y} S_{1}^{(i,j)}, \quad S_{2}(x,y) \ \equiv \ \prod_{i=1}^{x}\prod_{j=1}^{y} S_{2}^{(i,j)} \\ 
&S_{3}(x,y) \ \equiv \ \prod_{i=1}^{x}\prod_{j=1}^{y} S_{3}^{(i,j)}, \quad S_{4}(x,y) \ \equiv \ \prod_{i=1}^{x}\prod_{j=1}^{y} S_{4}^{(i,j)}.
\end{split}
\end{equation}
As a result of the existence of these loop-like stabilizers, this model has $S_{topo}=4$. Second, anyonic excitations arising in this model can be characterized by one-dimensional logical operators. For example, $\ell_{1}$ and $r_{2}$ characterize propagations of anyons associated with stabilizers $S_{2}^{(i,j)}$ in the $\hat{2}$ and $\hat{1}$ directions respectively. Also, $S_{1}(x,y)$ characterize propagations of these anyons around a loop-like region. In total, there are four kinds of different anyons, which may be called a-type b-type c-type and d-type anyons, which are characterized by ``$\ell_{1}$ and $r_{2}$'', ``$\ell_{2}$ and $r_{1}$'', ``$\ell_{3}$ and $r_{4}$'' and ``$\ell_{4}$ and $r_{3}$''. A braiding between a-type and b-type anyons and a braiding betwee c-type and d-type anyons create non-trivial phases $-1$. Other braidings do not generate any phase. Finally, from a similar reasoning used in the discussion on the Toric code, it can be shown that ground states of this model satisfy local indistinguishability conditions.

\begin{figure}[htb!]
\centering
\includegraphics[width=0.80\linewidth]{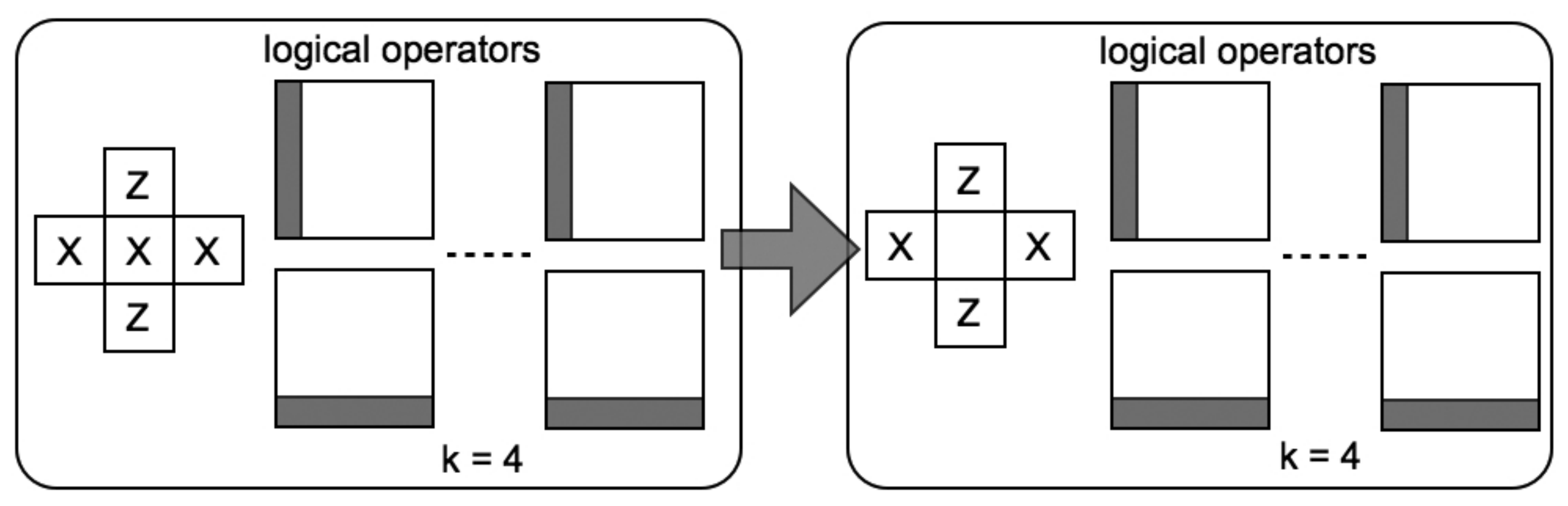}
\caption{Reduction to two copies of the Toric codes. Both models have one-dimensional logical operators with $k=4$.
} 
\label{fig_2Dex3}
\end{figure}

Now, since topological properties of ground states are similar to those of ground states in the Toric code, we naturally expect that this model and the Toric code may belong to the same quantum phase. In fact, the model can be reduced to two copies of the Toric code by applying local unitary transformations. From the forms of stabilizers $S_{1}^{(i,j)}, \cdots , S_{6}^{(i,j)}$, one may expect that stabilizers $S_{1}^{(i,j)}, \cdots , S_{4}^{(i,j)}$ are responsible for the existence of topological order. In particular, it can be seen that $5$th and $6$th qubits inside each composite particle are decoupled from other qubits, and are not relevant to topological order. In fact, one can remove $S_{5}^{(i,j)}$ and $S_{6}^{(i,j)}$ by applying disentangling operations between neighboring composite particles in a way similar to the reduction of a cluster state to a product state used in the discussion of one-dimensional STS models. $S_{1}^{(i,j)}, \cdots , S_{4}^{(i,j)}$ look the same as interaction terms $\mathcal{A}^{(i,j)}$ and $\mathcal{B}^{(i,j)}$ in the Toric code. Also, it can be seen that $1$st and $2$nd qubits are decoupled from $3$rd and $4$th qubits. Thus, one may see that two copies of the Toric code are embedded in this model in a non-interacting way. 

Here, in order to see the reduction of the model to the toric code in a more physically intuitive way, we give an actual model which includes two Toric codes:
\begin{align}\label{eq:two_Toric}
H' \ = \ - \sum_{i,j} 
\begin{bmatrix}
      & Z &  \\
X      &  & X \\
       & Z & 
\end{bmatrix}^{(i,j)}
\end{align}
with $N = L_{1}\times L_{2}$ qubits where $L_{1}$ and $L_{2}$ are even integers. Here, qubits at $(i,j)$ with $i + j = 0$ (mod $2$) is decoupled from qubits at $(i,j)$ with $i + j = 1$ (mod $2$). If we pick up qubits only at $(i,j)$ with $i + j = 0$ (mod $2$) and rotate the lattice by 45 degree, we have the following Hamiltonian:
\begin{align}
H'' \ = \ - \sum_{i,j} 
\begin{bmatrix}
Z     & X  \\
X     & Z  
\end{bmatrix}^{(i,j)}.
\end{align}
Through some observations, we may notice that this model is equivalent to the Toric code. Thus, the Hamiltonian in Eq.~(\ref{eq:two_Toric}) includes two copies of the Toric code in a non-interacting way. 

In summarizing observations obtained so far, the following two Hamiltonians are equivalent;
\begin{align}
H \ = \ - \sum_{i,j} 
\begin{bmatrix}
      & Z &  \\
X      & X  & X \\
       & Z & 
\end{bmatrix}^{(i,j)} \ \sim \
H' \ = \ - \sum_{i,j} 
\begin{bmatrix}
      & Z &  \\
X      &  & X \\
       & Z & 
\end{bmatrix}^{(i,j)}.
\end{align}
Since these two Hamiltonians can be transformed each other through local unitary transformations, they belong to the same quantum phase. This can be shown through a similar discussion used in the classification of quantum phases in one-dimensional STS models. Here, we note that local unitary transformations do not change the geometric shapes of logical operators, and thus, logical operators indeed characterize the scale invariant physical properties. 

\subsection{Logical operators in two-dimensional STS models and topological order}\label{sec:2D2}

We have analyzed possible geometric shapes of logical operators in two-dimensional STS models through several examples, and found that logical operators form anti-commuting pairs in the following way:
\begin{align}
&\mbox{Classical Ferromagnet ($k \ = \ 1$)} &\mbox{$0$-dim}\ - \ \mbox{$2$-dim} \notag \\
&\mbox{The Toric code ($k \ = \ 2$)}        &\mbox{$1$-dim}\ - \ \mbox{$1$-dim} \notag
\end{align}
We have also seen that one-dimensional logical operators may be responsible for the existence of topological order. 

Here, we wish to establish the relation between physical properties of arbitrary two-dimensional STS models and geometric shapes of logical operators. However, it is generally difficult to characterize physical properties in higher-dimensional systems ($D > 1$) both analytically and computationally. In particular, since there are many possibilities for geometric shapes of logical operators, one might think that finding logical operators are much more difficult in two-dimensional STS models than those in one-dimensional STS models.

Fortunately, it is possible to determine geometric shapes of logical operators in two-dimensional STS models completely. Moreover, we can characterize topological order arising in two-dimensional STS models completely from geometric shapes of logical operators. In this subsection, we describe possible geometric shapes of logical operators. In particular, we show that possible shapes are ``anti-commuting pairs of zero-dimensional and two-dimensional logical operators'' and ``anti-commuting pairs of one-dimensional logical operators''. Based on these logical operators, we discuss physical properties of arbitrary STS models. While we shall concentrate on presenting geometric shapes of logical operators and discussing topological order arising in two-dimensional STS models, all the derivations of logical operators are given in~\ref{sec:proof}.

In Section~\ref{sec:2D21}, we present possible geometric shapes of logical operators. In Section~\ref{sec:2D22}, we discuss topological order arising in two-dimensional STS models through geometric shapes of logical operators. 

\subsubsection{Geometric shapes of logical operators in two-dimensional STS models}\label{sec:2D21}

We present possible geometric shapes of logical operators. In two-dimensional STS models, possible geometric shapes are ``anti-commuting pairs of zero-dimensional and two-dimensional logical operators'' and ``anti-commuting pairs of one-dimensional logical operators'' (Fig.~\ref{fig_theorem_2dim}). Here, we describe the above statement more precisely. Consider a two-dimensional STS model with $n_{1}\times n_{2}$ composite particles, $k$ logical qubits and $v$ qubits inside each composite particle, which is defined with the stabilizer group $\mathcal{S}_{n_{1},n_{2}}$. A region with $x \times y$ composite particles is denoted as $U_{x,y}$, which includes composite particles $P_{r_{1},r_{2}}$ with $1\leq r_{1} \leq x$ and $1 \leq r_{2} \leq y$. Then, for $n_{1},n_{2} >2v$, there exist a canonical set of logical operators
\begin{align}
\Pi(\mathcal{S}_{n_{1},n_{2}})\ =\ \left\{
\begin{array}{cccccc}
\ell_{1} ,& \cdots ,& \ell_{k} \\
r_{1} ,& \cdots ,& r_{k}  
\end{array}
\right\}
\end{align}
which satisfy the following conditions after some appropriate local unitary transformations on qubits inside each composite particle:
\begin{itemize}
\item \textbf{Zero-dimensional and two-dimensional logical operators:} For $p = 1 , \cdots, k_{0}$ ($k_{0}\leq k$), $\ell_{p}$ are logical operators defined inside a region $U_{2v,1}$, and $r_{p}$ are logical operators defined all over the lattice in a periodic way: 
\begin{align}
r_{p}\ =\ \prod_{i,j} X_{p}^{(i,j)} 
     \  =\ 
      \begin{bmatrix}
      X_{p}  , & X_{p}  , & \cdots ,& X_{p}\\
      \vdots   & \vdots   & \vdots  &  \vdots \\
      X_{p}  , & X_{p}  , & \cdots ,& X_{p}\\
      X_{p}  , & X_{p}  , & \cdots ,& X_{p}
      \end{bmatrix}. \notag
      \end{align}
\item \textbf{One-dimensional logical operators:} For $p = k_{0}+1 , \cdots, k$, $\ell_{p}$ are logical operators defined inside a region with $n_{1}\times 1$ composite particles and $r_{p}$ are logical operators defined inside a region with $1\times n_{2}$ composite particles:
\begin{align}
\ell_{p}\ =\ \prod_{j} Z_{p}^{(1,j)} 
         \ = \ 
      \begin{bmatrix}
       Z_{p}      , & I      , & \cdots ,& I     \\
       \vdots  & \vdots  & \vdots & \vdots \\
       Z_{p}      , & I      , & \cdots ,& I     \\
       Z_{p}   , & I  , & \cdots ,& I       
      \end{bmatrix}, \quad
r_{p}\ =\ \prod_{i} X_{p}^{(i,1)} 
         \ = \ 
      \begin{bmatrix}
       I      , & I      , & \cdots ,& I     \\
       \vdots  & \vdots  & \vdots & \vdots \\
       I      , & I      , & \cdots ,& I     \\
       X_{p}   , & X_{p}  , & \cdots ,& X_{p}       
      \end{bmatrix}.\notag 
      \end{align}
\end{itemize}

\begin{figure}[htb!]
\centering
\includegraphics[width=0.7\linewidth]{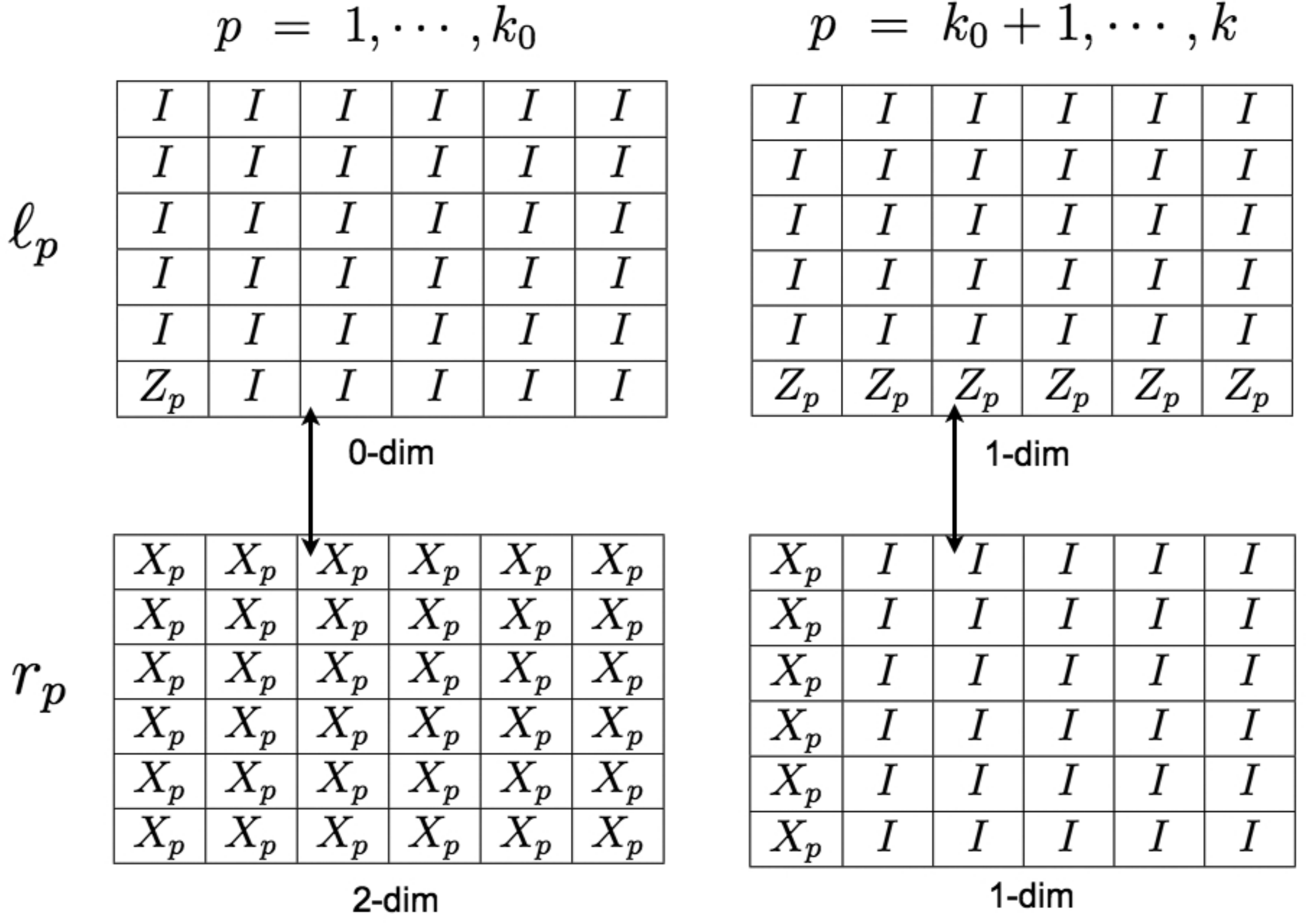}
\caption{A canonical set of logical operators in a two-dimensional STS model. Two-sided arrows represent anti-commutations. Note that zero-dimensional logical operators $\ell_{p}$ ($p = 1 , \cdots, k_{0}$) are represented inside a single composite particle instead of a region with $2v \times 1$ composite particles ($U_{2v,1}$) for ease of graphical representations. Dimensions are assigned to logical operators according to their geometric shapes. 
} 
\label{fig_theorem_2dim}
\end{figure}

Computations of this canonical set of logical operators are presented in~\ref{sec:proof}. Logical operators are graphically shown in Fig.~\ref{fig_theorem_2dim}. Here, we have assigned dimensions to logical operators according to their geometric shapes. Note that zero-dimensional logical operators are ``not completely zero-dimensional'' since they are defined inside a region with $2v \times 1$ composite particles which is denoted as $U_{2v,1}$. 
For ease of graphical representations, we draw logical operators $\ell_{p}$ ($p = 1 , \cdots, k_{0}$) inside a region $U_{2v,1}$ by assuming that they can be actually supported by a single composite particle $P_{1,1}$ (or a region $U_{1,1}$) in Fig.~\ref{fig_theorem_2dim}. 

Here, we notice that the sum of dimensions of logical operators which form anti-commuting pairs is always two. Also, intersections of anti-commuting logical operators are always zero-dimensional. We call this constraint on the dimensions of logical operators the \emph{dimensional duality of logical operators} in STS models. This duality also holds for one-dimensional STS models. We summarize this observation in the following way.\footnote{We note that there are several models of stabilizer Hamiltonians constructed in higher-dimensional systems which satisfy this dimensional duality of logical operators. For example, generalizations of the Toric code defined on a $D$-dimensional torus ($D>2$) have anti-commuting pairs of $m$-dimensional and $(D-m)$-dimensional logical operators~\cite{Dennis02, Takeda04}. However, there exists a three-dimensional stabilizer code without scale symmetries which may have a pair of anti-commuting two-dimensional logical operators~\cite{Bravyi10c}.}

\begin{observation}[Dimensional duality of logical operators] 
In $D$-dimensional STS models ($D = 1,2$), $m$-dimensional and $(D-m)$-dimensional logical operators form anti-commuting pairs.
\end{observation}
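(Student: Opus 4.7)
The approach is to deduce this observation directly from the canonical forms of logical operators already established in Section~\ref{sec:1D2} for $D=1$ and Section~\ref{sec:2D21} for $D=2$. By inspection of those canonical sets: in the one-dimensional case, each anti-commuting pair consists of a zero-dimensional $\ell_p = Z_p^{(1)}$ and a one-dimensional $r_p = \prod_j X_p^{(j)}$, with dimensions summing to $1=D$. In the two-dimensional case, anti-commuting pairs are either a zero-dimensional $\ell_p$ paired with a two-dimensional periodic $r_p = \prod_{i,j} X_p^{(i,j)}$, or two one-dimensional operators running in orthogonal directions; in both sub-cases the dimensions sum to $2=D$ and the intersection supporting the anti-commutation is zero-dimensional.

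First I would state precisely the meaning of the ``dimension'' of a logical operator by defining it as the minimal dimension of a region in which some equivalent representative can be supported. With this definition, dimension is a property of the equivalence class, invariant under multiplication by stabilizers. The canonical representatives written above realize this minimum by construction, so the dimensional bookkeeping reduces to reading entries off the canonical set.

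Next I would verify that the labels are genuine topological invariants by combining the bi-partitioning theorem (Theorem~\ref{theorem_partition}) with the translation equivalence of logical operators (Theorem~\ref{theorem_main}). To rule out, for example, that a canonical one-dimensional operator collapses to a zero-dimensional representative in two dimensions, I would take $A$ to be a zero-dimensional region (a union of composite particles avoiding any non-contractible cycle on the torus) and use $g_A + g_{\bar A} = 2k$ together with the count $g_{\bar A} \leq 2k - k_0$ coming from the canonical set to force $g_A \leq k_0$. An analogous partition in which $\bar A$ is a thin non-contractible strip separates one-dimensional operators from two-dimensional ones. The translation equivalence ensures these counts do not depend on how the regions are placed.

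The main obstacle is not this dimensional bookkeeping, which is routine once the canonical forms are available, but the derivation of the canonical forms themselves; those constructions constitute the genuine technical content of the paper and are carried out in the appendix. Given those forms in hand, the dimensional duality follows immediately by pairing off the operators in the canonical set and summing their dimensions.
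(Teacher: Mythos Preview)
Your proposal is correct and takes essentially the same approach as the paper: the observation is stated immediately after the canonical set of logical operators in Section~\ref{sec:2D21} and is justified simply by reading off the dimensions of the paired operators in that canonical form (and likewise from Section~\ref{sec:1D2} for $D=1$). Your additional paragraph verifying that the dimensional labels are genuine invariants of the equivalence class, via Theorem~\ref{theorem_partition} and translation equivalence, goes beyond what the paper does at this point; the paper defers that content to Section~\ref{sec:2D32} (Theorem~\ref{lemma_number}) rather than folding it into the observation itself.
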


\subsubsection{Topological order in STS models}\label{sec:2D22}

From the discussions in the previous subsection, one may naturally expect that an STS model has topological order when there exist one-dimensional logical operators. For example, one-dimensional logical operators may characterize propagations of anyons in the vertical and horizontal directions. However, whether there exist anyons which can propagate around the torus is not immediately clear. Also, computations of the topological entanglement entropy are not possible through only logical operators. 

What are central in characterizing topological order in the Toric code are loop-like stabilizers $\mathcal{A}(x,y)$ and $\mathcal{B}(x,y)$ which are generated from stabilizers $\mathcal{A}^{(i,j)}$ and $\mathcal{B}^{(i,j)}$. Then, we hope to find such stabilizers and loop-like stabilizers generated from them. In fact, there exists such pairs of stabilizers which can characterize propagation of anyons around a loop. See~\ref{sec:loop} for derivations and detailed discussions. Here, based on these stabilizers, we discuss topological order arising in two-dimensional STS models. 

Let us begin by the case with $k_{1} = 2$. Then, after some appropriate unitary transformations on qubits inside each composite particle, there exist the following pairs of stabilizers:
\begin{align}
S_{1}^{(i,j)} \ = \ 
\left[
\begin{array}{cc}
P_{1}Z_{1}X_{2} ,& P_{1}Z_{1} \\
P_{1}X_{2}      ,& P_{1} 
\end{array}\right]^{(i,j)}, \qquad
S_{2}^{(i,j)} \ = \ 
\left[
\begin{array}{cc}
P_{2}     ,& P_{2}X_{1} \\
P_{2}Z_{2} ,& P_{2}Z_{2}X_{1} 
\end{array}\right]^{(i,j)}
\end{align}
where $P_{1}$ and $P_{2}$ are some Pauli operators acting on a single composite particle. Here, one may notice that $S_{1}^{(i,j)} = \mathcal{A}^{(i,j)}$ and $S_{2}^{(i,j)} = \mathcal{B}^{(i,j)}$ if $P_{1}=P_{2}=I$. 

Logical operators can be represented in the following way:
\begin{equation}
\begin{split}
\ell_{1}\ =\ \prod_{j} Z_{1}^{(1,j)} \quad
\ell_{2}\ =\ \prod_{j} Z_{2}^{(1,j)} \quad
r_{1}\ =\ \prod_{i} X_{1}^{(i,1)} \quad
r_{2}\ =\ \prod_{i} X_{2}^{(i,1)}. 
\end{split}
\end{equation}
From $S_{1}^{(i,j)}$ and $S_{2}^{(i,j)}$, one can construct the following loop-like stabilizers:
\begin{align}
&S_{1}(x,y) \ \equiv \ \prod_{i=1}^{x}\prod_{j=1}^{y} S_{1}^{(i,j)}, \quad S_{2}(x,y) \ \equiv \ \prod_{i=1}^{x}\prod_{j=1}^{y} S_{2}^{(i,j)}.
\end{align}
Note that contributions from $P_{1}$ and $P_{2}$ cancel out inside the loop.

Next, let us consider the case with arbitrary $k_{1}$ where $k_{1}$ is the number of pairs of one-dimensional logical operators. For simplicity, let us consider the case where $k_{1}=k$ and $k_{0}=0$. One can prove that $k_{1}$ must be even, and there exist following stabilizers:
\begin{align}
&S_{1}^{(i,j)} \ = \ 
\left[
\begin{array}{cc}
P_{1}Z_{1}X_{2} ,& P_{1}Z_{1} \\
P_{1}X_{2}      ,& P_{1} 
\end{array}\right]^{(i,j)}, \ \cdots, \ 
S_{k-1}^{(i,j)} \ = \ 
\left[
\begin{array}{cc}
P_{k-1}Z_{k-1}X_{k} ,& P_{k-1}Z_{k-1} \\
P_{k-1}X_{k}      ,& P_{k-1} 
\end{array}\right]^{(i,j)} \notag \\
&S_{2}^{(i,j)} \ = \ 
\left[
\begin{array}{cc}
P_{2}     ,& P_{2}X_{1} \\
P_{2}Z_{2} ,& P_{2}Z_{2}X_{1} 
\end{array}\right]^{(i,j)}, \  \cdots, \
S_{k}^{(i,j)} \ = \ 
\left[
\begin{array}{cc}
P_{k}     ,& P_{k}X_{k-1} \\
P_{k}Z_{k} ,& P_{k}Z_{k}X_{k-1} 
\end{array}\right]^{(i,j)} \notag
\end{align}
where $P_{1},\cdots, P_{k}$ are some Pauli operators acting on a single composite particle. One may see that stabilizers form pairs just like a pair of $\mathcal{A}^{(i,j)}$ and $\mathcal{B}^{(i,j)}$ in the Toric code. Based on these stabilizers, one can find loop-like stabilizers too.

With these observations, we notice that there are $k_{1}$ different types of anyonic excitations which are characterized by pairs of ``$S_{1}^{(i,j)}$ and $S_{2}^{(i,j)}$'', ``$S_{3}^{(i,j)}$ and $S_{4}^{(i,j)}$'' and so on. These anyons described by each pair of stabilizers are not interacting with each other. Thus, the braiding rule between anyons is described as a direct product of $k_{1}/2$ braiding rules of the Toric codes. Topological entanglement entropy is $k_{1}$, which can be directly computed from the number of loop-like stabilizers.\footnote{In order to prove this rigorously, one needs to show that there is no other stabilizer which can be defined inside a loop-like region. This proof is not trivial, but can be obtained by using discussions similar to proofs of lemma~\ref{lemma_l} which appear in~\ref{sec:proof1}.}

Finally, let us see whether STS models satisfy local indistinguishability conditions or not. Here, we take the largest zero-dimensional region $R = U_{n_{1}-1,n_{2}-1}$ which consists of $n_{1}-1 \times n_{2}-1$ composite particles. Then, we notice that 
$g_{R} = k_{0}$  
since $g_{\bar{R}} = k_{0} + 2 k_{1}$ where $k_{0}+k_{1}=k$. Thus, when $k_{0}=0$, the model satisfies local indistinguishability conditions.

From these discussions, we may conclude that topological properties of two-dimensional STS models can be completely characterized from geometric shapes of logical operators. Thus, geometric shapes of logical operators may work as order parameters to distinguish topological phases with different topological order arising in STS models. 

\subsection{Quantum phases in two-dimensional STS models}\label{sec:2D3}

We have seen that different geometric shapes of logical operators lead to completely different ground state properties in STS models. Finally, let us classify quantum phases arising in two-dimensional STS models.

Since a topologically ordered system and a non-topologically ordered system have ground states with completely different physical properties, we naturally expect that topological and non-topological STS models belong to different quantum phases. However, it is not immediately clear whether the emergence and loss of topological order lead to the existence of a QPT or not. In particular, while changes in the number of ground states (the number of logical operators) must lead to a QPT, it is still not clear whether there is a QPT or not between two systems with the same number of ground states. 
%Also, it is not well understood how the emergence or the loss of topological order leads to the vanishing energy gap and non-analytic changes of ground states in a parameterized Hamiltonian. 

Here, we show that geometric shapes of logical operators can be used as ``order parameters'' to distinguish quantum phases, including topological phases, arising in two-dimensional STS models by proving that any parameterized Hamiltonians connecting two STS models with different geometric shapes of logical operators are always separated by QPTs. We also show that the existence of a QPT originates from the non-analyticity of transforming logical operators with topologically distinct shapes each other since there is no continuous deformation (diffeomorphism) between them at the thermodynamic limit.

In Section~\ref{sec:2D31}, we show that two STS models with topologically different logical operators are always separated by a QPT. 
%In Section~\ref{sec:2D33}, we discuss the connection between the vanishing energy gap and the non-analyticity in ground states. 
In Section~\ref{sec:2D32}, we expand our discussions on topological structures of logical operators to sets of equivalent logical operators. 

\subsubsection{The vanishing energy gap and topological QPT}\label{sec:2D31}

Let us begin by clarifying the problem we address. Consider two STS models $H_{A}$ and $H_{B}$ with and without topological order. When the number of logical operators in $H_{A}$ and $H_{B}$ are different, there must always be a QPT between $H_{A}$ and $H_{B}$. Therefore, we consider the cases where $H_{A}$ and $H_{B}$ have the same number of logical operators, but different geometric shapes of logical operators. In particular, for simplicity of discussion, we consider the case where $H_{A}$ has two pairs of anti-commuting zero-dimensional and two-dimensional logical operators ($k_{0}=2$, $k_{1}=0$ and $k=2$) while $H_{B}$ has two pairs of anti-commuting one-dimensional logical operators ($k_{0}=0$, $k_{1}=2$ and $k=2$). One can generalize this discussion to the cases with arbitrary numbers of logical operators easily.

Our goal is to show that $H_{A}$ and $H_{B}$ belong to different quantum phases. For this purpose, we suppose the opposite; there exists a parameterized Hamiltonian $H(\epsilon)$ which connects $H_{A}$ and $H_{B}$ without closing the energy gap or changing the number of ground states:
\begin{align}
H(0) \ = \ H_{A}, \qquad H(1) \ = \ H_{B}, \qquad \Delta(\epsilon) \ \geq \ c \quad \mbox{for all} \ \epsilon \label{eq:assumption}
\end{align}
where $\Delta(\epsilon)$ is the energy gap and $c$ is some constant which does not depend on the system size. 

The most striking difference between $H_{A}$ and $H_{B}$ is the presence and the absence of topological order which can be seen by different geometric shapes of their logical operators. This drastic change of geometric shapes may underlie a non-analytic change of the ground state properties and a vanishing energy gap. Then, one might hope to show the existence of a QPT by looking at how geometric shapes of logical operators change with a parameter change. However, unlike $H_{A}$ and $H_{B}$, one cannot define logical operators inside the Pauli group for $H(\epsilon)$ at $0 < \epsilon < 1$ since the Hamiltonian $H(\epsilon)$ is frustrated in general. 

Despite $H(\epsilon)$ is frustrated and not a stabilizer Hamiltonian in general, it is possible to define operators which act like logical operators. Let us denote projection operators onto the ground state space of $H(\epsilon)$ as $\hat{P}(\epsilon)$. Since the number of degenerate ground states does not change with $\epsilon$, one can find some unitary transformation which satisfy the following condition:
\begin{align}
\hat{P}(\epsilon) \ = \ U(\epsilon)\hat{P}_{A}U(\epsilon)^{-1}, \qquad \hat{P}_{A} \ \equiv \ \hat{P}(0), \quad \hat{P}_{B} \ \equiv \ \hat{P}(1) \label{eq:adiabatic}
\end{align}
where $\hat{P}_{A}$ and $\hat{P}_{B}$ represent projections onto the ground state spaces of $H_{A}$ and $H_{B}$ respectively. Here, we note that such a unitary transformation is not uniquely determined and has many degrees of freedom. The unitary transformation $U(\epsilon)$ may characterize the evolution of ground states with respect to $\epsilon$. Let us pick up some ground state $|\psi(0)\rangle$ of $H_{A}$ at $\epsilon=0$. Then, the following state $|\psi(\epsilon)\rangle =  U(\epsilon) |\psi(0)\rangle$
is a ground state of $H(\epsilon)$. Here, we denote an anti-commuting pair of zero-dimensional and two-dimensional logical operators in $H_{A}$ as $\ell_{A}$ and $r_{A}$. Then, the following operators act like logical operators inside the ground state space of $H(\epsilon)$, transforming degenerate ground states among them:
\begin{align}
\ell(\epsilon) \ = \ U(\epsilon)\ell_{A}U(\epsilon)^{-1}, \qquad r(\epsilon) \ = \ U(\epsilon)r_{A}U(\epsilon)^{-1}
\end{align}
since they act non-trivially inside the ground state space:
\begin{align}
\{ \ell(\epsilon), r(\epsilon) \} \ = \ 0, \qquad \ell(\epsilon)^{2} \ = \ r(\epsilon)^{2} \ = \ I.
\end{align}
Note that $\ell(0) = \ell$ and $r(0) = r$. We call $\ell(\epsilon)$ and $r(\epsilon)$ \emph{analytically continued logical operators}. Note that $\ell(\epsilon)$ and $r(\epsilon)$ may not commute with the parameterized Hamiltonian $H(\epsilon)$.

Now, let us discuss how analytically continued logical operators $\ell(\epsilon)$ change with respect to $\epsilon$. Here, we first give an intuitive explanation why the assumption of no QPT leads to a contradiction. Then, we add some mathematical rigor to our intuition. 

\textbf{Intuitive approach based on topology in logical operators:}
At the beginning of the discussion, we have assumed that there is no QPT between $\epsilon = 0$ and $\epsilon = 1$. Then, we naturally hope that the geometric shapes of $\ell(\epsilon)$ change smoothly with respect to $\epsilon$. A unitary transformation $U(\epsilon)$ must not have any non-analyticity with respect to $\epsilon$ even at the thermodynamic limit since a ground state $|\psi(\epsilon)\rangle$ must change continuously with $\epsilon$. Then, $\ell(\epsilon)$ also changes continuously with respect to $\epsilon$ without non-analyticity. At $\epsilon = 0$, a zero-dimensional logical operator $\ell(0) \equiv \ell_{A}$ is defined inside some localized region (Fig.~\ref{fig_2DQPT}). For $\epsilon$ sufficiently close to zero, $\ell(0)$ and $\ell(\epsilon)$ are similar, and $\ell(\epsilon)$ may be approximated as: 
\begin{align}
\ell(\epsilon) \approx \ell(0) + \epsilon \ell(0)'
\end{align}
where $\ell(0)'$ represents the derivative of $\ell(\epsilon)$ with respect to $\epsilon$ at $\epsilon = 0$. Then, the geometric shape of $\ell(\epsilon)$ should be similar to the one of $\ell(0)$ (Fig.~\ref{fig_2DQPT}). 

Now, let us analyze geometric shapes of $\ell(\epsilon)$ at $\epsilon = 1$. At $\epsilon = 1$, the analytically continued logical operators $\ell(1)$ and $r(1)$ act non-trivially inside the ground state space of $H_{B}$. Let us recall that there are no logical operators inside any zero-dimensional region $R$ in an STS model $H_{B}$: $g_{R}=0$ where $R$ is a region consisting of $(n_{1}-1) \times (n_{2}-1)$ composite particles. As a result, ground states of $H_{B}$ satisfy the local indistinguishability conditions, and there is no local operator which can transform ground states each other. As a result, neither $\ell(1)$ or $r(1)$ can be defined locally. 

\begin{figure}[htb!]
\centering
\includegraphics[width=0.85\linewidth]{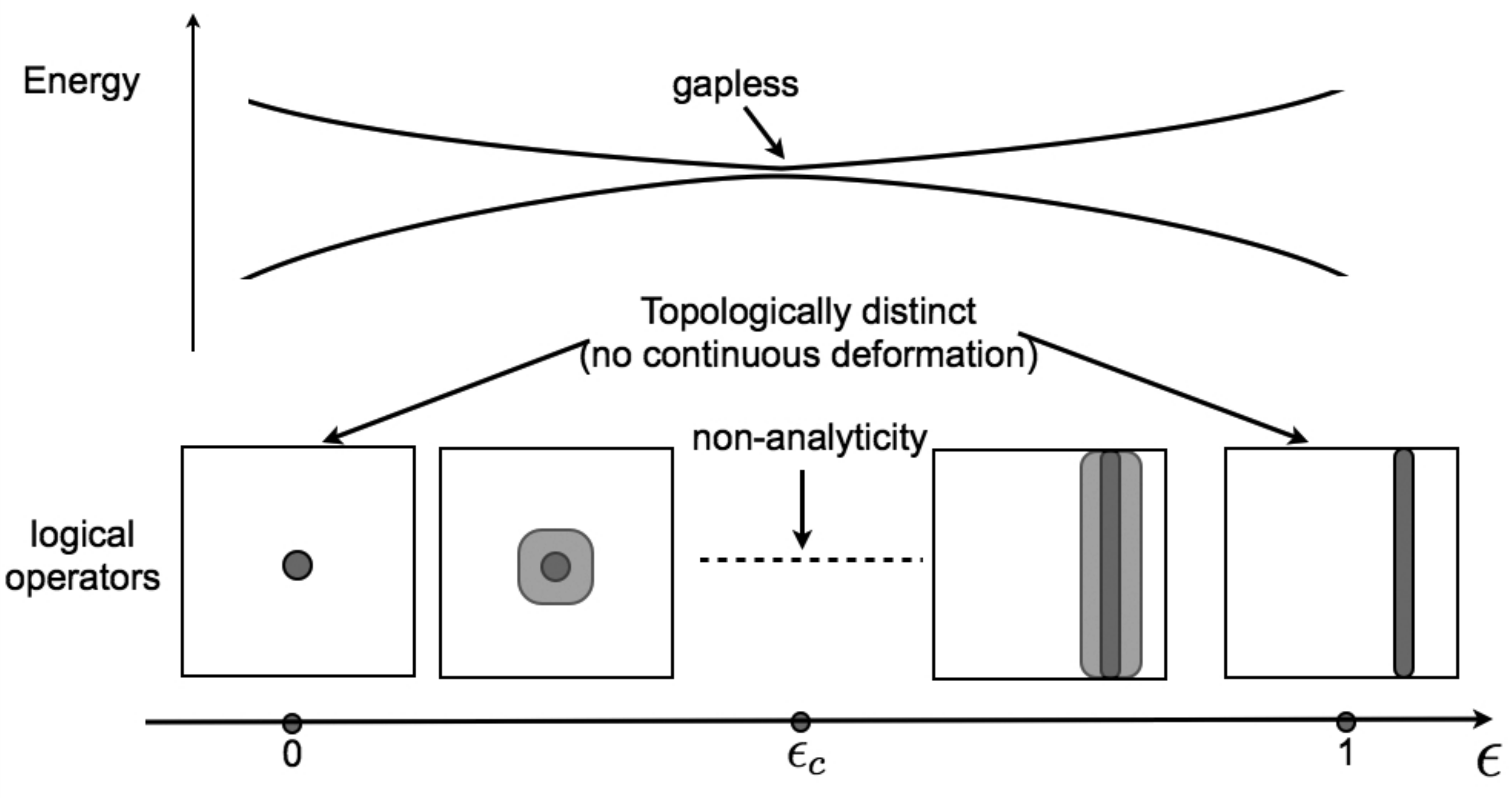}
\caption{Geometric shapes of an analytically continued logical operator, the vanishing energy gap, and the existence of a QPT. Lightly shaded regions represent the ``shapes'' of analytically continued logical operators.} 
\label{fig_2DQPT}
\end{figure}

While an analytically continued logical operator $\ell(1)$ acts like a logical operator of $H_{B}$, it may not be an actual logical operators of $H_{B}$ since it is not a Pauli operator in general. Here, for simplicity of discussion, we assume that $\ell(1)$ happens to be a one-dimensional logical operator $\ell_{B}$ in $H_{B}$ which is a Pauli operator. Then, geometric shapes of $\ell(\epsilon)$ changes from a zero-dimensional object $\ell_{A}$ to a one-dimensional object $\ell_{B}$ as we vary $\epsilon$ from $\epsilon=0$ to $\epsilon=1$. However, it is impossible to change geometric shapes of a zero-dimensional object to a one-dimensional object continuously at the thermodynamic limit since a continuous deformation between them is topologically prohibited. This contradicts with our original assumption that there is no QPT in $H(\epsilon)$. Thus, there must be a QPT between $H_{A}$ and $H_{B}$, and we may conclude that $H_{A}$ and $H_{B}$ belong to different quantum phases. Though we have assumed that $\ell(1)$ happens to be a one-dimensional logical operator $\ell_{B}$ in $H_{B}$, the observation that $\ell(1)$ cannot be defined locally is sufficient to reach the same conclusion.

\textbf{A more rigorous approach based on adiabatic continuation:}
While the above analysis implies that the topological distinction between geometric shapes of logical operators in $H_{A}$ and $H_{B}$ lead to the existence of a QPT, the discussion may lack in a mathematical rigor. In particular, we have assumed that geometric shapes of $\ell(0)$ and $\ell(\epsilon)$ for small $\epsilon$ are close. However, it is not clear if this assumption is correct or not. Also, while we have used the expression ``geometric shape'' naively to describe geometric properties of analytically continued logical operators $\ell(\epsilon)$, we have not stated clearly the definition of geometric shapes of analytically continued logical operators. 

Below, we shall clear these ambiguities by borrowing theoretical techniques developed in studies of adiabatic continuation~\cite{Hastings05, Bravyi06, Osborne07, Bravyi10b}. Adiabatic continuation is an idea of studying the ground state properties of Hamiltonians belonging to the same quantum phase by finding a gapped parameterized Hamiltonian connecting them.  Here, for simplicity of discussion, we begin with the cases where there is always a single ground state in parameterized Hamiltonians. If two Hamiltonians $H$ and $H'$ are in the same quantum phase, one can always find gapped parameterized Hamiltonian $H(\epsilon)$ connecting them due to the definition of quantum phases. Then, according to the adiabatic theorem~\cite{Kato50, Messiah}, by varying the parameterized Hamiltonian $H(\epsilon)$ sufficiently slowly, one can transfer a ground state of $H$ (denoted as $|\psi\rangle$) to a ground state of $H'$ (denoted as $|\psi ' \rangle$). A key idea is to realize that a unitary operator $U$ required to transform $|\psi\rangle$ to $|\psi '\rangle$ ($|\psi'\rangle = U |\psi\rangle$) can be extracted from the parameterized Hamiltonians $H(\epsilon)$. In particular, theoretical techniques on adiabatic continuation provide systematic formulas to obtain such a unitary transformation $U$ from a gapped parameterized Hamiltonian $H(\epsilon)$. While we have started our discussions with the cases where there is only a single ground state, adiabatic continuation also works when there is the ground state degeneracy and the number of degenerate ground states does not change. Even more, it works when the ground state degeneracy is slightly broken. (See references for applicabilities of adiabatic continuation~\cite{Hastings05, Bravyi06, Osborne07, Bravyi10b}).

Now, we interpret our previous analyses on the parameterized Hamiltonians in Eq.~(\ref{eq:assumption}) through adiabatic continuation. Let us recall that we have supposed that there exists a gapped parameterized Hamiltonian $H(\epsilon)$ which connect $H_{A}$ and $H_{B}$, and the number of ground stated does not change as we vary $\epsilon$. Then, one can represent the unitary transformation $U(\epsilon)$ defined in Eq.~(\ref{eq:adiabatic}) through the parameterized Hamiltonian $H(\epsilon)$. Now, we discuss properties of analytically continued logical operator $\ell(\epsilon) = U(\epsilon)\ell_{A} U(\epsilon)^{-1}$. A remarkable discovery from recent developments on studies of adiabatic continuation is the fact that one can  approximate analytically continued logical operators $\ell(\epsilon)$ with some operator $\tilde{\ell}(\epsilon)$ whose geometric shape is close to the one of $\ell_{A}$ as long as a parameterized Hamiltonian $H(\epsilon)$ remains gapped at the thermodynamic limit~\cite{Hastings05, Bravyi10b}. In particular, there exists an approximation $\tilde{\ell}(\epsilon)$ defined inside some localized region with $\xi(\epsilon) \times \xi(\epsilon)$ composite particles:
\begin{align}
\ell(\epsilon) \ \sim \ \tilde{\ell}(\epsilon)
\end{align} 
where $\xi(\epsilon)$ depends on the energy gap $\Delta(\epsilon)$, but remains finite even at the thermodynamic limit. Here, by approximation, we mean that $\ell(\epsilon)$ and $\tilde{\ell}(\epsilon)$ act in a similar way inside the ground state space. In particular, if $\ell(\epsilon)$ transforms two degenerate ground states $|\psi(\epsilon)\rangle$ and $|\psi'(\epsilon)\rangle$ of $H(\epsilon)$ in the following way:
\begin{align}
\ell(\epsilon) |\psi(\epsilon)\rangle \ = \  |\psi(\epsilon)'\rangle, \qquad
\langle \psi'(\epsilon) |\tilde{\ell}(\epsilon) |\psi(\epsilon)\rangle \ \sim \ O(1).
\end{align}
Here, the inner product is some constant which does not depend on the system size. 

Since $\tilde{\ell}(\epsilon)$ approximates $\ell(\epsilon)$, one may consider a localized region with $\xi(\epsilon) \times \xi(\epsilon)$ composite particles as the geometric shape of $\ell(\epsilon)$. Now, at $\epsilon = 1$, we have an approximation $\tilde{\ell}(1)$ which is defined inside some finite region with  $\xi(1)  \times \xi(1)$ composite particles. Then, we have
\begin{align}
\langle \psi'(1) |\tilde{\ell}(1) |\psi(1)\rangle \ \sim \ O(1)
\end{align}
for two orthogonal ground states $|\psi(1)\rangle$ and $|\psi'(1)\rangle$ of $H(1)\equiv H_{B}$. However, from the indistinguishability condition, there is no local physical observable which can transform ground states each other in $H_{B}$. Thus, $\xi(1)$ must be an infinite number at the thermodynamic limit, which leads to a contradiction. Therefore, there must be a QPT between $H_{A}$ and $H_{B}$ since the energy gap vanishes or the number of degenerate ground states changes at some point. 

\textbf{QPTs and topology in logical operators:}
We have seen that non-analytic changes of geometric shapes of logical operators lead to the existence of the vanishing energy gap. This observation may be better understood by the use of a mathematical language developed in studies of shapes of objects. Topology is a study of classifying objects in terms of smoothness and non-analyticity. If two objects can be transformed each other through continuous deformations (diffeomorphism), they are considered to be the same. Roughly speaking, diffeomorphism is a one-to-one mapping between two geometric manifolds where both the map itself and its inverse are differentiable. The notion of topology has been particularly useful in characterizing physical properties which may survive even at the thermodynamic limit, mainly in the context of field theory where physical properties are to be discussed at the continuum limit. 

In a strict sense, the notion of topology cannot be introduced in discussions of quantum phases arising in lattice systems since spins on lattices are discretely distributed and geometric shapes of logical operators are not smoothly determined. However, by considering a system of qubits at the thermodynamic limit, one can smoothen geometric shapes of logical operators effectively. Then, the fact that there is no continuous unitary transformation $U(\epsilon)$ transforming $\ell(0)$ into $\ell(1)$ is a direct consequence of the fact that there is no diffeomorphism which map a topologically trivial object (a zero-dimensional point) to a topologically non-trivial object (a one-dimensional loop winding around the torus). One cannot cut the winding of a one-dimensional logical operator without introducing some non-analyticity. 

%Here, it may be worth mentioning some relevant topological approaches used in analyses on QPTs arising in many-body systems. The notion of topology arising in the momentum has been playing important roles in condensed matter physics for a long time. In particular, almost half a century ago, it was pointed out that QPTs are triggered by changes of topological structures of the Fermi surfaces~\cite{Lifshitz60}. In such transitions, now called Lifshitz transitions, non-analytic changes of the ground state properties may be detected by singularities of the Green's function at the Fermi surface. While the notion of topology had been seen more often in the momentum space until the invention of topological quantum field theory, now the notion of topology in the real physical space has become one of the conceptual pillars in modern theoretical physics~\cite{Witten89}. In particular, it has been realized that physical properties may highly depend on topological structures of geometric manifolds where systems are defined~\cite{Wen90}. 

%In our classification of quantum phases, the notion of topology arises in the real physical space, while topological structures of a system are fixed since a system is a two-dimensional torus due to the periodic boundary conditions. Another important feature of STS models is that the model is constructed on lattices of spins, and the notion of topology arises only at the thermodynamic limit. This implies a strong connection between topology and second-order QPTs as we shall discuss in Section~\ref{sec:2D33}. 

\textbf{Summary and application:} We have shown that $H_{A}$ and $H_{B}$ belong to different quantum phases when geometric shapes of their logical operators are different. Though discussions have been limited to the cases where $H_{A}$ has two pairs of anti-commuting logical operators and $H_{B}$ has two pairs of anti-commuting logical operators, our analysis can be readily generalized to arbitrary two-dimensional STS models. 

Let us consider the case when $H_{A}$ and $H_{B}$ have various numbers of pairs of anti-commuting logical operators. In particular, let us denote the numbers of logical qubits as $k^{(A)}$ and $k^{(B)}$, and the numbers of zero-dimensional logical operators $k^{(A)}_{0}$ and $k^{(B)}_{0}$ with $k^{(A)}_{1} = k^{(A)} - k^{(A)}_{0}$ and $k^{(B)}_{1} = k^{(B)} - k^{(B)}_{0}$:
\begin{center}
\begin{tabular}{ccc}
          &  (1 dim)-(1 dim) & (0 dim)-(2 dim)  \\ \hline
$H_{A}$   &  $k^{(A)}_{1}$  & $k^{(A)}_{0}$ \\ 
$H_{B}$   &  $k^{(B)}_{1}$  & $k^{(B)}_{0}$ 
\end{tabular}
\end{center}
Then, two STS models belong to different quantum phases when $k^{(A)}_{1} \not= k^{(B)}_{1}$ or $k^{(A)}_{0}\not=k^{(B)}_{0}$. 

Here, we make some comments on the cases where two STS models have the same geometric shapes of logical operators: $k^{(A)}_{0} = k^{(B)}_{0}$ and $k^{(A)}_{1} = k^{(B)}_{1}$. Since topological properties of STS models, including topological entanglement entropy and anyonic excitations, can be completely characterized by one-dimensional logical operators, it is natural to expect that $H_{A}$ and $H_{B}$ belong to the same quantum phases. In fact, in Section~\ref{sec:2D13}, we introduced the model which has four pairs of anti-commuting logical operators and showed that the model is equivalent to a model where two copies of the Toric code are embedded in a non-interacting way. In this case, two STS models can be connected by some parameterized Hamiltonian without closing an energy gap since two models can be transformed each other only through local unitary transformations acting on neighboring composite particles. 

At this moment, we do not have a mathematical proof to the statement that $H_{A}$ and $H_{B}$ belong to the same quantum phases when geometric shapes of their logical operators are the same. However, we believe this statement is reasonable since any existing characterization of topological order cannot distinguish two STS models with logical operators whose geometric shapes are the same. Therefore, we conjecture that two STS models belong to the same quantum phases if and only if geometric shapes of their logical operators are the same: $k^{(A)}_{0} = k^{(B)}_{0}$ and $k^{(A)}_{1} = k^{(B)}_{1}$.

\begin{figure}[htb!]
\centering
\includegraphics[width=0.7\linewidth]{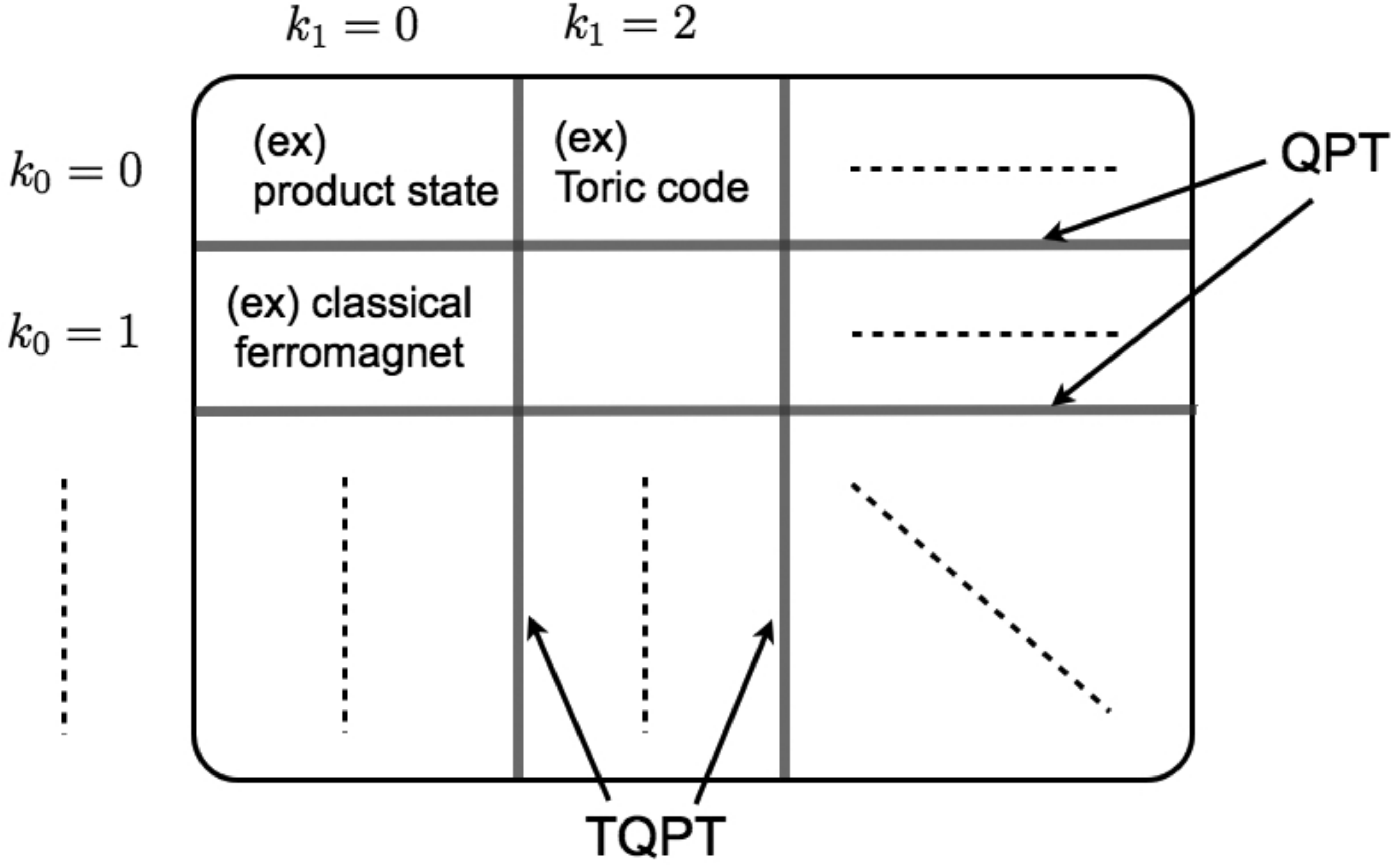}
\caption{A ``phase diagram'' of two-dimensional STS models.
} 
\label{fig_2D_phase}
\end{figure}

Here, we summarize the main result of this section (Fig.~\ref{fig_2D_phase}).

\begin{itemize}
\item Different quantum phases in two-dimensional STS models can be characterized by not only the number of logical operators, but also geometric shapes of logical operators. There must be a QPT when there is a topologically prohibited change in geometric shapes of logical operators between two STS models. Geometric shapes of logical operators can be used as order parameters to distinguish quantum phases arising in two-dimensional STS models.
\end{itemize}

Note that a QPT with a change of the numbers of one-dimensional logical operators involves the emergence or the loss of topological order. Such QPTs are called topological QPT (TQPT). In Fig.~\ref{fig_2D_phase}, a QPT which crosses the boundary in the lateral direction is a topological QPT, while a QPT which crosses the boundary in the vertical direction is a non-topological QPT.

Now, let us look at some examples. As a result of our analysis on geometric shapes of logical operators, one can classify quantum phases arising in STS models in the following way:
\begin{align}
&H_{A} \ = \sum_{i,j} \begin{bmatrix}
Z     & X  \\
X     & Z  
\end{bmatrix}^{(i,j)}    &(k_{0} \ = \ 0, \ k_{1} \ = \ 2) \notag \\
\not \sim \ &H_{B} \ = \ - \sum_{i,j} X^{(i,j)} \ \sim \ H_{B}' \ = \sum_{i,j} \begin{bmatrix}
      & Z &   \\
Z     & X & Z \\
      & Z
\end{bmatrix}^{(i,j)} &(k_{0} \ = \ 0, \ k_{1} \ = \ 0) \notag \\
\not \sim \ &H_{C} \ = \ - \sum_{i,j} Z^{(i,j)}Z^{(i,j+1)} - \sum_{i,j} Z^{(i,j)}Z^{(i+1,j)} &(k_{0} \ = \ 1, \ k_{1} \ = \ 0) \notag \\
\not \sim \ &H_{D} \ = \ - \sum_{i,j} 
\begin{bmatrix}
      & Z &  \\
X      & X  & X \\
       & Z & 
\end{bmatrix}^{(i,j)} \ \sim \
H_{D}' \ = \ - \sum_{i,j} 
\begin{bmatrix}
      & Z &  \\
X      &  & X \\
       & Z & 
\end{bmatrix}^{(i,j)} &(k_{0} \ = \ 0, \ k_{1} \ = \ 4) \notag .
\end{align}
Here, $H_{A}$ is the Toric code, $H_{B}$ is a Hamiltonian supporting a single product state, $H_{B}'$ is a Hamiltonian supporting a two-dimensional cluster state, $H_{C}$ is a classical ferromagnet and $H_{D}$ and $H_{D}'$ are Hamiltonians discussed in Section~\ref{sec:2D1}. $H_{A}$ and $H_{A}'$ have two pairs of anti-commuting one-dimensional logical operators, $H_{B}$ and $H_{B}'$ have no logical operators, $H_{C}$ has a pair of zero-dimensional and two-dimensional logical operators and $H_{D}$ and $H_{D}'$ have four pairs of anti-commuting one-dimensional logical operators.

\subsubsection{Topological deformation of logical operators}\label{sec:2D32}

While we have seen that topological structures of logical operators play crucial roles in the classification of quantum phases, our discussions have been limited to some specific representations of logical operators. However, logical operators have many equivalent representations and their geometric shapes may not be uniquely determined. We conclude the discussion of this paper by expanding our analyses on topological structures of logical operators to a set of equivalent logical operators. In particular, we show that the notion of continuous deformation naturally arises in geometric shapes of a set of equivalent logical operators in two-dimensional STS models. We note that a similar analysis on geometric shapes of a set of equivalent logical operators in the Toric code was presented in~\cite{Beni10}.

We begin by analyzing geometric shapes of two-dimensional logical operators $r_{p}$ ($1 \leq p \leq k_{0}$). Let us recall that $r_{p}$ anti-commutes with a zero-dimensional logical operator $\ell_{p}$. Here, the translations of $\ell_{p}$ are equivalent to $\ell_{p}$ due to the translation equivalence of logical operators (Theorem~\ref{theorem_main}). Then, we notice that all the logical operators which are equivalent to $r_{p}$ must have supports on every composite particle since $r_{p}$ needs to have some overlaps with all the translations of $\ell_{p}$ in order for $r_{p}$ to anti-commute with $\ell_{p}$. With this observation, we conclude that two-dimensional logical operator $r_{p}$ can be defined only inside a two-dimensional region, with supports on all the composite particles.

Next, let us analyze one-dimensional logical operators $\ell_{p}$ and $r_{p}$ ($k_{0}+1 \leq p \leq k$). In particular, we show that one-dimensional logical operators cannot be defined inside any zero-dimensional region $U_{a,b}$ with $a < n_{1}$ and $b < n_{2}$. In order to show this, it is sufficient to prove that one-dimensional logical operators cannot be defined inside a region with $n_{1}-1 \times n_{2}-1$ composite particles. Suppose that $\ell_{p}$ can be defined inside a region with $n_{1}-1 \times n_{2}-1$ composite particles. Then, since $r_{p}$ can be defined inside a region with $1 \times n_{2}$ composite particles, there exists a translation of $r_{p}$ which does not have an overlap with $\ell_{p}$. Then, due to the translation equivalence of logical operators, $\ell_{p}$ and $r_{p}$ commute with each other. However, this contradicts with the fact that $\ell_{p}$ and $r_{p}$ anti-commute with each other. Through similar discussions, we can show that any one-dimensional logical operators cannot be defined inside zero-dimensional regions.

So far, we have seen that dimensions of logical operators have topologically invariant meanings where $m$-dimensional logical operators cannot be defined inside $(m-1)$-dimensional regions ($m = 1,2$). Now, let us discuss the changes of geometric shapes of logical operators. For this purpose, let us fix some notations. We call a region with $n_{1}\times 1$ composite particles $Q(1)$ and a region with $1\times n_{2}$ composite particles $Q(2)$. 

A useful observation regarding the topological structures of logical operators can be obtained by considering the number of independent logical operators $g_{A}$ which can be defined inside a region $A$. Let us consider the case where we have two regions $A$ and $A'$ where $A$ is larger than $A'$, meaning that $A$ includes all the composite particles inside $A'$. Then, if $g_{A}=g_{A'}$, one can deform all the logical operators defined inside $A$ into $A'$, since logical operators defined inside $A$ must have equivalent representations supported inside $A'$. With this observation in mind, the following theorem is central in characterizing topological structures of a set of equivalent logical operators.

\begin{theorem}[Topological deformation of logical operators]\label{lemma_number}
For the numbers of independent logical operators in two-dimensional STSs, we have the following equations:
\begin{align}
g_{\bar{P}_{1,1}}\ = g_{Q(1)\cup Q(2)} \quad g_{\overline{Q(1)\cup Q(2)}}\ = g_{P_{1,1}} \quad g_{Q(1)}\ =\ g_{\bar{Q}(1)} \ =  k \ \quad g_{Q(2)}\ =\ g_{\bar{Q}(2)} \ = \ k.
\end{align}
\end{theorem}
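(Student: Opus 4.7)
The plan is to reduce every equality to two inputs: the explicit canonical set of logical operators described above (with $k=k_0+k_1$ split into zero/two-dimensional pairs and one-dimensional pairs), and the fact established at the start of this subsection that a two-dimensional logical operator $r_p$ with $p\le k_0$ has support on every composite particle in any equivalent representation. Combined with the bi-partitioning theorem $g_A+g_{\bar{A}}=2k$, each lower bound produced from exhibited representatives will be matched by a complementary upper bound.

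First I will handle $g_{Q(1)}=g_{\bar{Q}(1)}=k$. For the lower bound on $g_{Q(1)}$, place the $k_0$ zero-dimensional operators $\ell_p$ (which already live in $U_{2v,1}\subset Q(1)$) together with the $k_1$ row-type operators $r_p$ for $p>k_0$ inside $Q(1)$, giving $k$ independent representatives. For the lower bound on $g_{\bar{Q}(1)}$, use the translation equivalence of logical operators to shift every one of these operators vertically into a row different from $Q(1)$. The bi-partitioning theorem then forces both quantities to equal $k$, and an identical argument with $\hat{1}$ and $\hat{2}$ swapped yields $g_{Q(2)}=g_{\bar{Q}(2)}=k$.

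The remaining equalities $g_{\bar{P}_{1,1}}=g_{Q(1)\cup Q(2)}$ and $g_{\overline{Q(1)\cup Q(2)}}=g_{P_{1,1}}$ are already equivalent via bi-partitioning, so it suffices to establish the first by showing both sides equal $k+k_1$. The lower bound for $Q(1)\cup Q(2)$ comes from placing the $k_0$ zero-dimensional $\ell_p$'s in $Q(1)$, the $k_1$ one-dimensional column operators $\ell_p$ in $Q(2)$, and the $k_1$ one-dimensional row operators $r_p$ in $Q(1)$, for a total of $k_0+2k_1=k+k_1$. For $\bar{P}_{1,1}$ a parallel tally works after translating each of these three families so as to avoid the single excluded composite particle (a unit horizontal or vertical shift suffices because $n_1,n_2>2v$). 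The matching upper bound for both regions uses the full-support property: since each $r_p$ with $p\le k_0$ must have support on every composite particle, it has no representative inside $Q(1)\cup Q(2)$ (whose complement is non-empty) and none inside $\bar{P}_{1,1}$; this removes exactly $k_0$ of the $2k$ canonical generators from both counts, capping them at $k+k_1$.

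Applying bi-partitioning once more yields $g_{P_{1,1}}=g_{\overline{Q(1)\cup Q(2)}}=k_0$, completing all four equalities. The one point requiring care will be verifying that the translated representatives actually land inside the claimed regions: because the canonical zero-dimensional $\ell_p$ occupies the $2v\times 1$ strip $U_{2v,1}$, the bookkeeping relies on the standing hypothesis $n_1,n_2>2v$ from the canonical-form theorem so that small shifts place $\ell_p$ strictly inside $\overline{Q(1)\cup Q(2)}$ and inside $\bar{P}_{1,1}$. Beyond this, no genuine obstacle appears, since all of the topological input is already packaged in the canonical form and in the full-support property of two-dimensional logical operators established immediately above.
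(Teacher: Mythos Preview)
Your proof is correct and follows essentially the same strategy as the paper's own proof: exhibit representatives from the canonical set (using translation equivalence to move them into the desired regions) to obtain lower bounds on $g_R$, invoke the full-support property of the two-dimensional logical operators $r_p$ ($p\le k_0$) to obtain the matching upper bounds, and then tie everything together with the bi-partitioning theorem $g_A+g_{\bar A}=2k$. Your write-up is in fact more explicit than the paper's---you spell out the translated representatives and flag the $n_1,n_2>2v$ hypothesis needed to fit the $2v\times 1$ zero-dimensional operators into the complement of the cross---whereas the paper simply asserts the counts and appeals to Theorem~1. Your value $k+k_1=k_0+2k_1$ for $g_{\bar P_{1,1}}=g_{Q(1)\cup Q(2)}$ agrees with the counting; the paper writes $2k_0+k_1$ there, which appears to be a typographical slip.
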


\begin{figure}[htb!]
\centering
\includegraphics[width=0.80\linewidth]{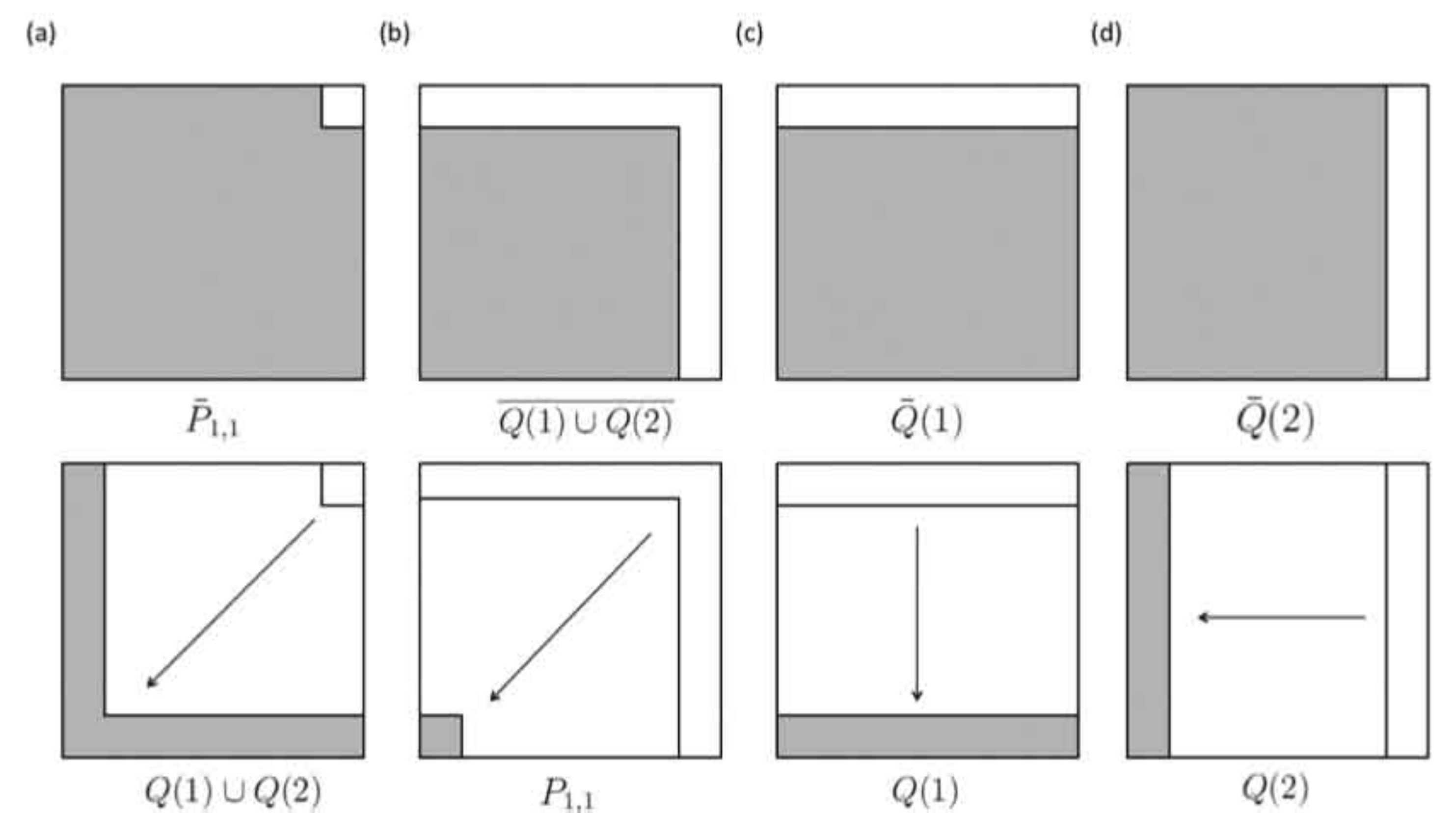}
\caption{(a) A deformation of logical operators from $\bar{P}_{1,1}$ to $Q(1)\cup Q(2)$. (b) $\overline{Q(1)\cup Q(2)}$ and $P_{1,1}$. (c) $\bar{Q}(1)$ and $Q(1)$. (d) $\bar{Q}(2)$ and $Q(2)$.
} 
\label{fig_deformation}
\end{figure}

Here, let us interpret the meaning of the theorem. Let us begin by analyzing $g_{\bar{P}_{1,1}} = g_{Q(1)\cup Q(2)}$ (Fig.~\ref{fig_deformation}(a)). $\bar{P}_{1,1}$ and $Q(1)\cup Q(2)$ are the regions with windings in the $\hat{1}$ and $\hat{2}$ directions, and one can deform a logical operator as long as we do not break the windings in geometric shapes of a logical operator. Next, let us analyze $g_{\overline{Q(1)\cup Q(2)}} = g_{P_{1,1}}$ (Fig.~\ref{fig_deformation}(b)). Both $\overline{Q(1)\cup Q(2)}$ and $P_{1,1}$ are one-dimensional regions without any winding, and one can deform a logical operator defined inside $\overline{Q(1)\cup Q(2)}$ until it becomes a point $P_{1,1}$. Finally, let us analyze $g_{Q(1)} = g_{\bar{Q}(1)}$ (Fig.~\ref{fig_deformation}(c)). Both $Q(1)$ and $\bar{Q}(1)$ have a winding in the $\hat{1}$ direction, but do not have a winding in the $\hat{2}$ direction, and one can deform a logical operator as long as we do not break the winding in the $\hat{1}$ direction. A similar discussion holds for $g_{Q(2)} = g_{\bar{Q}(2)}$ (Fig.~\ref{fig_deformation}(d)). The discussions so far can be summarized in the following way.

\begin{itemize}
\item One can deform the geometric shapes of logical operators continuously while they remain equivalent in the following ways:
\begin{align}
\bar{P}_{1,1} \ \rightarrow Q(1)\cup Q(2) \qquad \overline{Q(1)\cup Q(2)}\ \rightarrow \ P_{1,1} \qquad \bar{Q}(1)\ \rightarrow \ Q(1) \qquad \bar{Q}(2) \ \rightarrow \ Q(2) \notag
\end{align}
where, in ``$R \rightarrow R'$'', if a logical operator $\ell$ defined inside $R$ is given, then there exists a logical operator $\ell'$ defined inside $R'$.
\end{itemize}

\textbf{Proof of theorem~\ref{lemma_number}:} One can easily prove four equations through a simple counting of the number of logical operators with Theorem~\ref{theorem_partition}. Let us start with the proof of $g_{\bar{P}_{1,1}}\ = g_{Q(1)\cup Q(2)}$. Since two-dimensional logical operators cannot be defined inside $\bar{P}_{1,1}$, we have $g_{\bar{P}_{1,1}}=2k_{0} + k_{1}$. Also, since all the one-dimensional and zero-dimensional logical operators are defined inside $Q(1)\cup Q(2)$, we have $g_{Q(1) \cup Q(2)}=2k_{0} + k_{1}$, and thus $g_{\bar{P}_{1,1}}\ = g_{Q(1)\cup Q(2)}$. Now, since $g_{A} + g_{B} = 2k$ for $B = \bar{A}$ for any region $A$, $g_{\bar{P}_{1,1}}\ = g_{Q(1)\cup Q(2)}$ leads to $g_{\overline{Q(1)\cup Q(2)}}\ = g_{P_{1,1}}$. Also, by setting $A = Q(1)$, we have  $g_{Q(1)}\ =\ g_{\bar{Q}(1)} \ =  k$. This completes the proof.\hfill \qedsymbol \\ 

As a result of theorem~\ref{lemma_number}, we notice that there exist topologically invariant properties which are commonly shared by a set of equivalent logical operators. The theorem essentially states that one can deform a geometric shape of a given logical operators freely while keeping it equivalent as long as we do not change its topological structure. We call this property of logical operators in two-dimensional STS models \textbf{topological deformation of logical operators}.\footnote{While our discussions have been limited to two-dimensional stabilizer Hamiltonians, logical operators in several stabilizer Hamiltonians constructed in higher-dimensional systems also have the similar property. For example, generalizations of the Toric code defined on $D$-dimensional torus ($D>2$) have $m$-dimensional and $D-m$-dimensional logical operators~\cite{Dennis02, Takeda04}, and they obey the topological deformation of logical operators.} Therefore, the notion of topology arises naturally in geometric shapes of logical operators in STS models through topological deformation of logical operators.

\section{Summary, open questions and outlook}\label{sec:open_question}

In this paper, we have discussed possible quantum phases and their classifications in two-dimensional spin systems on a lattice. Our results may be summarized in the following three points. 

\begin{itemize}
\item \textbf{STS model:} We have proposed a model of frustration-free Hamiltonians which covers a large class of physically realistic stabilizer Hamiltonians which are constrained to only three physical conditions; the locality of interactions, translation symmetries and scale symmetries.

\item \textbf{Exact solution of the model:} As a key to the analyses on physical properties of STS models, we have found possible forms of logical operators and their geometric shapes completely. Then, we have characterized topological order arising in the model by geometric shapes of logical operators.

\item \textbf{Quantum phase transitions and logical operators:} We have shown that different quantum phases in STS models can be characterized by geometric shapes of logical operators. We have shown that the existence of a QPT results from non-analytic changes of geometric shapes of logical operators. 
\end{itemize}

Since possible geometric shapes of logical operators in two-dimensional STS models are pairs of zero-dimensional and two-dimensional logical operators as in a classical ferromagnet, or pairs of one-dimensional logical operators as in the Toric code, we conjecture that the model can be reduced to a system where classical ferromagnets and the Toric code are embedded in a non-interacting way. However, at this moment, we do not have a proof for it, and would like to leave it as an open question for the future.

Our construction of STS models and classification of quantum phases based on geometric shapes of logical operators can likely be broadened in many ways. 

\textbf{Qudit stabilizer codes and non-abelian quantum phases}: 
While we have studied only the systems of qubits (spin $1/2$ particles), systems of qudits (particles with larger spins) can provide rich varieties of quantum phases. There exists an extension of stabilizer codes to systems of qudits by the use of some generalization of Pauli operators~\cite{Gottesman01}. It is possible to construct STS models based on this qudit stabilizer formalism. Then, our analyses on STS models may be readily generalized to STS models constructed on systems of qudits. 

Recently, topologically ordered systems with anyonic excitations whose braiding rule is characterized by a non-Abelian group have been gathering significant attentions since such systems with non-Abelian topological order may be used as a resource for realizing fault-tolerant quantum computation~\cite{Kitaev97}. Since the stabilizer formalism is based on a set of Pauli operators which commute with each other, anyonic excitations supported by stabilizer Hamiltonians obey a braiding rule characterized by an Abelian group, as in the Toric code. However, the Toric code has natural extensions, constructed based on not Abelian groups, but non-Abelian groups~\cite{Kitaev03}. These extensions of the Toric code, called the quantum double model, support non-Abelian anyons. In the quantum double model, there are operators which are analogous to logical operators in stabilizer codes. Thus, our analyses and classifications of quantum phases through geometric shapes of logical operators may also be applied to non-abelian topological phases too. 

Also, while we have limited our considerations to a QPT between a non-topologically ordered system and a topologically ordered system, one can consider a QPT between different topological phases. It may be also possible to capture such a QPT in terms of geometric shapes of logical operators. 

\textbf{Higher-dimensional STS models:}
We have analyzed two-dimensional STS models as a class of stabilizer codes which build on physically reasonable systems. However, correlated spin systems in higher-dimensional systems also gather significant attentions in quantum information science community, concerning two important open questions in quantum coding theory. 

One of the ultimate goals in quantum coding theory is to create a self-correcting quantum memory~\cite{Bravyi09}. A self-correcting quantum memory is an idealistic memory which would correct errors by itself due to the large energy barrier separating degenerate ground states. If such a memory could exist, it will be a perfect quantum information storage device which may be used commercially in the future. Though there have been significant progresses in hypothetical constructions of a self-correcting quantum memory in a four-dimensional space~\cite{Dennis02, Pastawski09}, convincing proposals for its realization in three-dimensions have not appeared yet. It is shown that a self-correcting stabilizer codes cannot exist in two dimensions~\cite{Bravyi09}. Now, the feasibility of a self-correcting memory in three-dimensions is one of the most important and interesting open questions in quantum coding theory~\cite{Bacon06, Hamma09, Bombin09, Pastawski09}.

The feasibility of a self-correcting quantum memory is closely related to another important open question concerning the upper bound on the code distance of local stabilizer codes. It is commonly believed that the code distance of local stabilizer codes with $N$ qubits is upper bounded by $O(\sqrt{N})$ polynomially in the $N \rightarrow \infty$ limit. \footnote{We note that there exists a local stabilizer code whose code distance scales as $O(\sqrt{N}\log N)$~\cite{Freedman02}. Also, there is an example of three-dimensional stabilizer Hamiltonians without scale symmetries whose code distance may scale as $O(L^{2})$ for some specific choices of the system sizes~\cite{Bravyi10c}. However, no example of stabilizer codes has been found whose code distance surpass $O(\sqrt{N})$ ``polynomially'' regardless of the system size.} However, recently it have been shown that the code distance of local stabilizer codes is upper bounded by $O(L^{D-1})$ in $D$ dimensions where $L$ is a linear length of the stabilizer code \cite{Bravyi09}. Though this work has opened possibilities for the existence of a local stabilizer code whose code distance may exceed $O(\sqrt{N})$ polynomially, this bound is proven to be tight only for $D = 1,2$. The upper bound on the code distance of local stabilizer codes is also one of the important open questions in quantum coding theory.

Unfortunately, our analyses in this paper cannot give answers to these open questions since our discussions are limited to two dimensions. However, STS models may serve as a class of quantum codes which are physically realizable, and their analyses may provide partial answers toward these open questions~\cite{Beni11}.

\textbf{Scale symmetries and weak breaking of translation symmetries:}
The final problem we address is rather conceptual, but may have some fundamental importance. While the locality of interactions and translation symmetries are important physical constraints, it may not be obvious why scale symmetries are also important. Here, we mention the importance of scale symmetries in a relation with translation symmetries of ground states.

Topologically ordered systems are known to have degenerate ground states, and in analyzing topologically ordered systems, we wish to study properties of not only a single ground state, but the entire ground state space. In analyzing degenerate ground states of topologically ordered systems, a main challenge is the fact that translation symmetries may be broken. For example, consider a translation symmetric Hamiltonian which is invariant under unit translations of qubits. One might hope that this Hamiltonian would give rise to ground states which are also invariant under unit translations of qubits. However, this naive expectation is generally true only when there is a single ground state. There exist examples where degenerate ground states are invariant only under translations of several qubits, while the Hamiltonian is invariant under unit translations of qubits. This bizarre breaking of translation symmetries is observed commonly in topologically ordered systems, and is sometimes called a weak breaking of translation symmetries~\cite{Kitaev06b}. When translation symmetries are weakly broken, there will exist a ground state which differs from its own translation. Then, naturally, we wish to further coarse-grain the system so that all the ground states are invariant under unit translations of coarse-grained particles. 

Now, a naturally arising question is whether translation symmetries of STS models are broken or not. We have coarse-grained the system of qubits by introducing composite particles so that Hamiltonians are invariant under unit translations of composite particles. However, it is not clear whether ground states of STS models are also invariant under unit translations of composite particles or not. If translation symmetries of ground states are broken weakly, STS models must be further coarse-grained so that ground states restore translation symmetries with larger composite particles. 

Somewhat surprisingly, in STS models, all the degenerate ground states are always invariant under unit translation of composite particles, and translation symmetries are not broken weakly while STS models may have topological order. We shall give a proof for this statement in~\ref{sec:translation}. A key for the existence of translation symmetries of ground states is the existence of scale symmetries in stabilizer Hamiltonians. In fact, it is shown in~\ref{sec:translation} that for stabilizer Hamiltonians with translation symmetries, the existence of scale symmetries is the necessary and sufficient condition for translation symmetries of ground states. Then, scale symmetries may be the sufficient and necessary condition for the presence of translation symmetries in ground states in arbitrary gapped correlated spin systems. 

Currently, the importance of scale symmetries is not well appreciated, while the locality of interaction terms and translation symmetries of ground states are well respected as important constraints for physical realizability of system Hamiltonians. However, it will be interesting to see the roles of scale symmetries in various quantum many-body systems. In particular, the use of scale symmetries may enable us to have some deeper insights on the underlying mechanisms behind correlated spin systems, as we have succeeded in introducing the notion of topology into the discussions of logical operators thanks to scale symmetries. Possible applications of scale symmetries may include studies of computational complexities of Hamiltonian problems~\cite{Aharonov09}, RG algorithms based on the tensor product state representations~\cite{Vidal07, Gu08} and analyses on coding properties of quantum codes beyond stabilizer codes such as subsystem codes~\cite{Shor95, Bacon06, Aliferis07}.

\appendix

\section{Proof of the translation equivalence of logical operators (Theorem~\ref{theorem_main})}\label{sec:TE}

In this appendix, we give a proof of the translation equivalence of logical operators (Theorem~\ref{theorem_main}). In~\ref{sec:TE1}, we give a proof by using a certain lemma concerning properties of logical operators in STS models (lemma~\ref{lemma_1dim}). In~\ref{sec:TE2}, we give a proof of this lemma to complete the proof of Theorem~\ref{theorem_main}. In~\ref{sec:TE3}, an extension of this lemma is presented for the sake of later discussions in finding logical operators in two-dimensional STS models.

\subsection{Proof of the translation equivalence of logical operators}\label{sec:TE1}

We give a proof of the translation equivalence of logical operators.

\textbf{Translation equivalence of logical operators (Theorem~\ref{theorem_main}):}
For each and every logical operator $\ell$ in an STS model, a unit translation of $\ell$ with respect to composite particles in any direction is always equivalent to the original logical operator $\ell$:
\begin{align}
T_{m}(\ell)\ \sim\ \ell, \qquad \forall \ell\ \in\ \textbf{L}_{\vec{n}}  \qquad (m\ =\ 1, \cdots, D) 
\end{align}
where $\textbf{L}_{\vec{n}}$ is a set of all the logical operators for an STS model defined with the stabilizer group $\mathcal{S}_{\vec{n}}$.

\textbf{Proof of Theorem~\ref{theorem_main}:} It is sufficient to prove the theorem for translations in the $\hat{1}$ direction: $T_{1}(\ell)\sim \ell$. Here, we define the following $(D-1)$-dimensional regions (hyperplanes) (see Fig.~\ref{fig_lemma_1dim}):
\begin{align}
R(x)\  =\  \Big\{ \ P_{ r_{1}, \cdots, r_{m} } \ :\ r_{1}\ =\ x \quad \mbox{and} \quad \ 1\ \leq\ r_{m}\ \leq\ n_{m}\ \mbox{for}\ m\ \not= \ 1   \  \Big\}.
\end{align}
In total, there are $n_{1}$ of $(D-1)$-dimensional regions $R(x)$ for $x = 1, \cdots, n_{1}$. Then, the following lemma holds.

\begin{lemma}\label{lemma_1dim}
In an STS model defined with the stabilizer group $\mathcal{S}_{\vec{n}}$, there exists a canonical set of logical operators 
\begin{align}
\Pi(\mathcal{S}_{\vec{n}})\ =\ \left\{ 
\begin{array}{ccc}
\ell_{1} ,& \cdots ,& \ell_{k} \\
r_{1} ,& \cdots ,& r_{k}
\end{array}\right\}
\end{align}
such that $\ell_{p}$ are defined inside a $(D-1)$-dimensional region $R(1)$ ($p = 1, \cdots ,k$) while $r_{p}$ are defined over the entire lattice in a periodic way in the $\hat{1}$ direction:
\begin{align}
T_{1}(r_{p})\ =\ r_{p} \qquad (p\ =\ 1, \cdots ,k).
\end{align}
\end{lemma}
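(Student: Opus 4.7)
The proof of Lemma~\ref{lemma_1dim} splits into two parts: producing $k$ independent mutually commuting logical operators $\ell_{p}$ supported inside the hyperplane $R(1)$, and then constructing for each of them an anti-commuting partner $r_{p}$ satisfying $T_{1}(r_{p}) = r_{p}$ exactly. The plan is to carry these out in order.

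For the first part, I would use symplectic dimension counting. Writing $\mathcal{S}|_{R(1)}$ for the image of the full stabilizer group under projection onto $R(1)$ and $\mathcal{S}_{R(1)}$ for the subgroup of stabilizers fully supported in $R(1)$, the isotropy of $\mathcal{S}|_{R(1)}$ in the Pauli symplectic space on $R(1)$ gives $g_{R(1)} = 2V_{R(1)} - G(\mathcal{S}|_{R(1)}) - G(\mathcal{S}_{R(1)})$. I would compute both generator counts in terms of the local stabilizer group $\pi$ and its translates in the directions $\hat{2},\ldots,\hat{D}$, and then compare them with the corresponding counts for the folded stabilizer group $\mathcal{S}_{(1,n_{2},\ldots,n_{D})}$. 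Because scale symmetry forces $k_{(1,n_{2},\ldots,n_{D})}=k$, the accounting yields $g_{R(1)} \geq k$, after which a symplectic Gram--Schmidt procedure extracts $k$ mutually commuting independent logical operators $\ell_{p}$ in $R(1)$.

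For the second part, given $\ell_{p}\in R(1)$, the key observation is that $T_{1}^{x}(\ell_{p})$ is supported in $R(1+x)$, disjoint from $R(1)$ for $x \not\equiv 0 \pmod{n_{1}}$. Starting from any canonical anti-commuting partner $r'_{p}$ of $\ell_{p}$, I would form the symmetrized product $r_{p} := \prod_{x=0}^{n_{1}-1} T_{1}^{x}(r'_{p})$, which is $T_{1}$-invariant by construction. Because the supports of the distinct hyperplane translates of $\ell_{q}$ are disjoint, the commutator of $r_{p}$ with each $\ell_{q}$ localizes to a single hyperplane and reduces to the canonical relations for $r'_{p}$; in particular $r_{p}$ anti-commutes with $\ell_{p}$ alone. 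Stabilizer corrections needed to restore commutation with the other $r_{q'}$'s can be chosen as $T_{1}$-invariant products of translates of local stabilizer generators, preserving the $T_{1}$-invariance of $r_{p}$.

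The main obstacle is the lower bound $g_{R(1)} \geq k$ in the first part. Scale symmetry is genuinely indispensable here: without it, counterexamples such as the decoupled array of classical ferromagnets in Section~\ref{sec:model2} show that logical operators need not concentrate inside a single hyperplane. The delicate technical content is the bookkeeping relating $\mathcal{S}_{R(1)}$ and $\mathcal{S}|_{R(1)}$ to the folded local stabilizer group and closing the count using $k_{(1,n_{2},\ldots,n_{D})}=k$. Once Lemma~\ref{lemma_1dim} is in hand, Theorem~\ref{theorem_main} follows quickly: $T_{1}(\ell_{p})$ has the same symplectic pairings with the $T_{1}$-invariant $r_{q}$'s as $\ell_{p}$ does and commutes with every $\ell_{q}\in R(1)$ since its support lies in $R(2)$, forcing $T_{1}(\ell_{p}) \sim \ell_{p}$; the conclusion for other directions $\hat{m}$ is obtained by applying the lemma with $\hat{1}$ replaced by $\hat{m}$.
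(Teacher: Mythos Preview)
Your second step contains a genuine gap. Starting from an \emph{arbitrary} canonical partner $r'_{p}$ and forming $r_{p} := \prod_{x} T_{1}^{x}(r'_{p})$ does not give the commutation relations you claim. Since $\ell_{q}\subset R(1)$, the sign $\omega(r_{p},\ell_{q})$ equals $\sum_{x}\omega(T_{1}^{x}(r'_{p}),\ell_{q})=\sum_{x}\omega(r'_{p},T_{1}^{-x}(\ell_{q}))$. Each $T_{1}^{-x}(\ell_{q})$ is a logical operator supported in $R(1-x)$, but you have no control over $\omega(r'_{p},T_{1}^{-x}(\ell_{q}))$ for $x\neq 0$: translation equivalence is precisely what this lemma is being used to prove, so you cannot invoke it. Even granting translation equivalence, the sum would be $n_{1}\,\delta_{pq}$, which vanishes for even $n_{1}$. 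Your sentence ``reduces to the canonical relations for $r'_{p}$'' would be correct only if $r'_{p}$ were already supported in a single hyperplane, but an arbitrary canonical partner has no such localization.

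There is also a smaller gap in the first step: $g_{R(1)}\geq k$ alone does not let symplectic Gram--Schmidt extract $k$ mutually \emph{commuting} logical operators. If the logical operators in $R(1)$ contained an anti-commuting pair, the maximal isotropic subspace of their span could have dimension strictly less than $k$. The paper closes this by a separate argument: an anti-commuting pair $\ell',r'\subset R(1)$ would yield, via disjoint translates $T_{1}^{x}(\ell'),T_{1}^{x}(r')$, at least $n_{1}$ independent logical pairs for arbitrarily large $n_{1}$, contradicting scale symmetry. This forces all logical operators in $R(1)$ to commute, hence $g_{R(1)}\leq k$, and the bipartition theorem then gives equality.

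The paper's route to the periodic $r_{p}$ is also different from symmetrizing an existing partner. It bootstraps through small system sizes: at $n_{1}=1$ one has a canonical set $\{\bar Z_{p},\bar X_{p}\}$; at $n_{1}=3$ the tripled operators $\bar Z_{p}T_{1}(\bar Z_{p})T_{1}^{2}(\bar Z_{p})$ and $\bar X_{p}T_{1}(\bar X_{p})T_{1}^{2}(\bar X_{p})$ are centralizer elements with the correct pairwise anti-commutations (oddness of $3$ is essential), hence form a canonical set. Thus at $n_{1}=3$ every logical operator has a $T_{1}$-periodic representative; in particular the partners $r_{p}$ of the $\ell_{p}\subset R(1)$ can be taken periodic, so $r'_{p}:=r_{p}|_{R(1)}$ is a Pauli operator on a \emph{single} hyperplane. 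For general $n_{1}$ one then sets $r_{p}:=\prod_{x}T_{1}^{x}(r'_{p})$, and now the localization argument you wanted is valid: $r_{p}|_{R(1)}=r'_{p}$, so $\omega(r_{p},\ell_{q})=\omega(r'_{p},\ell_{q})=\delta_{pq}$. The seed $r'_{p}$ is not itself a logical operator; it is extracted from the periodic representative found at $n_{1}=3$.
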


The meaning of this lemma becomes clear when all the logical operators are graphically shown as in Fig.~\ref{fig_lemma_1dim}. Here, $r_{p}$ are represented with Pauli operators $r_{p}'$ acting on $R(1)$ in the following way:
\begin{align}
r_{p} \ = \ \prod_{x}T_{1}^{x}(r_{p}').
\end{align}
With this canonical set of logical operators, one can prove that all the logical operators remain equivalent under unit translations in the $\hat{1}$ direction. In fact, one can immediately see that logical operators $r_{p}$ satisfy the translation equivalence since they are periodic: $T_{1}(r_{p})=r_{p}$. One can also prove that $\ell_{p} \sim T_{1}(\ell_{p})$ by seeing that $\ell_{p}T_{1}(\ell_{p})$ commutes with all the logical operators listed in a canonical set $\Pi(\mathcal{S}_{\vec{n}})$ given in lemma~\ref{lemma_1dim}, and thus, $\ell_{p}T_{1}(\ell_{p})$ must be a stabilizer. By considering the fact that any logical operator can be represented as a product of $\ell_{1},\cdots, \ell_{k}$, $r_{1},\cdots, r_{k}$ and stabilizers in $\mathcal{S}_{\vec{n}}$, one obtains
\begin{align}
T_{1}(\ell)\ \sim\ \ell, \qquad \forall \ell\ \in\ \textbf{L}_{\vec{n}}.
\end{align}
This completes the proof of Theorem~\ref{theorem_main}.

\begin{figure}[htb!]
\centering
\includegraphics[width=0.40\linewidth]{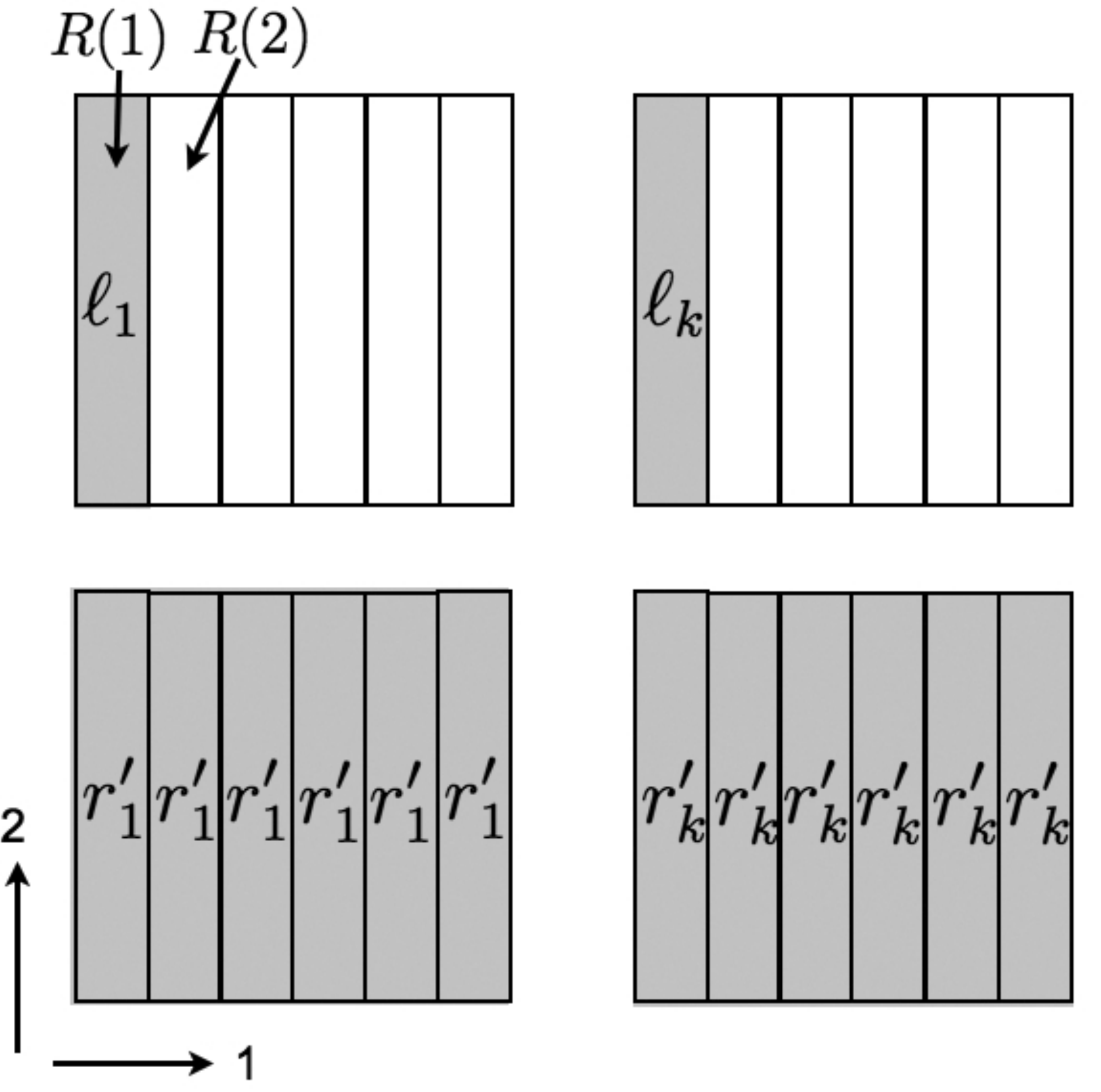}
\caption{A canonical set of logical operators. A two-dimensional example is illustrated ($D=2$). The entire lattice is separated into $(D -1)$-dimensional regions $R(x)$. (Here, $R(x)$ are one-dimensional regions for $D=2$). 
} 
\label{fig_lemma_1dim}
\end{figure}

\subsection{Proof of lemma~\ref{lemma_1dim}}\label{sec:TE2}

Now, we give a proof of lemma~\ref{lemma_1dim}. We begin our proof by proving the following lemma.

\begin{lemma}\label{lemma_independent}
Consider a stabilizer code defined with the stabilizer group $\mathcal{S}$, the centralizer group $\mathcal{C}$ and $k$ logical qubits. If there exist centralizer operators $\ell_{p}, r_{p} \in \mathcal{C}$ ($p=1, \cdots , a$) with $a \leq k$ which satisfy the following commutation relations
\begin{align}
\left\{ 
\begin{array}{ccc}
\ell_{1} ,& \cdots ,& \ell_{a} \\
r_{1} ,& \cdots ,& r_{a}
\end{array}\right\},
\end{align}
$\ell_{p}$ and $r_{p}$ are independent logical operators.
\end{lemma}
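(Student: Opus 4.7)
The plan is to show that any nontrivial word in the operators $\ell_{p}, r_{p}$ fails to lie in $\langle iI, \mathcal{S}\rangle$, by exhibiting an explicit element of $\mathcal{C}$ with which it anticommutes. Since stabilizers commute (by definition of $\mathcal{C}$) with every element of the centralizer, this contradiction forces the word to be a genuine logical operator, and the same mechanism simultaneously establishes independence of the generating set.

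Concretely, I would proceed as follows. Take an arbitrary product
\begin{align*}
U \ = \ \ell_{1}^{s_{1}} \cdots \ell_{a}^{s_{a}}\, r_{1}^{t_{1}} \cdots r_{a}^{t_{a}}, \qquad s_{p},t_{p} \in \{0,1\},
\end{align*}
with at least one exponent equal to $1$. From the canonical commutation relations, $U$ commutes with every $\ell_{q}$ and $r_{q}$ except in two cases: $U$ anticommutes with $r_{q}$ exactly when $s_{q}=1$, and $U$ anticommutes with $\ell_{q}$ exactly when $t_{q}=1$. Since by hypothesis at least one $s_{q}$ or $t_{q}$ equals $1$, there is a fixed $q$ such that $U$ anticommutes with either $\ell_{q}$ or $r_{q}$.

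Now suppose for contradiction that $U \in \langle iI, \mathcal{S} \rangle$; write $U = i^{m} S$ with $S \in \mathcal{S}$. Since a scalar phase does not affect commutation with any Pauli operator, $S$ has exactly the same commutation behavior with $\ell_{q}$ and $r_{q}$ as $U$ does. But $\ell_{q}, r_{q} \in \mathcal{C}$ commute with every element of $\mathcal{S}$ by definition of the centralizer, so $S$ commutes with $\ell_{q}$ and $r_{q}$, contradicting the anticommutation just established. Hence $U \notin \langle iI, \mathcal{S}\rangle$. Specializing to a single nonzero exponent shows that each $\ell_{p}$ and $r_{p}$ is itself a logical operator, and the general statement is precisely the independence condition $\langle \{\ell_{p}, r_{p}\}\rangle \cap \langle iI, \mathcal{S}\rangle = \{I\}$ required by the definition given in Section~\ref{sec:review2}.

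There is no serious obstacle here; the whole argument is a one-line commutation bookkeeping. The only point requiring a little care is remembering to pass from $\mathcal{S}$ to $\langle iI, \mathcal{S}\rangle$ when checking that $U$ is not merely a stabilizer up to phase, which is handled by the observation that phases are central and therefore invisible to commutator checks. No structural facts about the STS geometry enter the proof, which is why this lemma can then be used freely in the subsequent proof of Lemma~\ref{lemma_1dim} and thence of Theorem~\ref{theorem_main}.
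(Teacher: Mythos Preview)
Your argument is correct and is essentially the same as the paper's: both show that any nontrivial product of the $\ell_{p}$'s and $r_{p}$'s anticommutes with some $\ell_{q}$ or $r_{q}$ in $\mathcal{C}$ and therefore cannot lie in $\langle iI,\mathcal{S}\rangle$. The paper states this in one line (``if $\ell_{1}$ is included in the product, $r_{1}$ anti-commutes with the product''), whereas you have written out the bookkeeping more carefully, including the harmless passage through the phase factor.
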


In other words, if we find pairs of anti-commuting centralizer operators represented in a canonical form, they are guaranteed to be independent logical operators in the stabilizer code. The proof of lemma~\ref{lemma_independent} is immediate by seeing that any product of operators taken among $\ell_{p}$ and $r_{p}$ ($p=1, \cdots , a$) is not a stabilizer. For example, if $\ell_{1}$ is included in the product, $r_{1}$ anti-commutes with the product, and the product is not a stabilizer. Thus, these $2a$ logical operators $\ell_{p}$ and $r_{p}$ are independent. 

Now, we present a proof of lemma \ref{lemma_1dim}. The proof utilizes the fact that the number of logical qubits does not depend on the system size, and is equal to $k$ even when $n_{1}=1$.

\begin{figure}[htb!]
\centering
\includegraphics[width=0.55\linewidth]{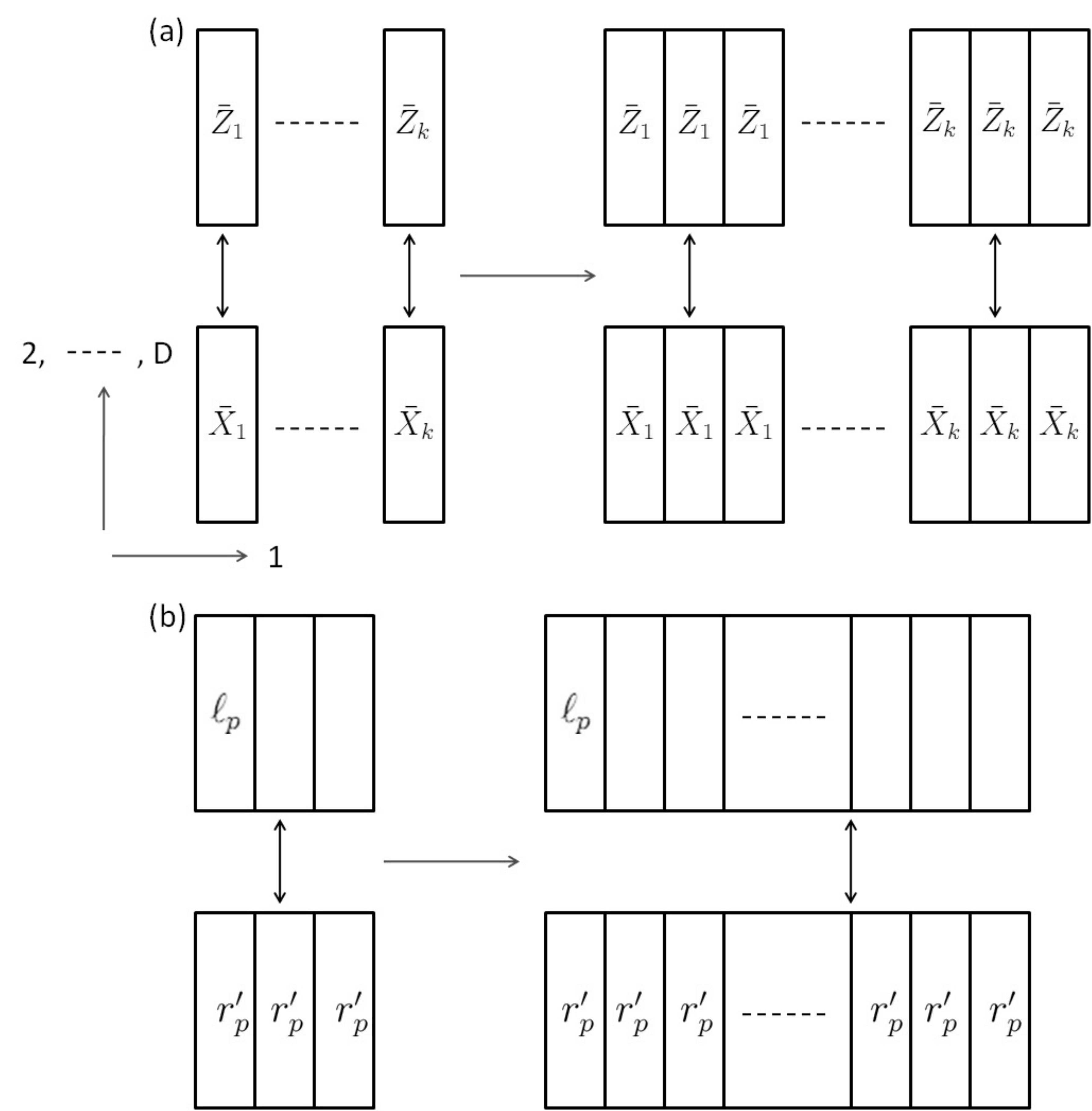}
\caption{(a) Canonical sets of logical operators for the cases with $n_{1}=1$ and $n_{1}=3$. A canonical set for $n_{1}=3$ can be constructed from a canonical set for $n_{1}=1$. (b) A canonical set of logical operators. A canonical set for arbitrary $n_{1}$ can be constructed from a canonical set for $n_{1}=3$.
} 
\label{fig_proof_aid1}
\end{figure}

\textbf{Proof of lemma~\ref{lemma_1dim}:}
Our proof starts with the cases where $n_{m}$ are fixed for $m \geq 2$. In the proof, we frequently change the value of $n_{1}$ while fixing $n_{m}$ for $m \geq 2$. We represent the stabilizer group for fixed $n_{m}$ for $m \geq 2$ as $\mathcal{S}'_{n_{1}} \equiv \mathcal{S}_{(n_{1},n_{2},\cdots,n_{D})}$. 

First, let us represent a canonical set of logical operators for the case with $n_{1}=1$ as follows:
\begin{align}
\Pi(\mathcal{S}_{1}')\ =\ \left\{
\begin{array}{ccc}
\bar{Z}_{1} ,& \cdots ,& \bar{Z}_{k} \\
\bar{X}_{1} ,& \cdots ,& \bar{X}_{k}
\end{array}\right\}
\end{align}
where $\bar{Z}_{p}$ and $\bar{X}_{p}$ are Pauli operators defined inside $R(1)$ for $p = 1, \cdots, k$. (Note that $R(1)$ represents the entire lattice since $n_{1}=1$). Next, let us consider the case with $n_{1}=3$. Then, we notice that following operators form a canonical set of logical operators (see Fig.~\ref{fig_proof_aid1}(a)):
\begin{align}
\Pi(\mathcal{S}_{3}')\ =\ \left\{
\begin{array}{ccc}
\bar{Z}_{1}T_{1}(\bar{Z}_{1})T_{1}^{2}(\bar{Z}_{1}) ,& \cdots ,& \bar{Z}_{k}T_{1}(\bar{Z}_{k})T_{1}^{2}(\bar{Z}_{k}) \\
\bar{X}_{1}T_{1}(\bar{X}_{1})T_{1}^{2}(\bar{X}_{1}) ,& \cdots ,& \bar{X}_{k}T_{1}(\bar{X}_{k})T_{1}^{2}(\bar{X}_{k})
\end{array} \right\}.
\end{align}
It is immediate to see that these $2k$ operators commute with all the stabilizers by considering the folding of stabilizer generators described in Section~\ref{sec:model2}. Since there are only $2k$ independent logical operators, a logical operator in this STS model is equivalent to some product of these $2k$ logical operators. Thus, for the case with $n_{1}=3$, any logical operator $\ell$ has a representation which is periodic in the $\hat{1}$ direction: $T_{1}(\ell)=\ell$. (This argument proves the translation equivalence of logical operators for the cases with odd $n_{1}$).

Let us continue to discuss the cases with $n_{1}=3$. We consider a bi-partition of the entire system into $R(1)$ and $R(2)\cup R(3)$. Then, from Theorem~\ref{theorem_partition}, we have
\begin{align}\label{eq:proof_lemma1_1}
g_{R(1)} + g_{R(2)\cup R(3)}\ =\ 2k.
\end{align}
Here, we show that all the logical operators defined inside $R(1)$ commute with each other by supposing that there exists a pair of anti-commuting logical operators $\ell'$ and $r'$ defined inside $R(1)$: $\{\ell', r \}=0$. Now, let us increase $n_{1}$. Then, $\ell'$ and $r'$ are also logical operators for the cases with $n_{1} > 3$. Since translations of $\ell'$ and $r'$ are also logical operators, we have $2n_{1}$ logical operators $T_{1}^{x}(\ell')$ and $T_{1}^{x}(r')$ with the following commutation relations
\begin{align}\left\{
\begin{array}{ccc}
T_{1}^{1}(\ell') ,& \cdots ,& T_{1}^{n_{1}}(\ell') \\
T_{1}^{1}(r')    ,& \cdots ,& T_{1}^{n_{1}}(r')  
\end{array}
\right\}
\end{align}
for $x = 1 , \cdots ,n_{1}$. From lemma~\ref{lemma_independent}, these $2n_{1}$ logical operators are independent, and we have $k_{\vec{n}} \geq 2n_{1}$. However, this contradicts with the fact that the number of logical qubits $k_{\vec{n}}$ does not depend on $\vec{n}$. Thus, all the logical operators defined inside $R(1)$ must commute with each other. 

A similar discussion holds for logical operators defined inside $R(2)\cup R(3)$. If there exists a pair of anti-commuting logical operators defined inside $R(2)\cup R(3)$, one can create a large number of independent logical operators for the cases with large $n_{1}$. Thus, all the logical operators defined inside $R(2)\cup R(3)$ commute with each other. 

Since there are at most $k$ independent logical operators which commute with each other, we have $g_{R(1)} \leq k$ and $g_{R(2)\cup R(3)} \leq k$. Then, from Eq.~(\ref{eq:proof_lemma1_1}), we have 
\begin{align}
g_{R(1)}\ =\ g_{R(2)\cup R(3)}\ =\ k.
\end{align}

Now, let us represent the canonical set of logical operators for $n_{1}=3$ as follows:
\begin{align}
\Pi(\mathcal{S}_{3}')\ =\ \left\{
\begin{array}{ccc}
\ell_{1} ,& \cdots ,& \ell_{k} \\
r_{1}    ,& \cdots ,& r_{k}
\end{array}\right\}
\end{align}
where $\ell_{p}$ is a logical operator defined inside $R(1)$ for $p = 1, \cdots ,k$. Since all the logical operators for $n_{1}=3$ have representations which are periodic in the $\hat{1}$ direction, one can choose $r_{p}$ such that 
\begin{align}
T_{1}(r_{p})\ =\ r_{p} \qquad (p\ =\ 1,\cdots,k).
\end{align}
Thus, we have proven lemma~\ref{lemma_1dim} for the case with $n_{1}=3$  (Fig.~\ref{fig_proof_aid1}(b)). 

Here, we represent $r_{p}$ as $r_{p}=r_{p}'T_{1}(r_{p}')T_{1}^{2}(r_{p}')$ for $n_{1}=3$ where $r_{p}'$ is a Pauli operator acting on $R(1)$. Then, by modifying the definition of $r_{p}$ a little, one can obtain a canonical set of logical operators for any $n_{1}$. Here, we redefine $r_{p}$ for arbitrary $n_{1}$ as  (Fig.~\ref{fig_proof_aid1}(b)) 
\begin{align}
r_{p}\ =\ \prod_{x=1}^{n_{1}} T_{1}^{x}(r_{p}').
\end{align}
We note that this new definition of $r_{p}$ includes the previous definition of $r_{p}$ originally given only for the case with $n_{1}=3$. Then, we obtain the canonical set of logical operators for arbitrary $n_{1}$:
\begin{align}
\Pi(\mathcal{S}_{n_{1}}')\ =\ \left\{ 
\begin{array}{ccc}
\ell_{1} ,& \cdots ,& \ell_{k} \\
r_{1} ,& \cdots ,& r_{k}
\end{array}\right\}.
\end{align}
This canonical set of logical operators has the form described in lemma~\ref{lemma_1dim}. Though we started our discussion for fixed $n_{m}$ for $m \geq 2$, one can apply the same discussion for any $\vec{n}$. This completes the proof of lemma~\ref{lemma_1dim}.

\subsection{Extension of lemma~\ref{lemma_1dim}}\label{sec:TE3}

One can also extend lemma~\ref{lemma_1dim} and obtain the following lemma, which will be useful in the later discussion in computing logical operators in two-dimensional STS models. In the below, $Q(1)$ and $Q(2)$ denote regions of $n_{1}\times 1$ and $1 \times n_{2}$ composite particles.

\begin{lemma}\label{lemma_1dim_strong}
\begin{enumerate}
\item Let the number of independent logical operators defined inside a subset of qubits $A$ be $g_{A}$. For a two-dimensional STS model defined with the stabilizer group $\mathcal{S}_{n_{1},n_{2}}$ and $k$ logical qubits, all the logical operators defined inside $\overline{Q(1)}$ can be defined inside $Q(1)$ when $n_{2} \geq 2$:
\begin{align}
g_{ \overline{Q(1)} } \ =\ g_{Q(1)}\ =\ k.
\end{align}
\item Consider an arbitrary set of independent logical operators $\{ \ell_{1}, \cdots, \ell_{k} \}$ which are defined inside $Q(1)$. Then, there exists a canonical set of logical operators $\Pi( \mathcal{S}_{n_{1},n_{2}} )$ which includes $\{ \ell_{1}, \cdots, \ell_{k}\}$: $\{ \ell_{1}, \cdots, \ell_{k}\} \subset \Pi(\mathcal{S}_{n_{1},n_{2}})$ in such a way that
\begin{align}
\Pi( \mathcal{S}_{n_{1},n_{2}} )\ =\ \left\{
\begin{array}{ccc}
\ell_{1} ,& \cdots  ,& \ell_{k} \\
r_{1}    ,& \cdots  ,& r_{k}
\end{array}\right\}
\end{align}
and $r_{1}, \cdots , r_{k}$ are periodic: 
\begin{align}
T_{2}(r_{p})\ =\ r_{p}.
\end{align}
\end{enumerate}
\end{lemma}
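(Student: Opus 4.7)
The plan is to obtain both parts from lemma~\ref{lemma_1dim} (applied with the roles of $\hat{1}$ and $\hat{2}$ interchanged), Theorem~\ref{theorem_partition}, lemma~\ref{lemma_independent}, and the same scale-symmetry argument used inside the proof of lemma~\ref{lemma_1dim}.

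For Part 1, I would first invoke lemma~\ref{lemma_1dim} after swapping the two coordinate directions. Since $D=2$, the hyperplane $R(1)$ of that lemma is precisely a row of composite particles, so the lemma produces a canonical set $\{\tilde\ell_p,\tilde r_p\}_{p=1}^{k}$ with $\tilde\ell_p \in Q(1)$ and $T_2(\tilde r_p)=\tilde r_p$, whence $g_{Q(1)}\geq k$. For the matching upper bound $g_{Q(1)}\leq k$, suppose $\ell',r'\in Q(1)$ are anticommuting logical operators. Because $n_2\geq 2$, the translates $T_2^{y}(Q(1))$ for $y=0,\dots,n_2-1$ are pairwise disjoint rows, so the $2n_2$ operators $T_2^{y}(\ell'),\ T_2^{y}(r')$ realize the commutation pattern of lemma~\ref{lemma_independent} and are therefore independent logical operators. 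This forces $k\geq 2n_2$, contradicting scale invariance once $n_2$ is taken large enough; the argument is a direct transcription of the scale-symmetry step in the proof of lemma~\ref{lemma_1dim}. Hence all logical operators supported in $Q(1)$ commute pairwise, $g_{Q(1)}\leq k$, and together with the lower bound $g_{Q(1)}=k$. Applying Theorem~\ref{theorem_partition} to the bipartition $(Q(1),\overline{Q(1)})$ then yields $g_{\overline{Q(1)}}=2k-g_{Q(1)}=k$.

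For Part 2, I would adjust the canonical set $\{\tilde\ell_p,\tilde r_p\}$ by a change of basis over $\mathbb{F}_2$. Because every logical operator supported in $Q(1)$ commutes with every $\tilde\ell_{p'}\in Q(1)$, its expansion in the canonical basis cannot contain any $\tilde r_{p'}$ factor (such a factor would anticommute with $\tilde\ell_{p'}$). Thus each given $\ell_q$ can be written as $\ell_q \sim \prod_p \tilde\ell_p^{\alpha_{qp}}$ for a matrix $\alpha\in\mathbb{F}_2^{k\times k}$, and $\alpha$ is invertible by the assumed independence of $\{\ell_1,\dots,\ell_k\}$. Defining $r_q \equiv \prod_p \tilde r_p^{\gamma_{qp}}$ with $\gamma=(\alpha^{T})^{-1}$, a direct computation using the canonical commutation relations gives $\{\ell_q,r_{q'}\}=0\iff q=q'$, so by lemma~\ref{lemma_independent} the pairs $\{\ell_q,r_q\}$ form a canonical set; each $r_q$ inherits $T_2$-periodicity from the $\tilde r_p$'s, as required.

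The main obstacle is the bookkeeping behind the scale-symmetry step in Part 1: one must verify that an anticommuting pair $\ell',r'$ hypothesized inside $Q(1)$ at size $\vec n$ persists as an anticommuting pair of \emph{logical} (not stabilizer) operators when $n_2$ is enlarged. Locality of the generators of $\pi$ ensures that the stabilizers whose supports intersect $Q(1)$ have the same restriction to $Q(1)$ across different sizes (only which translates wrap around changes, not their local action on $Q(1)$), so the commutation of $\ell',r'$ with every stabilizer is preserved; the non-stabilizer property transfers by the same reasoning, since a stabilizer acting entirely on a single row would equally be a stabilizer in the smaller code. This is precisely the subtlety handled in the proof of lemma~\ref{lemma_1dim}, so importing it here is legitimate but requires care.
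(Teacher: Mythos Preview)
Your proposal is correct and follows exactly the route the paper itself indicates: the paper does not give an explicit proof, stating only that ``the proof can be obtained with a little modification to the proof of lemma~\ref{lemma_1dim}'', and your argument is precisely such a modification---invoking lemma~\ref{lemma_1dim} with the directions swapped, rerunning the scale-symmetry contradiction to bound $g_{Q(1)}$, and then appealing to Theorem~\ref{theorem_partition}. Your $\mathbb{F}_2$ basis-change argument for Part~2 is the natural way to upgrade from ``some canonical set'' to ``a canonical set containing any prescribed commuting $k$-tuple in $Q(1)$'', and it is fully rigorous; one small addition worth making explicit is that the verbal claim ``every logical operator in $\overline{Q(1)}$ can be defined inside $Q(1)$'' follows because any such operator has disjoint support from each $\tilde\ell_p$, hence commutes with them all, and therefore (by lemma~\ref{lemma_equiv}) is equivalent to a product of the $\tilde\ell_p$ alone.
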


The lemma also holds when we interchange the $\hat{1}$ direction and the $\hat{2}$ direction. Here, let us emphasize the difference between the statement of this lemma and the statement of lemma~\ref{lemma_1dim}. In lemma~\ref{lemma_1dim}, we have shown that there exists ``some'' canonical set of logical operators where $\ell_{p}$ is defined inside $Q(1)$ while $r_{p}$ is periodic: $T_{2}(r_{p})=r_{p}$. However, this lemma ensures that for any given set of $k$ independent logical operators $\ell_{p}$ defined inside $Q(2)$, we can always find $r_{p}$ such that $r_{p}$ are periodic: $T_{2}(r_{p})=r_{p}$. We skip the proof of lemma~\ref{lemma_1dim_strong} since the proof can be obtained with a little modification to the proof of lemma~\ref{lemma_1dim}.

\section{Stabilizers in one-dimensional STS models}\label{sec:1D_proof}

In this appendix, we obtain stabilizers for one-dimensional STS models described in Section~\ref{sec:1D}. Consider a one-dimensional STS model with $n_{1}$ composite particles. We list all the independent stabilizers defined inside two consecutive composite particles $P_{1}$ and $P_{2}$ as
\begin{align}\label{eq:stabilizer_OOG}
S_{j}\ =\ \left[ \alpha_{j}, \beta_{j} \right].
\end{align}
Here, $\left[ \alpha_{j}, \beta_{j} \right]$ represents the form of the stabilizer $S_{j}$ graphically, and should not be confused with a commutator. We compute the overlapping operator group for a composite particle $P_{1}$~\cite{Beni10}, defined in the following way:
\begin{align}
\mathcal{O}^{P_{1}}\ \equiv\ \Big\langle\ \Big\{ \ U|_{P_{1}}\ :\ U\ \in\ \mathcal{S}_{n_{1}}\ \Big\} \ \Big\rangle \                     =\ \Big\langle\ \Big\{ \ \alpha_{j}, \beta_{j}, \forall j \ \Big\} \ \Big\rangle.
\end{align}
Let us represent the overlapping operator group in a canonical form. Due to translation symmetries of the system, we have
\begin{align}
[\alpha_{j}, \beta_{j'}]\ =\ 0  \qquad \mbox{for all}\ j,j'.
\end{align}
Here, by $[\alpha_{j}, \beta_{j'}] = 0$, we mean that $\alpha_{j}$ and $\beta_{j'}$ commute with each other. Then, in finding a canonical representation of $\mathcal{O}^{P_{1}}$, one can consider the contributions from $\alpha_{j}$ and the contributions from $\beta_{j}$ separately. Here, we consider two groups of operators $\mathcal{O}(\alpha)$ and $\mathcal{O}(\beta)$
\begin{align}
\mathcal{O}(\alpha) \ =\ \langle \{ \alpha_{j}, \forall j \} \rangle, \qquad \mathcal{O}(\beta)\ =\ \langle \{ \beta_{j}, \forall j \}. \rangle
\end{align}

We have described $S_{j}$ as in Eq.~(\ref{eq:stabilizer_OOG}). However, the choice of $S_{j}$ is not unique, and one can choose any set of $S_{j}$ as long as 
$\mathcal{S}_{P_{1}\cup P_{2}}  =  \langle S_{j} \ \forall j \rangle$.
Then, one can choose $S_{j}$ such that the canonical representations of $\mathcal{O}(\alpha)$ have the following form:
\begin{align}
\mathcal{O}(\alpha)\ =\ \left\langle
\begin{array}{ccccccc}
   \alpha_{1} , &     \cdots , & \alpha_{t}     , & \ell_{1},     &  \cdots , & \ell_{k'}   , & \mathcal{S}_{P_{1}}  \\
   \alpha_{t+1}  , &     \cdots , & \alpha_{2t}      , &     &   & &  
\end{array}
 \right\rangle
\end{align}
where
\begin{align}
S_{2t + j} \ = \ \left[\ell_{j},\ell_{j} \right] \qquad \mbox{for} \quad j \ = \ 1, \cdots, k',
\end{align}
and $t$ is some integer. For a derivation, see~\cite{Beni10}. Here, $\mathcal{S}_{P_{1}}$ represents all the independent generators for the restriction of $\mathcal{S}$ into $P_{1}$, and $\ell_{j}$ are logical operators defined inside $P_{1}$. $S_{2t+j}$ are stabilizers due to the translation equivalence of logical operators. We note that $k' \leq k$ since there are at most $k$ independent logical operators which commute with each other. Then, one can represent $\mathcal{O}(\beta)$ in the following form:
\begin{align}
\mathcal{O}(\beta)\ =\ \left\langle
\begin{array}{ccccccc}
   \beta_{1} , &     \cdots , & \beta_{t}     , & \ell_{1},     &  \cdots , & \ell_{k'}   ,& \mathcal{S}_{P_{1}}   \\
   \beta_{t+1}  , &     \cdots , & \beta_{2t}      , &      &    & & 
\end{array}
 \right\rangle.
\end{align}
One can easily verify that $\beta_{j}$ satisfies the above commutation relations. Having represented $\mathcal{O}(\alpha)$ and $\mathcal{O}(\beta)$ in canonical forms, it is immediate to represent $\mathcal{O}^{P_{1}}$ in a canonical form:
\begin{align}
\mathcal{O}^{P_{1}}\ =\ \left\langle
\begin{array}{cccccccccc}
   \alpha_{1} , &     \cdots , & \alpha_{t}     , &  \beta_{1} , &     \cdots , & \beta_{t}        ,& \ell_{1} ,& \cdots ,& \ell_{k'} ,& \mathcal{S}_{P_{1}} \\
   \alpha_{t+1}  , &     \cdots , & \alpha_{2t}      , & \beta_{t+1}  , &     \cdots , & \beta_{2t} ,&    &  &  & 
\end{array}
 \right\rangle.
\end{align}

Now, let us prove that $k' = k$. Here, for simplicity of discussion, let us assume that there is no stabilizer in $\mathcal{S}_{P_{1}}$. Then, one can represent the number of independent generators for the stabilizer group $\mathcal{S}_{n_{1}}$ by using the above canonical representations. First, stabilizers $S_{j}$ for $j = 1,\cdots,2t$ and their translations are always independent. Second, stabilizers $S_{j}$ $j = 2t + 1,\cdots, k'$ and their translations are not independent since
\begin{align}
\prod_{x} T_{1}^{x}(S_{j}) \ = \ I \qquad (j \ = \ 2t + 1,\cdots, 2t + k').
\end{align}
Then, we have $G(\mathcal{S}_{n_{1}}) = 2tn_{1} + k'(n_{1}-1)$. Therefore, $k' = k$.

In summary, one can represent the overlapping operator group defined for $P_{1}$ in the following way:
\begin{align}
\mathcal{O}^{P_{1}}\ =\ \left\langle
\begin{array}{ccccccccc}
   \alpha_{1} , &     \cdots , & \alpha_{t}     , &  \beta_{1} , &     \cdots , & \beta_{t}        ,& \ell_{1} ,& \cdots ,& \ell_{k}  \\
   \alpha_{t+1}  , &     \cdots , & \alpha_{2t}      , & \beta_{t+1}  , &     \cdots , & \beta_{2t} ,&    &  &   
\end{array}
 \right\rangle.
\end{align}
Here, we neglected stabilizers inside $\mathcal{S}_{P_{1}}$. Then, after applying some local unitary transformations, one can represent $S_{j}$ as in the forms presented in Section~\ref{sec:1D2}.

By extending the above discussion to the cases where $D >1$, one can obtain the following lemma which will be useful in discussions in~\ref{sec:translation}.

\begin{lemma}\label{lemma_last}
Consider a translation symmetric stabilizer Hamiltonian which is invariant under unit translations of composite particles:
\begin{align}
T_{m}(H) \ = \ H \qquad (m \ = \ 1,\cdots ,D)
\end{align}
and whose interaction terms are defined inside hypercubic regions with $2\times \cdots \times 2$ composite particles. Let $\vec{n}'$ be the size of the entire system which is defined with $n_{1}' \times \cdots \times n_{D}'$ composite particles where $n_{m}'$ are arbitrary fixed integers. The stabilizer group is denoted as $\mathcal{S}_{\vec{n}'}$ where $\vec{n}' = (n_{1}',\cdots,n_{D}')$. The number of logical qubits is denoted as $k'$. Let us assume that all the logical operators satisfy the translation equivalence of logical operators in the $\hat{1}$ direction:
\begin{align}
T_{1}(\ell)\ \sim\ \ell \qquad \mbox{for all}\ \ell\ \in\ \textbf{L}_{\vec{n}'}
\end{align} 
where $T_{1}$ represents unit translations of composite particles, and $\textbf{L}_{\vec{n}'}$ is a set of all the logical operators in the stabilizer Hamiltonian. Then, there exists the following canonical set of logical operators:
\begin{align}
\Pi(\mathcal{S}_{\vec{n}'})\ =\ \left\{
\begin{array}{ccc}
\ell_{1} ,& \cdots ,& \ell_{k} \\
r_{1}    ,& \cdots ,& r_{k}
\end{array} \right\}
\end{align}
where $\ell_{p}$ is defined inside a region with $1 \times n_{2}' \cdots \times n_{D}'$ composite particles, and $r_{p}$ is periodic in the $\hat{1}$ direction: $T_{1}(r_{p})=r_{p}$. 
\end{lemma}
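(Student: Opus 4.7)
The plan is to adapt the overlapping‑operator‑group (OOG) construction from~\ref{sec:1D_proof} to the $D$‑dimensional setting, using the $(D-1)$-dimensional slice $R(1) = \{P_{r_1,\ldots,r_D} : r_1 = 1\}$ in place of the single composite particle $P_1$. Since every local stabilizer generator is defined inside a hypercube of $2 \times \cdots \times 2$ composite particles, each such generator straddles at most two consecutive slices $R(x)$ and $R(x+1)$. I would list independent generators of the local stabilizer group as $S_j = [\alpha_j,\beta_j]$, where $\alpha_j$ is the part supported on $R(1)$ and $\beta_j$ is the part on $R(2)$, pulled back to $R(1)$. Because all translations $T_1^{x_1}\cdots T_D^{x_D}(S_j)$ must mutually commute, one checks that $[\alpha_j,\beta_{j'}]=0$ for all $j,j'$, so the subgroups $\mathcal{O}(\alpha) = \langle \alpha_j \rangle$ and $\mathcal{O}(\beta) = \langle \beta_j \rangle$ of the overlapping operator group $\mathcal{O}^{R(1)}$ can be analyzed independently, exactly as in~\ref{sec:1D_proof}.

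Next, I would put $\mathcal{O}(\alpha)$ in canonical form,
\[
\mathcal{O}(\alpha) = \Big\langle\, \alpha_{1},\dots,\alpha_{t},\ \alpha_{t+1},\dots,\alpha_{2t};\ \ell_{1},\dots,\ell_{k''};\ \mathcal{S}_{R(1)} \,\Big\rangle,
\]
with anti‑commuting pairs $(\alpha_{j},\alpha_{t+j})$ and a complementary family of mutually commuting operators $\ell_{p}$. By re‑choosing the representative generators of $\pi$, one arranges $\mathcal{O}(\beta)$ to carry an identical canonical form, so that among the $S_{j}$ there appear purely "ferromagnetic" interaction terms $S_{2t+p} = [\ell_{p},\ell_{p}]$ for $p=1,\dots,k''$. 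These ferromagnetic terms immediately give $\ell_{p}\sim T_{1}(\ell_{p})$ via the product $\ell_{p}\,T_{1}(\ell_{p}) \in \mathcal{S}_{\vec{n}'}$; moreover, letting $r'_{p}$ denote the canonical partner of $\ell_{p}$ inside $\mathcal{O}^{R(1)}$, the operator $r_{p} \equiv \prod_{x=1}^{n_{1}'} T_{1}^{x}(r'_{p})$ commutes with every stabilizer and is manifestly periodic in $\hat{1}$, $T_{1}(r_{p})=r_{p}$. Lemma~\ref{lemma_independent} then guarantees that the $2k''$ operators $(\ell_{p},r_{p})$ are independent logical operators.

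The main obstacle is showing $k''=k'$. In~\ref{sec:1D_proof} this came from a direct count $G(\mathcal{S}_{n_{1}}) = 2tn_{1} + k'(n_{1}-1)$, using the relations $\prod_{x} T_{1}^{x}(S_{2t+j}) = I$ as the only dependencies contributed by the $\hat{1}$-direction periodic boundary. Here I would run the same count in the $\hat{1}$-direction while treating the other $D-1$ directions as a fixed "transverse" index; this yields $G(\mathcal{S}_{\vec{n}'}) = N - k''$ once the ferromagnetic dependencies are subtracted. To rule out extra logical operators beyond those captured by the canonical form, I would invoke the hypothesized translation equivalence $T_{1}(\ell)\sim\ell$: any additional logical operator would have to wind non‑trivially in $\hat{1}$, but $T_{1}(\ell)\sim\ell$ forces every such operator to be equivalent to a product of the periodic $r_{p}$'s and the slice‑supported $\ell_{p}$'s. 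This matches the count with $k'$ and pins down $k''=k'$; this step is the only place where the translation‑equivalence hypothesis is genuinely used, and making the "no extra winding" statement precise is the delicate point.

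Once $k''=k'$ is established, the desired canonical set $\Pi(\mathcal{S}_{\vec{n}'}) = \{(\ell_{p},r_{p})\}_{p=1}^{k'}$ is already in hand: each $\ell_{p}$ lives inside $R(1)$ (a $1\times n_{2}'\times\cdots\times n_{D}'$ region), each $r_{p}$ is periodic in $\hat{1}$, anti‑commutation within pairs and commutation between distinct pairs is inherited from the canonical form of $\mathcal{O}^{R(1)}$, and independence is supplied by lemma~\ref{lemma_independent}. The remainder of the argument is a mechanical adaptation of Appendix~\ref{sec:1D_proof}, with $P_{1}$ replaced by $R(1)$ throughout.
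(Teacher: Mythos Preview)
Your proposal is correct and follows essentially the same route as the paper. The paper's own argument for this lemma is only a sketch: it says to list all independent stabilizers defined inside two adjacent slices $Q(2)\cup T_{1}(Q(2))$ as $S_{j}=\alpha_{j}T_{1}(\beta_{j})$ and then compute the overlapping operator group $\mathcal{O}^{Q(2)}$, exactly mirroring the one-dimensional analysis in Appendix~\ref{sec:1D_proof} with $P_{1}$ replaced by the $(D{-}1)$-dimensional slice; you have reproduced this plan, including the identification of the ferromagnetic generators $[\ell_{p},\ell_{p}]$ via the assumed translation equivalence and the construction of the periodic partners $r_{p}=\prod_{x}T_{1}^{x}(r_{p}')$. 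One small point: make sure you list \emph{all} stabilizers supported in the two-slice region $R(1)\cup R(2)$ (as the paper does), not merely the local generators of $\pi$ and their transverse translates, since in $D\geq 2$ these can differ and the counting of $G(\mathcal{S}_{\vec{n}'})$ that pins down $k''=k'$ relies on having a complete generating set for $\mathcal{S}_{R(1)\cup R(2)}$.
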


The proof can be obtained through a similar discussion given in the above. For example, when $D = 2$, let us denote the region with $1 \times n_{2}$ composite particles as $Q(2)$. Then, we list all the independent stabilizers defined inside $Q(2) \cup T_{1}(Q(2))$ as
\begin{align}
S_{j}\ =\ \alpha_{j}T_{1}(\beta_{j})
\end{align}
where $\alpha_{j}$ and $\beta_{j}$ are Pauli operators defined inside $Q(2)$. By computing the overlapping operator group $\mathcal{O}^{Q(2)}$ through $\alpha_{j}$ and $\beta_{j}$, one can easily prove the above lemma. The generalizations to the cases where $D>2$ is immediate. Thus, we shall skip the proof of lemma~\ref{lemma_last}.

\section{Logical operators in two-dimensional STS models}\label{sec:proof}

In this appendix, we obtain a canonical set of logical operators for two-dimensional STS models which is presented in Section~\ref{sec:2D2}. We begin by introducing some lemmas which are useful in obtaining a canonical set of logical operators in~\ref{sec:proof1}. Then, we obtain a canonical set of logical operators for two-dimensional STS models in~\ref{sec:proof4}.

Here, we remind some notations again (Fig.~\ref{fig_notation}). An STS model has $n_{1}\times n_{2}$ composite particles and $k$ is the number of logical qubits. $Q(1)$ and $Q(2)$ represents regions with $n_{1}\times 1$ and $1 \times n_{2}$ composite particles. $U_{a,b}$ represents a region with $a \times b$ composite particles where $1 \leq a \leq n_{1}$ and $1 \leq b \leq n_{2}$. Note that $Q(1)=U_{n_{1},1}$ and $Q(2) = U_{1,n_{2}}$.

\begin{figure}[htb!]
\centering
\includegraphics[width=0.45\linewidth]{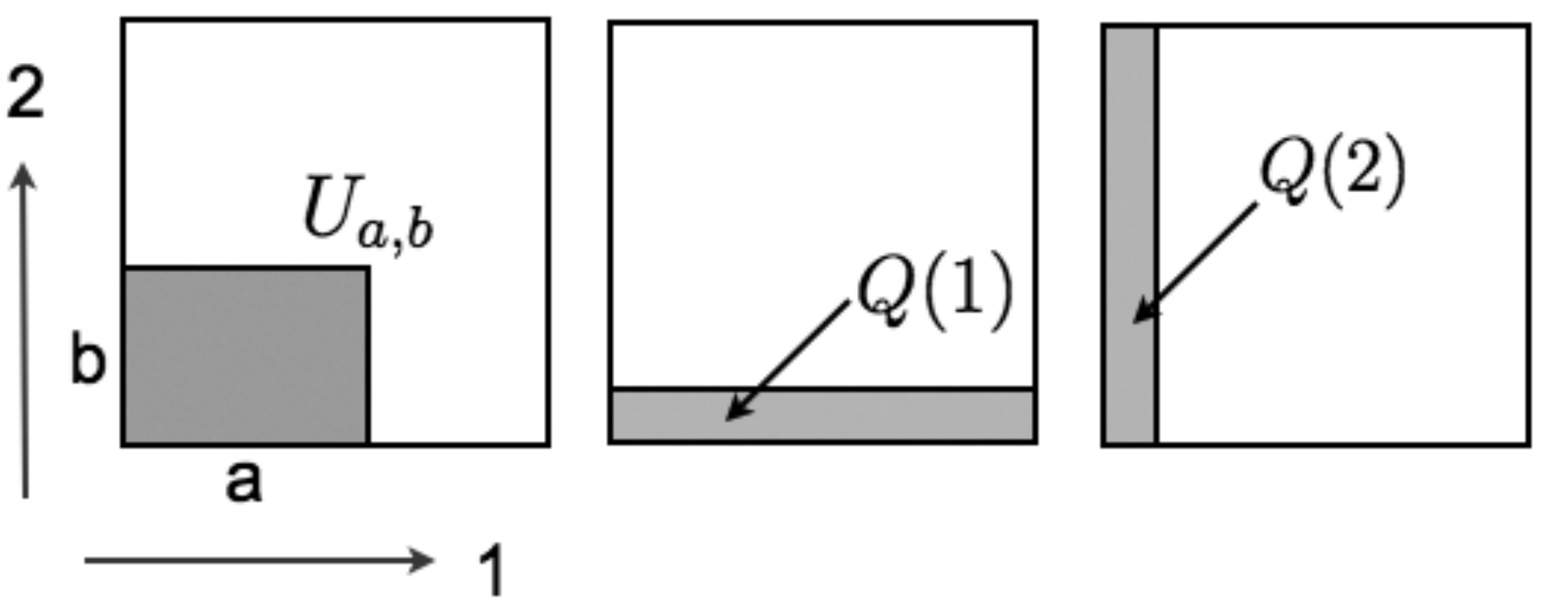}
\caption{Some regions of composite particles and their notations. 
} 
\label{fig_notation}
\end{figure}

\subsection{Some lemmas for obtaining a canonical set of logical operators}\label{sec:proof1}

In obtaining a canonical set of logical operators, we frequently use the following lemma.

\begin{lemma}\label{lemma_equiv}
For a stabilizer code $\mathcal{S}$ with a canonical set of logical operators
\begin{align}
\Pi(\mathcal{S})\ =\ \left\{
\begin{array}{ccc}
\ell_{1},& \cdots ,& \ell_{k}\\
r_{1},& \cdots ,& r_{k}
\end{array} \right\},
\end{align}
consider some logical operator $\ell$ where the commutation relations between $\ell$ and logical operators in $\Pi(\mathcal{S})$ are
\begin{align}
\ell \ell_{p}\ =\ (-1)^{e_{p}}\ell_{p} \ell, \qquad e_{p}\ =\ 0,1, \qquad \ell r_{p}\ =\ (-1)^{f_{p}} r_{p} \ell, \qquad f_{p}\ =\ 0,1.
\end{align}
Then, $\ell$ is equivalent to the following logical operator:
\begin{align}
\ell \ \sim\ \prod_{p=1}^{k} r_{p}^{e_{p}} \ell_{p}^{f_{p}}.
\end{align}
\end{lemma}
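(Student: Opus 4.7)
The plan is to exhibit the operator $\ell' \equiv \prod_{p=1}^{k} r_{p}^{e_{p}} \ell_{p}^{f_{p}}$ as the natural candidate and then show that $\ell \ell'$ lies in the stabilizer group, which by the definition of equivalence given in Section~\ref{sec:review2} gives $\ell \sim \ell'$.

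First, I would verify that $\ell'$ has the same commutation pattern with the canonical set as $\ell$ does. Since the canonical set $\Pi(\mathcal{S})$ is defined so that $\{\ell_{p}, r_{p}\} = 0$ while all other pairs among $\ell_{1},\ldots,\ell_{k},r_{1},\ldots,r_{k}$ commute, a direct computation gives
\begin{align}
\ell' \ell_{q} \ = \ (-1)^{e_{q}} \ell_{q} \ell', \qquad \ell' r_{q} \ = \ (-1)^{f_{q}} r_{q} \ell'
\end{align}
for every $q = 1,\ldots,k$, because only the factor $r_{q}^{e_{q}}$ contributes a sign when moving past $\ell_{q}$, and only $\ell_{q}^{f_{q}}$ contributes a sign when moving past $r_{q}$.

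Next, I would consider the product $\ell \ell'$. Both $\ell$ and $\ell'$ lie in the centralizer group $\mathcal{C}$ (the former by assumption, the latter as a product of elements of $\mathcal{C}$), so $\ell \ell' \in \mathcal{C}$. Combining the commutation relations for $\ell$ and $\ell'$ shows that $\ell \ell'$ commutes with every generator $\ell_{q}$ and $r_{q}$ of the canonical set. Using the canonical representation of the centralizer $\mathcal{C} = \langle \ell_{1}, r_{1}, \ldots, \ell_{k}, r_{k}, S^{(1)}, \ldots, S^{(N-k)} \rangle$ recalled in Section~\ref{sec:review2}, any element of $\mathcal{C}$ can be written (up to a phase) as $\prod_{p} \ell_{p}^{a_{p}} r_{p}^{b_{p}} \cdot s$ with $s \in \mathcal{S}$; the anti-commutation pairings $\{\ell_{p},r_{p}\}=0$ then force $a_{p} = b_{p} = 0$ for all $p$, so $\ell \ell' \in \langle i I, \mathcal{S}\rangle$.

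The main (very mild) obstacle is handling the overall phase: $\ell \ell'$ is a priori only determined up to $\pm 1, \pm i$, and one must check that it is actually a real stabilizer and not $-s$ or $\pm i s$ for some $s \in \mathcal{S}$. This is resolved by noting that $\ell$ and each $\ell_{p}, r_{p}$ are Hermitian Pauli operators squaring to $I$, and by invoking the convention that $-I, iI \notin \mathcal{S}$; if a spurious sign appears, one absorbs it into the equivalence relation, which only asks $\ell$ and $\ell'$ to act identically on every state in $V_{\mathcal{S}}$. In either case one concludes $\ell \sim \ell'$, completing the proof.
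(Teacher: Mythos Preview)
Your proof is correct and is essentially the same argument as the paper's. The paper's proof simply invokes the fact (from the canonical form of $\mathcal{C}$ in Section~\ref{sec:review2}) that any logical operator is equivalent to some product $\prod_{p} r_{p}^{e_{p}'}\ell_{p}^{f_{p}'}$, and then reads off $e_{p}'=e_{p}$, $f_{p}'=f_{p}$ from the commutation data; you unpack the same reasoning by showing directly that $\ell\ell'$ commutes with every $\ell_{q},r_{q}$ and hence lies in $\langle iI,\mathcal{S}\rangle$.
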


\begin{proof}
Given a logical operator $u$, $u$ is equivalent to some product of $\ell_{1}, \cdots , \ell_{k}$ and $r_{1}, \cdots, r_{k}$:
\begin{align}
\ell \ \sim\ \prod_{p=1}^{k} r_{p}^{e_{p}'} \ell_{p}^{f_{p}'}, \qquad e_{p}, f_{p}\ =\ 0,1.
\end{align}
By considering the commutation relations between $\ell$ and $\ell_{1}, \cdots , \ell_{k}$ and $r_{1}, \cdots, r_{k}$, we have $e_{p}= e_{p}'$ and $f_{p}=f_{p}'$.
\end{proof}

The following lemma becomes also useful in obtaining a canonical set of logical operators.

\begin{lemma}\label{lemma_periodicity}
Consider an STS model defined with the stabilizer group $\mathcal{S}_{n_{1},n_{2}}$ where $n_{1}$ is odd. Then, for each and every logical operator $\ell$, there exists a logical operator $\ell'$ which is equivalent to $\ell$ and is periodic in the $\hat{1}$ direction: $T_{1}(\ell)= \ell$.
\end{lemma}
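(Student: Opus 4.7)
The plan is to construct the periodic representative $\ell'$ by symmetrizing $\ell$ over all translations in the $\hat{1}$ direction, exploiting the oddness of $n_1$ together with the translation equivalence established in Theorem~\ref{theorem_main}. Concretely, I would define
\begin{align}
\ell' \ \equiv\ \prod_{x=0}^{n_1-1} T_1^{x}(\ell).
\end{align}
Periodicity in $\hat{1}$ is then immediate: since $T_1^{n_1}(\ell)=\ell$ by the periodic boundary conditions, the cyclic relabelling $x\mapsto x+1$ in the product gives $T_1(\ell')=\ell'$.

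The remaining task is to check $\ell'\sim \ell$. By the translation equivalence of logical operators (Theorem~\ref{theorem_main}), each $T_1^{x}(\ell)$ is equivalent to $\ell$, so there exist stabilizers $s_x\in\mathcal{S}_{n_1,n_2}$ with $T_1^{x}(\ell)=s_x\,\ell$ (up to a global sign, which is harmless for our purposes). Since every $s_x$ lies in the stabilizer group and $\ell$ lies in the centralizer $\mathcal{C}$, the $s_x$ commute with $\ell$, and translations of the stabilizer group commute among themselves, so all the factors $T_1^{x}(\ell)$ pairwise commute and the product is well-defined order-independently. Expanding,
\begin{align}
\ell' \ =\ \Big(\prod_{x=0}^{n_1-1} s_x\Big)\, \ell^{\,n_1}.
\end{align}
The bracketed product is a stabilizer, and because $\ell$ is a Hermitian Pauli operator we have $\ell^{2}=I$; since $n_1$ is odd, $\ell^{n_1}=\ell$. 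Therefore $\ell'$ differs from $\ell$ by a stabilizer, i.e.\ $\ell'\sim\ell$, which combined with $T_1(\ell')=\ell'$ gives the lemma.

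No step here is a genuine obstacle, but the one that deserves care is the sign/commutation bookkeeping: we must confirm (i) that the factors $T_1^{x}(\ell)$ genuinely commute, so that the product does not depend on the order and so that rewriting $\prod_x T_1^{x}(\ell) = (\prod_x s_x)\ell^{n_1}$ is legitimate, and (ii) that the eventual overall sign coming from $\ell^{n_1}=\ell$ does not spoil the equivalence. Both are handled by the fact that $\mathcal{S}_{n_1,n_2}\subset\mathcal{C}$ is abelian and that any stray sign $-1$ arising from the relation $T_1^{x}(\ell)=\pm s_x\ell$ can be absorbed into a choice of Hermitian representative of $\ell$ within its equivalence class. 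With this in hand, the argument extends verbatim to odd $n_m$ in any direction, which will be the natural setting in which Lemma~\ref{lemma_periodicity} gets applied in Section~\ref{sec:proof}.
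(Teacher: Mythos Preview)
Your proposal is correct and follows essentially the same approach as the paper: define $\ell'=\prod_{x} T_1^{x}(\ell)$, observe periodicity from cyclic relabelling, and use translation equivalence together with $\ell^{n_1}=\ell$ for odd $n_1$ to conclude $\ell'\sim\ell$. The paper's proof is a two-line version of exactly this argument, whereas you have supplied the commutation and sign bookkeeping that the paper leaves implicit.
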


\begin{proof}
Given a logical operator $\ell$, the following logical operator is equivalent to $\ell$ due to the translation equivalence of logical operators:
\begin{align}
\ell\ \sim \ \ell' \ \equiv\ \prod_{x=1}^{n_{1}} T_{1}^{x}(\ell)
\end{align}
since $n_{1}$ is odd. Note that $\ell'$ is periodic in the $\hat{1}$ direction: $T_{1}(\ell')=\ell'$. This completes the proof of the lemma.
\end{proof}

Now, let us begin finding a canonical set of logical operators. Our goal is to show that there exists a canonical set of logical operators in two-dimensional STS models which have the configurations of Pauli operators described in Section~\ref{sec:2D2}. Since the canonical set given in Section~\ref{sec:2D2} can be used as a canonical set for any $n_{1}$ and $n_{2}$ as long as $n_{1}$ and $n_{2}$ are larger than $2v$, it is sufficient to show the existence of such a canonical set for specific $n_{1}$ and $n_{2}$. Here, we consider the case with $n_{1}=2 \cdot 2^{2v}!$, where ``$!$'' represents a factorial: $2^{2v}! = 2^{2v}\times \cdots  \times 1$. The reason why we take such an artificial value as $2 \cdot 2^{2v}!$ will become clear in the course of computations. 

From lemma~\ref{lemma_1dim}, one can find a canonical set of logical operators
\begin{align}
\Pi(\mathcal{S}_{n_{1},n_{2}}) = \left\{
\begin{array}{ccc}
\ell_{1},& \cdots ,& \ell_{k}\\
r_{1},& \cdots ,& r_{k}
\end{array} \right\}\label{canonical_set}
\end{align}
where $\ell_{p}$ are defined inside $Q(1)$ while $r_{p}$ are periodic in the $\hat{2}$ direction: $T_{2}(r_{p})=r_{p}$ (Fig.~\ref{fig_periodicity}). We first analyze the properties of logical operators $\ell_{p}$ by proving the following lemma.

\begin{lemma}[Extraction of periodicity for $\ell_{p}$]\label{lemma_l}
Consider a two-dimensional STS model defined with the stabilizer group $\mathcal{S}_{n_{1},n_{2}}$ where $n_{1}=2\cdot 2^{2v}!$. If a logical operator $\ell$ is defined inside $Q(1)$, there exist centralizer operators $\ell^{\alpha}, \ell^{\beta} \in \mathcal{C}_{Q(1)}$ such that
\begin{align}
\ell \sim \ell^{\alpha}\ell^{\beta}
\end{align}
where
\begin{align}
T_{1}^{b}(\ell^{\beta})\ =\ \ell^{\beta} \qquad (b\ \leq\ 2^{2v}),
\end{align}
and $\ell^{\alpha}$ is defined inside $U_{2v,1}$. Here, $\ell^{\alpha}$ or $\ell^{\beta}$ may be an identity operator.
\end{lemma}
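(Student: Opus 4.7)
The plan is to exploit the fact that $n_1 = 2 \cdot 2^{2v}!$ is much larger than the size of the Pauli group on a single composite particle (which has $2^{2v}$ elements modulo phase), and in particular is divisible by every integer from $1$ to $2^{2v}$. This sets up a pigeonhole argument on the Pauli-operator pattern of $\ell$ across $Q(1)$, supplemented by the translation equivalence of logical operators (Theorem~\ref{theorem_main}) which lets us freely replace $\ell$ by any of its $T_1$-translates up to multiplication by a stabilizer.

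First, I would write $\ell = \bigotimes_{i=1}^{n_1} p_i$, where $p_i$ is the restriction of $\ell$ to the composite particle $P_{i,1}$. Modulo a global phase, each $p_i$ lives in a set $G$ of size $2^{2v}$, or equivalently in $\mathbb{F}_2^{2v}$ under the symplectic representation. The sequence $(p_1,\ldots,p_{n_1})$ is thus a string of length $n_1$ over an alphabet of size $2^{2v}$. A pigeonhole argument then produces an integer $b \leq 2^{2v}$ together with a candidate periodic operator $\ell^\beta = \prod_{j=0}^{n_1/b-1} T_1^{jb}(R)$, with $R \in \mathcal{P}_{U_{b,1}}$ chosen to reproduce the dominant repeating pattern; this $\ell^\beta$ is manifestly $T_1^b$-periodic by construction. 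Setting $\ell^\alpha \equiv \ell \cdot \ell^\beta$ (up to phase), the task becomes to show that, after modification by stabilizers and by products of translates $T_1^j(\ell)\cdot \ell$ (each of which is a stabilizer by translation equivalence), the residual $\ell^\alpha$ can be concentrated inside a window of exactly $2v$ composite particles, i.e.\ inside $U_{2v,1}$. The centralizer membership $\ell^\alpha,\ell^\beta \in \mathcal{C}_{Q(1)}$ then follows essentially by construction, since we only multiply by Pauli operators that commute with all stabilizers.

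The main obstacle will be the quantitative localization of $\ell^\alpha$ in a window of size exactly $2v$, rather than the crude $O(2^{2v})$ bound that a naive pigeonhole would give. To tighten this, I would pass to the symplectic language: view each $p_i$ as a vector in $\mathbb{F}_2^{2v}$ and exploit the additional linear constraints coming from the requirement $[\ell,S]=0$ for every stabilizer $S$ that straddles $Q(1)$ and the adjacent strips $T_2^{\pm 1}(Q(1))$. These commutation conditions, together with Gaussian elimination that condenses the defect into a minimal symplectic basis, should cut the defect window down to $2v$ composite particles, matching the $2v$-dimensionality of the single-particle symplectic space. I expect this linear-algebraic condensation step to be the technical heart of the proof, while the remaining verification that the final $\ell^\alpha$ and $\ell^\beta$ satisfy $\ell \sim \ell^\alpha \ell^\beta$ with $T_1^b(\ell^\beta) = \ell^\beta$ and $\mathrm{supp}(\ell^\alpha) \subset U_{2v,1}$ is essentially bookkeeping.
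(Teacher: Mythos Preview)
Your overall architecture---pigeonhole on the $2^{2v}$-letter alphabet to extract a periodic $\ell^\beta$ with period $b\le 2^{2v}$, then localize the residual $\ell^\alpha$---is exactly the paper's strategy, which it splits into three sublemmas: first get $\ell^\alpha$ into $U_{n_1-1,1}$, then into $U_{2^{2v},1}$, then into $U_{2v,1}$. You also correctly flag the last step as the crux.

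Where your proposal goes wrong is the mechanism for that last compression from $2^{2v}$ down to $2v$. You propose to exploit the commutation constraints $[\ell,S]=0$ for stabilizers $S$ straddling $Q(1)$ and the neighboring strips $T_2^{\pm1}(Q(1))$, followed by Gaussian elimination. But those constraints are already baked into $\ell$ being a centralizer element; they give you no new operators in $\mathcal{S}_{Q(1)}$ or $\mathcal{C}_{Q(1)}$ to multiply by, and hence no shrinking moves. The paper's argument is entirely one-dimensional along $Q(1)$ and never looks at adjacent strips for this step. The actual idea is: writing $\ell=[u_1,\ldots,u_{2^{2v}},I,\ldots]$, since $\dim_{\mathbb{F}_2}\mathbb{F}_2^{2v}=2v$, among any $2v+1$ consecutive letters there is a linear relation $u_{a+1}=\prod_{x\in\mathbf{t}}u_x$ with $\mathbf{t}\subset\{b,\ldots,a\}$ and $a-b<2v$. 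One then forms the product $\ell'=\prod_{x\in\mathbf{t}}T_1^{a+1-x}(\ell)$, which is automatically a centralizer operator (each factor is a translate of a logical operator) and has $\ell'|_{P_{a+1,1}}=I$ by construction. Truncating $\ell'$ at that hole yields a short centralizer ``eraser'' supported on $\le a$ sites, and multiplying $\ell$ by it (using translation equivalence to keep track of the logical class) strictly shrinks the support. Iterating drives the length down to $2v$.

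So the linear algebra you invoke is right in spirit, but it should be applied to the \emph{letters} $u_i$ of $\ell$ itself to manufacture erasers out of translates of $\ell$, not to commutation relations with two-dimensional stabilizers.
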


Here, $\mathcal{C}_{Q(1)}$ is a restriction of the centralizer group $\mathcal{C}$ into $Q(1)$. $\mathcal{C}_{Q(1)}$ includes stabilizers and logical operators defined inside $Q(1)$. Any logical operator $\ell$ defined inside $Q(1)$ can be decomposed as a product of two centralizer operators $\ell^{\alpha}$ and $\ell^{\beta}$ where $\ell^{\alpha}$ is defined inside a zero-dimensional region $U_{2v,1}$ and $\ell^{\beta}$ is periodic in the $\hat{1}$ direction with the periodicity $b$ (Fig.\ref{fig_periodicity}). In other words, a logical operator $\ell$ can be decomposed into a periodic part $\ell^{\beta}$ and a non-periodic part $\ell^{\alpha}$.

\begin{figure}[htb!]
\centering
\includegraphics[width=0.6\linewidth]{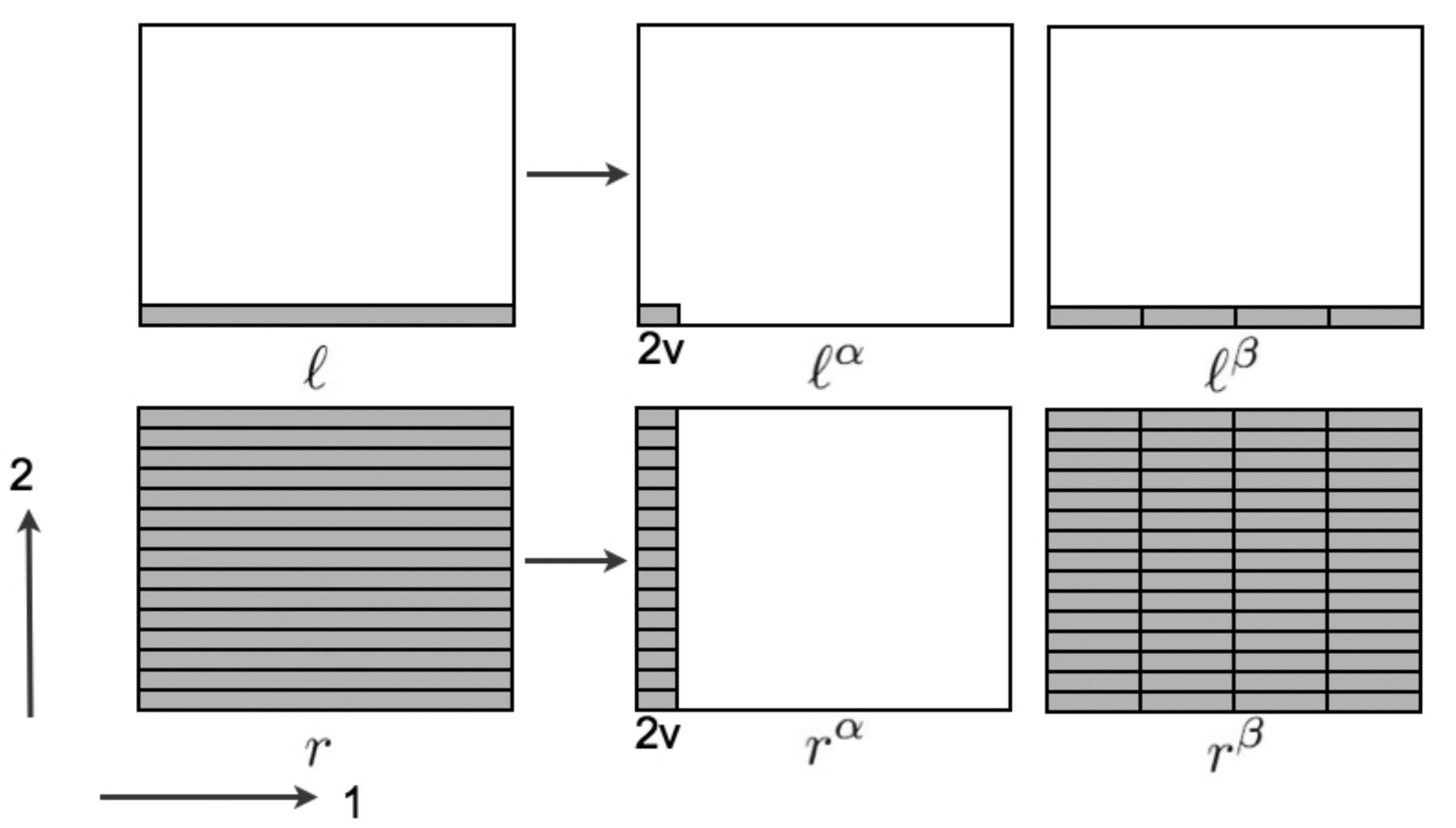}
\caption{Decompositions of logical operators into a periodic part and a non-periodic part. A logical operator $\ell$ defined inside $Q(1)$ is decomposed as a product of $\ell^{\alpha}$ and $\ell^{\beta}$ where $\ell^{\alpha}$ is defined inside $U_{2v,1}$ while $\ell^{\beta}$ is periodic in the $\hat{1}$ direction. A similar decomposition for $r$ is also shown where $r$, $r^{\alpha}$ and $r^{\beta}$ are periodic in the $\hat{2}$ direction.
} 
\label{fig_periodicity}
\end{figure}

The proof of this lemma is separated into three steps, summarized in three sublemmas respectively. The proof relies on the fact that there are $2v$ independent generators for the Pauli group acting on a single composite particle. We begin by proving the first sublemma. 

\begin{sublemma}\label{sublemma_1}
For any logical operator $\ell$ defined inside $Q(1)$ with $n_{1}= 2 \cdot 2^{2v}!$, there exist centralizer operators $\ell^{\alpha}, \ell^{\beta} \in \mathcal{C}_{Q(1)}$ with
\begin{align}
\ell\ \sim\ \ell^{\alpha}\ell^{\beta} \quad
\mbox{where} \quad T_{1}^{b}(\ell^{\beta})\ =\ \ell^{\beta} \qquad (b\ \leq\ 2^{2v}),
\end{align}
and $\ell^{\alpha}$ is defined inside $U_{n_{1}-1,1}$.
\end{sublemma}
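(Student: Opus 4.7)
The plan is to produce the decomposition $\ell \sim \ell^{\alpha}\ell^{\beta}$ by choosing $\ell^{\beta}$ to be a $b$-periodic centralizer on $Q(1)$ whose restriction to the last composite particle agrees with $\ell|_{P_{n_1}}$. Once such an $\ell^{\beta}$ is in hand, setting $\ell^{\alpha} := \ell \cdot \ell^{\beta}$ automatically forces $\ell^{\alpha}|_{P_{n_1}} = I$, so that $\ell^{\alpha}$ is supported in $U_{n_1-1,1}$, and both factors lie in $\mathcal{C}_{Q(1)}$ because $\ell, \ell^{\beta} \in \mathcal{C}_{Q(1)}$. Everything reduces to producing a periodic centralizer with the prescribed value at $P_{n_1}$.

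The natural supply of periodic centralizers comes from the translation orbit of $\ell$ itself. Since $T_1^{j}(\ell) \sim \ell$ by Theorem~\ref{theorem_main}, for each $b$ with $1 \leq b \leq 2^{2v}$ the operators
\begin{equation*}
L_{b,j} \;:=\; \prod_{i=0}^{n_1/b - 1} T_1^{ib+j}(\ell), \qquad j = 0, 1, \ldots, b-1,
\end{equation*}
are $b$-periodic by construction and belong to $\mathcal{C}_{Q(1)}$. The arithmetic point of the choice $n_1 = 2 \cdot 2^{2v}!$ is that every $b \leq 2^{2v}$ divides $n_1$ and, moreover, $n_1/b$ is even, so each $L_{b,j}$ is a product of an even number of equivalent logical operators and is therefore a stabilizer, and in particular a $b$-periodic centralizer. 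Let $V_b \subseteq \mathcal{P}(P_{n_1})$ denote the subgroup generated by the restrictions to $P_{n_1}$ of all $b$-periodic centralizers on $Q(1)$. Because $|\mathcal{P}(P_{n_1})| \leq 2^{2v}$ (modulo phases), the chain $V_1, V_2, \ldots$ can strictly grow at most $2^{2v}$ times, so it must stabilize at some $b^{\star} \leq 2^{2v}$. The sublemma will follow once it is established that $\ell|_{P_{n_1}} \in V_{b^{\star}}$.

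The main obstacle is precisely this last membership. The plan is a contradiction argument built on scale symmetry: if $\ell|_{P_{n_1}}$ lay outside $V_{b^{\star}}$, then one could use $\ell$ together with carefully chosen translates $T_1^{j}(\ell)$ and the family $\{L_{b,j}\}$ to manufacture, via Lemma~\ref{lemma_independent}, an unbounded family of mutually independent logical operators as $n_1$ is enlarged, contradicting the fact that the number of logical qubits remains fixed at $k$ independently of the system size. The even-ness of $n_1/b$ for every $b \leq 2^{2v}$ is the combinatorial ingredient that allows the parity of index sets $J$ in products $\prod_{j \in J} T_1^{j}(\ell)$ to be adjusted freely without leaving the centralizer. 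Once Sublemma~\ref{sublemma_1} is obtained in this form, the refinement from $U_{n_1-1,1}$ down to $U_{2v,1}$ asserted in Lemma~\ref{lemma_l} should follow by iterating the same extraction argument on progressively shorter strips, the $2v$ threshold arising as the point at which further periodic peeling stops reducing the support.
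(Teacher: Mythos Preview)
Your approach differs substantially from the paper's and carries a real gap at precisely the point you flag as the ``main obstacle.'' You reduce the sublemma to showing $\ell|_{P_{n_1}} \in V_{b^{\star}}$, but the contradiction argument you sketch does not go through as stated. Lemma~\ref{lemma_independent} requires exhibiting anti-commuting \emph{pairs} of centralizers; you have not explained how the failure of a single Pauli element $u_{n_1}$ to lie in a subgroup of $\mathcal{P}(P_{n_1})$ produces such pairs, and the family $\{L_{b,j}\}$ you build consists entirely of stabilizers (as you yourself note), so it cannot furnish the logical operators needed to invoke that lemma. The chain $V_1, V_2,\ldots$ is also not well-ordered in any obvious way: a $b$-periodic operator is $b'$-periodic only when $b \mid b'$, so the inclusions you need are not automatic, and the stabilization bound is unclear.

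The paper's argument bypasses all of this with a direct pigeonhole on the local components of $\ell$. Writing $\ell = [u_1,\ldots,u_{n_1}]$ and assuming no $u_x = I$ (the only nontrivial case), there must exist indices $j_1 < j_2$ with $b := j_2 - j_1 \leq 2^{2v}$ and $u_{j_1} = u_{j_2}$, simply because $|\mathcal{P}(P_{1,1})| \leq 2^{2v}$ modulo phases. The paper then takes the finite segment $u_{j_1},\ldots,u_{j_2-1}$ and tiles it periodically to obtain $\ell^{\beta}$; the crucial observation is that because stabilizers are supported on $2\times 2$ blocks, commutation with $\ell^{\beta}$ only depends on consecutive pairs $(u_j,u_{j+1})$, and the matching condition $u_{j_1}=u_{j_2}$ ensures the seams of the tiling close correctly. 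Multiplying $\ell$ by an appropriate translate of $\ell^{\beta}$ then cancels the block $u_{j_1},\ldots,u_{j_2-1}$ inside $\ell$, yielding $\ell^{\alpha}$ supported on at most $n_1 - b \leq n_1 - 1$ sites. No global scale-symmetry argument is needed for this sublemma; the factorial choice $n_1 = 2\cdot 2^{2v}!$ is used only to guarantee $b \mid n_1$ so the periodic tiling fits. Your attempt to build $\ell^{\beta}$ from products of translates of $\ell$ rather than from a repeated internal segment is what forces you into the unproven membership claim.
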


The sublemma claims that $\ell$ can be decomposed into a periodic part $\ell^{\beta}$ and a non-periodic part $\ell^{\alpha}$ where $\ell^{\alpha}$ is defined inside a region $U_{n_{1}-1,1}$. In the proof, we shall frequently use the translation equivalence of logical operators. 

\begin{proof}
Let us represent $\ell$ as
\begin{align}
\ell\ =\ \prod_{x=1}^{n_{1}} T_{1}^{x-1}(u_{x}) \ =\ \left[ u_{1}, u_{2},\cdots, u_{n_{1}} \right].
\end{align}
Here, an $n_{1}\times 1$ matrix (vector) represents Pauli operators supporting $\ell$ inside $Q(1)$ graphically. We prove the sublemma for the case with $u_{x}\not=I$ for all $x$ since otherwise $\ell$ satisfies the condition described in the sublemma.

Since there are at most $2^{2v}$ different Pauli operators which act on a single composite particle, there must exist integers $j_{1},j_{2} \in \mathbb{Z}_{n_{2}}$ ($j_{1}<j_{2}$) with $j_{2}-j_{1} \equiv b \leq 2^{2v}$ such that $u_{j_{1}}=u_{j_{2}}$. If we represent $\ell$ explicitly with $u_{j_{1}}$ and $u_{j_{2}}$, we have 
\begin{align}
\ell \ =\ \left[ u_{1}, \cdots, u_{j_{1}}, \cdots ,u_{j_{2}-1}, u_{j_{2}}, \cdots, u_{n_{1}} \right]. 
\end{align}
Since $u_{j_{1}}=u_{j_{2}}$, one can construct a centralizer operator by using a sequence of Pauli operators $u_{j_{1}}, \cdots ,u_{j_{2}-1}$ appearing in $\ell$. In particular, one can construct the following centralizer operator:
\begin{equation}
\begin{split}
\ell^{\beta}\ &=\ \prod_{x=0}^{n_{1}/b -1} T_{1}^{xb}\left(\left[ u_{j_{1}}, u_{j_{1}+1} \cdots ,u_{j_{1}+b-1}, I, \cdots, I \right]\right) \\
&=\ \left[ u_{j_{1}}, u_{j_{1}+1} \cdots ,u_{j_{1}+b-1}, u_{j_{1}}, u_{j_{1}+1} \cdots ,u_{j_{1}+b-1}, \cdots , u_{j_{1}}, u_{j_{1}+1} \cdots ,u_{j_{1}+b-1} \right]
\end{split}
\end{equation}
where $j_{2} = j_{1}+b$. Here, $u_{j_{1}}, \cdots ,u_{j_{2}-1}$ appears periodically in $\ell^{\beta}$. Note that $n_{1}/b$ is an integer since $n_{1}= 2 \cdot 2^{2v}!$. $\ell^{\beta}$ is periodic in the $\hat{1}$ direction: $T_{1}^{b}(\ell^{\beta})\ =\ \ell^{\beta}$. One can show that $\ell^{\beta}$ is a centralizer operator by seeing that $\ell^{\beta}$ commutes with all the stabilizers. Here, we emphasize that \emph{such a construction of a centralizer operator is possible since stabilizers can be defined insider regions with $2 \times 2$ composite particles}.

Since both $\ell^{\beta}$ and $\ell$ consist of a sequence of Pauli operators $u_{j_{1}}, \cdots ,u_{j_{2}-1}$, one can cancel this sequence of Pauli operators in $\ell$ by applying some translation of $\ell^{\beta}$ to $\ell$. A resulting centralizer operator is defined inside $U_{n_{1}-b, 1}$, which is included by $U_{n_{1}-1, 1}$ since $b\geq 1$. Thus, due to the translation equivalence of logical operators, the decomposition described in the sublemma is possible. This completes the proof of the sublemma.
\end{proof}

Now, one can see the reason for setting $n_{1}=2 \cdot 2^{2v}!$. This choice of $n_{1}$ makes sure that one can extract a periodic centralizer operator $\ell^{\beta}$ from $\ell$ by requiring that $n_{1}/b$ are always integers for $b \leq 2^{2v}$. The reason for the extra factor ``two'' will become clear later. 

Next, we need to prove that a logical operator $\ell$ inside $U_{n_{1}-1,1}$ can be also defined inside $U_{2v,1}$. We prove the following sublemma.

\begin{sublemma}\label{sublemma_2}
When $n_{1} = 2 \cdot 2^{2v}!$, a logical operator $\ell$ defined inside $U_{n_{1}-1,1}$ can be also defined inside $U_{2^{2v},1}$:
\begin{align}
g_{U_{n_{1}-1,1}}\ =\ g_{U_{2^{2v},1}}.
\end{align}
\end{sublemma}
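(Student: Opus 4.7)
The plan is to induct on the support length $m$ of $\ell$. After translating along $\hat{1}$ using the translation equivalence of logical operators, write $\ell = [u_1, \ldots, u_m, I, \ldots, I]$ on $Q(1)$, where $m \leq n_1 - 1$ is the minimal support length. If $m \leq 2^{2v}$ there is nothing to prove; the goal of the inductive step is to exhibit an equivalent logical operator of support length strictly less than $m$, so that iteration places $\ell$ inside $U_{2^{2v},1}$.

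Apply the pigeonhole principle to $u_1, \ldots, u_{2^{2v}+1}$: since at most $2^{2v}$ single-composite-particle Pauli operators exist (up to phase), there are $j_1 < j_2 \leq 2^{2v}+1$ with $u_{j_1} = u_{j_2}$ and $b := j_2 - j_1 \leq 2^{2v}$. Since $b$ divides $n_1 = 2\cdot 2^{2v}!$, the period-$b$ centralizer $\ell^{\beta}$ from sublemma~\ref{sublemma_1} is well-defined, and $\ell\cdot\ell^{\beta}$ acquires the $b+1$ consecutive identities $j_1, \ldots, j_2$ inside the original support. Because $\ell^{\beta}$ is globally periodic, however, $\ell\cdot\ell^{\beta}$ generically populates the original gap $\{m+1, \ldots, n_1\}$ with non-identities. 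My plan to defeat this is to extract a \emph{second} periodic centralizer from the late end of the support (a pigeonhole on $u_{m-2^{2v}}, \ldots, u_m$), with period $b'\leq 2^{2v}$, chosen so that in the gap region the two periodic tails are aligned to annihilate one another. The resulting centralizer retains the new identities inside $[1, m]$ while restoring identities throughout the gap, giving an equivalent logical operator of support length at most $m - 1$; lemma~\ref{lemma_equiv} then confirms logical equivalence, which together with the trivial inclusion $g_{U_{2^{2v},1}} \leq g_{U_{n_1-1,1}}$ completes the sublemma once the induction terminates.

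The main obstacle is verifying that the two periodic tails can always be aligned to cancel in the gap. This is where the factor of two in $n_1 = 2\cdot 2^{2v}!$ becomes essential: it guarantees that the gap contains an integer number of full periods simultaneously for every admissible pair $(b, b')$ with $b, b' \leq 2^{2v}$, and supplies the extra room needed to absorb any residual phase mismatch via a further translate of $\ell^{\beta}$ or $\ell^{\beta'}$. The most delicate point will be ruling out the possibility that the mutual cancellation in the gap forces a compensating expansion back into the support interval $[1, m]$, which I expect to handle by a case analysis modulo $\operatorname{lcm}(b, b')$ using the divisibility structure of $n_1$ together with the translation equivalence of logical operators to reposition any leftover periodic remnant outside the window.
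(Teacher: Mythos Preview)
Your two-periodic-tail plan has a genuine gap. The left periodic $\ell^{\beta}$ fills the gap region $\{m+1,\ldots,n_1\}$ with repeated copies of the block $[u_{j_1},\ldots,u_{j_2-1}]$, while your proposed right periodic $\ell^{\beta'}$ fills it with copies of an entirely different block $[u_{j_1'},\ldots,u_{j_2'-1}]$ taken from the tail of $\ell$. These are unrelated Pauli strings, and there is no reason their product should be the identity anywhere in the gap, no matter how the phases are aligned. Divisibility of $n_1$ by $b$ and $b'$ only controls how many full periods fit; it says nothing about the \emph{content} of those periods multiplying to $I$. The factor of two in $n_1 = 2\cdot 2^{2v}!$ is used in the paper for a parity argument (forcing certain periodic operators to commute), not for arranging cancellation of distinct Pauli blocks.

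The idea you are missing is that the repeated letter $u_{j_1}=u_{j_2}$ lets you \emph{splice} rather than periodize. Because stabilizers are supported on $2\times 2$ blocks, whether a string $[v_1,\ldots,v_t]$ is a centralizer depends only on consecutive pairs $(v_i,v_{i+1})$. Since the pair $(u_{j_1},u_{j_2+1})$ is identical to the pair $(u_{j_2},u_{j_2+1})$ already occurring in $\ell$, the spliced string
\[
\ell' \;=\; [\,I,\,u_2,\ldots,u_{j_1},\,u_{j_2+1},\ldots,u_{n_1},\,I,\ldots,I\,]
\]
is again a centralizer, of length $n_1-b-1$, and it shares the prefix $I,u_2,\ldots,u_{j_1}$ with $\ell$. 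Hence $\ell\ell'$ is a centralizer supported on at most $n_1-j_1\leq n_1-2$ sites. No periodic tail ever enters the gap, so there is nothing to cancel. Iterating this shrinkage, together with translation equivalence, drives the support down to $U_{2^{2v},1}$.
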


\begin{proof}
Let us consider the case where a logical operator $\ell$ satisfies 
\begin{align}
u_{1}\ =\ I \qquad \mbox{and} \qquad u_{x}\ \not=\ I \ (x\ \not=\ 1).
\end{align}
Then, there must exist integers $j_{1}, j_{2} \in \mathbb{Z}_{n_{2}}$ ($j_{1}< j_{2}$) with $j_{2}-j_{1} \equiv b \leq 2^{2v}$ such that $u_{j_{1}}=u_{j_{2}}$. 
If we represent $\ell$ explicitly, we have
\begin{align}
\ell \ = \ \left[ I, u_{2}, \cdots, u_{j_{1}}, u_{j_{1}+1} \cdots , u_{j_{2}},\cdots, u_{n_{1}} \right].
\end{align}
Then, one can construct the following centralizer operator:
\begin{align}
\ell'\ =\ \left[ I, u_{2}, \cdots, u_{j_{1}-1},u_{j_{1}}, u_{j_{2}+1}, u_{j_{2}+2}\cdots, u_{n_{1}}, I ,\cdots ,I \right].
\end{align}
We constructed $\ell'$ by discarding $u_{x}$ for $x = j_{1}+1,\cdots, j_{2}-1$ and attaching $u_{2}, \cdots, u_{j_{1}}$ to $u_{j_{2}},\cdots, u_{n_{1}}$. One can immediately see that $\ell'$ is a centralizer operator defined with $n_{1}-b-1$ composite particles. 

Now, both $\ell$ and $\ell'$ consist of a sequence of Pauli operators $u_{2}, \cdots, u_{j_{1}-1},u_{j_{1}}$. Then, if we apply $\ell'$ to $\ell$, a resulting centralizer operator is defined with $n_{1}-j_{1}$ composite particles. Then, due to the translation equivalence of logical operators, one may notice that $\ell$ can be also defined inside $U_{n_{1}-2,1}$ since $j_{1}\geq 2$. One can use the same discussion for a logical operator defined inside $U_{n_{1}-2,1}$ and shorten its length by one. By repeating this shrinkage until the length of $\ell$ becomes $2^{2v}$, we complete the proof of the sublemma.
\end{proof}

One can further shorten the length of $\ell$, as summarized in the following sublemma.

\begin{sublemma}\label{sublemma_3}
When $n_{1} = 2 \cdot 2^{2v}!$, a logical operator $\ell$ defined inside $U_{2^{2v},1}$ can be also defined inside $U_{2v,1}$:
\begin{align}
g_{U_{2^{2v},1}}\ =\ g_{U_{2v,1}}.
\end{align}\end{sublemma}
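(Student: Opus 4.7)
\textbf{Proof proposal for sublemma~\ref{sublemma_3}.} The plan is to extend the shrinkage procedure of sublemma~\ref{sublemma_2}, using the same splicing spirit but with a finer algebraic input. Sublemma~\ref{sublemma_2} forced a reduction of support by exploiting repetitions among any $2^{2v}+1$ consecutive single-particle Paulis (pigeonhole on the set of cardinality $|\mathrm{Pauli}(P)|=2^{2v}$). Once the support has length only $2^{2v}$, such repetitions are no longer guaranteed, so a sharper fact must replace pigeonhole: the Pauli group on a single composite particle, viewed modulo overall phase, is an $\mathbb{F}_{2}$-vector space of dimension $2v$, so any $2v+1$ of its elements satisfy a non-trivial multiplicative relation.

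First, fix a logical operator $\ell = [u_{1},\ldots,u_{L}]$ supported on $U_{L,1}$ with $2v < L \leq 2^{2v}$ and $u_{L}\neq I$. Look at the rightmost block $u_{L-2v},\ldots,u_{L}$; by the dimension bound there exist exponents $c_{0},\ldots,c_{2v}\in\{0,1\}$, not all zero, with $\prod_{m=0}^{2v} u_{L-2v+m}^{\,c_{m}} = I$ up to phase. After relabeling, one may assume $c_{2v}=1$, since otherwise the relation already exhibits a shorter centralizer after applying translation equivalence of logical operators (Theorem~\ref{theorem_main}). This relation expresses $u_{L}$ as a product of earlier Paulis in the block.

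Second, turn this relation into a cancellation inside the centralizer group. For each index $m$ with $c_{m}=1$ and $m<2v$, construct a centralizer operator $\ell_{m}$ whose rightmost non-trivial Pauli is $u_{L-2v+m}$ sitting at position $L-2v+m$, obtained from (translates of) $\ell$ by the same splicing technique used in sublemma~\ref{sublemma_2}, trimming everything to the right of $L-2v+m$ by multiplying with stabilizers or with translates that match at splice points. Then $\ell' \equiv \ell \cdot \prod_{m<2v} \ell_{m}^{\,c_{m}}$ is equivalent to $\ell$, is a centralizer operator (products of centralizers are centralizers), and its rightmost non-trivial Pauli at position $L$ has been cancelled by the chosen relation. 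Thus $\ell'$ is supported on $U_{L-1,1}$. Iterating this step $L-2v$ times delivers the claim $g_{U_{2^{2v},1}} = g_{U_{2v,1}}$.

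The main obstacle will be step two: actually producing centralizer operators $\ell_{m}$ that match $\ell$ at a prescribed right-boundary Pauli but are supported strictly inside $U_{L,1}$, without spilling over the boundary of $Q(1)$ or leaking onto position $L$ itself. This requires repeated application of the $2\times 2$-locality splicing argument of sublemma~\ref{sublemma_2}, where matching single-particle Paulis at the splice point preserves commutation with all local stabilizers, now generalized so that a multi-position linear combination produces the required cancellation. The extra factor of two in the choice $n_{1}=2\cdot 2^{2v}!$ (unused until now) provides exactly the slack needed so that the translates of $\ell$ invoked in building the auxiliary centralizers $\ell_{m}$ remain well separated from their wrap-around images on the torus, preventing spurious cancellations or uncontrolled overlap with the region to the left of $U_{L,1}$.
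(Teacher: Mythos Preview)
Your high-level idea is right and matches the paper: once the support is shorter than $2^{2v}$, pigeonhole no longer applies, and you must instead use that the single-particle Pauli group modulo phase is an $\mathbb{F}_{2}$-vector space of dimension $2v$, so any $2v+1$ consecutive entries satisfy a nontrivial multiplicative relation. That is exactly the algebraic input the paper uses.

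The gap is in your second step. You claim to build centralizers $\ell_{m}$ whose rightmost nontrivial Pauli equals $u_{L-2v+m}$ at position $L-2v+m$, ``by the same splicing technique used in sublemma~\ref{sublemma_2}''. But that splicing works only because sublemma~\ref{sublemma_2} has an exact repetition $u_{j_{1}}=u_{j_{2}}$; you remove the segment between two matching boundary Paulis, and the match is precisely what preserves commutation with the $2\times 2$-local stabilizers across the splice. Here there is no such match to exploit: you are asking to truncate $\ell$ at an \emph{arbitrary} position $L-2v+m$, and the truncation of a centralizer at a position where the entry is not $I$ is generally not a centralizer. You have not said what replaces the matching condition, and ``multiplying with stabilizers or with translates that match at splice points'' is exactly the problem you need to solve, not a solution to it.

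The paper's mechanism is different and is what makes the argument go through. It works from the \emph{left} end: find the smallest $a\leq 2v$ with $u_{a+1}\in\langle u_{b},\dots,u_{a}\rangle$, write $u_{a+1}=\prod_{x\in\mathbf{t}}u_{x}$, and form
\[
\ell'\ =\ \ell\cdot\prod_{x\in\mathbf{t}} T_{1}^{\,a+1-x}(\ell).
\]
Every factor is a \emph{translate of $\ell$ itself}, hence automatically a centralizer; no truncation is needed to build the product. By construction $\ell'|_{P_{a+1,1}}=I$, so now one may legitimately cut $\ell'$ at position $a+1$ to obtain a short centralizer $\ell''$ supported on positions $1,\dots,a$. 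Because all the shifts are strictly positive, only the unshifted copy of $\ell$ contributes to position $1$, so $\ell''$ begins with $u_{1}$; multiplying $\ell$ by $\ell''$ erases the first entry and (after a unit translation, using Theorem~\ref{theorem_main}) shortens the support by one. Iterating brings the length down to $2v$. The point you are missing is that the ``eraser'' is produced by combining whole translates of $\ell$, not by trimming $\ell$ at prescribed positions.

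One minor correction: the extra factor of two in $n_{1}=2\cdot 2^{2v}!$ is \emph{not} used here; the paper invokes it only later, to force $\tilde n_{1}/d_{p,q}$ even when showing $[\ell_{p}^{\beta},r_{q}^{\beta}]=0$. In sublemma~\ref{sublemma_3} all shifts are bounded by $2v$, far smaller than $n_{1}$, so wrap-around on the torus is never an issue.
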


\begin{proof}
Let us suppose that $\ell$ is defined inside $U_{2^{2v},1}$ with $u_{x}\not=I$ for $x = 1 ,\cdots , 2^{2v}$:
\begin{align}
\ell \ = \ \left[ u_{1},u_{2},\cdots, u_{2^{2v}}, I ,\cdots,I \right].
\end{align}

If there exists an integer $j$ such that $u_{j}=u_{j+1}$ ($1 \leq j \leq 2^{2v}-1$), one can construct the following centralizer operator:
\begin{align}
\ell' \ = \ \left[ u_{1},u_{2},\cdots, u_{j}, u_{j+2}, \cdots, u_{2^{2v}}, I ,\cdots,I \right]
\end{align}
whose length is $2^{2v}-1$. Then, the length of $\ell \ell'$ is $2^{2v}-j$. Thus, $\ell$ can be defined inside a region $U_{2^{2v}-1,1}$.

Next, let us consider the case where there does not exist any integer $j$ such that $u_{j}=u_{j+1}$. Since there are $2v$ independent generators for the Pauli operator group defined for a single composite particle, there exist integers $1 \leq b \leq a \leq 2v$ such that $\{ u_{b}, \cdots , u_{a} \}$ consists only of independent Pauli operators, and 
\begin{align}
u_{a+1}\ \in\ \langle \{ u_{b}, \cdots , u_{a} \} \rangle \qquad \mbox{and} \qquad u_{a+1}\ \not\in\ \langle \{ u_{b+1}, \cdots , u_{a} \} \rangle. \label{eq:sublemma3}
\end{align}
Then, there exists some set $\textbf{t}$ of integers which is a subset of a set $\{b, b+1, \cdots, a-1,a\}$ where
\begin{align}
u_{a+1}\ =\ \prod_{x \in \textbf{t}} u_{x}
\end{align}
with $1 \in \textbf{t}$. Note that $\textbf{t}$ must include $b$ due to Eq.~(\ref{eq:sublemma3}). 

By constructing a certain centralizer operator based on $\textbf{t}$, one can decompose $\ell$ as a product of smaller centralizer operators. Here, in order to get some intuition, we consider an example where $b=1$, $a=4$ and $\textbf{t}= \{ 1, 3,4 \}$. Then, we have 
\begin{align}
u_{5} \ = \ u_{1}u_{3}u_{4}.
\end{align}
Then, consider the following centralizer operator
\begin{equation}
\begin{split}
\ell'\ &=\ T_{1}^{4}(\ell)T_{1}^{2}(\ell)T_{1}(\ell)\ell \\
       &=\ \left[ I, I , I , I, u_{1}, u_{2}, \cdots \right] \  \times \
          \left[ I, I , u_{1} , u_{2}, u_{3}, u_{4}, \cdots \right] \ \times \
          \left[ I, u_{1}, u_{2}, u_{3},u_{4}, u_{5}, \cdots \right] \\ 
         &\times \ \left[ u_{1}, u_{2}, u_{3},u_{4}, u_{5},u_{6},\cdots \right] \\
         &= \ \left[ u_{1}, u_{1}u_{2} , u_{1}u_{2}u_{3} , u_{2}u_{3}u_{4}, u_{1}u_{3}u_{4}u_{5}, u_{2}u_{4}u_{5}u_{6}, \cdots \right] \\
         &= \ \left[ u_{1}, u_{1}u_{2} , u_{1}u_{2}u_{3} , u_{2}u_{3}u_{4}, I, u_{2}u_{4}u_{5}u_{6}, \cdots \right] 
\end{split}
\end{equation}
since $u_{1}u_{3}u_{4}u_{5} = I$. Then, one may notice that 
\begin{align}
\ell'' \ = \ \left[ u_{1}, u_{1}u_{2} , u_{1}u_{2}u_{3} , u_{2}u_{3}u_{4}, I, I, \cdots \right]
\end{align}
is a centralizer operator since $\ell''$ commutes with all the stabilizers. By applying $\ell''$ to $\ell$, one can shorten the length of $\ell$ by one. 
Thus, it is possible to create an ``eraser'' $\ell''$ by taking a product of $\ell$ and its translations. 

Now, let us discuss more general cases. Let us consider the following centralizer operator as an eraser:
\begin{align}
\ell'\ =\ \prod_{x \in \textbf{t}} T_{1}^{a-(x-1)}(\ell). \label{eq:sublemma32}
\end{align}
Then, since $\ell'|_{P_{a+1,1}}=I$, one can construct a centralizer operator 
\begin{align}
\ell'' \ = \ \left[ u_{1}, u_{2}'', \cdots, u_{a+1}'' , I ,\cdots \right]
\end{align}
where $u_{2}'', \cdots, u_{a+1}''$ are some Pauli operators which can be computed from Eq.~(\ref{eq:sublemma32}). Thus by using $\ell''$, one can shrink the size of $\ell$.

One can use the same trick to shorten the length of a logical operator $\ell$ as long as the length of $\ell$ is larger than $2v$. Thus, $\ell$ can be defined inside $U_{2v,1}$.
\end{proof}

By integrating these three sublemmas, one can prove lemma \ref{lemma_l}. One can also obtain a similar lemma for logical operators $r_{p}$ in Eq.(\ref{canonical_set}) (see Fig.\ref{fig_periodicity}).

\begin{lemma}[Extraction of periodicity for $r_{p}$]\label{lemma_r}
Consider a two-dimensional STS model defined with the stabilizer group $\mathcal{S}_{(n_{1},n_{2})}$ with $n_{1}=2\cdot 2^{2v}!$. If a logical operator $r$ is periodic in the $\hat{2}$ direction: $T_{2}(r)=r$,
there exist periodic centralizer operators $r^{\alpha}, r^{\beta}$ 
\begin{align}
T_{2}(r^{\alpha})\ =\ r^{\alpha}, \qquad T_{2}(r^{\beta})\ =\ r^{\beta}
\end{align}
such that
\begin{align}
r\ \sim\ r^{\alpha}r^{\beta}  
\quad
\mbox{where} \quad
T_{1}^{b}(r^{\beta})\ =\ r^{\beta} \qquad (b\ \leq\ 2^{2v})
\end{align}
and $r^{\alpha}$ is defined inside $U_{2v,n_{2}}$. 
\end{lemma}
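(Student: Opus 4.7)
The plan is to mirror the three-sublemma proof structure of Lemma~\ref{lemma_l}, exploiting the $T_2$-periodicity to reduce the two-dimensional problem to a one-dimensional one on a ``row'' of $n_1$ composite particles. Since $T_2(r)=r$, the operator $r$ is completely determined by the sequence of single-composite-particle Pauli operators $(\tilde{w}_1,\tilde{w}_2,\ldots,\tilde{w}_{n_1})$, where $\tilde{w}_i$ is the restriction of $r$ to $P_{i,1}$, and
\begin{align}
r \ = \ \prod_{i=1}^{n_1}\prod_{j=1}^{n_2} T_1^{i-1}T_2^{j-1}(\tilde{w}_i).
\end{align}
Each of the three shrinkage steps in the proof of Lemma~\ref{lemma_l} will be rerun verbatim with $u_i$ replaced by $\tilde{w}_i$, and every auxiliary operator defined so that its column profile is constant in the $\hat{2}$ direction; this guarantees that $T_2$-periodicity is preserved at every step.

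First, I would carry out the analog of Sublemma~\ref{sublemma_1}. Since the single-composite-particle Pauli group has at most $2^{2v}$ elements and $n_1 = 2\cdot 2^{2v}!$, pigeonhole produces indices $j_1<j_2$ with $\tilde{w}_{j_1}=\tilde{w}_{j_2}$ and $b := j_2-j_1 \leq 2^{2v}$, with $b\mid n_1$. Define $r^{\beta}$ by extending the pattern $(\tilde{w}_{j_1},\tilde{w}_{j_1+1},\ldots,\tilde{w}_{j_1+b-1})$ $T_1^b$-periodically along $\hat{1}$ and $T_2$-periodically along $\hat{2}$. The key point is that $r^{\beta}$ is a centralizer operator: every stabilizer is supported in a $2\times 2$ block of composite particles, and at each such block the local Pauli profile of $r^{\beta}$ matches that of some $T_1$-translate of $r$ (using $\tilde{w}_{j_1+b}=\tilde{w}_{j_2}=\tilde{w}_{j_1}$ to match across the seams of the periodic repetition), so the commutation with each stabilizer is inherited from $r$. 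By the translation equivalence of logical operators (Theorem~\ref{theorem_main}), multiplying $r$ by an appropriate $T_1$-translate of $r^{\beta}$ kills the segment $\tilde{w}_{j_1},\ldots,\tilde{w}_{j_2-1}$ and yields an equivalent $T_2$-periodic centralizer supported in $U_{n_1-1,n_2}$.

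Next, I would iterate the analogs of Sublemma~\ref{sublemma_2} and Sublemma~\ref{sublemma_3} on the shortened column profile. The first shrinks the support to $U_{2^{2v},n_2}$ by pigeonhole combined with the ``reattach the prefix at the tail'' trick, and the second uses the fact that any $2v+1$ consecutive nontrivial $\tilde{w}_i$ must satisfy a multiplicative linear dependence, which can be realized as a product of $T_1$-translates of $r$ to form an ``eraser'' shortening the support by one composite particle. Iterating brings the support down to $U_{2v,n_2}$, producing the desired $r^{\alpha}$, and since every operation is either a $T_1$-translation or a product of $T_2$-periodic operators, both $r^{\alpha}$ and $r^{\beta}$ come out $T_2$-periodic as required.

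The main obstacle, and the reason this is not a literal corollary of Lemma~\ref{lemma_l}, is verifying that the auxiliary operators are genuine centralizer operators when stabilizers span both directions. The local-pattern matching argument still works, but one must check it against all stabilizer shapes in a $2\times 2$ block, including those crossing the $\hat{2}$-periodic seam and the folded stabilizers of Section~\ref{sec:model2} relevant to the $n_2=1$ case. Once this seam-matching verification is in place, the proof reduces to a clean transcription of the three sublemmas with $u_i \leadsto \tilde{w}_i$.
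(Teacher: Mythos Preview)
Your proposal is correct and takes essentially the same approach as the paper: the paper explicitly skips the proof of this lemma, stating only that it is ``essentially the same as the proof for $\ell_{p}$ due to the periodicity of $r_{p}$ in the $\hat{2}$ direction,'' which is exactly the reduction you carry out. Your observation that $T_2$-periodicity collapses the problem to the row profile $(\tilde{w}_1,\ldots,\tilde{w}_{n_1})$ and allows a verbatim rerun of the three sublemmas is precisely the intended argument.
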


We skip the proof of lemma \ref{lemma_r} since the proof is essentially the same as the proof for $\ell_{p}$ due to the periodicity of $r_{p}$ in the $\hat{2}$ direction. As a result of these two lemmas on the properties of $\ell$ and $r$, both $\ell$ and $r$ can be separated into non-periodic parts $\ell^{\alpha}, r^{\alpha}$ and periodic parts $\ell^{\beta}, r^{\beta}$ (Fig.\ref{fig_periodicity}).

\subsection{Canonical set for a two-dimensional STS model}\label{sec:proof4}

Having introduced necessary tools in the previous section, let us now give a canonical set of logical operators for two-dimensional STS models. Recall that we have limited our consideration to the cases where $n_{1} = 2 \cdot 2^{2v}!$ in the previous subsection. In addition, we consider the case where $n_{2}$ is odd first. In order to indicate that we are considering STS models for some fixed $n_{1}$ and $n_{2}$, we use notations $n_{1}=\tilde{n}_{1} \equiv 2 \cdot 2^{2v}!$ and $n_{2} = \tilde{n}_{2}$ where $\tilde{n}_{2}$ is some fixed odd integer. Thus, we consider the case with $n_{1} = \tilde{n}_{1}$ and $n_{2} = \tilde{n}_{2}$ first.

From lemma~\ref{lemma_1dim_strong}, there are $k$ logical operators $\ell_{1}, \cdots , \ell_{k}$ defined inside $Q(1)$. Let the number of logical operators defined inside $U_{2v,1}$ be $g_{U_{2v,1}} = k_{0}$ (for $n_{1}=\tilde{n}_{1}$ and $n_{2}=\tilde{n}_{2}$). Then, we can choose $\ell_{1}, \cdots , \ell_{k}$ such that (Fig.~\ref{fig_proof_aid2}(b))
\begin{itemize}
\item $\ell_{1}, \cdots, \ell_{k_{0}}$ are defined inside $U_{2v,1}$,
\item $\ell_{k_{0}+1}, \cdots, \ell_{k}$ are defined inside $Q(1)$ with $T_{1}^{b_{p}}(\ell_{p})= \ell_{p}$ for $p = k_{0}+1, \cdots , k$ for some integers $b_{p} \leq 2^{2v}$.
\end{itemize}
Here, we attach the indexes ``$\alpha$'' and ``$\beta$'' to represent whether the corresponding logical operator is periodic in the $\hat{1}$ direction or not:
\begin{align}
\ell_{1}^{\alpha}, \cdots, \ell_{k_{0}}^{\alpha} \qquad \mbox{and} \qquad \ell_{k_{0}+1}^{\beta}, \cdots, \ell_{k}^{\beta} \qquad
\end{align}
where $\ell_{p}^{\alpha} \equiv \ell_{p}$ for $p = 1,\cdots, k_{0}$ and $\ell_{p}^{\beta} \equiv \ell_{p}$ for $p=k_{0}+1,\cdots, k$. Then, from lemma~\ref{lemma_1dim_strong}, there exists a canonical set of logical operators which includes $\ell_{1}^{\alpha}, \cdots, \ell_{k_{0}}^{\alpha}$ and $\ell_{k_{0}+1}^{\beta}, \cdots, \ell_{k}^{\beta}$:
\begin{align}
\Pi(\mathcal{S}_{\tilde{n}_{1}, \tilde{n}_{2}}) = \left\{
\begin{array}{cccccc}
\ell^{\alpha}_{1} ,& \cdots ,& \ell^{\alpha}_{k_{0}} ,& \ell^{\beta}_{k_{0}+1} ,& \cdots ,& \ell^{\beta}_{k} \\
r_{1}    ,& \cdots ,& r_{k_{0}}    ,& r_{k_{0}+1}    ,& \cdots ,& r_{k}
\end{array}\right\}
\end{align}
where $r_{p}$ are periodic in the $\hat{2}$ direction: $T_{2}(r_{p}) = r_{p}$ for $p = 1,\cdots , k$. Here, $\Pi(\mathcal{S}_{\tilde{n}_{1}, \tilde{n}_{2}})$ represents a canonical set of logical operators for $n_{1}=\tilde{n}_{1}$ and $n_{2}=\tilde{n}_{2}$. We shall see that $\ell_{1}^{\alpha}, \cdots, \ell_{k_{0}}^{\alpha}$ are independent logical operators and $g_{2v,1}=k_{0}$ for any $n_{1}$ and $n_{2}$ in the course of discussions.

\begin{figure}[htb!]
\centering
\includegraphics[width=0.80\linewidth]{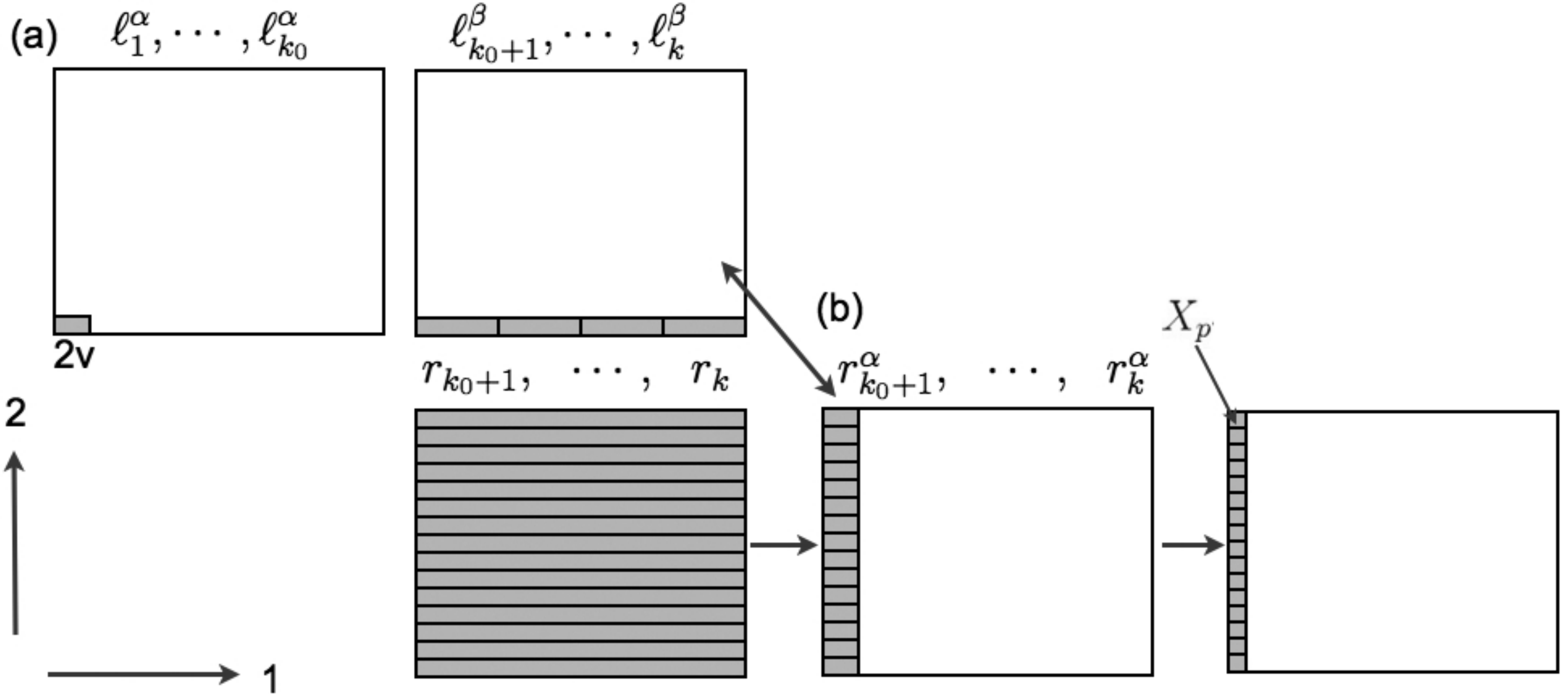}
\caption{(a) Logical operators $\ell_{p}^{\alpha}$ and $\ell_{p}^{\beta}$. $\ell_{p}^{\alpha}$ are defined inside $U_{2v,1}$. $\ell_{p}^{\beta}$ are defined inside $Q(1)$, and periodic in the $\hat{1}$ direction: $T_{1}^{b_{p}}(\ell^{\beta}_{p})=\ell^{\beta}_{p}$. Logical operators $r_{p}$ for $p = k_{0}+1, \cdots, k$ are anti-commuting pairs of $\ell^{\beta}_{p}$, and are periodic in the $\hat{2}$ direction. (b) $r_{p}^{\alpha}$ and their anti-commutation with $\ell_{p}^{\beta}$ for $p = k_{0}+1,\cdots,k$. A two-sided arrow represents the anti-commutation: $\{ \ell_{p}^{\beta}, r_{p}^{\alpha}\}=0$. $r_{p}^{\alpha}$ can be represented in a periodic way inside $Q(2)$: $r_{p}^{\alpha} = \prod_{y=1}^{n_{2}}T_{2}^{x}(X_{p})$.
} 
\label{fig_proof_aid2}
\end{figure}

For now, we concentrate on the analysis on $r_{k_{0}+1}, \cdots, r_{k}$. By using lemma~\ref{lemma_r}, we can decompose $r_{p}$ for $p = k_{0}+1, \cdots, k$ in the following way (see Fig.~\ref{fig_periodicity}):
\begin{align}
r_{p}\ \sim\ r_{p}^{\alpha}r_{p}^{\beta}  \quad \left(\ T_{2}(r_{p}^{\alpha})\ =\ r_{p}^{\alpha}\quad \mbox{and} \quad T_{2}(r_{p}^{\beta})\ =\ r_{p}^{\beta} \ \right) \qquad  \mbox{where} \quad T_{1}^{c_{p}}(r_{p}^{\beta})\ =\ r_{p}^{\beta} \quad (c_{p}\ \leq\ 2^{2v}).
\end{align}
Here, $r_{p}^{\alpha}$ are defined inside $U_{2v, \tilde{n}_{2}}$ (Fig.~\ref{fig_proof_aid2}(b)). Now, we show that $[\ell_{p}^{\beta}, r_{q}^{\beta}]=0$ for any $p$ and $q$ ($p,q = k_{0}+1, \cdots , k$). We have $T_{1}^{b_{p}}(\ell_{p}^{\beta}) = \ell_{p}^{\beta}$ and $T_{1}^{c_{q}}(r_{q}^{\beta}) = r_{q}^{\beta}$. Take the lowest common multiple of $b_{p}$ and $c_{q}$ as $d_{p,q}$. Then, we have 
\begin{align}
T_{1}^{d_{p,q}}(\ell_{p}^{\beta})\ =\ \ell_{p}^{\beta},  \qquad T_{1}^{d_{p,q}}(r_{q}^{\beta})\ =\ r_{q}^{\beta}.
\end{align}
In other words, $\ell_{p}^{\beta}$ and $r_{q}^{\beta}$ have a common periodicity $d_{p,q}$. Since $\tilde{n}_{1}/d_{p,q}$ is an even integer for $\tilde{n}_{1} = 2\cdot 2^{2v}!$, we must have $[\ell_{p}^{\beta},r_{q}^{\beta}]=0$. (This is the reason for the extra factor ``two'' in $2\cdot 2^{2v}!$). Now, commutation relations between $r_{p}^{\alpha}$ and $\ell_{q}^{\beta}$ are
\begin{align}
\left\{
\begin{array}{ccc}
\ell^{\beta}_{k_{0}+1} ,& \cdots ,& \ell^{\beta}_{k} \\
r^{\alpha}_{k_{0}+1}    ,& \cdots ,& r^{\alpha}_{k}
\end{array}
\right\} \ \subset \ \textbf{L}_{\tilde{n}_{1},\tilde{n}_{2}}.
\end{align}
Also, $r_{q}^{\alpha}$ commutes with $l^{\alpha}_{p}$ since $l^{\alpha}_{p}$ can be defined so that $l^{\alpha}_{p}$ has no overlap with $r_{p}^{\alpha}$ due to the translation equivalence of logical operators. Thus, we obtain the following set of independent logical operators (see Fig.~\ref{fig_proof_aid2}(a)(b)):
\begin{align}\label{eq:temp1}
\left\{
\begin{array}{cccccc}
\ell^{\alpha}_{1} ,& \cdots ,& \ell^{\alpha}_{k_{0}} ,& \ell^{\beta}_{k_{0}+1} ,& \cdots ,& \ell^{\beta}_{k} \\
              & &              & r_{k_{0}+1}^{\alpha}    ,& \cdots ,& r_{k}^{\alpha}
\end{array}\right\} \ \subset \ \textbf{L}_{\tilde{n}_{1},\tilde{n}_{2}}
\end{align}
where $\textbf{L}_{\tilde{n}_{1},\tilde{n}_{2}}$ is a set of all the logical operators for $n_{1}=\tilde{n}_{1}$ and $n_{2}=\tilde{n}_{2}$. Since all the logical operators defined inside $\overline{Q(2)}$ can be defined inside $Q(2)$ from lemma~\ref{lemma_1dim_strong}, logical operators $r_{k_{0}+1}^{\alpha}, \cdots, r_{k}^{\alpha}$ can be defined inside $Q(2)$. Let us recall that we have been considering the case with odd $\tilde{n}_{2}$. Given $r_{k_{0}+1}^{\alpha} , \cdots , r_{k}^{\alpha}$ defined inside $Q(2)$, we can represent $r_{p}^{\alpha}$ such that they are periodic in the $\hat{2}$ direction from lemma~\ref{lemma_periodicity} since
\begin{align}
r_{p}^{\alpha} \sim \prod_{y=1}^{\tilde{n}_{2}}T_{2}(r_{p}^{\alpha})
\end{align}
due to the translation equivalence of logical operators. Let us choose $r_{p}^{\alpha}$ in the following way (Fig.~\ref{fig_proof_aid2}(b)):
\begin{align}\label{eq:100}
r_{p}^{\alpha} \ =\ \prod_{y=1}^{\tilde{n}_{2}} T_{2}^{y}(X_{p})  \
                 = \ \begin{bmatrix}
                 X_{p}  ,&  I ,& \cdots ,& I \\
                 \vdots ,&  \vdots ,& \vdots ,& \vdots \\
                 X_{p}  ,&  I ,& \cdots ,& I \\
                 X_{p}  ,&  I ,& \cdots ,& I 
                 \end{bmatrix}\qquad (p = k_{0} + 1, \cdots,k)
\end{align}
where $X_{p}$ is some Pauli operator acting on a composite particle $P_{1,1}$. $X_{p}$ commutes with $X_{q}$ for $p,q = 1,\cdots, k_{0}$. Now, let us show that $r_{p}^{\alpha}$ for $p = k_{0}+1, \cdots, k$, defined in the forms in Eq.~(\ref{eq:100}), are independent logical operators for any $n_{2}$ and $n_{1} = \tilde{n}_{1}=2\cdot 2^{2v}!$. $r_{p}^{\alpha}$ are centralizer operators for any $n_{2}$. Since $r_{p}^{\alpha}$ anti-commutes with $\ell_{p}^{\beta}$, they are not stabilizers. Thus, the following logical operators are independent for any $n_{2}$ and $n_{1}= \tilde{n}_{1} = 2 \cdot 2^{2v}!$:
\begin{align}
\left\{
\begin{array}{ccc}
\ell^{\beta}_{k_{0}+1} ,& \cdots ,& \ell^{\beta}_{k} \\
r^{\alpha}_{k_{0}+1}    ,& \cdots ,& r^{\alpha}_{k}
\end{array}
\right\} \ \subset \ \textbf{L}_{\tilde{n}_{1}, \forall n_{2}}. \label{eq:canonical_summary1}
\end{align}
Here, $\textbf{L}_{\tilde{n}_{1}, \forall n_{2}}$ means that operators on the left hand side are logical operators for any $n_{2}$ and $n_{1}=\tilde{n}_{1}$.

We continue to consider the case where $n_{1}= \tilde{n}_{1} = 2\cdot 2^{2v}!$ and $n_{2} = \tilde{n}_{2}$. Since $\ell^{\alpha}_{1}, \cdots , \ell^{\alpha}_{k_{0}}$ and $r_{k_{0}+1}^{\alpha}, \cdots, r_{k}^{\alpha}$ are independent logical operators which can be defined inside $Q(2)$, by using lemma~\ref{lemma_1dim_strong}, we notice that there exists a following canonical set of logical operators:
\begin{align}
\Pi(\mathcal{S}_{\tilde{n}_{1}, \tilde{n}_{2}})'\ =\
\left\{
\begin{array}{cccccc}
\ell^{\alpha}_{1} ,& \cdots ,& \ell^{\alpha}_{k_{0}} ,& r_{k_{0}+1}^{\alpha}    ,& \cdots ,& r_{k}^{\alpha} \\
\bar{Z}_{1}       ,& \cdots ,&      \bar{Z}_{k_{0}}  ,& \bar{Z}_{k_{0}+1}    ,& \cdots ,& \bar{Z}_{k}
\end{array}\right\}
\end{align}
where $\bar{Z}_{p}$ are periodic in the $\hat{1}$ direction: $T_{1}(\bar{Z}_{p}) = \bar{Z}_{p}$. Here, we denote this new canonical set as $\Pi(\mathcal{S}_{\tilde{n}_{1}, \tilde{n}_{2}})'$ to make the difference from $\Pi(\mathcal{S}_{\tilde{n}_{1},\tilde{n}_{2}})$ clear. Recall that $\bar{Z}_{p}$ are periodic in the $\hat{1}$ direction. Since $n_{2}$ is odd, we can define $\bar{Z}_{p}$ such that they are periodic both in the $\hat{1}$ and $\hat{2}$ directions:
\begin{align}
\bar{Z}_{p}\ \sim\ \prod_{y=1}^{\tilde{n}_{2}} T_{2}^{y}(\bar{Z}_{p}).
\end{align}
Now, through some calculations of commutation relations, we notice that the following logical operator form a canonical set:
\begin{align}
\Pi(\mathcal{S}_{\tilde{n}_{1}, \tilde{n}_{2}})''\ =\
\left\{
\begin{array}{cccccc}
\ell^{\alpha}_{1} ,& \cdots ,& \ell^{\alpha}_{k_{0}} ,& r_{k_{0}+1}^{\alpha}    ,& \cdots ,& r_{k}^{\alpha} \\
\bar{Z}_{1}       ,& \cdots ,&      \bar{Z}_{k_{0}}  ,& \ell_{k_{0}+1}^{\beta}    ,& \cdots ,& \ell_{k}^{\beta}
\end{array}\right\}.
\end{align}

Here, we notice that $\ell_{p}^{\alpha}$ and $\bar{Z}_{p}$ ($p = 1, \cdots, k_{0}$) are independent logical operators for any $n_{1}$ and $n_{2}$ since they are centralizer operators and anti-commute with each other for any $n_{1}$ and $n_{2}$. Let us represent $\bar{Z}_{p}$ as
\begin{align}
\bar{Z}_{p} \ =\ \prod_{x=1}^{\tilde{n}_{1}}\prod_{y=1}^{\tilde{n}_{2}}  T_{1}^{x}T_{2}^{y}(U_{p})  \
              = \ \begin{bmatrix}
                 U_{p}  ,&  U_{p} ,& \cdots ,& U_{p} \\
                 \vdots ,&  \vdots ,& \vdots ,& \vdots \\
                 U_{p}  ,&  U_{p} ,& \cdots ,& U_{p} \\
                 U_{p}  ,&  U_{p} ,& \cdots ,& U_{p} 
                 \end{bmatrix} \qquad (p\ =\ 1,\cdots,k_{0})
\end{align}
where $U_{p}$ is a Pauli operator acting on a composite particle $P_{1,1}$. From geometric shapes, $\bar{Z}_{p}$ ($p = 1, \cdots, k_{0}$) are two-dimensional logical operators, and $\ell_{p}^{\alpha}$ ($p = 1, \cdots, k_{0}$) are zero-dimensional logical operators. 

While two-dimensional logical operators $\bar{Z}_{p}$ commute with each other: $[\bar{Z}_{p},\bar{Z}_{p'}]=0$ for $p, p' = 1, \cdots, k_{0}$ when $n_{1} = \tilde{n}_{1}$ and $n_{2} = \tilde{n}_{2}$, they may anti-commute when both $n_{1}$ and $n_{2}$ are odd integers. In order to find two-dimensional logical operators which commute with each other, we consider the case where $n_{1}$ and $n_{2}$ are some odd integers. To indicate the difference, we denote $n_{1}=\hat{n}_{1}$ and $n_{2}=\hat{n}_{2}$ where $\hat{n}_{1}$ and $\hat{n}_{2}$ are odd integers. Then, $\bar{Z}_{p}$ may not commute with each other since $U_{p}$ may not commute with each other for $p = 1, \cdots,k_{0}$. Let us represent the commutation relations between $\bar{Z}_{p}$ as follows:
\begin{align}
\bar{Z}_{p}\bar{Z}_{q}\ =\ (-1)^{e_{p,q}} \bar{Z}_{q}\bar{Z}_{p}, \quad e_{p,q}\ =\ \pm 1 \qquad \mbox{for}\ p,q\ =\ 1,\cdots, k_{0}.
\end{align}
Here, we define the following two-dimensional logical operators:
\begin{align}
r_{p}^{\beta} \ \equiv\ \bar{Z}_{p} \prod_{q=1}^{k_{0}} ( \hat{\ell}_{p}^{\alpha} )^{ e_{p,q} }, \qquad
\hat{ \ell }_{p}^{\alpha}\ \equiv\ \prod_{x=1}^{\hat{n}_{1}} \prod_{y=1}^{\hat{n}_{2}}  T_{1}^{x} T_{2}^{y} (\ell^{\alpha}_{p})\ \sim\ \ell_{p}^{\alpha}.
\end{align}
We note that $\hat{\ell}_{p}^{\alpha}$ is periodic both in the $\hat{1}$ and $\hat{2}$ directions. Then, we have a following set of independent logical operators (see lemma~\ref{lemma_equiv}):
\begin{align}\label{eq:111}
\left\{
\begin{array}{cccccc}
\ell^{\alpha}_{1} ,& \cdots ,& \ell^{\alpha}_{k_{0}} ,&                   & &  \\
r^{\beta}_{1}    ,& \cdots ,& r^{\beta}_{k_{0}}     ,& r_{k_{0}+1}^{\alpha}  ,& \cdots ,& r_{k}^{\alpha}
\end{array}\right\} \ \subset \ \textbf{L}_{\hat{n}_{1}, \hat{n}_{2}}. 
\end{align}
Here, $\ell^{\alpha}_{1}, \cdots , \ell^{\alpha}_{k_{0}}$ are zero-dimensional logical operators and $r^{\beta}_{1}, \cdots , r^{\beta}_{k_{0}}$ are two-dimensional logical operators. In particular, they are logical operators regardless of $n_{1}$ and $n_{2}$:
\begin{align}
\left\{
\begin{array}{ccc}
\ell^{\alpha}_{1} ,& \cdots ,& \ell^{\alpha}_{k_{0}}   \\
r^{\beta}_{1}    ,& \cdots ,& r^{\beta}_{k_{0}}  
\end{array}\right\} \ \subset \ \textbf{L}_{\forall n_{1}, \forall n_{2}}. \label{eq:canonical_summary2}
\end{align}
Here, $\textbf{L}_{\forall n_{1}, \forall n_{2}}$ means that $\ell^{\alpha}_{p}$ and $r^{\beta}_{p}$ are independent logical operators for all $n_{1}$ and $n_{2}$. 

Let us summarize our discussions so far. In particular, by combining Eq.~(\ref{eq:canonical_summary1}) and Eq.~(\ref{eq:canonical_summary2}), we have the following independent logical operators for any $n_{2} > 2v$ and $n_{1} = 2 \cdot 2^{2v}!$:
\begin{align}
\Pi(\mathcal{S}_{ \tilde{n}_{1}, \forall n_{2}})''\ =\
\left\{
\begin{array}{cccccc}
\ell^{\alpha}_{1} ,& \cdots ,& \ell^{\alpha}_{k_{0}} ,& r_{k_{0}+1}^{\alpha}    ,& \cdots ,& r_{k}^{\alpha} \\
r_{1}^{\beta}       ,& \cdots ,&      r_{k_{0}}^{\beta}  ,& \ell_{k_{0}+1}^{\beta}    ,& \cdots ,& \ell_{k}^{\beta}
\end{array}\right\}
\end{align}
where $\Pi(\mathcal{S}_{ \tilde{n}_{1}, \forall n_{2}})''$ means that this canonical set is valid for $n_{1}=\tilde{n}_{1}$ and any $n_{2}$. One can represent each logical operator in the following way.
\begin{itemize}
\item $\ell^{\alpha}_{p}$ are zero-dimensional logical operators defined inside $U_{2v,1}$ for $p=1,\cdots, k_{0}$.
\item $r_{p}^{\beta}$ are two-dimensional logical operators which are periodic in both directions: $T_{1}(r_{p}^{\beta})=T_{2}(r_{p}^{\beta})=r_{p}^{\beta}$ with 
\begin{align}
r_{p}^{\beta} \ = \ \prod_{x=1}^{n_{1}} \prod_{y=1}^{n_{2}} T_{1}^{x}T_{2}^{y}(X_{p}) \
= \ \begin{bmatrix}
                 X_{p}  ,&  X_{p} ,& \cdots ,& X_{p} \\
                 \vdots ,&  \vdots ,& \vdots ,& \vdots \\
                 X_{p}  ,&  X_{p} ,& \cdots ,& X_{p} \\
                 X_{p}  ,&  X_{p} ,& \cdots ,& X_{p} 
                 \end{bmatrix}
\end{align}
for $p = 1, \cdots , k_{0}$.
\item $r_{p}^{\alpha}$ are one-dimensional logical operators defined inside $Q(2)$, and periodic: $T_{2}(r_{p}^{\alpha})= r_{p}^{\alpha}$ with 
\begin{align}
r_{p}^{\alpha} \ = \ \prod_{y=1}^{n_{2}} T_{2}^{y}(X_{p}) \
                 = \ \begin{bmatrix}
                 X_{p}  ,&  I ,& \cdots ,& I \\
                 \vdots ,&  \vdots ,& \vdots ,& \vdots \\
                 X_{p}  ,&  I ,& \cdots ,& I \\
                 X_{p}  ,&  I ,& \cdots ,& I 
                 \end{bmatrix}
\end{align}
for $p = k_{0} + 1, \cdots , k$.
\item $\ell_{p}^{\beta}$ are one-dimensional logical operators defined inside $Q(1)$, and periodic with periodicities $b_{p}$: $T_{1}^{b_{p}}(\ell_{p}^{\beta})=\ell_{p}^{\beta}$ where $b_{p}\leq 2^{2v}$. 
\end{itemize}
Here, we notice that logical operators $\ell^{\alpha}_{p}$ ($p=1,\cdots, k_{0}$), $r_{p}^{\alpha}$ ($p=k_{0}+1,\cdots, k$) and $r_{p}^{\beta}$ ($p=1,\cdots, k_{0}$) have the forms described in a canonical set of logical operators described in Section~\ref{sec:2D2}.

To complete the proof of the theorem, we need to analyze $\ell_{p}^{\beta}$. In particular, we need to show that $\ell_{p}^{\beta}$ can be defined in a periodic way: $T_{1}(\ell_{p}^{\beta})=\ell_{p}^{\beta}$. In fact, it is immediate to show this by considering the case where $n_{2} = 2\cdot 2^{2v}!$ and $n_{1}$ is odd instead of the case where $n_{1} = 2\cdot 2^{2v}!$ and $n_{2}$ is odd. The proof basically follows the same discussion used in the analyses on the case $n_{1} = 2\cdot 2^{2v}!$ and $n_{2}$ is odd, and thus, we skip it. This completes the derivation of a canonical set of logical operators.

% I believe I can skip the rest of the derivation. 

\section{The existence of loop-like stabilizers}\label{sec:loop}

In this appendix, we sketch the proof for the existence of stabilizers which are used to construct loop-like stabilizers appeared in Section~\ref{sec:2D2}. For simplicity of discussion, we discuss the cases without zero-dimensional logical operators ($k_{0}=0$ and $k_{1}=k$). Generalizations to the cases where $k_{0}\not=0$ are immediate.

Let us recall that one can represent all the one-dimensional logical operators as follows:
\begin{align}
\ell_{p}\ =\ \prod_{j} Z_{p}^{(1,j)} 
         \ = \ 
      \begin{bmatrix}
       Z_{p}      , & I      , & \cdots ,& I     \\
       \vdots  & \vdots  & \vdots & \vdots \\
       Z_{p}      , & I      , & \cdots ,& I     \\
       Z_{p}   , & I  , & \cdots ,& I       
      \end{bmatrix}, \quad
r_{p}\ =\ \prod_{i} X_{p}^{(i,1)} 
         \ = \ 
      \begin{bmatrix}
       I      , & I      , & \cdots ,& I     \\
       \vdots  & \vdots  & \vdots & \vdots \\
       I      , & I      , & \cdots ,& I     \\
       X_{p}   , & X_{p}  , & \cdots ,& X_{p}       
      \end{bmatrix} 
\end{align}
for $p =1, \cdots, k$. These $2k$ operators are independent logical operators regardless of $n_{1}$ and $n_{2}$.

We represent independent stabilizer generators which can be defined inside $2 \times 2$ composite particles in the following way:
\begin{align}
S_{q}^{(i,j)} \ = \ \left[
\begin{array}{cc}
C_{q} ,& D_{q} \\
A_{q} ,& B_{q} 
\end{array}\right]^{(i,j)}
\end{align}
for $q = 1, \cdots, v'$ with $v' \geq v$ where $v$ is the number of qubits inside each composite particle. Stabilizers $S_{q}^{(i,j)}$ are well defined for the cases where $n_{1}=1$ or $n_{2}=1$, as discussed in Section~\ref{sec:model}. For example, when $n_{2}=1$, $S_{q}^{(i,1)}$ may be represented as 
\begin{align}
S_{q}^{(i,1)} \ = \ [\alpha_{q} , \beta_{q}]^{(i,1)}
\end{align}
where $\alpha_{q} \equiv A_{q}C_{q}$ and $\beta_{q} \equiv B_{q}D_{q}$. $S_{q}^{(i,1)}$ is defined inside a region with $2 \times 1$ composite particles.

For convenience of discussion, we introduce the following notations:
\begin{align}
\alpha(\textbf{R}) \ = \ \prod_{q \in \textbf{R}} \alpha_{q}, \qquad \beta(\textbf{R}) \ = \ \prod_{q \in \textbf{R}} \alpha_{q}, \qquad S^{(i,1)}(\textbf{R}) \ = \ \prod_{q \in \textbf{R}} S^{(i,1)}_{q}
\end{align}
where $\textbf{R}$ represents a set of integers in $\mathbb{Z}_{v'}$. We also define the summation of sets of integers $\textbf{R}$ and $\textbf{R}'$ such that
\begin{align}
\alpha(\textbf{R} \oplus \textbf{R}') \ = \ \alpha(\textbf{R})\alpha(\textbf{R}').
\end{align}
Thus, ``$\oplus$'' satisfies the following properties:
\begin{equation}
\begin{split}
j \ \in \ \textbf{R} \quad j \ \in \textbf{R}' &\Rightarrow j \in \textbf{R} \oplus \textbf{R}'\\
j \ \in \ \textbf{R} \quad j \ \not\in \textbf{R}' &\Rightarrow j \not\in \textbf{R} \oplus \textbf{R}'\\
j \ \not\in \ \textbf{R} \quad j \ \in \textbf{R}' &\Rightarrow j \not\in \textbf{R} \oplus \textbf{R}'\\
j \ \not\in \ \textbf{R} \quad j \ \not\in \textbf{R}' &\Rightarrow j \in \textbf{R} \oplus \textbf{R}'.
\end{split}
\end{equation}

We begin by proving the following lemma.

\begin{lemma}\label{lemma:jo}
For a two-dimensional STS model, consider the cases with $n_{1}\times 1$ composite particles ($n_{1}\geq 2$ and $n_{2}=1$). Then, there exist sets of integers $\textbf{R}_{p}$ ($p = 1,\cdots, k$) such that
\begin{align}
\alpha(\textbf{R}_{p}) \ = \ \beta(\textbf{R}_{p}),
\end{align}
and 
\begin{align}
\left[ \alpha(\textbf{R}_{p}), I , \cdots ,I  \right] \ \sim \ \ell_{p} \ = \ \left[ Z_{p}, I, \cdots I \right].
\end{align}
\end{lemma}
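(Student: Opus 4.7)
My plan is to establish Lemma~\ref{lemma:jo} by exploiting the fact that the folded model with $n_2 = 1$ is itself a one-dimensional STS model, and then applying the overlapping-operator-group machinery developed in~\ref{sec:1D_proof}. Since the stabilizer generators $S_q^{(i,1)} = [\alpha_q, \beta_q]^{(i,1)}$ span only two neighboring composite particles in the $\hat{1}$ direction, and since by scale symmetry the number of logical qubits remains $k$ in this folded code, I can treat $\mathcal{S}_{n_1,1}$ as a one-dimensional STS model with $k$ logical qubits. Moreover, the original two-dimensional one-dimensional logical operators $\ell_p$ collapse under the folding into single-site Pauli operators $[Z_p, I, \ldots, I]$ sitting on the composite particle $P_{1,1}$, providing $k$ independent logical operators of the folded code that are zero-dimensional in the $1$D sense.

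Next, I would form the overlapping operator group $\mathcal{O}^{P_{1,1}}$ at $P_{1,1}$ exactly as in~\ref{sec:1D_proof}. Splitting this group into the subgroups $\mathcal{O}(\alpha) = \langle \alpha_q \rangle$ and $\mathcal{O}(\beta) = \langle \beta_q \rangle$ (which commute elementwise because the $S_q^{(i,1)}$ commute), and re-choosing the generator set for the local stabilizer group so that $\mathcal{O}(\alpha)$ is in canonical form, I obtain a representation
\begin{align*}
\mathcal{O}(\alpha) \ = \ \langle \alpha_1', \ldots, \alpha_t', \alpha_{t+1}', \ldots, \alpha_{2t}', \ell_1, \ldots, \ell_{k}, \mathcal{S}_{P_{1,1}} \rangle,
\end{align*}
in which each of the $k$ single-site logical generators is realized as the $\alpha$-part of a stabilizer whose $\beta$-part is the same operator, i.e., $S_{2t+p}^{(i,1)} = [\ell_p, \ell_p]^{(i,1)}$. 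That precisely $k$ (rather than any smaller number $k'<k$) logical generators appear follows from the counting argument given in~\ref{sec:1D_proof}: the translation equivalence of logical operators (Theorem~\ref{theorem_main}) guarantees that every independent logical operator has a representative supported on $P_{1,1}$, contributing $k$ independent single-site elements to $\mathcal{O}(\alpha)$.

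Each such stabilizer $S_{2t+p}^{(i,1)}$ is by construction a product of the original generators $S_q^{(i,1)}$, so there exists a subset $\textbf{R}_p \subset \{1, \ldots, v'\}$ with $\prod_{q \in \textbf{R}_p} S_q^{(i,1)} = S_{2t+p}^{(i,1)}$. Reading off the two entries of $[\ell_p, \ell_p]^{(i,1)}$ then yields $\alpha(\textbf{R}_p) = \ell_p = \beta(\textbf{R}_p)$, which is the first assertion of the lemma, and immediately gives $[\alpha(\textbf{R}_p), I, \ldots, I] = [\ell_p, I, \ldots, I] \sim \ell_p$ (up to relabeling the $k$ independent logical operators to identify the index $p$ with the label $Z_p$), which is the second assertion.

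The main obstacle will be making this canonical-form reduction rigorous in the presence of purely local stabilizers in $\mathcal{S}_{P_{1,1}}$ supported on a single composite particle: I must verify that such elements can be split off cleanly without polluting the choice of $\textbf{R}_p$, so that the equality $\alpha(\textbf{R}_p) = \beta(\textbf{R}_p)$ holds exactly as Pauli operators rather than only up to a local stabilizer. Careful bookkeeping when re-selecting the generators $S_q^{(i,1)}$ that realize the canonical form, combined with absorbing any unwanted single-site factors into the final choice of $\textbf{R}_p$, should resolve this point.
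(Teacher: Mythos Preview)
Your approach is genuinely different from the paper's, and it is worth contrasting the two. The paper works concretely at $n_1=2$: since $T_1(\ell_p)\ell_p=[Z_p,Z_p]$ is a stabilizer by translation equivalence, it can be written as $S^{(i,1)}(\textbf{R}_p)\cdot T_1\bigl(S^{(i,1)}(\textbf{R}_p')\bigr)$ for two index sets $\textbf{R}_p,\textbf{R}_p'$. Reading off the two entries gives $\alpha(\textbf{R}_p)\beta(\textbf{R}_p')=Z_p=\alpha(\textbf{R}_p')\beta(\textbf{R}_p)$, and from this symmetric pair of equations one extracts the single set $\textbf{R}_p\oplus\textbf{R}_p'$ satisfying $\alpha(\textbf{R}_p\oplus\textbf{R}_p')=\beta(\textbf{R}_p\oplus\textbf{R}_p')$. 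The argument is purely algebraic and never needs the full canonical form of the overlapping operator group. Your route, by contrast, imports the one-dimensional structure theorem of~\ref{sec:1D_proof}; it is more conceptual, and if it goes through it explains \emph{why} such $\textbf{R}_p$ exist rather than producing them by manipulation.

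There is, however, a real gap in your argument, precisely at the point you flag as the ``main obstacle.'' In~\ref{sec:1D_proof} the generators $S_j=[\alpha_j,\beta_j]$ are taken to range over \emph{all} independent stabilizers supported on $P_1\cup P_2$, and the identification $S_{2t+p}=[\ell_p,\ell_p]$ is obtained by noting that $[\ell_p,\ell_p]$ lies in $\mathcal{S}_{P_1\cup P_2}$ and hence can be \emph{chosen} as one of the listed $S_j$. In the setting of Lemma~\ref{lemma:jo}, however, the $S_q^{(i,1)}$ are only the folded local generators (indexed by $q=1,\dots,v'$), and you need the stabilizer $[\ell_p,\ell_p]$ to lie in the subgroup $\langle S_q^{(i,1)}\rangle$, not merely in $\mathcal{S}_{P_1\cup P_2}$. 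If you instead re-choose generators within $\langle S_q^{(i,1)}\rangle$ and put $\langle\alpha_q\rangle$ in canonical form, the central elements give $S(\textbf{R}_p)=[\alpha(\textbf{R}_p),\beta(\textbf{R}_p)]$ with $[\alpha(\textbf{R}_p),I]$ a logical operator, but one only concludes $\alpha(\textbf{R}_p)\beta(\textbf{R}_p)\in\mathcal{S}_{P_1}$, not $\alpha(\textbf{R}_p)=\beta(\textbf{R}_p)$. Your proposed fix---absorbing the single-site discrepancy into $\textbf{R}_p$---requires that every $s\in\mathcal{S}_{P_1}$ of the folded code be realizable as $S(\textbf{R}_s)=[s,I]$ for some $\textbf{R}_s\subset\{1,\dots,v'\}$, which is not obvious: such an $s$ is a stabilizer of the $n_2=1$ system, and there is no a~priori reason it must lie in the image of the $2\times2$ local stabilizer group under folding with trivial $\beta$-part. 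The paper's $n_1=2$ computation sidesteps this entirely, because the symmetric role of $\textbf{R}_p$ and $\textbf{R}_p'$ in the decomposition forces the exact equality for $\textbf{R}_p\oplus\textbf{R}_p'$ directly.
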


\begin{proof}
Let us consider the case with $n_{1}=2$ and $n_{2}=1$ ($2 \times 1$ composite particles). In this case, logical operators are 
$\ell_{p} = [ Z_{p} , I ]$ and $r_{p} = [ X_{p} , X_{p}]$.
Here, we represent Pauli operators acting on the system of $2 \times 1$ composite particles through two component vectors. These representations should not be confused with commutators. 

Due to the translation equivalence of logical operators, $T_{1}(\ell_{p})\ell_{p}$ is a stabilizer:
\begin{align}
T_{1}(\ell_{p})\ell_{p} \ = \ [ Z_{p} , Z_{p} ] \ \in \mathcal{S}_{2,1}.
\end{align}
Here, $\mathcal{S}_{2,1}$ is the stabilizer group. Then, this stabilizer can be decomposed as a product of stabilizers $S_{q}^{(i,1)}$ and their translations. In particular, there exist some set of integers $\textbf{R}_{p}$ and $\textbf{R}_{p}'$ such that
\begin{align}
[Z_{p} , Z_{p} ] \ = \ S^{(i,1)}(\textbf{R}_{p})T_{1}(S^{(i,1)}(\textbf{R}_{p}')) = \ [\alpha(\textbf{R}_{p})\beta(\textbf{R}_{p}') , \alpha(\textbf{R}_{p}')\beta(\textbf{R}_{p})] 
\end{align}
where $S^{(i,1)}(\textbf{R}_{p}) = [ \alpha(\textbf{R}_{p}) , \beta(\textbf{R}_{p}) ]$ and $S^{(i,1)}(\textbf{R}_{p}') = [ \alpha(\textbf{R}_{p}') , \beta(\textbf{R}_{p}') ]$.
Here, we have 
\begin{align}
[Z_{p} , I ] \ = \ [\alpha(\textbf{R}_{p})\beta(\textbf{R}_{p}') , I]  \ = \ [\alpha(\textbf{R}_{p}')\beta(\textbf{R}_{p}), I] \ \in \ \mathcal{C}_{2,1}.
\end{align}
Also, let us consider the following stabilizer:
\begin{align}
S^{(i,1)}(\textbf{R}_{p} \oplus \textbf{R}_{p}') \ = \ [\alpha(\textbf{R}_{p} \oplus \textbf{R}_{p}') , \beta(\textbf{R}_{p} \oplus\textbf{R}_{p}')].
\end{align} 
Then, since 
\begin{align}
Z_{p} = \alpha(\textbf{R}_{p})\beta(\textbf{R}_{p}') = \alpha(\textbf{R}_{p}')\beta(\textbf{R}_{p}),
\end{align}
we have $\alpha(\textbf{R}_{p} \oplus \textbf{R}_{p}') = \beta(\textbf{R}_{p} \oplus \textbf{R}_{p}')$, and
\begin{align}
[ \alpha(\textbf{R}_{p} \oplus \textbf{R}_{p}') , I ]\ = \ [\beta(\textbf{R}_{p} \oplus \textbf{R}_{p}'), I] \ \in \ \mathcal{C}_{2,1}. 
\end{align}
Then, we notice that $[ \alpha(\textbf{R}_{p})\beta(\textbf{R}_{p}) , I ] \in  \mathcal{C}_{2,1}$.
Since $S^{(i,1)}(\textbf{R}_{p}) = [ \alpha(\textbf{R}_{p}) , \beta(\textbf{R}_{p}) ]$,
we also have $[ \alpha(\textbf{R}_{p}), I ],  [ \beta(\textbf{R}_{p}), I ]  \in  \mathcal{C}_{2,1}$.
By using a similar discussion, we eventually have 
\begin{align}
[ \alpha(\textbf{R}_{p}), I ], \ [ \alpha(\textbf{R}_{p}'), I ], \ [ \beta(\textbf{R}_{p}), I ], \ [ \beta(\textbf{R}_{p}'), I ]\ \in \ \mathcal{C}_{2,1}. 
\end{align}

Since $S^{(i,1)}(\textbf{R}_{p}) = [ \alpha(\textbf{R}_{p}) , \beta(\textbf{R}_{p}) ]  \in  \mathcal{S}_{2,1}$,
we have 
$[ \alpha(\textbf{R}_{p})\beta(\textbf{R}_{p}), I ] \in \mathcal{S}_{2,1}$
due to the translation equivalence of logical operators. Thus, we finally have 
\begin{align}
S^{(i,1)}(\textbf{R}_{p}\oplus \textbf{R}_{p}') \ = \ [\alpha(\textbf{R}_{p}\oplus \textbf{R}_{p}') , \alpha(\textbf{R}_{p}\oplus \textbf{R}_{p}') ]
\end{align}
and 
\begin{align}
[\alpha(\textbf{R}_{p}\oplus \textbf{R}_{p}') , I ] \ \sim \ [\beta(\textbf{R}_{p}) \alpha(\textbf{R}_{p}') , I ] \ = \ \ell_{p} \ = \ [Z_{p} , I ].
\end{align}
Though we have limited our considerations to the cases with $2\times 1$ composite particles, $S^{(i,1)}(\textbf{R}_{p}\oplus \textbf{R}_{p}')$ satisfy the properties described in the lemma for the cases with $n_{1}\times 1$ composite particles for arbitrary $n_{1}$. This can be shown by using the fact that $\ell_{p} = [Z_{p}, I,\cdots,I]$ are logical operators regardless of $n_{1}$. This completes the proof of the lemma. 
\end{proof}

At the beginning of the discussion, we have given a specific set of $S_{q}^{(i,j)}$. However, \emph{as long as they generate the same stabilizer group $\mathcal{S}_{n_{1},n_{2}}$, one can freely choose any stabilizers defined inside a region with $2\times 2$ composite particles as $S_{q}^{(i,j)}$}. Here, we choose $S_{p}^{(i,j)}$ so that the following operators are independent logical operators:
\begin{align}
\ell(\alpha)_{p} \ \equiv \ 
      \begin{bmatrix}
       \alpha_{p}      , & I      , & \cdots ,& I     \\
       \vdots  & \vdots  & \vdots & \vdots \\
       \alpha_{p}      , & I      , & \cdots ,& I     \\
       \alpha_{p}   , & I  , & \cdots ,& I       
      \end{bmatrix} 
\end{align}
for $p = 1, \cdots, k$ where $\alpha_{p} \equiv A_{p}C_{p}$.

Next, let us consider the case with $n_{1} \times 1$ composite particles.
When $n_{1}=1$, $S_{q}^{(1,j)}$ may be represented as 
\begin{align}
S_{q}^{(1,j)} \ = \ \left[
\begin{array}{c}
\delta_{q}  \\
\gamma_{q}
\end{array}\right]^{(i,1)}
\end{align}
where $\gamma_{q} \equiv A_{q}B_{q}$ and $\delta_{q} \equiv C_{q}D_{q}$. Here, we consider the following centralizer operators:
\begin{align}
r(\gamma)_{p} \ \equiv \
      \begin{bmatrix}
       I      , & I      , & \cdots ,& I     \\
       \vdots  & \vdots  & \vdots & \vdots \\
       I      , & I      , & \cdots ,& I     \\
       \gamma_{p}   , & \gamma_{p}  , & \cdots ,& \gamma_{p}       
      \end{bmatrix}.
\end{align}
While $\ell(\alpha)_{p}$ are independent logical operators for $p=1,\cdots ,k$, centralizer operators $r(\gamma)_{p}$ may not be independent logical operators. However, through some speculations, one may notice that one can choose $S_{p}^{(i,1)}$ such that both $\ell(\alpha)_{p}$ and $r(\gamma)_{p}$ are logical operators for $p = 1, \cdots, k$. Then, $\ell(\alpha)_{p}$ and $r(\gamma)_{p}$ are independent logical operators for $p = 1, \cdots, k$. 

Though we have obtained $2k$ independent logical operators, $\ell(\alpha)_{p}$ and $r(\gamma)_{p}$ do not form a canonical set of logical operators. In fact, one will soon see that $[\ell(\alpha)_{p}, r(\gamma)_{p}]=0$ since $[\alpha_{p}, \gamma_{p}]=0$. In order to represent a canonical set of logical operators in terms of $\ell(\alpha)_{p}$ and $r(\gamma)_{p}$, let us use the following commutation relations:
\begin{align}
[\alpha_{p}, \alpha_{p'}] \ = \ 0, \qquad [\gamma_{p}, \gamma_{p'}] \ = \ 0, \qquad [\alpha_{p}, \gamma_{p}] \ = \ 0.\label{eq:commutation_deform}
\end{align}
Since commutations $[\alpha_{p}, \alpha_{p'}] = 0$ and $[\gamma_{p}, \gamma_{p'}] = 0$ are obvious, let us show $[\ell(\alpha)_{p}, r(\gamma)_{p}]=0$. 
We use the fact that 
\begin{align}
S_{p}^{(i,j)} \ = \ \left[
\begin{array}{cc}
C_{p} ,& D_{p} \\
A_{p} ,& B_{p} 
\end{array}\right]^{(i,j)}
\end{align}
are stabilizers, and, they commute with their own translations. For $p = 1,\cdots,k$, we have $D_{p}=A_{p}B_{p}C_{p}$. Then, we have
\begin{align}
[A_{p}, D_{p}] \ = \ [A_{p}, A_{p}B_{p}C_{p}] \ = \ [A_{p}, B_{p}C_{p}] \ = \ 0, \qquad [B_{p}, C_{p}] \ = \ 0,
\end{align}
and, as a result, we have
\begin{align}
[\alpha_{p},\gamma_{p}] \ = \ [ A_{p}C_{p}, A_{p}B_{p} ] \ = \ 0. 
\end{align}
With small extension, one can also show that 
\begin{align}
 [\alpha(\textbf{R}), \gamma(\textbf{R})] \ = \ 0
\end{align}
for any $\textbf{R}$ where $\textbf{R}$ is a set of integers inside $\mathbb{Z}_{k}$. Here, $\gamma(\textbf{R})$ is defined in a way similar to $\alpha(\textbf{R})$.

Now, let us show that $k$ is even, and the existence of deformers described in Section~\ref{sec:2D2}. Below, we shall frequently change the choices of $S_{q}^{(i,j)}$ while keeping the group generated from $S_{q}^{(i,j)}$: $\langle \{ S_{q}^{(i,j)} \}_{q=1}^{k} \rangle$.

Let us begin by analyzing the case where $k=1$. Then, we have $[\alpha_{1}, \gamma_{1}]=0$. However, this contradicts with the fact that the following operators are independent logical operators:
\begin{align}
\begin{bmatrix}
       \alpha_{1}      , & I      , & \cdots ,& I     \\
       \vdots  & \vdots  & \vdots & \vdots \\
       \alpha_{1}      , & I      , & \cdots ,& I     \\
       \alpha_{1}   , & I  , & \cdots ,& I       
      \end{bmatrix}, \quad
      \begin{bmatrix}
       I      , & I      , & \cdots ,& I     \\
       \vdots  & \vdots  & \vdots & \vdots \\
       I      , & I      , & \cdots ,& I     \\
       \gamma_{1}   , & \gamma_{1}  , & \cdots ,& \gamma_{1}       
      \end{bmatrix}.
\end{align}
Thus, $k \not= 1$.

Next, let us consider the case where $k=2$. Then, the commutation relations between $\alpha_{p}$ and $\gamma_{p}$ must be
\begin{align}
\left\{
\begin{array}{cc}
\alpha_{1},& \alpha_{2} \\
\gamma_{2},& \gamma_{1}
\end{array}\right\}.
\end{align}
Here, we apply local unitary transformations to qubits inside each composite particle in the following way:
\begin{align}
U\left\{
\begin{array}{cc}
\alpha_{1},& \alpha_{2} \\
\gamma_{2},& \gamma_{1}
\end{array}\right\}U^{-1} \ = \ 
\left\{
\begin{array}{cc}
Z_{1},& Z_{2} \\
X_{1},& X_{2}
\end{array}\right\}.
\end{align}
Then, after unitary transformations, we have 
\begin{align}
A_{1}C_{1} \ = \ Z_{1}, \qquad  A_{2}C_{2} \ = \ Z_{2}, \qquad A_{1}B_{1} \ = \ X_{2}, \qquad  A_{2}B_{2} \ = \ X_{1}.
\end{align}
Thus, with some Pauli operators $P_{1}$ and $P_{2}$, one can represent $S^{i,j}_{1}$ and $S^{i,j}_{2}$ as follows:
\begin{align}
S_{1}^{(i,j)} \ = \ 
\left[
\begin{array}{cc}
P_{1}Z_{1}X_{2} ,& P_{1}Z_{1} \\
P_{1}X_{2}      ,&  P_{1} 
\end{array}\right]^{(i,j)}, \qquad
S_{2}^{(i,j)} \ = \ 
\left[
\begin{array}{cc}
P_{2}     ,& P_{2}X_{1} \\
P_{2}Z_{2} ,& P_{2}Z_{2}X_{1} 
\end{array}\right]^{(i,j)}
\end{align}
which are stabilizers described in Section~\ref{sec:2D2}.

Now, we consider the cases where $k \geq 3$. Since $\ell(\alpha)_{p}$ and $\ell(\gamma)_{p}$ are independent logical operators, there exists some integer $i \not=1$ such that $\{\alpha_{1}, \gamma_{i} \}=0$.  Without loss of generality, we can set $i = 2$. Also, one can choose $S_{p}^{(i,j)}$ so that $[\gamma_{3},\alpha_{1}]=0$. Then, from Eq.~(\ref{eq:commutation_deform}), we have 
\begin{align}
\left\{
\begin{array}{cc}
\alpha_{1},& \alpha_{2} \\
\gamma_{2},& \gamma_{1}
\end{array}\right\}.
\end{align}
Also, through some computations of commutation relations, one may notice that the following; if $\gamma_{3}$ commutes with $\alpha_{2}$, then $\alpha_{3}$ commutes with $\gamma_{2}$, and if $\gamma_{3}$ anti-commutes with $\alpha_{2}$, then $\alpha_{3}$ anti-commutes with $\gamma_{2}$. 

When $\gamma_{3}$ commutes with $\alpha_{2}$, we have 
\begin{align}
\left\{
\begin{array}{cccc}
\alpha_{1},& \alpha_{2} ,& \alpha_{3} ,& \gamma_{3} \\
\gamma_{2},& \gamma_{1}  & &
\end{array}\right\}. \label{eq:comm_loop}
\end{align}
When $\gamma_{3}$ anti-commutes with $\alpha_{2}$, we have 
\begin{align}
\left\{
\begin{array}{cccc}
\alpha_{1},& \alpha_{2} ,& \alpha_{3}\alpha_{1} ,& \gamma_{3}\gamma_{1} \\
\gamma_{2},& \gamma_{1}  & &
\end{array}\right\}.
\end{align}
Then, by changing the choice of $S_{p}^{(i,j)}$ so that
\begin{align}
S_{1}^{(i,j)} \ \rightarrow \  S_{1}^{(i,j)}, \qquad S_{2}^{(i,j)} \ \rightarrow \ S_{2}^{(i,j)}, \qquad S_{3}^{(i,j)} \ \rightarrow \ S_{1}^{(i,j)}S_{3}^{(i,j)},
\end{align}
we have
\begin{align}
\left\{
\begin{array}{cccc}
\alpha_{1},& \alpha_{2} ,& \alpha_{3} ,& \gamma_{3} \\
\gamma_{2},& \gamma_{1}  & &
\end{array}\right\}.
\end{align}
Thus, in any cases, by choosing $S_{p}^{(i,j)}$ appropriately, we have the above commutation relations described in Eq.~(\ref{eq:comm_loop}) when $k \geq 3$. 

Now, if $k = 3$, we have a contradiction in a way similar to the case with $k =1$. If $k = 4$, one can choose $S_{p}^{(i,j)}$ so that 
\begin{align}
\left\{
\begin{array}{cccc}
\alpha_{1},& \alpha_{2} ,& \alpha_{3} ,& \alpha_{4} \\
\gamma_{2},& \gamma_{1}  & \gamma_{4} ,& \gamma_{3}
\end{array}\right\}.
\end{align}
Then, we have the following stabilizers:
\begin{align}
&S_{1}^{(i,j)} \ = \ 
\left[
\begin{array}{cc}
P_{1}Z_{1}X_{2} ,& P_{1}Z_{1} \\
P_{1}X_{2}      ,& P_{1} 
\end{array}\right]^{(i,j)}, \ 
S_{3}^{(i,j)} \ = \ 
\left[
\begin{array}{cc}
P_{3}Z_{3}X_{4} ,& P_{3}Z_{3} \\
P_{3}X_{4}      ,& P_{3} 
\end{array}\right]^{(i,j)}\\
&S_{2}^{(i,j)} \ = \ 
\left[
\begin{array}{cc}
P_{2}     ,& P_{2}X_{1} \\
P_{2}Z_{2} ,& P_{2}Z_{2}X_{1} 
\end{array}\right]^{(i,j)}, \  
S_{4}^{(i,j)} \ = \ 
\left[
\begin{array}{cc}
P_{4}     ,& P_{4}X_{3} \\
P_{4}Z_{4} ,& P_{4}Z_{4}X_{3} 
\end{array}\right]^{(i,j)}.
\end{align}

One may use a similar discussion for an arbitrary $k$ and show that $k$ must be even, and there exist a choice of $S_{p}^{(i,j)}$ such that 
\begin{align}
\left\{
\begin{array}{ccccc}
\alpha_{1},& \alpha_{2} ,& \cdots ,& \alpha_{k-1},& \alpha_{k}\\
\gamma_{2},& \gamma_{1} ,& \cdots ,& \gamma_{k},& \gamma_{k-1}
\end{array}\right\}.
\end{align}
This leads to the existence of stabilizers described in Section~\ref{sec:2D2}. 

\section{Scale symmetries and weak breaking of translation symmetries}\label{sec:translation}

In this appendix, we discuss translation symmetries of ground states in STS models in the context of coarse-graining. 

A central idea behind RG transformations is coarse-graining of a system of spins by grouping several spins into a single particle with a larger Hilbert space. Now, in coarse-graining STS models, the main problem is to find the smallest suitable unit cell of qubits to be grouped in coarse-graining. Though one can freely coarse-grain the system by choosing an arbitrarily large unit cell of qubits, one may end up with a coarse-grained particle with an arbitrarily large Hilbert space, which are hard to analyze. Thus, we naturally hope to choose the smallest possible unit cell of qubits.

The main difficulty in grouping qubits comes from the fact that translation symmetries of ground states may be broken. One might hope that a translation symmetric Hamiltonian would give rise to translation symmetric ground states. However, this naive expectation is generally true only when there is a single ground state. In~\ref{sec:translation2}, we show an example of a stabilizer Hamiltonian whose degenerate ground states are invariant only under translations of several qubits, while the Hamiltonian is invariant under unit translations of qubits. Such a bizarre breaking of translation symmetries is sometimes called a weak breaking of translation symmetries, which is often seen in topologically ordered systems~\cite{Kitaev06b}. When a translation symmetry is weakly broken, there exists a ground state whose translation is orthogonal to itself. Thus, when coarse-graining the system of qubits, we hope to group qubits so that ground states are invariant under unit translations of newly defined particles in order to resolve a weak breaking of translation symmetries. 

Here, the whole procedure of coarse-graining may be summarized in the following three steps:
\begin{enumerate}
\item Coarse-grain so that the Hamiltonian is invariant under unit translations of composite particles.
\item Coarse-grain so that interaction terms are defined inside $2 \times 2$ composite particles.
\item Coarse-grain so that ground states are invariant under unit translations of composite particles. 
\end{enumerate}
We have already performed the first and second steps of coarse-graining when the STS model is introduced in Section~\ref{sec:model}. Here, we hope to further coarse-grain the system of composite particles so that all the ground states are invariant under unit translations of coarse-grained particles.

In this appendix, we show that the composite particle, after the first and second steps of coarse-graining, happens to be the smallest suitable unit cell by proving that all the ground states of STS models are invariant under unit translations of composite particles. In other words, we do not need to perform the third step of coarse-graining since \emph{STS models have been already coarse-grained appropriately through composite particles} ! 

In~\ref{sec:translation1}, we give an example of a stabilizer Hamiltonian which exhibits a weak breaking of translation symmetries and show that this system can be appropriately coarse-grained by introducing composite particles. In~\ref{sec:translation2}, we prove that all the ground states of STS models are invariant under unit translations of composite particles. 

Having seen that a system of composite particles has been already coarse-grained properly in STS models so that all the ground states are invariant under unit translations of composite particles due to scale symmetries, we consider the cases where scale symmetries are not satisfied. In~\ref{sec:translation3}, we extend our discussion to the cases without scale symmetries where the number of logical qubits $k_{\vec{n}}$ is not constant and depend on the system size $\vec{n}$. We show that one can always find some finite size of a unit cell of composite particles so that all the ground states are translation symmetric when the number of logical qubits $k_{\vec{n}}$ is upper bounded by some constant. Finally, in~\ref{sec:translation4}, we show that, in translation symmetric stabilizer Hamiltonians, translation symmetries of ground states are weakly broken if and only if scale symmetries are broken. 

\subsection{An example of weak breaking of translation symmetries}\label{sec:translation1}

A unique ground state of a translation symmetric Hamiltonian always has translation symmetries since the translation of the ground state must be the same as the original ground state. However, when ground states are degenerate, translation symmetries of ground states may be broken despite the translation symmetries of the supporting Hamiltonian. In particular, there are cases where degenerate ground states are invariant only under translations of several qubits, while the Hamiltonian is invariant under unit translations of qubits. This bizarre breaking of translation symmetries, sometimes called a weak breaking of translation symmetries, lies behind the main motivation for further coarse-graining the system.

Here, we give an example of a stabilizer Hamiltonian which exhibits a weak breaking of translation symmetries, and show that translation symmetries can be restored by introducing composite particles.

\textbf{Breaking and restoration of translation symmetries:}
Let us consider the following one-dimensional Hamiltonian with periodic boundary conditions:
\begin{align}
H = - \sum_{j} Z^{(j)}Z^{(j+1)}Z^{(j+2)}
\end{align}
where the total number of qubits is $N$, and $Z^{(j)}$ act on qubits labeled by $j$. This Hamiltonian is a stabilizer Hamiltonian, and has four ground states when $N$ is a multiple of three:
\begin{align}
|000 000 \cdots 000\rangle, \quad |011 011 \cdots 011\rangle, \quad |101 101 \cdots 101\rangle, \quad |110 110 \cdots 110\rangle.
\end{align}
Here, we notice that ground states are not invariant under unit translations of qubits despite that the original Hamiltonian is invariant under unit translations (Fig.~\ref{fig_restoration}). 

\begin{figure}[htb!]
\centering
\includegraphics[width=0.70\linewidth]{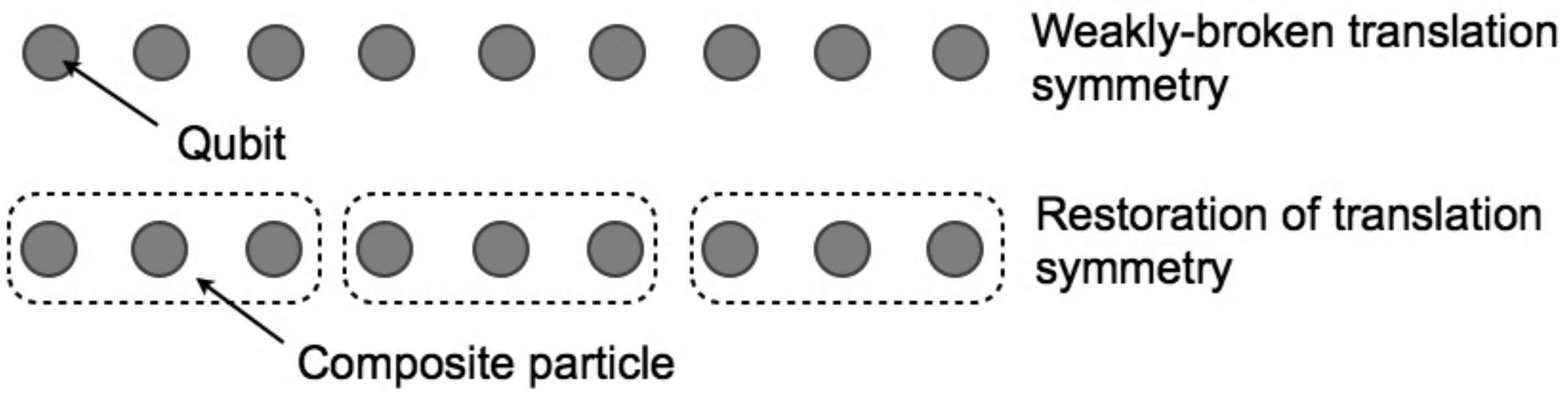}
\caption{Coarse-graining and restoration of translation symmetries.
} 
\label{fig_restoration}
\end{figure}

Now, let us coarse-grain the above system by introducing composite particles. Obviously, if one consider three consecutive qubits as a single composite particle, one can restore translation symmetries: $T^{3}(|\psi\rangle) = |\psi\rangle$
where $T$ shifts the position of qubits by one. Thus, while translation symmetries of ground states are weakly broken, one may restore translation symmetries through coarse-graining.

\textbf{Scale symmetries and composite particles:}
Next, let us see the relation between translation symmetries of ground states and scale symmetries. In a strict sense, this model is not an STS model since the number of degenerate ground states changes according to $N$. Here, let us return to the original picture through qubits before introducing composite particles. When $N$ is a multiple of three, there are four ground states with $k =2$ since $G(\mathcal{S}) = N - 2$. However, when $N$ is not a multiple of three, there is only a single ground state with $k =0$ since $G(\mathcal{S}) = N$. 

In order to treat the model as an STS model, one needs to limit considerations to the cases where $N$ is a multiple of three. Then, by considering three consecutive qubits as a composite particle ($v=3$), one can reduce the system to an STS model: $k_{n} = 2$ for all $n = N/3$.

Now that we have reduced the model to an STS model, one may readily notice that translation symmetries of ground states are not broken since all the ground states are invariant under unit translations of composite particles. Thus, for this STS model, we do not need to further coarse-grain the system. In particular, we notice that the smallest number of qubits to be grouped in order for this model to have scale symmetries is the same as the smallest number of qubits to be grouped for restoring translation symmetries in ground states. We shall discuss whether this relation between translation symmetries of ground states and scale symmetries holds for arbitrary STS models or not in the next subsection.

\subsection{Translation symmetries of ground states in STS models}\label{sec:translation2}

As seen in the above example, even if translation symmetries of ground states are broken, it may be possible to enlarge the size of a unit cell to restore translation symmetries of ground states. However, finding the smallest suitable unit cell of qubits is generally difficult since ground states are highly entangled in topologically ordered systems. In particular, unlike the above example where ground states can be represented simply as direct product states, topologically ordered ground states are not direct product states.

A valuable idea we can import from quantum coding theory is the study of physical symmetries of degenerate ground states through the analysis on physical symmetries of logical operators. In the language of quantum coding theory, the ground state space of the system Hamiltonian is the codeword space which is stabilized by the stabilizer group $\mathcal{S}$. Logical qubits are encoded in the codeword space which is exactly the same as the ground state space. Since the encoding of logical qubits can be analyzed by properties of logical operators, it happens that one can reveal physical symmetries of ground states by studying physical symmetries of logical operators. 

Here, we show that translation symmetries of ground states can be studied by analyzing how logical operators transform under translations. Through the analysis of logical operators, we prove that all the degenerate ground states of STS models are invariant under unit translations of composite particles.

\begin{theorem}[Translation symmetries of degenerate ground states]\label{theorem_main2}
Each and every degenerate ground state $|\psi\rangle$ of an STS model is invariant under unit translations of composite particles in any direction:
\begin{align}
T_{m}(|\psi\rangle)\ =\ |\psi\rangle, \qquad \forall |\psi\rangle\ \in\ V_{\mathcal{S}_{\vec{n}}} \qquad (m\ =\ 1, \cdots , D)
\end{align}
up to a trivial phase factor, where $V_{\mathcal{S}_{\vec{n}}}$ is the ground state space for an STS model defined with the stabilizer group $\mathcal{S}_{\vec{n}}$.
\end{theorem}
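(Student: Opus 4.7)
The plan is to show that $T_m$, when restricted to the ground state space $V_{\mathcal{S}_{\vec{n}}}$, is a scalar multiple of the identity, so that every ground state is an eigenvector of $T_m$ with the same phase. Because $H$ commutes with $T_m$, the subspace $V_{\mathcal{S}_{\vec{n}}}$ is preserved under translation, and $T_m|_{V_{\mathcal{S}_{\vec{n}}}}$ is a well-defined unitary. The strategy is then to exploit the translation equivalence of logical operators (Theorem~\ref{theorem_main}) together with Schur's lemma applied to the irreducible action of the logical Pauli algebra on $V_{\mathcal{S}_{\vec{n}}}$.

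First I would observe that conjugation by $T_m$ permutes stabilizers ($T_m(\mathcal{S}_{\vec{n}})=\mathcal{S}_{\vec{n}}$), so it sends centralizer operators to centralizer operators, and thus maps logical operators to logical operators. For any logical operator $\ell$, Theorem~\ref{theorem_main} gives $T_m\,\ell\,T_m^{-1}=T_m(\ell)\sim \ell$, which by the definition of equivalence means $T_m\,\ell\,T_m^{-1}|\psi\rangle=\ell|\psi\rangle$ for every $|\psi\rangle\in V_{\mathcal{S}_{\vec{n}}}$. Equivalently, on the ground state space one has $T_m\ell=\ell T_m$ for every logical operator $\ell$, so the restriction $\tilde T_m\equiv T_m|_{V_{\mathcal{S}_{\vec{n}}}}$ commutes with the entire algebra generated by logical operators.

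Next I would invoke the canonical decomposition of the codeword space reviewed in Section~\ref{sec:review2}: a canonical set $\Pi(\mathcal{S}_{\vec{n}})=\{\ell_p,r_p\}_{p=1}^k$ exhibits $V_{\mathcal{S}_{\vec{n}}}$ as $(\mathbb{C}^2)^{\otimes k}$ with $\ell_p,r_p$ acting as independent single-qubit Pauli operators. Since the Pauli group on $k$ qubits acts irreducibly on $(\mathbb{C}^2)^{\otimes k}$, the commutant of the logical operator algebra inside $\mathrm{End}(V_{\mathcal{S}_{\vec{n}}})$ is just $\mathbb{C}\cdot I$. By Schur's lemma, the fact that $\tilde T_m$ commutes with every logical operator forces $\tilde T_m=e^{i\theta_m} I$ for some phase $\theta_m\in\mathbb{R}$.

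Therefore $T_m|\psi\rangle=e^{i\theta_m}|\psi\rangle$ for every $|\psi\rangle\in V_{\mathcal{S}_{\vec{n}}}$, which is precisely the claim that each and every degenerate ground state is invariant under unit translations of composite particles up to a trivial phase. The only delicate step is justifying that the logical operator algebra acts irreducibly on $V_{\mathcal{S}_{\vec{n}}}$, but this is built directly into the canonical-form machinery of the stabilizer formalism and requires no new input; the role of scale symmetries enters implicitly through Theorem~\ref{theorem_main}, which is the real engine of the argument.
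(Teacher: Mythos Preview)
Your argument is correct. Both your proof and the paper's rely on Theorem~\ref{theorem_main} as the essential input, but the routes from there to the conclusion differ. The paper works concretely with the stabilizer projector: it writes the density matrix of a logical basis state as $|\psi_0\rangle\langle\psi_0|=\frac{1}{2^N}\sum_{U\in\langle\ell,\mathcal{S}\rangle}U$ and observes that $T_m(\langle\ell,\mathcal{S}\rangle)=\langle\ell,\mathcal{S}\rangle$ because both the stabilizer group and the logical operator $\ell$ are (up to equivalence) translation invariant; this directly gives $T_m(|\psi_0\rangle)=|\psi_0\rangle$ for every such basis state and hence for arbitrary ground states. Your route is representation-theoretic: you show $\tilde T_m$ commutes with the full logical Pauli algebra on $V_{\mathcal{S}_{\vec n}}$ and invoke Schur's lemma. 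Your approach is slicker and immediately handles arbitrary $k$ in one stroke, whereas the paper treats $k=1$ explicitly and appeals to the obvious generalization; on the other hand, the paper's projector computation is completely self-contained within the stabilizer formalism and does not need to name Schur's lemma or irreducibility. The only small point worth making explicit in your write-up is that the passage from $T_m\ell T_m^{-1}|\psi\rangle=\ell|\psi\rangle$ to $[\tilde T_m,\tilde\ell]=0$ uses that $T_m^{-1}$ also preserves $V_{\mathcal{S}_{\vec n}}$, which follows since $T_m$ is a unitary symmetry of $H$.
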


Below, we sketch the proof of the theorem. We begin by showing that translation symmetries of ground states can be studied by analyzing how logical operators transform under translations. In particular, we show that ground states are invariant under unit translations of composite particles if and only if the translation equivalence of logical operators holds. 

Consider a stabilizer code which is defined on a $D$-dimensional hypercubic lattice of $n_{1}\times \cdots \times n_{D}$ composite particles. Let us first analyze the necessary condition in order for all the degenerate ground states to possess translation symmetries. For this purpose, let us assume that all the degenerate ground states have translation symmetries and analyze how logical operators must transform under translations. We start by analyzing the case with two-fold degeneracy ($k=1$). Then, any degenerate ground state can be represented in the following way:
\begin{align}
|\psi\rangle\ =\ \alpha |\psi_{0}\rangle + \beta |\psi_{1} \rangle 
\end{align}
where
\begin{align}\label{eq:states_k2}
\ell |\psi_{0}\rangle\ =\ |\psi_{0}\rangle, \qquad \ell |\psi_{1}\rangle\ =\ - |\psi_{1}\rangle, \qquad 
r |\psi_{0}\rangle\ =\ |\psi_{1}\rangle, \qquad r |\psi_{1}\rangle\ =\ |\psi_{0}\rangle
\end{align}
with anti-commuting logical operators $\ell$ and $r$. Here, from an assumption, $|\psi_{0}\rangle$ and $|\psi_{1}\rangle$ have translation symmetries. Then one may notice that $T_{m}(\ell)$ acts in a way exactly the same as $\ell$ inside the ground state space, and one has $\ell \sim T_{m}(\ell)$. A similar discussion holds for $r$, and one has $r \sim T_{m}(r)$. Thus, the translation equivalence of logical operators holds. The generalization of the above discussion to the cases with $k \geq 2$ is immediate, and we notice that all the logical operators must remain equivalent in order for all the degenerate ground states to have translation symmetries:
\begin{align}
T_{m}(|\psi\rangle)\ =\ |\psi\rangle \qquad \mbox{for all}\ |\psi\rangle \ \in\ V_{\mathcal{S}} \qquad \Rightarrow \qquad
\ell\ \sim\ T_{m}(\ell)   \qquad \mbox{for all}\ \ell\ \in\ \textbf{L}
\end{align}
where $V_{\mathcal{S}}$ is the ground state space and $\textbf{L}$ is a set of all the logical operators. 

In fact, ``the translation equivalence of logical operators'' is the necessary and sufficient condition for ``the existence of translation symmetries in degenerate ground states'':
\begin{align}
T_{m}(|\psi\rangle)\ =\ |\psi\rangle \qquad \mbox{for all}\ |\psi\rangle \ \in\ V_{\mathcal{S}} \qquad \Leftrightarrow \qquad
\ell\ \sim\ T_{m}(\ell)   \qquad \mbox{for all}\ \ell\ \in\ \textbf{L}
\end{align}
In order to verify the ``$\Leftarrow$'' relation in the above equivalence between ground states and logical operators, we assume that all the logical operators remain equivalent under some finite translation: $T_{m}(\ell) \sim \ell$. For simplicity of discussion, we show the existence of translation symmetries in degenerate ground states for the cases with $k =1$. Let us represent ground states of the stabilizer Hamiltonian as eigenstates of $\ell$ as represented in Eq.~(\ref{eq:states_k2}). Then, $|\psi_{0}\rangle$ can be represented in terms of the stabilizer group $\mathcal{S}$ and logical operators in the following way~\cite{Beni10}:
\begin{align}
|\psi_{0}\rangle \langle \psi_{0}| \ =\ \frac{1}{2^{N}}(I + \ell)(I+S_{1})\cdots(I+ S_{N-1}) 
                                    \ =\ \frac{1}{2^{N}}\sum_{\forall U \in \langle \ell, \mathcal{S} \rangle} U
\end{align}
where $N$ is the total number of qubits and $S_{1},\cdots, S_{N-1}$ are independent generators for $\mathcal{S}$. Then, one can immediately notice that $T_{m}(|\psi_{0}\rangle)=|\psi_{0}\rangle$ since $T_{m}(\langle \ell, \mathcal{S} \rangle) = \langle \ell, \mathcal{S} \rangle$. One can also repeat the same argument for $|\psi_{1}\rangle$ and obtain $T_{m}(|\psi_{1}\rangle)=|\psi_{1}\rangle$. More generally, we have $T_{m}(|\psi\rangle)=|\psi\rangle$ for any ground state $|\psi\rangle$. Thus, all the degenerate ground states possess translation symmetries. A generalization to the cases where $k \not= 1$ is immediate. 

As a result of the above discussion, we notice that ``the translation equivalence of logical operators'' is equivalent to ``the existence of translation symmetries of ground states''. Thus, in an STS model, \emph{both the system Hamiltonian and degenerate ground states are invariant under unit translations of composite particles.}

\subsection{Reduction to STS models: coarse-graining with composite particles}\label{sec:translation3}

Originally, we have defined composite particles for convenience of discussion so that the system Hamiltonian is invariant under unit translations of composite particles. Fortunately, in the presence of scale symmetries, all the ground states are invariant under unit translations of composite particles, and thus, STS models are coarse-grained properly.

However, in~\ref{sec:translation1}, we have seen that the number of degenerate ground states may depend on the system size in general. In fact, the main challenge is to find a suitable unit cell which restores translations symmetries of ground states. From the discussion in the previous section, we know that it is sufficient to restore scale symmetries in order to translations symmetries of ground states. However, it is not clear whether it is possible to restore translation symmetries of ground states and scale symmetries or not.

Here, we give a sufficient condition for translation symmetric stabilizer Hamiltonians to be properly coarse-grained. We also give a procedure for coarse-graining. In particular, one can show that translation symmetric stabilizer Hamiltonians can be coarse-grained and discussed in the framework of STS models if the number of degenerate ground states are upper bounded by some finite integer, as summarized in the following theorem.

\begin{theorem}[Reduction to an STS model]\label{theorem_reduction}
Consider the case where the number of logical qubits is upper bounded by some finite integer $k$:
\begin{align}
k_{\vec{n}}\ \leq\ k \qquad \mbox{for all} \quad \vec{n}
\end{align}
where the bound is tight for some positive integers $n_{m}=n^{min}_{m}$ for $m= 1,\cdots ,D$. Then, there exist some finite positive integers $a_{m} \leq 2^{2k}$ such that
\begin{align}
k_{a_{1}n_{1},a_{2}n_{2},\cdots , a_{D}n_{D}}\ =\ k \qquad \mbox{for all} \quad n_{1},\cdots, n_{D}
\end{align}
where $n_{1}, \cdots, n_{D}$ are positive integers. 
\end{theorem}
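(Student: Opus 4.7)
The strategy is to choose the $a_{m}$ so that translation by $a_{m}$ composite particles acts as the identity on the finite set of logical operator equivalence classes, and then to show that this forces the coarse-grained code to be scale symmetric. First I would exploit the tightness hypothesis to fix a reference system size $\vec{N} = (N_{1}, \ldots, N_{D})$ at which $k_{\vec{N}} = k$ is attained. At this size the group of logical operators modulo stabilizers is isomorphic to $(\mathbb{Z}/2)^{2k}$, a set of only $2^{2k}$ elements, and translation $T_{m}$ acts on it as a permutation. Applying the pigeonhole argument sketched in Section~\ref{sec:model4} to each of the $2k$ independent logical operators in turn, for every direction $m$ I obtain a positive integer $a_{m} \leq 2^{2k}$ such that $T_{m}^{a_{m}}(\ell) \sim \ell$ for every logical operator $\ell$ at the reference size (replacing $N_{m}$ by an appropriate multiple of $a_{m}$ if needed).

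Second, I would lift this periodicity from the single reference size $\vec{N}$ to all sizes in the sublattice $\{(a_{1}n_{1}, \ldots, a_{D}n_{D}) : n_{m} \in \mathbb{Z}_{>0}\}$. Since the local stabilizer group $\pi$ is fixed and defined inside a $2\times\cdots\times 2$ region, the stabilizer group $\mathcal{S}_{\vec{n}'}$ for any $\vec{n}'$ is generated from translates of $\pi$ in the sense of Eq.~(\ref{eq:stabilizer}). The core technical step is a \emph{periodic slab insertion} lemma: inserting $a_{m}$ consecutive layers of composite particles in the $\hat{m}$ direction does not alter the number of independent logical operators. I would prove this by constructing an explicit bijection between canonical sets of logical operators at the sizes $(a_{1}n_{1}, \ldots, a_{m}n_{m}, \ldots)$ and $(a_{1}n_{1}, \ldots, a_{m}(n_{m}+1), \ldots)$, using the period-$a_{m}$ structure of logical operators at the reference size $\vec{N}$ as the template. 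The bookkeeping is essentially a generalized version of the counting that gives $\prod_{j} S^{(j)} = I$ in the classical ferromagnet and extended five-qubit examples, except now the redundancies repeat with period $a_{m}$ along direction $m$. Combined with the upper bound $k_{\vec{n}'} \leq k$, this yields $k_{\vec{n}'} = k$ on the entire sublattice.

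Third, once the $a_{m}$ are fixed and the slab-insertion identity is established, I would reinterpret the sublattice of sizes $(a_{1}n_{1}, \ldots, a_{D}n_{D})$ as a coarse-graining in which the new composite particle consists of $a_{1}\times\cdots\times a_{D}$ of the old composite particles. Translation symmetries are manifestly preserved, interaction terms remain local (now defined inside $2\times\cdots\times 2$ new composite particles after at most one further folding as in Section~\ref{sec:model2}), and $k$ is constant by construction, so the coarse-grained system is an STS model with the same $k$.

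\textbf{Main obstacle.} The challenging step is the slab-insertion lemma. At the level of equivalence classes it is intuitively clear that adding $a_{m}$ composite particles ought not to change anything, because $T_{m}^{a_{m}}$ is the identity on the logical operator group at the reference size, so the extra slab is ``invisible'' to the code. Making this rigorous requires simultaneously showing two things: that every logical operator at the enlarged size descends from one at the reference size in the expected way, and that no new independent logical operators are created at the enlarged size. I expect to handle both by carefully comparing $G(\mathcal{S}_{\vec{n}'})$ with $G(\mathcal{S}_{\vec{N}})$ and using the period-$a_{m}$ structure of the canonical set at $\vec{N}$ to exhibit the exact number of new stabilizer generators contributed by the inserted slab, so that $N - G(\mathcal{S})$ remains equal to $k$. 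A subtlety to watch for is that the reference periodicities across different directions must be chosen compatibly, which is why I would first prove the one-dimensional version (varying only $n_{m}$) and then iterate across directions using the independence of the $a_{m}$ obtained from the pigeonhole argument.
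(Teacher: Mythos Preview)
Your overall strategy matches the paper's: use finiteness of the logical-operator group at a reference size to extract periods $a_{m}$ via pigeonhole, then propagate to all sizes on the sublattice. The paper likewise does not give a full proof but describes a procedure and asserts it works via Lemma~\ref{lemma_last}.

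Where you diverge is in the key technical tool and in the order of operations. The paper does \emph{not} prove a slab-insertion lemma by counting $G(\mathcal{S})$. Instead it invokes Lemma~\ref{lemma_last}, which says that once translation equivalence in the $\hat{1}$ direction holds at a fixed size, one can put the logical operators into a canonical form with each $\ell_{p}$ supported in a single $(D-1)$-dimensional slab and each $r_{p}$ periodic in $\hat{1}$; this canonical set then manifestly transfers to any $n_{1}$ that is a multiple of $a_{1}$. The paper also orders the procedure differently from yours: rather than fixing all $a_{m}$ at a single reference size $\vec{N}$, it finds $a_{1}$ at $\vec{n}^{(\min)}$, \emph{shrinks} the first coordinate to $a_{1}$ (still $k$ logical qubits by Lemma~\ref{lemma_last}), and only then extracts $a_{2}$ at the new size $(a_{1}, n_{2}^{(\min)}, \ldots)$, iterating across directions. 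This sequential shrinking is precisely what resolves the compatibility concern you flag in your final paragraph, and it sidesteps the need to compare $G(\mathcal{S})$ across sizes directly.

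Your slab-insertion route is not wrong in principle, but the generator-counting you sketch is harder to make rigorous than you suggest: knowing $T_{m}^{a_{m}}(\ell)\sim\ell$ at one size does not by itself tell you how many \emph{new} stabilizer relations appear when you enlarge $n_{m}$, and the ferromagnet-style redundancy argument does not generalize cleanly to arbitrary local stabilizer groups $\pi$. If you want to complete your version, the cleanest fix is to replace the slab-insertion lemma with the canonical-form statement of Lemma~\ref{lemma_last} (proved via the overlapping operator group), which gives you the bijection between canonical sets at different $n_{m}$ for free; at that point your argument and the paper's coincide.
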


Thus, if we consider $a_{1} \times \cdots \times a_{D}$ composite particles as a new composite particle, the stabilizer code can be treated as an STS model since the system possesses scale symmetries with newly defined composite particles. 

Here, we describe a procedure to find $a_{1},\cdots, a_{D}$ without giving a proof. It is possible to prove that this procedure appropriately coarse-grains stabilizer Hamiltonians through lemma~\ref{lemma_last}. 

\begin{itemize}
\item Find $\vec{n}^{(min)} = (n_{1}^{(min)},\cdots , n_{D}^{(min)})$ such that $k_{\vec{n}}^{(min)}=k$.
\item Consider a system with $\vec{n}^{(min)}$, and find the smallest integer $a_{1}$ such that 
\begin{align}
T_{1}^{a_{1}}(\ell) \ \sim \ell \quad \mbox{for all} \ \ell.
\end{align}
Note that $a_{1} \leq 2^{2k}$ and $n_{1}^{(min)}/a_{1}$ is an integer.
\item Consider a system with $\vec{n}=(a_{1}, n_{2}^{(min)},\cdots, n_{D}^{(min)})$, and find the smallest integer $a_{2}$ such that 
\begin{align}
T_{2}^{a_{2}}(\ell) \ \sim \ell \quad \mbox{for all} \ \ell.
\end{align}
\item Consider a system with $\vec{n}=(a_{1}, a_{2}, n_{3}^{(min)}, \cdots, n_{D}^{(min)})$, and obtain $a_{3}$.
\item Repeat the same procedure until we obtain $a_{1},\cdots, a_{D}$.
\end{itemize}

\subsection{Scale symmetries and weak-breaking of translation symmetries}\label{sec:translation4}

In summarizing the discussions so far, we have the following relation between scale symmetries and translation symmetries of ground states:
\begin{align}
k_{\vec{n}} \ = \ k \qquad \Rightarrow \qquad \ell\ \sim\ T_{m}(\ell) \qquad \Leftrightarrow \qquad  T_{m}(|\psi\rangle)\ =\ |\psi\rangle 
\end{align}
where the above equations represent ``scale symmetries'', ``the translation equivalence of logical operators'' and ``translation symmetries of ground states'' respectively. 

A naturally arising question is whether the existence of scale symmetries is the necessary and sufficient condition for the existence of translation symmetries of ground states or not. It turns out that scale symmetries and translation symmetries of ground states are the equivalent conditions for stabilizer Hamiltonians:
\begin{align}
k_{\vec{n}} \ = \ k \qquad \Leftrightarrow \qquad \ell\ \sim\ T_{m}(\ell) \qquad \Leftrightarrow \qquad  T_{m}(|\psi\rangle)\ =\ |\psi\rangle.
\end{align}
We summarize the above relation in the following theorem. 

\begin{theorem}[Translation symmetries of ground states and scale symmetries]
Consider translation symmetric stabilizer Hamiltonians which are invariant under unit translations of composite particles:
\begin{align}
T_{m}(H) \ = \ H \qquad ( m \ = \ 1,\cdots,D )
\end{align}
and whose interaction terms are defined inside hypercubic regions with $2 \times \cdots \times 2$ composite particles. Let $k_{\vec{n}}$ be the number of logical qubits for a system with the size $\vec{n}$. Then, the existence of scale symmetries is the necessary and sufficient condition for the existence of translation symmetries of ground states:
\begin{align}
k_{\vec{n}} \ = \ k \qquad \mbox{for all} \quad \vec{n} \qquad \Leftrightarrow \qquad T_{m}(|\psi\rangle) \ = \ |\psi\rangle \qquad \mbox{for all} \quad |\psi\rangle \ \in \ V_{\mathcal{S}_{\vec{n}}}
\end{align}
where $V_{\mathcal{S}_{\vec{n}}}$ represents the ground state space for a system with the size $\vec{n}$.
\end{theorem}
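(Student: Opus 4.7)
The forward implication is immediate. If $k_{\vec n}=k$ is constant, the Hamiltonian is an STS model, so Theorem~\ref{theorem_main} yields the translation equivalence $T_m(\ell)\sim\ell$ for every logical operator at every size; by the equivalence established in Section~\ref{sec:translation2}, this is the same statement as $T_m(|\psi\rangle)=|\psi\rangle$ for every ground state at every size. The content of the theorem is therefore the converse, which, by that same equivalence, reduces to showing: if the translation equivalence of logical operators holds at every $\vec n$, then $k_{\vec n}$ is independent of $\vec n$. I plan to show $k_{\vec n}$ is constant in $n_1$ with $n_2,\ldots,n_D$ fixed, and then iterate over the remaining directions.

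The main construction transports a canonical set of logical operators between different system sizes. Choose an odd reference value $n_1^{*}$ larger than the locality range of the stabilizer generators, and apply Lemma~\ref{lemma_last} at $\vec n^{*}=(n_1^{*},n_2,\ldots,n_D)$ to obtain a canonical set $\{\ell_p,r_p\}_{p=1}^{k_{\vec n^{*}}}$ with $\ell_p$ supported in the slab $Q(1)$ and $r_p$ periodic in the $\hat 1$ direction, so that $r_p=\prod_{x=1}^{n_1^{*}}T_1^{x}(s_p)$ with $s_p\equiv r_p|_{Q(1)}$. The canonical relation $[r_p,r_q]=0$ equals the $n_1^{*}$-th power of the sign in $[s_p,s_q]$, and because $n_1^{*}$ is odd the anti-commutation cannot be absorbed, forcing the $s_p$'s to commute pairwise. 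For any other $n_1'$ set $r_p^{(n_1')}\equiv\prod_{x=1}^{n_1'}T_1^{x}(s_p)$ and keep the same $\ell_p$'s on the system of size $\vec n'=(n_1',n_2,\ldots,n_D)$. Commutation with any local stabilizer generator, as well as the full canonical anti-commutation structure $\{\ell_p,r_p^{(n_1')}\}=0$ and $[\ell_p,r_q^{(n_1')}]=0$ for $p\ne q$, are local checks preserved under the change of $n_1$; pairwise commutativity of the $r_p^{(n_1')}$'s is inherited from the commuting $s_p$'s. Lemma~\ref{lemma_independent} then delivers $2k_{\vec n^{*}}$ independent logical operators at $\vec n'$, so $k_{\vec n'}\ge k_{\vec n^{*}}$.

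The reverse inequality $k_{\vec n^{*}}\ge k_{\vec n'}$ is the main obstacle. Attempting the same transport with the roles of $\vec n^{*}$ and $\vec n'$ interchanged breaks down when $n_1'$ is even: the slab restrictions $t_p$ of a canonical $r_p'$ at $\vec n'$ need not commute pairwise, because the even factor $(-1)^{n_1'}$ in $[r_p',r_q']$ can absorb an anti-commutation $\{t_p,t_q\}=0$, and then the transport to odd $n_1^{*}$ would produce anti-commuting $r_p^{(n_1^{*})}$'s and destroy canonical form. The plan is to first perform a symplectic change of basis at $\vec n'$, replacing $r_p'\mapsto r_p'\prod_{q}(\ell_q')^{c_{p,q}}$ with $\mathbb{Z}/2$ coefficients chosen so that the modified slab restrictions commute pairwise while $\{\ell_p',r_p'\}=0$ and $[\ell_p',r_q']=0$ for $p\ne q$ are preserved; this is possible because the $\ell_q'$'s commute among themselves, so this Gaussian elimination on the $r$ generators leaves the $\ell$--$r$ pairing intact. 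Then transport the modified set to $\vec n^{*}$ as before. One must verify that the transported operators are genuine logical operators and not stabilizers: the key structural input is that the column restriction $\mathcal{S}_{\vec n}\cap Q(1)$ is independent of $n_1$ once $n_1$ exceeds the locality threshold, so any Pauli operator in $Q(1)$ that fails to lie in $\mathcal{S}_{\vec n'}$ also fails to lie in $\mathcal{S}_{\vec n^{*}}$; combined with the non-degeneracy of the canonical anti-commutation form on the subgroup generated by the transported operators, this rules out collapse into $\mathcal{S}_{\vec n^{*}}$. Iterating over $\hat 2,\ldots,\hat D$ then yields $k_{\vec n}$ constant, and the remaining small-size cases where generators must be folded are handled by the same transport after the corresponding adjustment, exactly as in the proof of Lemma~\ref{lemma_1dim}.
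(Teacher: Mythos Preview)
Your overall strategy is sound and close to the paper's, but you have manufactured an obstacle that is not there, and the fix you propose for it is both unnecessary and, as written, slightly broken.

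The ``main obstacle'' you identify in the reverse inequality---that the slab restrictions $t_p$ of the periodic $r_p'$ at an even $n_1'$ need not commute pairwise---is irrelevant. You do not need the transported family $\{\ell_p',\,r_p^{(n_1^{*})}\}$ to be in full canonical form to conclude independence; you only need the \emph{half}-canonical relations that survive the transport automatically, namely $[\ell_p',\ell_q']=0$, $\{\ell_p',t_p\}=0$, and $[\ell_p',t_q]=0$ for $p\neq q$. With these alone the usual argument goes through: if a product $\prod_p(\ell_p')^{a_p}(r_p^{(n_1^{*})})^{b_p}$ were a stabilizer it would commute with every $\ell_q'$, forcing all $b_q=0$, and then commute with every $r_q^{(n_1^{*})}$, forcing all $a_q=0$. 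No condition on $[t_p,t_q]$ is used. You essentially say this yourself (``non-degeneracy of the canonical anti-commutation form'') but fail to notice that it already finishes the job without any symplectic repair. Incidentally, your repair $r_p'\mapsto r_p'\prod_q(\ell_q')^{c_{p,q}}$ destroys periodicity, so ``the modified slab restrictions'' is no longer well defined; a correct version would modify $t_p$ directly and re-extend, but again this is not needed. The separate structural claim that $\mathcal{S}_{\vec n}\cap Q(1)$ is independent of $n_1$ is neither proved nor required.

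For comparison, the paper avoids transporting between two arbitrary sizes. It first treats the case where all $n_m$ are odd: by Lemma~\ref{lemma_periodicity} every logical operator is then equivalent to a fully periodic one, so a canonical set can be written as $\ell_p=\prod_{\vec x}T^{\vec x}(X_p)$, $r_p=\prod_{\vec x}T^{\vec x}(Z_p)$, and the single-site operators $X_p,Z_p$ already form the same canonical set at $\vec n=\vec 1$; this pins $k_{\vec n}$ on all odd sizes to the common value $k$. For a size with $n_1$ even and the remaining $n_m$ odd, it applies Lemma~\ref{lemma_last} and restricts to $n_1=1$ (an odd size), getting $2k'$ independent logical operators there exactly by the half-canonical argument above, hence $k'\le k$; together with your forward transport (from an odd reference to any $n_1'$) this gives $k'=k$. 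Iterating over directions finishes. Your route works once the spurious obstacle is removed, but the paper's reduction to $n_1=1$ and to $\vec 1$ is cleaner.
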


In order to prove this, one needs to show that the translation equivalence of logical operators is a sufficient condition for scale symmetries:
\begin{align}
k_{\vec{n}} \ = \ k \quad  \mbox{for all}\quad \vec{n}  \qquad \Leftarrow \qquad \ell\ \sim \ T_{m}(\ell) \quad \mbox{for all}\quad \vec{n}
\end{align}
Here, we give only a sketch of the proof since discussions used for the proof are similar to discussions used for the proofs of other theorems in this paper. The goal is to show that the number of logical qubits is the same for all $\vec{n}$. First, we consider the cases where $n_{m}$ are odd integers for all $m$. Due to the translation equivalence of logical operators, all the logical operators have translation symmetric representations. In particular, after some appropriate local unitary transformations on qubits inside each composite particles, we have
\begin{align}
\ell_{p} \ = \ \prod_{m=1}^{D}\prod_{x_{m}=1}^{n_{m}}T_{m}^{x_{m}}(X_{p}), \qquad
r_{p} \ = \ \prod_{m=1}^{D}\prod_{x_{m}=1}^{n_{m}}T_{m}^{x_{m}}(Z_{p}).
\end{align}
Then, when $n_{m}=1$ for all $m$, $X_{p}$ and $Z_{p}$ are independent logical operators. This implies that $k_{\vec{n}}$ must be the same for the cases where $n_{m}$ are odd integers for all $m$. We denote $k_{\vec{n}}$ for the cases where $n_{m}$ are odd as $k$. 

Next, we consider the case where only $n_{1}$ is an even integer and other $n_{m}$ are odd integers. Let us denote the system size as $\vec{n}'$ where $n_{1}'$ is an even integer and $n_{m}'$ are odd integers for $m >1$. We also denote the number of logical qubits as $k'$. Then, from lemma~\ref{lemma_last}, one can show that there exists a canonical set of logical operators as follows:
\begin{align}
\Pi(\mathcal{S}_{\vec{n}'}) \ = \ 
\left\{
\begin{array}{cccc}
\ell_{1}',& \cdots ,& \ell_{k'}' \\
r_{1}'   ,& \cdots ,& r_{k'}'
\end{array}\right\}
\end{align}
where $\ell_{p}'$ are defined inside a region with $1 \times n_{2}' \times \cdots \times n_{D}'$ composite particles, and $r_{p}'$ are defined in a periodic way in the $\hat{1}$ direction: $T_{1}(r_{p}')=r_{p}'$. Then, $\ell_{1}', \cdots , \ell_{k'}'$ and $r_{1}', \cdots , r_{k'}'$ are independent logical operators for the case where the system size is $n_{1} = 1$ and $n_{m} = n_{m}'$ for $m >1$ since $r_{p}'$ are defined in a periodic way. Then, we have $k' = k$. Thus, for any cases where $n_{1}$ is even and all the other $n_{m}$ are odd, the number of logical qubits is $k$. 

One can repeat the similar argument for the case where $n_{1}$ and $n_{2}$ are even, and other $n_{m}$ are odd. By using this argument, one can show that the number of logical qubits must be $k$ for any $\vec{n}$.

\section*{Acknowledgments}

I would like to thank Barbara Terhal and Sergey Bravyi for motivating some of the questions that led to the results in this paper. I also would like to thank Xie Chen for helpful discussions. In particular, at the early stage of this research, Xie Chen pointed me out the preliminary idea of the work~\cite{Bravyi10c} which led to the notion of scale symmetries in STS models. I would like to thank Sam Ocko and Isaac Chuang for proofreading some parts of the draft and giving valuable feedbacks. I am grateful to Alioscia Hamma for helpful discussions. Various comments and feedbacks received during the informal talk at the condensed matter theory group at MIT motivated me to seek for condensed matter theoretical interpretations of some results of this paper which were originally written in the language of quantum coding theory.

\end{document}